\crefname{paragraph}{section}{sections}
\Crefname{paragraph}{Section}{Sections}
\theoremstyle{plain}
\newtheorem{lemma}{Lemma}
\newtheorem{coroll}{Corollary}
\newtheorem{theorem}{Theorem}
\newtheorem{proposition}{Proposition}
\theoremstyle{definition} 
\newtheorem{assumption}{Assumption}
\newtheorem{definition}{Definition}
\newtheorem{remark}{Remark}
\newtheorem{example}{Example}
\newenvironment{definitionprime}[1]
{%
\begin{definition}}
{\end{definition}%
}
\Crefname{assumption}{Assumption}{Assumptions}
\Crefname{coroll}{Corollary}{Corollaries}
\global\long\def\verts#1{\lvert#1\rvert}%
\global\long\def\Verts#1{\lVert#1\rVert}%
\global\long\def\la{\langle}%
\global\long\def\ra{\rangle}%
\newcommand{\GD}{\mathrm{GD}}
\newcommand{\KGD}{\mathrm{KGD}}
\newcommand{\SD}{\mathrm{SD}}
\newcommand{\KLD}{\mathrm{KLD}}
\global\long\def\rkhs#1{\mathcal{H}_{#1}}%
\global\long\def\rkhsball#1{\mathcal{B}_{#1}}%
\global\long\def\genscore#1#2{b_{#2}}%
\global\long\def\genpdf#1{\rho_{#1}}%
\global\long\def\seqidx{n}%
\global\long\def\polyorder{\alpha}%
\global\long\def\diss{\gamma}%
\global\long\def\finitemomentspace#1{{\cal P}_{#1}(\mathbb{R}^d)}%
\global\long\def\idmat{I_{d \times d}}%
\global\long\def\weight{a}%
\newcommand{\toL}[1]{\stackrel{{#1}}{\to}}
\newcommand{\eobj}{\mathcal{J}}
\newcommand{\DL}{\mathcal{D}_{L^1}^1}
\newcommand{\Op}[1]{\mathcal{T}_{#1}}
\newcommand{\LStOp}[1]{\mathcal{S}_{#1}}
\newcommand{\target}{P}
\newcommand{\vargrad}{\nabla_{\mathrm{V}}}
\newcommand{\R}{\mathbb{R}}
\newcommand{\cA}{\mathcal{A}}
\newcommand{\cF}{\mathcal{F}}
\newcommand{\cK}{\mathcal{K}}
\newcommand{\cL}{\mathcal{L}}
\newcommand{\cP}{\mathcal{P}}
\newcommand{\cS}{\mathcal{S}}
\newcommand{\Id}{\mathrm{I}_d}
\newcommand{\dom}{\mathrm{dom}}
\DeclareAcronym{gd}{short = GD, long = Gradient Discrepancy}
\DeclareAcronym{kgd}{short = KGD, long = Kernel Gradient Discrepancy}
\DeclareAcronym{ksd}{short = KSD, long = Kernel Stein Discrepancy}
\DeclareAcronym{kld}{short = KLD, long = Kullback--Leibler divergence}
\DeclareAcronym{mcmc}{short = MCMC, long = Markov chain Monte Carlo}
\DeclareAcronym{mmd}{short = MMD, long = maximum mean discrepancy}
\DeclareAcronym{rkhs}{short = RKHS, long = reproducing kernel Hilbert space}
\DeclareAcronym{mfnn}{short = MFNN, long = mean field neural network}
\DeclareAcronym{mfld}{short = MFLD, long = mean field Langevin dynamics}
\DeclareAcronym{svgd}{short = SVGD, long = Stein variational gradient descent}
\DeclareAcronym{vgd}{short = VGD, long = variational gradient descent}
\DeclareAcronym{ula}{short = ULA, long = unadjusted Langevin algorithm}
\DeclareAcronym{ode}{short = ODE, long = ordinary differential equation}
\DeclareAcronym{dct}{short = DCT, long = dominated convergence theorem}
\DeclareAcronym{tvs}{short = TVS, long = topological vector space}
\DeclareMathOperator*{\argmin}{arg\,min}
\newcommand{\nll}{\centernot{\ll}}
\begin{document}

\doparttoc
\faketableofcontents 

\title{A Computable Measure of Suboptimality for Entropy-Regularised Variational Objectives}

\author{Cl\'{e}mentine Chazal$^1$, Heishiro Kanagawa$^2$\footnote{Now at Fujitsu Research of Europe, UK.},  
Zheyang Shen$^2$, \\
Anna Korba$^1$, Chris. J. Oates$^{2,3,}$\footnote{Correspondence should be addressed to Chris. J. Oates, School of Mathematics, Statistics \& Physics, Newcastle University, Newcastle-upon-Tyne, NE1 7RU, UK. Email: \url{chris.oates@ncl.ac.uk}.} \\
\small $^1$CREST, ENSAE, IP Paris, France \\
\small $^2$Newcastle University, UK \\
\small $^3$The Alan Turing Institute, UK 
}

\maketitle

\begin{abstract}
Several emerging post-Bayesian methods target a probability distribution for which an entropy-regularised variational objective is minimised.
This increased flexibility introduces a computational challenge, as one loses access to an explicit unnormalised density for the target.
To mitigate this difficulty, we introduce a novel measure of suboptimality called \emph{gradient discrepancy}, and in particular a \emph{kernel} gradient discrepancy (KGD) that can be explicitly computed.
In the standard Bayesian context, KGD coincides with the kernel Stein discrepancy (KSD), and we obtain a novel characterisation of KSD as measuring the size of a variational gradient.
Outside this familiar setting, KGD enables novel sampling algorithms to be developed and compared, even when unnormalised densities cannot be obtained. 
To illustrate this point several novel algorithms are proposed and studied, including a natural generalisation of Stein variational gradient descent, with applications to mean-field neural networks and predictively oriented posteriors presented.
On the theoretical side, our principal contribution is to establish sufficient conditions for desirable properties of KGD, such as continuity and convergence control. 
\end{abstract}

\noindent \textit{Keywords:} \; post-Bayesian, reproducing kernel, sampling, variational gradient

\section{Introduction}\label{sec:intro}

Consider a distribution $\target$, defined as the unique minimiser of a (relative) entropy-regularised variational objective
\begin{align}
\eobj(Q) \coloneq \cL(Q) + \mathrm{KLD}(Q || Q_0) \label{eq: objective}
\end{align}
for a given \emph{loss function} $\cL$, i.e. regularised by the Kullback--Leibler divergence between $Q$ and a reference distribution $Q_0$.
As a shorthand, we refer to $\target$ as the \emph{variational target}.
Such objectives appear under various different guises in statistics and machine learning; in particular the Bayesian posterior solves \eqref{eq: objective} with a linear loss $\cL$ that is the average negative log-likelihood loss 
\begin{align}
\cL(Q) = - \int \log p(y | x) \; \dd Q(x),  \label{eq: std bayes}
\end{align}
where $p(y | x)$ is a model for data $y$ with parameters $x$, and the prior taking the role of $Q_0$.
In this cases, numerical computation of $\target$ can proceed using established methods such as \ac{mcmc}, since the linearity of the loss $\cL$ implies a (generalised) Bayes' rule \citep[][Theorem 1]{knoblauch2022optimization} and thus one has explicit access to an unnormalised density for the target.

Several emerging \emph{post-Bayesian} methods also take the form \eqref{eq: objective}, but for these methods the loss $\cL$ is \emph{nonlinear}, precluding access to an unnormalised density for the target.
A prominent example is the \emph{predictively oriented} (PrO) \emph{posterior} pursued in \citet{masegosa2020learning,sheth2020pseudo,jankowiak2020deep,jankowiak2020parametric,morningstar2022pacm,shen2024prediction,mclatchie2025predictively,lai2024predictive}, which lifts a parametric statistical model $p(\cdot | x)$ to a mixture model and learns the mixing distribution via minimisation of a loss function of the form
$$
\cL(Q) = - \frac{1}{\lambda_N} \sum_{i=1}^N S \left( \int p(\cdot | x) \; \dd Q(x) , \delta_{y_i} \right) , 
$$
where $\lambda_N > 0$ is a learning rate, $\{y_i\}_{i=1}^N$ is a dataset, and $S$ is any (proper) scoring rule \citep{gneiting2007strictly}.
The lifting of the original statistical model to a mixture model ensures that parameter uncertainty does not collapse in the large-data limit when the model is non-trivially misspecified \citep{mclatchie2025predictively}.
However, computation with \emph{nonlinear} $\mathcal{L}$ can be challenging, as numerical methods are relatively under-developed (e.g. compared to \ac{mcmc}), with (biased) particle-based algorithms such as \ac{mfld} \citep{del2013mean} 
being among the limited state-of-the-art.

The motivation of this paper is to address the following basic question pertaining to \eqref{eq: objective}:  
Given a collection of samples $\{x_i\}_{i=1}^n$, how well do these samples approximate the (variational) target?
To draw an analogy; for standard sampling methods such as \ac{mcmc}, convergence diagnostics are well-established \citep[][Chapter 7]{brooks2011handbook}, while Stein discrepancies emerged as computable measures of sample quality when one has access to an unnormalised form of the target \citep{gorham2015measuring} - the most popular example being the \ac{ksd} due to its closed form \citep{chwialkowski2016kernel,liu2016kernelized} and desirable theoretical properties \citep{gorham2017measuring,barp2024targeted,KanBarGreMac2025}.
However, it is unclear how to proceed for general entropy-regularised variational objectives of the form \eqref{eq: objective}; one cannot compute $\eobj(Q_n)$ for the empirical measure $Q_n = \frac{1}{n} \sum_{i=1}^n \delta_{x_i}$ since $Q_n$ will typically not be absolutely continuous with respect to $Q_0$, meaning the entropy term in $\eobj(Q_n)$ is not well-defined.
Focusing on particle-based methods such as \ac{mfld}, one can monitor the gradient at each particle location to try and understand if the dynamics has converged, but this does not account for the error introduced when discretising the probability flow using particles at the outset.
The absence of a computable measure of sample approximation quality for targets defined as minimisers of \eqref{eq: objective} is arguably a major barrier to both methodological and algorithmic development, as well as to the practical deployment of post-Bayesian methodologies; we seek to remove this barrier in this work.

The outline and contributions of this paper are as follows:
\begin{itemize}
    \item \Cref{sec: methods} introduces \emph{\acl{gd}} (GD), a statistical divergence that captures closeness of a distribution $Q$ to the target $\target$ by measuring the size of a variational gradient $\vargrad \eobj(Q)$.
    A specific instance, called \emph{\acl{kgd}} (KGD), can be computed in closed-form for all distributions $Q_n$ with finite support.
    In the familiar case where $\mathcal{L}$ is a linear functional, one has access to an unnormalised density for the target and \ac{gd} coincides with Langevin Stein discrepancy \citep{gorham2015measuring}.
    This leads to a novel interpretation of this Stein discrepancy as measuring the size of a variational gradient.
    \item \ac{gd} enables efficient sampling algorithms for entropy-regularised variational objectives to be developed and compared.
    As illustrations, in \Cref{sec: applications} we demonstrate how \ac{kgd} (i) provides a principled approach to tuning the step size in \ac{mfld}, (ii) enables the first \emph{extensible} sampling algorithm (i.e. adding one particle at a time until a desired accuracy is achieved) in the context of \eqref{eq: objective}, (iii) enables direct optimisation of a particle set through gradient descent, (iv) enables parametric variational inference without requiring explicit densities for the parametric model, and (v) facilitates the first convergence analysis for nonlinear Stein variational gradient descent \citep{wang2019nonlinear}.
    \item The theoretical properties of \ac{kgd} are set out in \Cref{sec: theory}.  
    In particular, we establish sufficient conditions for continuity and convergence control in a range of topologies, spanning weak convergence to Wasserstein convergence, meaning in particular that the failure of an algorithm to properly approximate higher order moments can be detected.
    Strikingly, our results imply that in regular settings convergence is \emph{characterised} by \ac{kgd}, 
    meaning that \emph{any} consistent algorithm for approximating $\target$ must necessarily be performing (asymptotic) minimisation of the \ac{kgd}.
    Our analysis is rich enough to include loss functions $\mathcal{L}$ of \emph{interaction energy form}, for example encountered with PrO posteriors based on kernel scoring rules, but is not completely general (e.g. our convergence control result excludes PrO posteriors based on the logarithmic scoring rule).
\end{itemize}

\noindent A discussion concludes  \Cref{sec: discuss}.
Code to reproduce all experiments in the manuscript can be downloaded from \url{https://github.com/clementinechazal/kgd-code}.

\section{Methods}
\label{sec: methods}

The notation we will use is introduced in \Cref{subsec: notation main}, then precise definitions of \ac{gd} and \ac{kgd} are presented in \Cref{subsec: gd}.

\subsection{Set-Up and Notation}
\label{subsec: notation main}

To simplify presentation we restrict attention to $\finitemomentspace{}$, the set of Borel probability distributions on $\mathbb{R}^d$.

\paragraph*{Operations on $\mathbb{R}^d$}
Let $\|\cdot\|$ denote any (equivalent) norm on $\mathbb{R}^d$ (or $\mathbb{R}^{d \times d}$) and $\|\cdot\|_{\mathrm{op}}$ the associated operator norm (in the case of $\mathbb{R}^d$).
On occasion we will specifically distinguish $\|x\|_p \coloneqq (\sum_i |x_i|^p)^{1/p}$ using the subscript $p \in [1,\infty)$.
For $f,g : \mathbb{R}^d \rightarrow \mathbb{R}$, write $f(x) \lesssim g(x)$ if there exists a finite constant $C$ such that $f(x) \leq C g(x)$ for all $x \in \mathbb{R}^d$.
Let $C^r(\R^d, \R^m)$ denote the set of $r$ times continuously differentiable functions from $\mathbb{R}^d$ to $\mathbb{R}^m$ and $C^r(\R^d)$ the shorthand for  $ C^r(\R^d, \R)$. 
For a differentiable function $f : \mathbb{R}^d \rightarrow \mathbb{R}$, let $\partial_i f$ denote the partial derivative of $f$ with respect to the $i$th argument, let $\nabla f$ denote the gradient vector whose elements are $\partial_i f$, and let $\nabla^2 f$ denote the matrix of mixed partial derivatives $\partial_i\partial_j f$.
For a differentiable vector field $f : \mathbb{R}^d \rightarrow \mathbb{R}^d$, let $\nabla \cdot f$ denote the divergence of the vector field.
For a multivariate function, let $\partial_{i,j}$, $\nabla_i$, etc, indicate the action of the differential operators with respect to the $i$th argument.
Let $f_{-}$ denote the negative part $f_{-}: x \mapsto \min\{0 , f(x)\}$ of a function $f$. 

\paragraph*{Measures}
A sequence $(Q_n)_{n \in \mathbb{N}} \subset \finitemomentspace{}$ is said to \emph{weakly converge} to $Q \in \finitemomentspace{}$ if $\int f \mathrm{d}Q_n \rightarrow \int f \mathrm{d}Q$ for all continuous and bounded $f : \R^d \rightarrow \R$.
Let $\finitemomentspace{k}$ denote the subset of distributions $Q$ in $\finitemomentspace{}$ for which $\int \|x\|^k \; \mathrm{d}Q < \infty$.
Let $Q \ll Q_0$ denote that $Q$ is absolutely continuous with respect to $Q_0$, and $\dd Q / \dd Q_0$ its Radon--Nikodym density. 
Let $\mathcal{L}^1(Q) := \{f : \mathbb{R}^d \rightarrow \mathbb{R} : \int \|f(x)\| \; \mathrm{d}Q(x) < \infty  \}$ denote\footnote{This should not be confused with the notation $\cL$ for the loss function in \eqref{eq: objective}.}  
the set of $Q$-integrable functions on $\mathbb{R}^d$.
Let $Q^{\otimes r}$ denote the $r$-fold product measure.
Let $T_\# Q$ denote the distribution of $T(X)$ where $X \sim Q$.

\paragraph*{Operations on $\finitemomentspace{}$}
A functional $\mathcal{F} : \finitemomentspace{} \rightarrow \mathbb{R}$ is said to be \emph{weakly continuous} if $\mathcal{F}(Q_n) \rightarrow \mathcal{F}(Q)$ whenever $(Q_n)_{n \in \mathbb{N}} \subset \finitemomentspace{}$ converges weakly to $Q \in \finitemomentspace{}$.
For a suitably regular functional $\mathcal{F} : \finitemomentspace{} \rightarrow (-\infty, \infty]$, the \emph{first variation} at $Q \in \dom(\cF)=\{Q\in \finitemomentspace{}: \cF(Q)<\infty\}$ is defined as a map $\cF'(Q) : \R^d \rightarrow \R$ such that $\lim_{\epsilon\to 0}\frac{1}{\epsilon} \{ \cF(Q+\epsilon \chi)-\cF(Q) \} = \int\cF'(Q) \; \mathrm{d}\chi$ for all perturbations $\chi$ of the form $\chi=R - Q$ with $R \in \cP(\R^d)$; note that if it exists, the first variation is unique up to an additive constant. 
For $Q \ll Q_0$, $\KLD(Q \Vert Q_0) \coloneqq \int \log (\dd Q / \dd Q_0)\; \dd Q$, while for $Q \nll Q_0$ we set $\KLD(Q || Q_0) = \infty$.

\subsection{Gradient Discrepancy}
\label{subsec: gd}

To begin with some intuition, consider the task of finding the minimiser of a function $J : \mathbb{R}^d \rightarrow \mathbb{R}$. 
If $J$ is differentiable with a unique stationary point, then the closeness of a candidate state $x$ to optimality can be quantified by measuring the size of the gradient, i.e. $\|\nabla J(x)\|$.
The idea of Gradient Discrepancy (\ac{gd}) is to lift this argument from $\mathbb{R}^d$ to $\finitemomentspace{}$. 
For this, we will need an appropriate notion of a gradient:

\begin{definition}[Variational gradient]
\label{def: var grad}
    The \emph{variational gradient} of $\mathcal{F}$ at $Q$, denoted $\vargrad  \mathcal{F}(Q):\R^d\to\R^d$, when it exists, is defined as the gradient of the first variation; i.e. $\vargrad  \mathcal{F}(Q)(x) \coloneqq \nabla_x \mathcal{F}'(Q)(x)$ for each $x \in \R^d$.
\end{definition}

\noindent The variational gradient as we have defined it applies to functionals on $\finitemomentspace{}$, but it is closely related to several existing notions when restricted to certain subspaces of $\finitemomentspace{}$.
For instance, the variational gradient coincides with the \emph{intrinsic derivative} \citep{albeverio1996differential} when restricting to $\finitemomentspace{k}$ for any $k \geq 1$, and with the \emph{Lions derivative} \citep{cardaliaguet2010notes} from the theory of mean field games when restricting to $\finitemomentspace{2}$.
Similarly, the variational gradient is closely related to the \emph{Wasserstein gradient} from optimal transport on the Wasserstein space, i.e., $\finitemomentspace{2}$ equipped with the Wasserstein-2 metric 
\citep[see e.g.][for background]{Otto01t,lanzetti2025first}.

Armed with an appropriate notion of gradient, we can now define what it means to be a stationary point of $\eobj$ in \eqref{eq: objective}.
Eschewing technical considerations for the moment, the variational gradient of $\eobj$ is formally
\begin{align}
    \vargrad \eobj (Q)(x) & = \vargrad \mathcal{L}(Q)(x) + \nabla \log \frac{\mathrm{d}Q}{\mathrm{d}Q_0}(x) , \label{eq: formal var grad}
\end{align}
which vanishes if and only if the \emph{self-consistency} equation
\begin{align}
\dv[]{Q}{Q_0} \propto \exp(-\cL'(Q) )  \label{eq: self-consistency}
\end{align}
is satisfied.
Some care is required to rigorously make this argument; this is because the variational gradient of the \ac{kld} term in \eqref{eq: objective} exists only in a restricted sense, due to the \ac{kld} itself not being well-defined on all of $\mathcal{P}(\mathbb{R}^d)$. 
For our purposes it is sufficient to \emph{define} stationary points as solutions of \eqref{eq: self-consistency}. 
Additional background relating minimisers of $\eobj$, the roots of $\vargrad \eobj$, and the self-consistency equation is provided in \Cref{app: defn first var}. 
Sufficient conditions for existence and uniqueness of solutions to \eqref{eq: self-consistency} are provided in \Cref{sec:minimizer}.

\begin{definition}[Stationary point]\label{def:stationary_point}
    A distribution $Q \in \finitemomentspace{}$ is called a \emph{stationary point} of $\eobj$ in \eqref{eq: objective} if $Q$ satisfies the self-consistency equation \eqref{eq: self-consistency}.
\end{definition}

\noindent Thus, since our target distribution $\target$ is a stationary point of $\eobj$ in \eqref{eq: objective}, a quantitative measure of the size of the variational gradient $\vargrad  \eobj(Q)$ appears to be a sensible way to measure dissimilarity of $Q$ to $\target$.
However, the objective \eqref{eq: objective} is only well-defined for distributions $Q$ with $Q \ll Q_0$. 
This important technical point will now be resolved.

For the moment, assume that $Q$ and $Q_0$ have positive and differentiable densities $q$ and $q_0$ on $\mathbb{R}^d$.
To relax this requirement we will compute quantities that do not involve the density of $Q$. 
Let $v : \mathbb{R}^d \rightarrow \mathbb{R}^d$ be a vector field; we are going to probe the size of the variational gradient as projected along this vector field.
To this end, we introduce the class of linear differential operators
\begin{align}
    \Op{Q} v(x) \coloneqq \left[ (\nabla\log q_0)(x) - \vargrad  \cL(Q)(x)  \right] \cdot v(x) + (\nabla \cdot v)(x) , \qquad Q \in \mathcal{P}(\mathbb{R}^d) .  \label{eq: gen st op}
\end{align}
Using \eqref{eq: formal var grad}, and assuming sufficient regularity for an integration-by-parts formula to hold,
\begin{align}
	& \hspace{-10pt} \int \vargrad  \eobj(Q)(x) \cdot v(x) \; \dd Q(x) \label{eq: int of W grad} \\
	&  \hspace{-10pt}  = \int \left[ \vargrad  \cL(Q)(x) - (\nabla \log q_0)(x) \right] \cdot v(x) \; \dd Q(x) + \int (\nabla \log q)(x) \cdot v(x) \; \dd Q(x) \nonumber \\
	&  \hspace{-10pt}  = \int \left[ \vargrad  \cL(Q)(x) - (\nabla \log q_0)(x) \right] \cdot v(x) \; \dd Q(x) - \int (\nabla \cdot v)(x) \; \dd Q(x) \nonumber \\
	&  \hspace{-10pt}  = - \int \Op{Q} v(x) \; \dd Q(x) . \nonumber
\end{align}
In particular, the density $q$ does not feature in \eqref{eq: gen st op}, suggesting the following definition which can be generally applied and interpreted as measuring the size of the variational gradient of $\eobj$:

\begin{definition}[Gradient Discrepancy for entropy-regularised objectives]
\label{def: grad disc}
    For a given set $\mathcal{V}$ of differentiable vector fields on $\mathbb{R}^d$, we define the \acl{gd} as
    \begin{align}
    	\GD(Q) \coloneqq \sup_{\substack{v \in \mathcal{V} \;  \mathrm{s.t.} \\ (\Op{Q} v)_{-} \in \cL^1(Q)}} \left| \int \Op{Q} v(x) \; \dd Q(x) \right|  \label{eq: first def}
    \end{align}
    as a map $\GD : \finitemomentspace{} \to [0,\infty]$, where $\Op{Q}$ is defined in \eqref{eq: gen st op}.\footnote{The restriction $(\Op{Q} v)_{-} \in \cL^1(Q)$ 
    is imposed so that all integrals are well-defined, with infinite integrals being allowed.
    It should also be interpreted as requiring $\vargrad  \mathcal{L}(Q)$ (and hence $\Op{Q}$) to be well-defined.}
\end{definition}
\noindent In particular, $\GD(Q)$ is well-defined whenever the variational gradient $\vargrad \cL(Q)$ is well-defined, ameliorating the requirement for $Q$ to be absolutely continuous with respect to $Q_0$.
In particular we will see several examples where $\GD(Q_n)$ is well-defined for $Q_n$ with finite support.

Though the derivation is short, this construction represents a substantial generalisation of classical arguments due to \citet{stein1972bound}, \citet{hyvarinen2005estimation}, as well as the more recent work of \citet{gorham2015measuring}, all of whom required access to an unnormalised density of the target:

\begin{example}[Langevin Stein Discrepancy]
\label{ex: Langevin Stein}
In the standard Bayesian setting, the loss function $\cL$ in \eqref{eq: std bayes} satisfies $\vargrad  \cL(Q)(x) = - \nabla_x \log p(y|x)$.
Hence $\Op{Q}$ in \eqref{eq: gen st op} reduces to the ($Q$-independent) \emph{Langevin Stein operator} $\LStOp{P} v(x) \coloneqq (\nabla \log p)(x) \cdot v(x) + (\nabla \cdot v)(x)$, where $p(x)\propto q_0(x) p(y|x)$ is a density for $P$, and \eqref{eq: first def} is the \emph{Langevin Stein discrepancy}
\begin{align*}
	\SD(Q) \coloneqq \sup_{\substack{ v \in \mathcal{V} \; \mathrm{s.t.} \\ (\LStOp{P} v)_{-} \in \cL^1(Q) }} \left| \int \LStOp{P} v(x) \; \dd Q(x) \right| 
\end{align*}
of \citet{gorham2015measuring}.
Historically, the name derives from a connection to the generator approach of \citet{barbour1990stein}, in which one measures the change to $Q$ after an infinitesimal perturbation using a $P$-invariant Langevin diffusion. 
Our construction reveals a novel interpretation of $\SD(Q)$ as measuring the size of the variational gradient of $\eobj$.
\end{example}

Unlike \Cref{ex: Langevin Stein}, the proposed \ac{gd} is applicable in situations where one does not have access to an unnormalised density for the target.
To emphasise practicality, we focus on an instance of \ac{gd} for which the supremum in \eqref{eq: first def} can be explicitly computed - leveraging reproducing kernel Hilbert spaces.
To this end, recall that a bivariate function $K : \mathbb{R}^d \times \mathbb{R}^d \rightarrow \mathbb{R}^{d \times d}$ is a (matrix-valued) \emph{kernel} if it is (i) \emph{transpose-symmetric}, meaning $K(x,y) = K(y,x)^\top$ for all $x,y \in \mathbb{R}^d$, and (ii) \emph{positive semi-definite}, meaning $\sum_i \sum_j  u_i \cdot K(x_i,x_j) u_j  \geq 0$ for all $\{x_i\}_{i=1}^n \subset \mathbb{R}^d$, all $\{u_i\}_{i=1}^n \subset \mathbb{R}^d$, and all $n \in \mathbb{N}$. 
To each such kernel $K$, there exists a unique (vector-valued) \ac{rkhs}.
This Hilbert space will be denoted $\mathcal{H}_K$, and is characterised by (i) $K(\cdot,x) u \in \mathcal{H}_K$ for all $x \in \mathbb{R}^d$, $u \in \mathbb{R}^d$, and (ii) $\langle v , K(\cdot,x) u \rangle_{\mathcal{H}_K} = v(x) \cdot u$ for all $v \in \mathcal{H}_K$, $x \in \mathbb{R}^d$, and $u \in \mathbb{R}^d$; see \citet{carmeli2010vector} for background.

\begin{definition}[Kernel Gradient Discrepancy]\label{def:kgd}
Let $K : \mathbb{R}^d \times \mathbb{R}^d \rightarrow \R^{d \times d}$ be a matrix-valued kernel such that $\partial_{1, i}\partial_{2, i}K$ exists for all $i \in \{1, \dots, d\}$. 
Let $\rkhsball{K} = \{v \in \rkhs{K}: \|v\|_{\rkhs{K}} \leq 1\}$ denote the unit ball in $\mathcal{H}_K$.
The \acf{kgd} is defined as
\begin{equation}
	\KGD_{K}(Q) 
        \coloneqq \sup_{\substack{ v \in \mathcal{B}_K\ \text{s.t.} \\ (\Op{Q}v)_{-} \in \cL^1(Q) } } \left\lvert \int \Op{Q} v(x) \; \dd Q(x) \right\rvert 
\end{equation}
where, compared to \eqref{eq: first def}, we have taken $\mathcal{V}$ equal to $\mathcal{B}_K$.
\end{definition}

\begin{proposition}[Computable form of \ac{kgd}]
\label{lem: computable}
    Let $Q_0$ have a density $q_0 > 0$ 
    and define $\genpdf{Q}(x) \coloneqq q_0(x) \exp(-\cL'(Q)(x) )$.
    Let $K$ be a kernel for which the elements of $\{\genpdf{Q} v : v\in \mathcal{H}_K\}$ are partially differentiable functions. 
    Suppose $\Op{Q} \mathcal{H}_K \subset \cL^1(Q)$.
    Then 
    \begin{equation}
	\KGD_K(Q) = \left( \iint k_K^Q(x,x') \; \dd Q(x) \dd Q(x') \right)^{1/2}  \label{eq: integral KGD}
    \end{equation}
    where 
    \begin{align}
        k_K^Q(x,x') 
        \coloneqq \sum_{i=1}^d \sum_{j=1}^d \frac{1}{\genpdf{Q}( x) \genpdf{Q}(x')} \partial_{x_j'} \partial_{x_i} \left( \genpdf{Q}( x)  K_{i,j}( x, x') \genpdf{Q}( x') \right) \label{eq: PQ kernel}
    \end{align} 
    is a $Q$-dependent (scalar-valued) kernel. 
\end{proposition}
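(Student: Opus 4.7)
\emph{Plan.} The strategy is to represent first the pointwise linear functional $v \mapsto \Op{Q} v(x)$, and then the integrated functional $L : v \mapsto \int \Op{Q} v \, \dd Q$, as inner products in $\mathcal{H}_K$, reducing the supremum in the definition of $\KGD_K$ to a Hilbert-space norm via Riesz--Fr\'echet, in parallel with the classical derivation of \ac{ksd}. Using $\log \rho_Q = \log q_0 - \cL'(Q)$, I first rewrite the Stein-type operator as a Langevin operator for the self-consistent density:
\begin{align*}
\Op{Q} v(x) = (\nabla \log \rho_Q)(x) \cdot v(x) + (\nabla \cdot v)(x) = \frac{1}{\rho_Q(x)} \nabla \cdot \bigl( \rho_Q(x) v(x) \bigr).
\end{align*}
The hypothesis that every element of $\{\rho_Q v : v \in \mathcal{H}_K\}$ is componentwise partially differentiable makes the right-hand side well-defined pointwise.

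Second, by the reproducing property of $\mathcal{H}_K$ together with the assumed existence of $\partial_{1,i}\partial_{2,i} K$ --- which by a standard \ac{rkhs} differentiation argument places $\partial_{2,i} K(\cdot,x) e_i$ inside $\mathcal{H}_K$ and makes pointwise evaluation of a partial derivative a bounded linear functional --- I would exhibit
\begin{align*}
\Op{Q} v(x) = \langle v, \xi_x^Q \rangle_{\mathcal{H}_K}, \quad \xi_x^Q \coloneqq \sum_{i=1}^d \bigl[ (\partial_{i} \log \rho_Q)(x) \, K(\cdot,x) e_i + \partial_{2,i} K(\cdot, x) e_i \bigr] \in \mathcal{H}_K.
\end{align*}
Expanding $\langle \xi_x^Q, \xi_{x'}^Q \rangle_{\mathcal{H}_K}$ into its four cross terms --- using $\langle K(\cdot,x)e_i, K(\cdot,x')e_j\rangle_{\mathcal{H}_K} = K_{i,j}(x,x')$ and its first and mixed second derivatives --- I would match the result term-by-term with the product-rule expansion of $\rho_Q(x)^{-1} \rho_Q(x')^{-1} \partial_{x_j'} \partial_{x_i}\bigl[\rho_Q(x) K_{i,j}(x,x') \rho_Q(x')\bigr]$, confirming $\langle \xi_x^Q, \xi_{x'}^Q\rangle_{\mathcal{H}_K} = k_K^Q(x,x')$.

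Third, the hypothesis $\Op{Q}\mathcal{H}_K \subset \cL^1(Q)$ makes the side constraint $(\Op{Q} v)_- \in \cL^1(Q)$ vacuous and makes $L$ well-defined on all of $\mathcal{H}_K$. A closed graph argument --- $\mathcal{H}_K$-convergence implies pointwise convergence of $v$ and of its first partials (by reproducing), hence of $\Op{Q} v$, which must agree $Q$-almost everywhere with any $\cL^1(Q)$-limit along a subsequence --- establishes continuity of $L$. Riesz--Fr\'echet then supplies $\mu^Q \in \mathcal{H}_K$ with $L(v) = \langle v, \mu^Q \rangle_{\mathcal{H}_K}$, so $\KGD_K(Q) = \sup_{\|v\|\leq 1}|L(v)| = \|\mu^Q\|_{\mathcal{H}_K}$. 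The main obstacle is identifying $\|\mu^Q\|_{\mathcal{H}_K}^2$ with $\iint k_K^Q \, \dd Q \, \dd Q$ under the weak integrability hypothesis on hand: the continuity of $L$ is exactly the weak-integrability condition allowing $\mu^Q$ to be constructed as the Pettis integral $\int \xi_x^Q \, \dd Q(x)$, and a Fubini computation then yields $\|\mu^Q\|_{\mathcal{H}_K}^2 = \iint \langle \xi_x^Q, \xi_{x'}^Q\rangle_{\mathcal{H}_K} \, \dd Q(x) \, \dd Q(x') = \iint k_K^Q(x,x') \, \dd Q(x) \, \dd Q(x')$, completing the identification.
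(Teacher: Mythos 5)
Your proof is correct and takes essentially the same route as the paper: both arguments hinge on exhibiting the same representer for $v \mapsto \Op{Q}v(x)$ (your $\xi_x^Q$ is precisely the product-rule expansion of the paper's feature $\genpdf{Q}(x)^{-1}\sum_{i}\partial_{x_i}\bigl(K(\cdot,x)\genpdf{Q}(x)e_i\bigr)$) and on verifying $\langle \xi_x^Q, \xi_{x'}^Q\rangle_{\mathcal{H}_K} = k_K^Q(x,x')$. The only difference is in the final packaging --- the paper invokes the image-RKHS lemmas of \citet{barp2024targeted} (the image of $\mathcal{B}_K$ under $\Op{Q}$ is the unit ball of the RKHS with kernel $k_K^Q$, whence the standard mean-embedding formula), whereas you give a self-contained Riesz/Pettis-integral argument --- and both yield the same identity.
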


\noindent The proof is contained in \Cref{app: prelim KGD}.
Note that $\KGD_K(Q_n)$ can be exactly computed for $Q_n$ with finite support at cost $\Omega(d n^2)$, as explained below in \eqref{eq: KGD empirical}, provided the kernel in \eqref{eq: PQ kernel} can be pointwise evaluated.
A simple example of \eqref{eq: PQ kernel} is given in \Cref{rem: explicit formula kgd} of \Cref{app: prelim KGD}.
The use of \ac{rkhs} is a common trick from the machine learning literature, being used heavily in the form of maximum mean discrepancy \citep{smola2007hilbert,sriperumbudur2011universality} and \acp{ksd} \citep{liu2016kernelized,chwialkowski2016kernel,oates2017control}.
Indeed, \ac{kgd} coincides with \ac{ksd} for the average negative log-likelihood loss $\cL$ in \eqref{eq: std bayes}.
Through selection of an appropriate kernel one can obtain several desirable properties of the \ac{kgd}, such as continuity and convergence control; explicit theoretical guarantees will be presented in \Cref{sec: theory}.
Before proceeding to this theory, we first highlight some of the use-cases of \ac{kgd}, in \Cref{sec: applications}.

\section{Illustrative Applications}
\label{sec: applications}

The aim of this section is to highlight some of the novel functionalities and opportunities that are unlocked by our computable measure of suboptimality for
(relative) entropy-regularised variational objectives.
First, we demonstrate how \ac{kgd} can be applied to tuning and comparing existing numerical methods targeting the minimiser $\target$ of \eqref{eq: objective}, in \Cref{subsec: measuring}.
Then, in \Cref{sec: new algs} we demonstrate the potential of \ac{kgd} for developing efficient numerical algorithms, presenting the first extensible sampling algorithm in the context of \eqref{eq: objective}, deriving two new distinct approaches to variational inference without requiring explicit densities for the parametric model, and providing theoretical foundations for nonlinear Stein variational gradient descent \citep{wang2019nonlinear}.

\subsection{Measuring Sample Quality}
\label{subsec: measuring}

A popular algorithm for approximating the solution of \eqref{eq: objective} is \acf{mfld} \citep{del2013mean}, which is based on stochastically evolving a collection of $n$ particles according to 
\begin{align*}
    X_i^{t+1} = X_i^t + \epsilon [(\nabla \log q_0) - \vargrad  \mathcal{L}(Q_n^t)](X_i^t) + \sqrt{2 \epsilon} Z_t^i , \; Z_t^i \stackrel{\mathrm{iid}}{\sim} \mathcal{N}(0,1), \; Q_n^t \coloneqq \frac{1}{n} \sum_{j=1}^n \delta_{X_j^t} ,
\end{align*}
and can be seen as a generalised form of the \ac{ula} \citep{roberts1996exponential} suitable for a variational target.
As with \ac{ula}, the samples produced using \ac{mfld} are biased, but unlike \ac{ula} one cannot apply a Metropolis correction %
since one does not have access to the (unnormalised) density of the target.
This renders the practical selection of the step size $\epsilon$ difficult; convergence to stationarity is slow for $\epsilon$ small, while larger $\epsilon$ incurs a bias of $O(\epsilon)$ \citep{nitanda2022convex}.
As \ac{ula} is a first-order discretisation of the Wasserstein gradient flow of the reverse \ac{kld} \citep{wibisono2018sampling}, higher-order discretisations can be considered, but these too involve parameters whose selection can be difficult.
To resolve this problem, \ac{kgd} provides as a computable alternative to the intractable variational objective $\eobj$, enabling sample quality to be measured and enabling $\epsilon$ to be tuned.
This idea will now be illustrated:

\begin{example}[Mean field neural network]
\label{ex: mfnn}
A \ac{mfnn} is a function of the form
\begin{align}
    f(z) = \mathbb{E}_{X \sim Q}[\Phi(z,X)]  \label{eq: mfnn def}
\end{align}
where $\Phi(\cdot,x)$ represents a small neural network (e.g. a single neuron) with parameters $x \in \mathbb{R}^d$ \citep{nitanda2017stochastic}. %
Introduced as an expressive yet mathematically tractable model, \eqref{eq: mfnn def} simplifies the mathematical analysis of neural networks by approximating the large number of interactions between neurons $\Phi$ with a mean field.
Training an entropy-regularised \ac{mfnn} can be cast as optimisation of $Q$ based on $\eobj$  where $\mathcal{L}$ is the (scaled) empirical risk, i.e.
\begin{equation}\label{eq:empirical_risk}
\mathcal{L}(Q) = \frac{\lambda}{N} \sum_{i=1}^N \ell(y_i , \mathbb{E}_{X \sim Q}[\Phi(z_i,X)] ) ,
\end{equation}
where the regression or classification task is to predict the response $y_i$ from the covariates $z_i$, minimising a predictive loss $\ell$, and the parameters of the neural network are distributed according to $Q$ \citep{hu2021mean,chizat2022mean,nitanda2025propagation}. 
For illustration, we consider a univariate regression task %
with the squared error loss $\ell(y,y') = (y-y')^2$, for which the variational gradient
\begin{align*}
    \vargrad  \mathcal{L}(Q)(x) = - \frac{2\lambda}{N} \sum_{i=1}^N (y_i - \mathbb{E}_{X \sim Q}[\Phi(z_i,X)] ) \nabla_x \Phi(z_i,x)
\end{align*}
follows from calculations in Section 3.1 of \citet{hu2021mean}. 
A synthetic dataset was generated as described in \Cref{app: detail mfnn}, and we let $\Phi(z,x) = w_2 \cdot \tanh (w_1 \cdot z + b_1) + b_2$ be a 2-layer neural network with parameter $x = (w_1,b_1,w_2,b_2)$, so that $d = 4$, and $Q_0 = \mathcal{N}(0,0.5 I_d)$. 
In this experiment we initialise $n = 100$ particles as independent samples from $Q_0$ and then run $T =  10^3$ steps of \ac{mfld} with step size $\epsilon$, letting $Q_\epsilon$ be the empirical distribution so-obtained.
Full details for this experiment are contained in \Cref{app: detail mfnn}.

Our claim is that minimisation of $\KGD_K(Q_\epsilon)$ provides a principled criterion for selection of the step size $\epsilon$ in \ac{mfld}.
To assess this claim, we calculated the test (generalisation) error of the \ac{mfnn}, i.e. \eqref{eq:empirical_risk} computed on a set of samples not seen during training, with $Q_\epsilon$ as $\epsilon$ is varied.
Results in \Cref{fig: tuning} are based on the kernel $K(x,x') = k(x,x') I_{d\times d}$ with $k(x,x') = (1 + \|x-x'\|^2)^{-1/2}$, and indicate that both the test error and $\KGD_K(Q_\epsilon)$ are approximately minimised when $\epsilon \in [10^{-4},10^{-3}]$, supporting the use of \ac{kgd} in this context.
As a further validation, we formed an \emph{ad hoc} kernel density approximation $\hat{Q}_\epsilon$ (i.e. a smoothed version of $Q_\epsilon$; \Cref{app: detail tuning}) and reported estimates of $\eobj(\hat{Q}_\epsilon)$, since $\eobj(Q_\epsilon)$ is not well-defined.
From \Cref{fig: tuning} we observe that the surrogate objective $\eobj(\hat{Q}_\epsilon)$ is minimised 
in the same range obtained using \ac{kgd}.
Further, $\KGD_K(Q_\epsilon)$ and $\KGD_K(\hat{Q}_\epsilon)$ are in close agreement, suggesting the kernel density estimate $\hat{Q}_\epsilon$ is a reasonable approximation to $Q_\epsilon$ in this low-dimensional setting, so $\eobj(\hat{Q}_\epsilon)$ should be an accurate reflection of $\eobj(Q_\epsilon)$.  

Together, these results illustrate the potential of \ac{kgd} as a tuning criterion for algorithms based on \ac{mfld}.
For example, \ac{kgd} can provide a principled solution to the challenging practical problem of balancing the step size $\epsilon$, the number of particles $N$, and time horizon $T$, subject to a fixed computational budget.
\end{example}

\begin{figure}[t]
\centering
    \includegraphics[width=\textwidth]{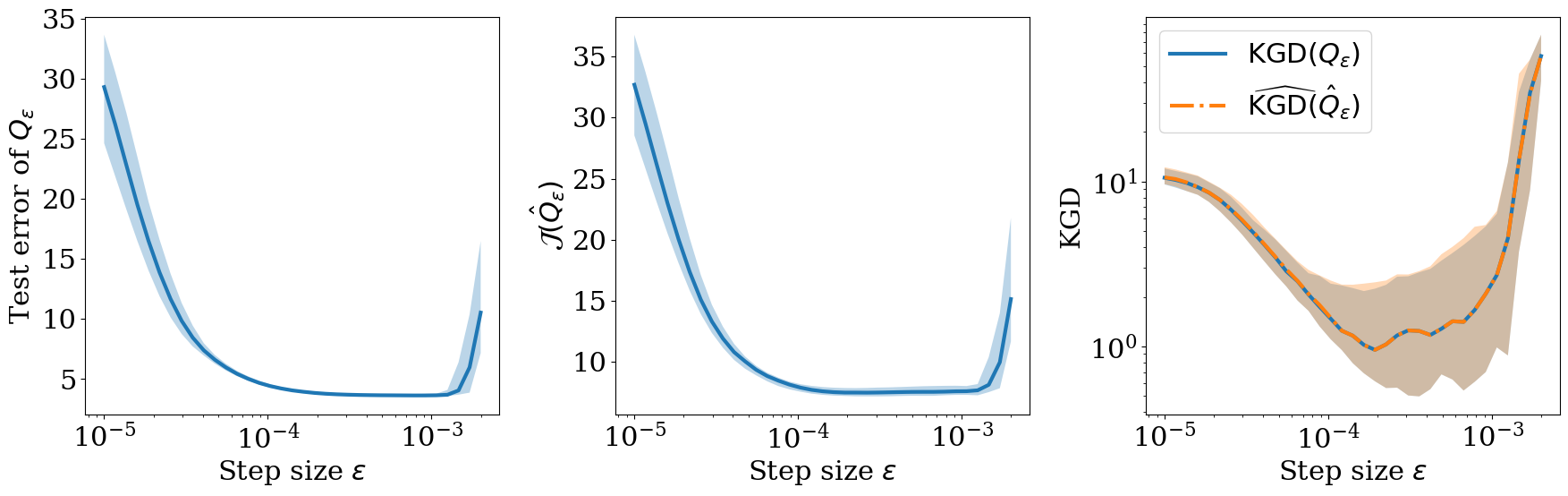}
    \caption{\ac{kgd} provides a computable measure of sample quality for approximations $Q_\epsilon$ to the minimiser $\target$ of $\eobj$ in \eqref{eq: objective}, which we use here to select the step size $\epsilon$ in \acf{mfld} in the setting of \Cref{ex: mfnn}.  
    For $\epsilon$ too small the dynamics does not converge within the $T = 10^3$ iteration limit, while for $\epsilon$ too big the $O(\epsilon)$ bias is substantial.
    The \ac{kgd} is minimised when $\epsilon \in [10^{-4},10^{-3}]$, which is consistent with minimising both the test (generalisation) error and the surrogate objective $\eobj(\hat{Q}_\epsilon)$.
    In total 50 replicates were performed, with the median, $10$th and $90$th percentiles reported.
    }
    \label{fig: tuning}
\end{figure}

\subsection{Efficient Numerical Algorithms}
\label{sec: new algs}

Beyond tuning existing algorithms, \ac{kgd} offers an opportunity to develop novel algorithms with the potential to be more efficient than \ac{mfld}.
This is an important goal, since numerical methods are under-developed in the post-Bayesian setting, and because post-Bayesian targets $P$ are typically more complicated than their Bayesian counterparts due to the absence of a Bernstein--von Mises phenomenon when the model is misspecified \citep{mclatchie2025predictively}.
Here we distinguish between methods that exploit differentiation through the \ac{kgd} (\Cref{sec: autodiff KGD}), and more widely-applicable methods for which differentiation through the \ac{kgd} is not required (\Cref{sec: no autodiff}).

\begin{figure}[t]
\centering
    \includegraphics[width=\textwidth]{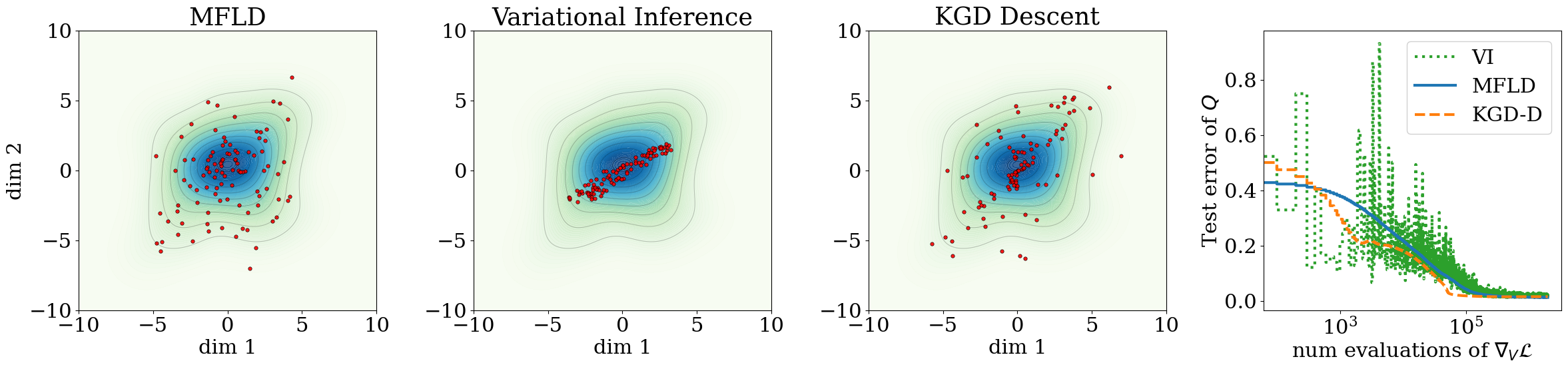}
    \caption{Comparing \acf{mfld} (left) with new sampling schemes that exploit differentiation of the \ac{kgd}, in the setting of \Cref{ex: mfnn}.
    Centre left: The parametric variational inference method of \Cref{sec: variational}.
    Centre right:  The \ac{kgd} descent method of \Cref{sec: KGD descent}.
    All experiments shown here entailed a comparable computational cost.
    The contour lines represent a two-dimensional marginal of the target $\target$, and were obtained at substantial computational cost using a small step size and a large number of particles in \ac{mfld}. 
    Right:  The performance of these methods was measured using the test generalisation error, as a function of the number of backpropagations (evaluations of $\vargrad \cL$) performed on the neural network, which is the principal computational bottleneck.
    Full details are contained in \Cref{app: detail mfnn}.
    }
    \label{fig: different methods}
\end{figure}

\subsubsection{Methods Based on Differentiation of KGD}
\label{sec: autodiff KGD}

For regular problems where automatic differentiation can be exploited, such as the \ac{mfnn} in \Cref{ex: mfnn}, two novel algorithms are presented:

\paragraph{Parametric Variational Inference}
\label{sec: variational}

Parametric variational inference methods, which range in their sophistication from mean field variational Bayes \citep{fox2012tutorial} to methods based on normalising flows \citep{papamakarios2021normalizing}, are widely used in the standard Bayesian context and can be applied also to \eqref{eq: objective}.
Indeed, if we let $\{Q_\theta\}_{\theta \in \Theta}$ be a parametric approximating class for which the density of each $Q_\theta$ is explicit, we can practically search for a parameter $\theta$ for which $\eobj(Q_\theta)$ is minimised.
\ac{kgd} opens up the possibility of performing parametric variational inference \emph{without} requiring explicitly tractable densities for each $Q_\theta$; for example, one can consider $Q_\theta = T_\#^{\theta} \mu_0$ for arbitrary transport maps $T^\theta : \mathbb{R}^{d_0} \rightarrow \mathbb{R}^d$ and reference distributions $\mu_0 \in \cP(\R^{d_0})$ of arbitrary dimension $d_0 \in \mathbb{N}$, relaxing the usual requirement for $T^\theta$ to be a diffeomorphism with an easily-computed Jacobean determinant. %
Indeed, while for a given probability density $Q_{\theta}$ it is usually not possible to exactly compute $\KGD_K(Q_\theta)$ or its gradient with respect to $\theta$, it is generally easy to sample from $Q_{\theta}$ and to construct an approximation using Monte Carlo. 
To this end, we consider \ac{kgd} as a computable variational objective
$$
\theta_\star \in \argmin_{\theta \in \Theta} \; \KGD_K(Q_\theta)
$$ 
to select an appropriate parameter $\theta_\star$.
Beyond simply being computable, the freedom to select the kernel $K$ offers an opportunity to tune the estimator for best performance at the task at hand \citep{chen2025weighted}.
In particular, this generalises the \ac{ksd} variational inference approach of \citet{fisher2021measure}.

The favorable sample complexity of \ac{kgd}, with a dimension-independent convergence rate, ensures this objective can be efficiently stochastically approximated: 

\begin{proposition}[Sample complexity of KGD] \label{prop: sample complexity}
Consider the setting of \Cref{prop:continuity} (cf. \Cref{sec:continuity}) with $Q \in \finitemomentspace{2\polyorder}$ for some $\polyorder \geq 0$.
Let $(x_i)_{i \in \mathbb{N}}$ be a sequence of independent samples from $Q \in \mathcal{P}(\mathbb{R}^d)$ and let $Q_n \coloneqq \frac{1}{n} \sum_{i=1}^n \delta_{x_i}$.
Then 
\begin{align*}
   \sqrt{n}( \KGD_K^2(Q_n) - \KGD_K^2(Q) ) \stackrel{\mathrm{d}}{\rightarrow} \mathcal{N}(0,\sigma_{K,Q}^2) 
\end{align*}
for some\footnote{Here $\mathcal{N}(\mu,\sigma^2)$ denotes a normal distribution with mean $\mu$ and variance $\sigma^2$, with the case $\sigma = 0$ being interpreted as a Dirac distribution $\delta_\mu$.} $\sigma_{K,Q}^2 \in [0,\infty)$.
\end{proposition}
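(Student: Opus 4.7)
The plan is to apply a functional delta method to $F(Q) \coloneqq \KGD_K^2(Q)$, which by \Cref{lem: computable} admits the V-statistic representation $F(Q) = \iint k_K^Q(x,y) \; \dd Q(x) \dd Q(y)$. The nontrivial feature here is that the kernel $k_K^Q$ itself depends on the sampling distribution through the effective score $u_Q \coloneqq \nabla \log q_0 - \vargrad \cL(Q)$, so standard V-statistic central limit theorems do not apply directly and one must account for the $Q$-dependence of the kernel.

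The first step I would carry out is to linearise $F$ around $Q$. Decomposing
\begin{align*}
F(Q_n) - F(Q) = \iint k_K^Q \; \dd(Q_n^{\otimes 2} - Q^{\otimes 2}) + \iint \bigl(k_K^{Q_n} - k_K^Q\bigr) \; \dd Q_n^{\otimes 2},
\end{align*}
the first piece is a classical symmetric V-statistic with a fixed kernel, while the second piece can be Taylor-expanded to first order through the chain rule $\delta u_Q(y)/\delta Q(x) = -\nabla_y \cL''(Q)(y,x)$, where $\cL''(Q)$ denotes the second Gâteaux variation of $\cL$. Combining these contributions produces an influence-function expansion
\begin{align*}
    F(Q_n) - F(Q) = \int \phi_Q(x) \; \dd(Q_n - Q)(x) + R_n,
\end{align*}
where $\phi_Q(x) = 2\int k_K^Q(x,y) \; \dd Q(y) + \iint \tfrac{\delta k_K^Q(y,z)}{\delta Q}(x) \; \dd Q(y) \dd Q(z)$ combines the standard V-statistic contribution with a correction accounting for the $Q$-dependence of the kernel. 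Under the polynomial-growth assumptions underlying \Cref{prop:continuity}, both $k_K^Q$ and its functional derivative grow at most polynomially in $x$, so the hypothesis $Q \in \finitemomentspace{2\polyorder}$ ensures $\phi_Q \in \cL^2(Q)$; the classical CLT applied to the iid sequence $(\phi_Q(X_i))_{i \in \mathbb{N}}$ then delivers $\sqrt{n} \int \phi_Q \; \dd(Q_n - Q) \toL{\mathrm{d}} \mathcal{N}(0, \sigma_{K,Q}^2)$ with $\sigma_{K,Q}^2 \coloneqq \mathrm{Var}_{X \sim Q}(\phi_Q(X)) \in [0, \infty)$. The degenerate value $\sigma_{K,Q}^2 = 0$, which arises for instance at $Q = \target$, is compatible with the statement because degenerate V-statistics concentrate at the faster rate $n^{-1}$, so that $\sqrt{n}(F(Q_n) - F(Q)) \to 0$ in probability in that case.

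The hard part will be showing the remainder is $R_n = o_p(n^{-1/2})$. The V-statistic piece contributes $O_p(1/n)$ through the standard Hoeffding decomposition, but the kernel-drift piece requires control over the \emph{second} variation of $k_K^Q$ uniformly along the segment from $Q$ to $Q_n$, which in turn calls for regularity of $\cL$ beyond the first-order conditions of \Cref{prop:continuity}. Assuming $\cL$ is twice Gâteaux-differentiable with polynomial-growth second variation, and combining this with the $n^{-1/2}$ rate at which $Q_n$ approaches $Q$ in an appropriate weighted integral probability metric, should close the remainder bound. Verifying these second-order assumptions while remaining consistent with the setting of \Cref{prop:continuity} is where the principal technical work would lie.
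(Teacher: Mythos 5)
Your plan treats the $Q$-dependence of the Stein kernel $k_K^Q$ as a genuine nonlinear functional dependence, to be handled by a delta method with an influence-function expansion and a remainder bound. This misses the structural fact that makes the paper's proof short: in the setting of \Cref{prop:continuity} the loss is of interaction-energy form, so $\vargrad \cL(Q)(x)$ is itself an integral of a fixed function $\nabla_i w$ against $Q^{\otimes(r-1)}$ (\Cref{asm: kernel condition 4} of \Cref{asm: for continuity}). Substituting this into $k_K^Q$ shows that $\KGD_K^2(Q)$ is an \emph{exact multilinear} functional of $Q$, namely $\KGD_K^2(Q) = \int f \; \dd Q^{\otimes(\ell+2)}$ for a fixed, $Q$-independent function $f$ of polynomial growth (see \eqref{eq:kgd-asint-intermediate} and \eqref{eq:continuity-f-form}). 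Consequently $\KGD_K^2(Q_n)$ is precisely a non-symmetric $V$-statistic of order $\ell+2$ with a fixed kernel, and the stated asymptotic normality follows from the classical $V$-statistic CLT (\Cref{cor: nonsymmetric V}), the second-moment condition being supplied by $Q \in \finitemomentspace{2\polyorder}$ and the $\polyorder$-growth of $f$ in each argument. There is no kernel-drift remainder to control and no linearisation step at all.

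As written, your argument has a genuine gap: the claim $R_n = o_p(n^{-1/2})$ is the entire content of the theorem under your decomposition, and you explicitly leave it open, while also importing hypotheses (twice G\^{a}teaux differentiability of $\cL$ with polynomial-growth second variation) that are not part of the setting of \Cref{prop:continuity} and are not needed. Your candidate influence function $\phi_Q$ and the variance $\sigma_{K,Q}^2 = \mathrm{Var}_{X\sim Q}(\phi_Q(X))$ do agree with what the H\'{a}jek projection of the order-$(\ell+2)$ symmetrised kernel $f$ would produce, so the end formula is consistent with the paper's; but to turn your sketch into a proof you would either need to carry out the remainder analysis in full, or (much more simply) observe the multilinearity above, which collapses the problem to a standard $V$-statistic and renders the delta method unnecessary.
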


\noindent The proof is contained in \Cref{app: sample complexity}; the exact expression for $\sigma_{K,Q}^2$ is complicated and dimension-dependent but it can in principle be deduced from the explicit calculations in the proof.
The setting of \Cref{prop:continuity} allows for the kernel $K$ to grow at a rate that depends on $\polyorder$ (cf. \Cref{asm: kernel condition 1} of \Cref{asm: for continuity}), 
so that one can trade-off growth of the kernel with the existence of moments of $Q$; see \Cref{sec:continuity}. 
As an illustration, we took $T_\#^{\theta}$ to be a fully-connected feed-forward neural network composed of three hidden layers, with ELU activation function, a uniform reference measure $\mu_0 = \mathcal{U}([-3,3]^{d_0} )$ in dimension $d_0 = 4$, and learned $\theta_\star$ by Adam optimisation on the (squared) \ac{kgd}.
Note that, in contrast to architectures used in normalizing flows \citep{dinh2016density}, $Q_\theta$ does not admit a density with an easily computable Jacobian determinant, and as such (stochastic) gradient descent on the original objective $\eobj(Q_\theta)$ would have been impractical.
The resulting approximation is displayed in the centre left panel of \Cref{fig: different methods}.
It can be seen that the approximation quality is modest compared to \ac{mfld}; we suspect this is due to the non-convexity of the optimisation problem in $\theta$, meaning the optimiser may not have converged.
Full details are contained in \Cref{app: detail mfnn}.

\paragraph{KGD Descent}
\label{sec: KGD descent}

When we no longer require explicit densities for the parametric approximating distributions $Q_\theta$, a natural extension of the approach in \Cref{sec: variational} is to take $Q_\theta$ to be an empirical distribution $Q_n$ parametrised by the support points $\{x_i\}_{i=1}^n$.
This can be instantiated as gradient descent on $(x_1,\dots,x_n) \mapsto \KGD_K^2(Q_n)$, noting that the \ac{kgd} of an empirical distribution with finite support can be explicitly computed as
\begin{align}
    \KGD_K^2(Q_n) = \frac{1}{n^2} \sum_{i=1}^n \sum_{j=1}^n k_K^{Q_n}(x_i,x_j) , \qquad Q_n \coloneqq \frac{1}{n} \sum_{i=1}^n \delta_{x_i} .  \label{eq: KGD empirical}
\end{align}
A continuous-time gradient descent then corresponds to the coupled system of \acp{ode}
\begin{align}
\frac{\mathrm{d}x_i^t}{\mathrm{d}t} & = - \vargrad \KGD_K^2(Q_n^t)(x_i^t) , \qquad Q_n^t \coloneqq \frac{1}{n} \sum_{j=1}^n \delta_{x_j^t} ,
\label{eq: kgdd}
\end{align}
with $\{x_i^0\}_{i=1}^n$ initialised, for example, as independent samples from some $\mu_0 \in \cP(\R^d)$. 
Numerically integrating \eqref{eq: kgdd} forward in time results in a deterministic algorithm that generalises (to nonlinear $\cL$) the \ac{ksd} descent algorithm of \citet{korba2021kernel}.
Note that the `descent' here is with respect to the (squared) \ac{kgd}, while \ac{mfld} performs descent with respect to $\eobj$.
For illustration, we performed \ac{kgd} descent using $n = 100$ particles initialised at $\mu_0 = \mathcal{N}(0,9\idmat)$, and using Adam optimisation to integrate the system of \acp{ode} in \eqref{eq: kgdd}.
The resulting approximation is displayed in the centre right panel of \Cref{fig: different methods}; this method achieved the lowest test errors and the lowest values of \ac{kgd}.
Full details are contained in \Cref{app: detail mfnn}.

\subsubsection{Generally-Applicable Methods Based on KGD}
\label{sec: no autodiff}

The methods described in \Cref{sec: autodiff KGD} exploited differentiation through the \ac{kgd} but, as the following example illustrates, this is not always practical:

\begin{example}[Predictively Oriented Posteriors]
\label{ex: LV}
Given a (possibly misspecified) parametric statistical model $p(\cdot | x)$ for independent data $\{y_i\}_{i=1}^N \subset \mathcal{Y}$, consider lifting $p(\cdot | x)$ to a mixture model $\int p(\cdot | x) \, \mathrm{d}Q(x)$, now parametrised by the mixing distribution $Q$.
A \emph{PrO posterior} associated to a kernel scoring rule is defined as the minimiser $P$ of the entropy-regularised objective \eqref{eq: objective} with
\begin{align}
\cL(Q) = \frac{1}{2 \lambda_N} \mathrm{MMD}^2\left( \int p(\cdot | x) \; \mathrm{d}Q(x) , \frac{1}{N} \sum_{i=1}^N \delta_{y_i} \right) \label{eq: wass grad PCUQ}
\end{align}
where the learning rate $\lambda_N$ is user-determined \citep[see e.g.][]{shen2024prediction,mclatchie2025predictively,lai2024predictive}.
In the well-specified case, $P$ collapses to a point mass on the true data-generating parameter $x^\dagger$ as $N \rightarrow \infty$, while in the case of `non-trivial' misspecification \citep[in the sense of][Definition 2]{mclatchie2025predictively} an asymptotic collapse of uncertainty is avoided\footnote{For instance, one can contrast learning a Gaussian model in the standard Bayesian framework with \eqref{eq: wass grad PCUQ}; the latter enables consistent approximation of all (sufficiently regular) non-Gaussian $P$ via the learning of a non-trivial mixing distribution $Q$, in effect lifting the Gaussian model to a Gaussian mixture model \citep[][Chapter 3]{goodfellow2016deep}.}.
This is in contrast to Bayesian methods, which typically collapse to a point mass in the large data limit whether or not the model is well-specified.
To avoid overloading notation, let $\kappa : \mathcal{Y} \times \mathcal{Y} \rightarrow \mathbb{R}$ be the kernel defining the \ac{mmd} \citep[see][for background]{smola2007hilbert}.
Following identical calculations to Section 5.1 of \citet{shen2024prediction}, we obtain the variational gradient
\begin{align*}
\vargrad  \cL(Q)(x) & = \frac{1}{\lambda_N} \int \nabla_1 w(x,x') \; \dd Q(x') 
\end{align*}
where, up to a term that is constant in $x$,  %
\begin{align*}
    w(x,x') & \coloneqq \iint \kappa(y,y') p(y|x) p(y'|x') \; \mathrm{d}y \mathrm{d}y' - \frac{1}{N} \sum_{i=1}^N \int \kappa(y_i,y) p(y | x) \; \mathrm{d}y + \text{cst} .
\end{align*}
Here we consider parameter inference for the Lotka--Volterra \ac{ode} model following Section 6.1 of \citet{shen2024prediction}; full details for the experiments that we are about to report are contained in \Cref{app: detail pcuq}.
Evaluation of the variational gradient of \eqref{eq: wass grad PCUQ} requires differentiating through the \acp{ode}, which entails a non-trivial computational cost.
\end{example}

\noindent One could in principle apply the methods described in \Cref{sec: autodiff KGD} to \Cref{ex: LV} as efficient alternatives to \ac{mfld}, but differentiating through the \ac{kgd} here entails computing second-order derivatives of the solution of the \ac{ode}, which is notoriously difficult and should generally avoided.
As such, alternative methods are needed.
Two such methods will now be presented:

\begin{figure}[t]
\centering
    \includegraphics[width=\textwidth]{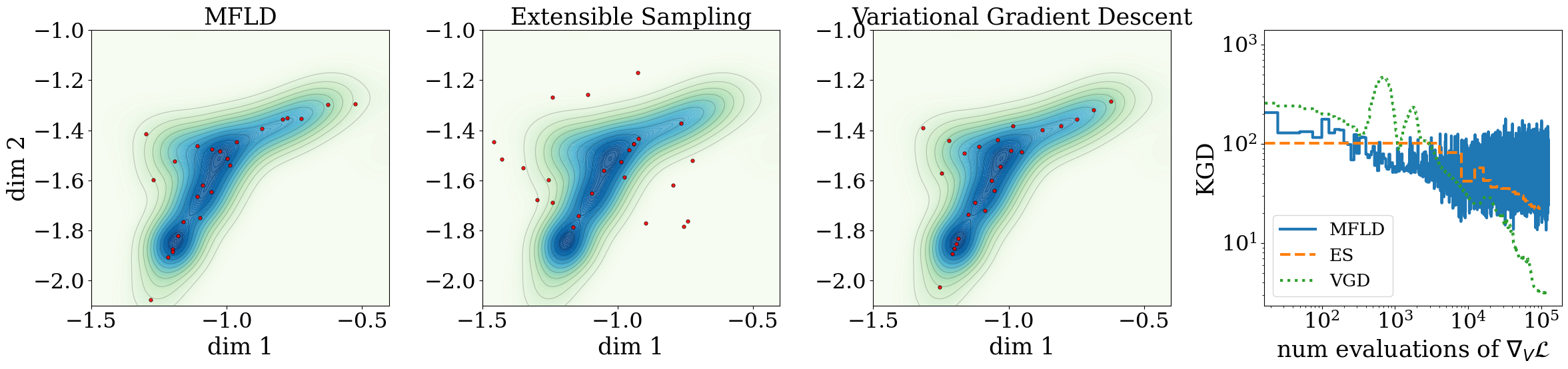}
    \caption{Comparing \acf{mfld} (left) with two new sampling schemes that do not require differentiation of the \ac{kgd}, in the setting of \Cref{ex: LV}.
    Centre left:  The extensible sampling method of \Cref{sec: extensible}, which generates a sequence $(x_n)_{n \in \mathbb{N}}$%
    .
    Centre right:  The variational gradient descent method of \Cref{sec: svgd}.
    The contour lines represent the target $\target$, and were obtained at substantial computational cost using a small step size and a large number of particles in \ac{mfld}.
    Right:  The performance of these methods was measured using \ac{kgd}, as a function of the number of \ac{ode} solves (evaluations of $\vargrad \cL$), which is the principal computational bottleneck.
    Full details are contained in \Cref{app: detail pcuq}.
    }
    \label{fig: different methods 2}
\end{figure}

\paragraph{Extensible Sampling}
\label{sec: extensible}

An \emph{extensible} sampling algorithm adds one particle at a time such that the empirical distribution of the particles converges to the target.
The simplest example of an extensible sampling algorithm is Monte Carlo; the benefit of an extensible algorithm is that one is able to continue running the algorithm until a desired level of accuracy has been achieved.
However, to-date there does not appear to exist an extensible sampling algorithm suitable for targets defined as minimisers of \eqref{eq: objective}. 
For example, in \ac{mfld} one must specify a step size and a finite number of particles at the outset; due to this discretisation bias, the resulting approximation does not converge to $\target$, even in the ergodic limit.
Our aim here is to illustrate an extensible algorithm based on \ac{kgd}, the simplest of which would be the greedy algorithm
\begin{align}
x_n \in \argmin_{x \in \mathbb{R}^d} \; \KGD_K \left( \frac{1}{n} \delta_x + \frac{1}{n} \sum_{i=1}^{n-1} \delta_{x_i} \right) \label{eq: greedy}
\end{align}
which adds one sample $x_n$ at a time to the existing samples $\{x_i\}_{i=1}^{n-1}$ in such a manner that the \ac{kgd} between the empirical measure and the target is minimised.
Greedy algorithms can be realised using any (deterministic or randomised) numerical optimisation method applied to \eqref{eq: greedy}.
Their potential has been demonstrated in the standard Bayesian setting, where \ac{kgd} coincides with \ac{ksd}  \citep{chen2018stein,chen2019stein,teymur2021optimal,riabiz2022optimal}; our contribution unlocks this class of algorithms for variational targets.
Samples produced using the greedy algorithm applied to \Cref{ex: LV} are shown in the second panel of \Cref{fig: different methods 2}.
Compared to samples from \ac{mfld}, we can see wider spacing among the particles; we conjecture that this occurs because once a sub-optimal sample is selected (e.g. too far into the tail, due to imperfect numerical solution of \eqref{eq: greedy}) then future samples are required to `compensate' for the bias that was introduced (e.g. by being too far into the opposite tail).
This interpretation is supported by inspection of how the samples are sequentially generated in \Cref{fig: different methods 2 evolution} of \Cref{app: detail pcuq}.

\paragraph{Variational Gradient Descent}
\label{sec: svgd}

\Ac{svgd} is a nonparametric variational inference method for the standard Bayesian setting, which has attracted considerable recent interest \citep{liu2016stein}.
A heuristic argument was used  to generalise \ac{svgd} to (relative) entropy-regularised variational objectives such as \eqref{eq: objective} in \citet{wang2019nonlinear}.
However, the consistency of \ac{svgd} in this generalised setting was an open problem; we resolve this by presenting non-asymptotic error bounds in which \ac{kgd} features as a key ingredient.

To motivate this generalisation, consider the directional derivatives 
$$
\left. \frac{\mathrm{d}}{\mathrm{d}\epsilon} \eobj((\Id + \epsilon v)_\# Q) \right|_{\epsilon = 0} 
$$
as specified by a suitable vector field $v : \R^d \rightarrow \R^d$, where $\Id$ is the identity map on $\R^d$.
For the purpose of optimisation, we seek a vector field $v$ for which the rate of decrease in $\eobj$ is maximised. 
To this end, 
let $Q_t^\epsilon \coloneqq (1-t) Q + t(\Id + \epsilon v)_\# Q$ denote a path connecting $Q$ and $(\Id + \epsilon v)_\# Q$ in $\finitemomentspace{}$.
Then, from the fundamental theorem of calculus applied to $t \mapsto \eobj(Q_t^\epsilon)$ as $t$ ranges from $0$ to $1$, and definition of the first variation, 
\begin{align*}
    \lim_{\epsilon \rightarrow 0} \frac{\eobj((\Id + \epsilon v)_\# Q) - \eobj(Q)}{\epsilon} & = \lim_{\epsilon \rightarrow 0} \frac{1}{\epsilon} \int_0^1 \int \eobj'(Q_t^\epsilon)(x) \; \mathrm{d}((\Id + \epsilon v)_\# Q - Q)(x) \mathrm{d}t \\
    & = \lim_{\epsilon \rightarrow 0} \frac{1}{\epsilon} \int_0^1 \int ( \eobj'(Q_t^\epsilon)(x + \epsilon v(x)) - \eobj'(Q_t^\epsilon)(x) ) \; \mathrm{d}Q(x) \mathrm{d}t \\
    & =\int \vargrad  \eobj(Q)(x) \cdot v(x) \; \mathrm{d}Q(x) ,
\end{align*}
where we have assumed sufficient regularity to interchange limit and integral.
The final term we already encountered in \eqref{eq: int of W grad}, so we can evoke \eqref{eq: gen st op} to get
\begin{align}
    \left. \frac{\mathrm{d}}{\mathrm{d}\epsilon} \eobj((\Id + \epsilon v)_\# Q) \right|_{\epsilon = 0} = - \int \Op{Q} v(x) \; \dd Q(x) .  \label{eq: KL to KGD}
\end{align}
That is, the directional derivative of the objective $\eobj$ in \eqref{eq: objective} can be expressed as an explicit $Q$-dependent linear functional applied to the vector field.

Next, following the same logic as \citet{wang2019nonlinear}, we let\footnote{The construction can be extended to general matrix-valued kernels, following \citet{wang2019stein} in the case of \ac{svgd}.} $K(x,x') = k(x,x') \idmat$ and pick the vector field $v_Q$ from the unit ball of $\mathcal{H}_K$ for which the magnitude of the negative gradient in \eqref{eq: KL to KGD} is maximised, i.e. 
\begin{align*}
    v_Q(\cdot) \propto \int \{k(x,\cdot ) (\nabla \log q_0 - \vargrad  \cL(Q))(x) + \nabla_1 k(x, \cdot) \} \; \dd Q(x),
\end{align*}
to arrive at
\begin{align}
\left. \frac{\mathrm{d}}{\mathrm{d}\epsilon} \eobj((\Id + \epsilon v_Q )_\# Q) \right|_{\epsilon = 0}
= - \KGD_K(Q) . \label{eq: descent and separation}
\end{align}
This reveals an explicit relationship between steepest descent on $\eobj$ and \ac{kgd}.
Assuming the kernel was selected to ensure \ac{kgd} \emph{characterises stationarity}, meaning that $\KGD_K(Q)$ vanishes if and only if $Q$ is a stationary point of \eqref{eq: objective} (cf. \Cref{subsec: separation}), following the path $(\mu_t)_{t \geq 0}$ of steepest descent on $\eobj$, initialised say at $\mu_0$, should lead to a strictly decreasing $\eobj(\mu_t)$, unless or until $\mu_t$ and $\target$ are equal.
To numerically approximate this gradient descent, we initialise $\{x_i^0\}_{i=1}^n$ as independent samples from $\mu_0$ at time $t = 0$ and then update $\{x_i^t\}_{i=1}^n$ deterministically, via the coupled system of \acp{ode}
\begin{align}
\frac{\mathrm{d}x_i^t}{\mathrm{d}t} & = \frac{1}{n} \sum_{j=1}^n k(x_i^t , x_j^t) (\nabla \log q_0 - \vargrad  \cL(Q_n^t))(x_j^t) + \nabla_1 k(x_j^t , x_i^t) , \quad Q_n^t \coloneqq \frac{1}{n} \sum_{j=1}^n \delta_{x_j^t}
\label{eq: gen svgd}
\end{align}
up to a time horizon $T$.
To emphasise the generality of this algorithm we call it simply \emph{\acl{vgd}} (\acs{vgd}).

Since the above presentation was informal, the consistency of \ac{vgd} will now be established.
For a sufficiently regular $\cF : \finitemomentspace{} \rightarrow \R$, let $\nabla \vargrad \cF(Q)$ denote the function $x \mapsto \nabla_x^2 \cF'(Q)(x)$
and let $\vargrad  \cdot \vargrad  \mathcal{F}(Q)$ denote the function $(x,y) \mapsto \sum_i \partial_i \mathcal{G}_{x,i}'(Q)(y)$ where $\mathcal{G}_x(Q) \coloneq \vargrad \mathcal{F}(Q)(x)$.
In addition, we overload $\genpdf{Q}$ to also act as a shorthand for the distribution whose density is proportional to $q_0(x) \exp(-\cL'(Q)(x) )$, which is well-defined under \Cref{assum: rho well def} in \Cref{prop: svgd converge}.

The following extends\footnote{Strictly speaking, the assumption that $\mu_0$ has bounded support was not imposed in \citet{banerjee2025improved}; we included this in \Cref{prop: svgd converge} to satisfy ourselves that the applications of integration-by-parts are valid.} the analysis of \ac{svgd} in Theorem 1 in \citet{banerjee2025improved} to the more general \ac{vgd}:

\begin{proposition}[Convergence of \ac{vgd}]
\label{prop: svgd converge}
Assume that:
\begin{enumerate}
    \item \emph{(Integrability)} $\exp(-\cL'(Q)) \in \mathcal{L}^1(Q_0)$ for all $Q \in \finitemomentspace{}$ \label{assum: rho well def} %
    \item \emph{(Loss)} \label{item: loss reg asm}
    the map 
    $(x_1, \dots, x_n) \mapsto \vargrad  \mathcal{L}(Q_n)(x_i)$ is $C^2(\mathbb{R}^{d \times n})$, for each $i \in \{1,\dots,n\}$, with
    \begin{enumerate}
        \item[a.] $\mathfrak{b}_0 := \sup_{Q \in \mathcal{P}(\mathbb{R}^d)} \big\| \nabla_{\mathrm{V}} \mathcal{L}(Q)(0) \big\| < \infty$
        \item[b.] $\mathfrak{M} \coloneq \sup_{Q \in \mathcal{P}(\mathbb{R}^d) , \, x \in \mathbb{R}^d} \| \nabla \vargrad  \mathcal{L}(Q)(x) \| < \infty$, 
        \item[c.] $\sup_{Q \in \mathcal{P}(\mathbb{R}^d) , \, x,x' \in \mathbb{R}^d} | \vargrad  \cdot \vargrad  \mathcal{L}(Q)(x)(x') | < \infty$ .
    \end{enumerate}
    \item \emph{(Regularisation)} the map $x \mapsto \log q_0(x)$ is $C^3(\mathbb{R}^d)$ with
    \begin{enumerate}
        \item[a.] $\sup_{x \in \mathbb{R}^d} \| \nabla^2 \log q_0(x) \| < \infty$, 
        \item[b.] $\int_{\mathbb{R}^d} q_0(x) \exp\!\left( \tfrac{\mathfrak{M}}{2}\|x\|^2 + \mathfrak{b}_0 \|x\| \right) \mathrm{d}x < \infty$.
    \end{enumerate}
    \item \emph{(Initialisation)} %
    $\mu_0$ has bounded support, and has a density that is $C^2(\mathbb{R}^d)$.
    \item \emph{(Kernel)} Let $k(x,x') = \phi(x-x')$ for some $\phi \in C^3(\mathbb{R}^d)$. 
\end{enumerate}
Then the dynamics defined in \eqref{eq: gen svgd} satisfies
    \begin{align*}
        \frac{1}{T} \int_0^T \mathbb{E}[ \KGD_K^2(Q_n^t) ] \; \mathrm{d}t \leq \frac{C_0}{T} + \frac{C_K}{n}
    \end{align*}
    for some finite ($n$ and $T$-independent) constants $C_0$ and $C_K$. 
\end{proposition}

\noindent The proof is contained in \Cref{app: svgd}.
\Cref{prop: svgd converge} states that increasing the number of particles $n$ and the time horizon $T$ ensures convergence in time-averaged mean-square \ac{kgd}; the expectation arises due to the random initialisation of $\{x_i\}_{i=1}^n$ i.i.d. sampled from $\mu_0$.
The assumptions in \Cref{item: loss reg asm} can be illustrated in the setting of \Cref{ex: LV}, where $\nabla_{\mathrm{V}} \mathcal{L}(Q)(x) = \int \nabla_1 w(x,y) \mathrm{d}Q(y)$, $\nabla \vargrad  \mathcal{L}(Q)(x) = \int \nabla_1^2 w(x,x') \; \mathrm{d}Q(x')$ and $\vargrad  \cdot \vargrad  \mathcal{L}(Q)(x)(x') = \nabla_1 \cdot \nabla_2 w(x,x')$, so our assumption on $\mathcal{L}$ in \Cref{prop: svgd converge} amounts to requiring that $\nabla_1 w$, $\nabla_1^2 w$ and $\nabla_1 \cdot \nabla_2 w$ are bounded; note that we do not require $w$ itself to be bounded.

Samples produced using \ac{vgd} are presented in the rightmost panel of \Cref{fig: different methods 2}.
Here we employed a kernel $k(x,x') = \sum_{i=1}^3(1 + \|x-x'\|^2/\ell_i^2)^{-1/2}$  with lengthscales $\ell_1 = 0.001$, $\ell_2 = 0.01$ and $\ell_3 = 0.1$, which we found to work well.
As with extensible sampling (\Cref{sec: extensible}), \ac{vgd} provides a more structured representation of the target than \ac{mfld}.
However, compared to extensible sampling, the high probability region is better covered under \ac{vgd}.

\medskip

This concludes our discussion of algorithms in this manuscript, but several further opportunities are highlighted in \Cref{sec: discuss}.
Of the algorithms we presented, the most accurate approximations tended to be produced by \ac{kgd} descent and \ac{vgd}, which both out-performed \ac{mfld} at least insofar as can be measured using \ac{kgd}.
This motivates the question: \emph{is \ac{kgd} a meaningful performance measure?}
This is a difficult question as we do not have access to $P$ in general.

One way to proceed is to formulate \eqref{eq: objective} so that certain marginals of $P$ must be identical via a symmetry argument; comparing the marginals produced by numerical methods thus provides an indication of how well the numerical methods are working.
This approach was pursued in \Cref{app: marginals}, and we found that the smallest values of \ac{kgd} were achieved where the corresponding marginals were closest to being identical.
However, this is not a rigorous argument to establish \ac{kgd} is meaningful.
A rigorous justification is provided in \Cref{sec: theory}, through detailed theoretical analysis of \ac{kgd}, the main contribution of this work.

\section{Theoretical Guarantees}
\label{sec: theory}

This section establishes theoretical foundations for \ac{kgd}.
To begin we consider when \ac{kgd} characterises stationarity, meaning that $\KGD_K(Q) = 0$ if and only if $Q$ is a stationary point of \eqref{eq: objective}, with sufficient conditions presented in \Cref{subsec: separation}.
Recall that characterising stationarity arose as a desirable property in the context of \ac{vgd} in \Cref{sec: svgd}; however, this property alone does not imply that small values of \ac{kgd} correspond to the closeness of $Q$ to a stationary point.
To address this, sufficient conditions for continuity of $Q \mapsto \KGD_K(Q)$ are presented in \Cref{sec:continuity}.
Sufficient conditions for \eqref{eq: objective} to have a unique stationary point $\target$ are presented in \Cref{sec:minimizer}.
Finally, we leverage continuity to establish conditions under which \ac{kgd} controls $\polyorder$-convergence, meaning that $\KGD_K(Q_n) \rightarrow 0$ implies $\int h \; \dd Q_n \rightarrow \int h \; \dd \target$ for all continuous $h$ with $|h(x)| \lesssim 1 + \|x\|^\polyorder$, in \Cref{sec:cv_control}.
This provides theoretical justification for our algorithms that aim to minimise \ac{kgd} in \Cref{sec: new algs}.

\begin{assumption}
\label{asm: diff loss}
    The loss function $\cL : \mathcal{P}(\mathbb{R}^d) \rightarrow \mathbb{R}$ admits a variational gradient $\vargrad\cL(Q)$ for each $Q \in \finitemomentspace{}$. 
\end{assumption}

\noindent \Cref{asm: diff loss} is couched in terms of $\vargrad \cL$ as an aesthetic choice for the main text, but our analysis extends to a weaker notion of functional differentiability, such that the result in this section hold also in settings where the domain of $\cL$ is a subset of $\finitemomentspace{}$; we refer the interested reader to a discussion of this technical point in \Cref{app: relax def gd}.

\begin{assumption}
\label{asm: q0 support}
    There exists a positive density $q_0$ for the reference distribution $Q_0$. 
\end{assumption}

\noindent 
Examples of functionals $\cL(Q)$ with explicit variational gradients on $\mathcal{P}(\R^d)$ 
include potential energies $\cL_1(Q) = \int u(x) \; \dd Q(x)$ and interaction energies $\cL_2(Q) = \frac{1}{2} \iint w(x-x')\ \dd Q(x)\dd Q(x')$. 
 In particular, their variational gradients read as $\vargrad \cL_1(Q) = \nabla u$ and $\vargrad \cL_2(Q) = (\nabla v) \star Q$. 
 More generally, if $\cL$ is well-defined for discrete measures, i.e.,  $\mathrm{L}(x_1 , \dots , x_n) \coloneqq \cL(Q_n)$ for $Q_n = \frac{1}{n} \sum_{i=1}^n \delta_{x_i}$, then the variational gradient can be identified with the Euclidean gradient up to a multiplicative factor: $\vargrad  \cL(Q_n)(x_i) = n \nabla_i \mathrm{L}(x_1,\dots,x_n)$, following a similar argument to \citet[][Theorem 2]{wang2019nonlinear}.

\subsection{Characterising Stationarity}
\label{subsec: separation}

A basic property that we would expect is $\KGD_K(P) = 0$; sufficient conditions for this are contained in \Cref{app: identity}.
This section focuses instead on establishing the stronger property that $\KGD_K(Q) =0$ if and only if $Q$ is a stationary point.
Recall that characterising stationarity arose as a desirable property in the discussion of \Cref{sec: svgd}.
Our arguments build on corresponding analysis for \ac{ksd} in \citet{barp2024targeted} and require the concept of \emph{$C_0^1(\mathbb{R}^d, \mathbb{R}^d)$-universality}:

\begin{definition}[$C_0^1(\mathbb{R}^d, \mathbb{R}^d)$-universality]
\label{def: universal kernel}
    Let $C_0^1(\mathbb{R}^d,\mathbb{R}^{d})$ denote the normed space of continuously differentiable functions $f : \mathbb{R}^d \rightarrow \mathbb{R}^d$ for which $f$ and $\nabla f$ vanish at infinity, equipped with the norm
    \begin{align*}
        \|f\|_{C_0^1} \coloneqq \sup_{x \in \mathbb{R}^d} \|f(x)\| + \|\nabla f(x)\|. 
    \end{align*}
    A (matrix-valued) kernel $L$ is said to be \emph{universal to} $C_0^1(\mathbb{R}^d,\mathbb{R}^{d})$ if $\mathcal{H}_L$ is dense in $C_0^1(\mathbb{R}^d,\mathbb{R}^{d})$.
\end{definition}

\begin{assumption}\label{asm:contdiff}
Both $q_0 \in C^1(\R^d)$ and $\nabla_V \cL(Q) \in C^0(\R^d)$ for each $Q \in \finitemomentspace{}$. 
\end{assumption}

\begin{assumption}\label{asm:separation}
Let $\theta: \mathbb{R}^d \to [c, \infty)$ with some constant $c > 0$. 
The following hold: 
\begin{enumerate}[label=(\roman*)]
    \item $\theta \in C^1(\R^d)$ and $\nabla(1/\theta)$ is continuous and bounded. 
    \item $\Verts{\nabla \log q_0(x)} \lesssim \theta(x)$, and $\Verts{\vargrad  \cL(Q)(x)}\lesssim \theta(x)$ for each $Q \in \finitemomentspace{}$. \label{item:sep-growth-density-loss}
    \item $K$ is given by $K(x, y) = L(x,y) / \{\theta(x)\theta(y)\}$, 
    where $L$ is a $C_0^1(\R^d, \R^d)$-universal kernel with $\rkhs{L} \subset C_0^1(\R^d, \R^d)$.
\end{enumerate}
    
\end{assumption}

\begin{theorem}[\ac{kgd} characterises stationarity]\label{thm:separation} 
    Suppose that \Cref{asm: diff loss,asm: q0 support,asm:separation,asm:contdiff} are satisfied.
    Then for all $Q \in \mathcal{P}(\mathbb{R}^d)$, $\KGD_K(Q) = 0$ if and only if $Q$ is a stationary  point of $\eobj$.
\end{theorem}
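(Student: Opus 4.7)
The plan is to handle the two directions separately. Necessity is the substantive direction, and will proceed in four steps: linearise, isometrically transport the identity from $\rkhs{K}$ to $\rkhs{L}$, invoke $C_0^1$-universality, and finally invert the resulting distributional Stein identity to recover \eqref{eq: self-consistency}.

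For necessity, the first step is to observe that $v \mapsto \int \Op{Q} v \, \mathrm{d}Q$ is linear on $\rkhs{K}$, so the hypothesis $\KGD_K(Q) = 0$ upgrades from the unit ball to all of $\rkhs{K}$ (restricted to $v$ for which the integral is defined). The second step exploits the weighted-kernel structure in \Cref{asm:separation}: via the standard isometric isomorphism $v \leftrightarrow u = \theta v$ between $\rkhs{K}$ and $\rkhs{L}$, together with the product rule for $\nabla \cdot (u/\theta)$ and the identity $\nabla \log \theta = -\theta \nabla(1/\theta)$, a direct calculation will yield
\begin{align*}
    \Op{Q}(u/\theta) = \tfrac{1}{\theta}\bigl[\tilde b_Q \cdot u + \nabla \cdot u\bigr], \qquad \tilde b_Q \coloneqq \nabla \log q_0 - \vargrad \cL(Q) - \nabla \log \theta.
\end{align*}
Introducing the tilted probability measure $\mathrm{d}Q_\theta \coloneqq Z^{-1} \theta^{-1} \mathrm{d}Q$ (well-defined because $\theta \geq c > 0$), the vanishing identity rewrites as $\int [\tilde b_Q \cdot u + \nabla \cdot u] \, \mathrm{d}Q_\theta = 0$ for every $u \in \rkhs{L}$.

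The third step extends this identity from $\rkhs{L}$ to all of $C_0^1(\R^d, \R^d)$. The growth conditions in \Cref{asm:separation} together with boundedness of $\nabla(1/\theta)$ imply $\tilde b_Q / \theta$ is bounded and hence $\tilde b_Q \in \mathcal{L}^1(Q_\theta)$, so the functional $u \mapsto \int[\tilde b_Q \cdot u + \nabla \cdot u] \, \mathrm{d}Q_\theta$ is continuous in the $C_0^1$-norm; by $C_0^1$-universality of $L$, it vanishes on all of $C_0^1(\R^d, \R^d)$, and in particular on $C_c^\infty(\R^d, \R^d)$. This is a distributional Stein identity for $Q_\theta$ with the continuous (by \Cref{asm:contdiff}) conservative drift $\tilde b_Q = \nabla \log[q_0 \exp(-\cL'(Q))/\theta]$. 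The fourth step is to invoke an elliptic regularity argument mirroring the one used by \citet{barp2024targeted} in the KSD setting to conclude that $Q_\theta$ admits a positive $C^1$ density $q_\theta \propto q_0 \exp(-\cL'(Q))/\theta$, whence $\mathrm{d}Q/\mathrm{d}Q_0 \propto \exp(-\cL'(Q))$, which is exactly \eqref{eq: self-consistency}.

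The sufficiency direction is brief: if $Q$ is stationary with density $q \propto q_0 \exp(-\cL'(Q))$, then $\nabla \log q = \nabla \log q_0 - \vargrad \cL(Q)$, so $\Op{Q} v = q^{-1} \nabla \cdot (qv)$; integration by parts (legitimate because every $v \in \rkhs{K}$ is of the form $u/\theta$ with $u \in C_0^1(\R^d, \R^d)$ and $\theta \geq c$) then yields $\int \Op{Q} v \, \mathrm{d}Q = 0$ for every $v \in \rkhsball{K}$. The main obstacle is the fourth step of the necessity direction: a priori $Q$ is an arbitrary Borel probability measure, so its absolute continuity is itself part of the conclusion, and I expect the elliptic regularity argument to be the technically delicate one, to be handled by adapting rather than re-deriving the analogous KSD argument in \citet{barp2024targeted}.
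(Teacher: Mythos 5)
Your proposal is correct, and both directions reach the same endpoints as the paper's proof, but the route through the necessity direction is genuinely different in its middle steps. The paper works at the level of function spaces: it shows the functional $D_Q = Q \circ \Op{Q}$ lives in the dual of the weighted strict-topology space $C_{b,\theta}^1(\R^d,\R^d)_\beta$ (\Cref{lem:Dq-properties}), invokes the tilted-kernel universality transfer of \citet{barp2024targeted} (\Cref{lem:tilted-kernel-universality}) to show $K = L/(\theta\otimes\theta)$ is universal to that weighted space, and then uses the equivalence of $\DL$-characteristicity and $C_0^1$-universality (\Cref{lem:DL1-characteristic-equiv}). You instead tilt the \emph{measure}: the isometry $v \leftrightarrow \theta v$ between $\rkhs{K}$ and $\rkhs{L}$ plus the substitution $\dd Q_\theta \propto \theta^{-1}\dd Q$ converts the vanishing of $D_Q$ on $\rkhs{K}$ into the vanishing of a functional on $\rkhs{L}$ that is manifestly bounded in the plain $\|\cdot\|_{C_0^1}$ norm, so density of $\rkhs{L}$ in $C_0^1(\R^d,\R^d)$ finishes the extension step without any of the strict-topology or dual-characterisation machinery. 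This is a more elementary and self-contained argument for this particular theorem; what the paper's heavier route buys is the reusable characteristicity statement (\Cref{cor:KGD-stationarity}) that feeds into the convergence-control results. One small correction: the final step is not elliptic regularity. Because the drift $\tilde b_Q = \nabla\log(\genpdf{Q}/\theta)$ is the gradient of an explicit $C^1$ potential, both the paper and \citet{barp2024targeted} simply multiply the distributional identity $\partial_i Q_\theta = (\tilde b_Q)_i Q_\theta$ by $\theta\,\genpdf{Q}^{-1}$ and apply the constancy theorem for distributions with vanishing gradient (translation invariance implies proportionality to Lebesgue measure, Schwartz's theorem); no PDE regularity theory is needed, and absolute continuity of $Q$ falls out for free. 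Your sufficiency direction matches \Cref{lem:identity} and \Cref{cor:suffcondsforzeromean} exactly.
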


\noindent The proof is contained in \Cref{appendix:separation}.
In the case of a unique stationary point $\target$, this result guarantees separation of $P$ from \emph{all} alternatives $Q$ in $\mathcal{P}(\mathbb{R}^d)$, not just from $Q$ with continuously differentiable densities \citep[e.g. as considered for \ac{ksd} in][]{chwialkowski2016kernel,liu2016kernelized}.

Though universality is natural in this context the concept is somewhat technical, and we therefore present a weaker user-friendly statement in \Cref{prop: sufficient KSD separating}.
A function $f$ is said to have \emph{at most root-exponential growth} if $f(x) = O(\mathrm{exp}(C \sum_{i=1}^d \sqrt{|x_i|}))$ for some $C > 0$. 
A continuous function $k : \mathbb{R}^d \times \mathbb{R}^d \rightarrow \mathbb{R}$ is a translation-invariant (scalar) kernel if and only if there exists a finite non-negative (Borel) measure $\mu$ on $\mathbb{R}^d$, called the \emph{spectral measure}, such that $k(x,y) = (2 \pi)^{-d} \int \exp(-\mathrm{i} (x-y) \cdot \omega ) \; \mathrm{d}\mu(\omega)$ for all $x,y \in \mathbb{R}^d$ \citep[][Theorem 6.6]{wendland2004scattered}.
For example, the Gaussian kernel $k(x,x') = \exp(-\|x-x'\|_2^2 / \lambda^2)$ has spectral measure $\mu(\dd \omega) = f(\omega) \dd \omega$ with density $f(\omega) = \{(\lambda / (2\sqrt{\pi})\}^d\cdot \exp(- \lambda^2 \Verts{\omega}_2^2 / 4) $. 

\begin{proposition}[Simple conditions for characterising stationarity]
\label{prop: sufficient KSD separating}
    Suppose that \Cref{asm: diff loss,asm: q0 support,asm:contdiff} are satisfied.
    Assume $\theta$ in \Cref{asm:separation} is of at most root-exponential growth. 
    Take $K=k \idmat$ where %
    $k(x,y) = \phi(x-y)$ for $\phi \in C^2(\mathbb{R}^d)$ with a spectral density bounded away from zero on compact sets.
    Then $\KGD_K(Q) = 0$ whenever $Q$ is a stationary point and $\theta \in \cL^1(Q)$. 
    Moreover, $\KGD_K(Q) > 0$ whenever $Q$ is not a stationary  point of $\eobj$. 
\end{proposition}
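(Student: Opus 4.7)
The plan is to handle the two assertions separately, with the first following from a direct Stein-identity calculation and the second from a reduction to \Cref{thm:separation}. For the claim that $\KGD_K(Q) = 0$ when $Q$ is stationary with $\theta \in \cL^1(Q)$, I would compute $\int \Op{Q} v \, \dd Q$ directly. Writing $q$ for the density of $Q$, the self-consistency equation gives $\nabla \log q = \nabla \log q_0 - \vargrad \cL(Q)$, so the integrand factorises as $\Op{Q} v(x)\, q(x) = \nabla \cdot (v(x)\, q(x))$. Combining the pointwise bound $|\Op{Q} v(x)| \lesssim \theta(x) \|v\|_\infty + \|\nabla \cdot v\|_\infty$ with $\theta \in \cL^1(Q)$ and the uniform boundedness of $v$ and $\nabla v$ over $\rkhsball{K}$ (which follows from $\phi \in C^2$ and $\phi(0) < \infty$) yields integrability, after which the divergence theorem produces zero.

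For the positivity claim, the plan is to reduce to \Cref{thm:separation} by setting $L(x,y) \coloneqq \theta(x)\theta(y)\phi(x-y)\idmat$, so that $K(x,y) = L(x,y)/\{\theta(x)\theta(y)\}$ as required. There are two substantive conditions to verify, namely that (a) $\rkhs{L} \subset C_0^1(\R^d, \R^d)$ and (b) $L$ is $C_0^1(\R^d, \R^d)$-universal. For (a), elements of $\rkhs{L}$ have the componentwise form $\theta g$ with $g$ in a $d$-fold copy of $\rkhs{\phi}$; using the Fourier-side characterisation of $\rkhs{\phi}$ together with $\phi \in C^2$, I would show that $g$ and $\nabla g$ decay sufficiently fast at infinity to dominate the at most root-exponential growth of $\theta$, so that $\theta g$ and $\nabla(\theta g)$ vanish at infinity. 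For (b), I would argue by approximation: given $f \in C_0^1(\R^d, \R^d)$, \Cref{asm:separation}(i) ensures $f/\theta$ also lies in $C_0^1$, so $C_0^1$-density of the $d$-fold copy of $\rkhs{\phi}$ would transfer, via pointwise multiplication by $\theta$, to $C_0^1$-density of $\rkhs{L}$.

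The main obstacle is verifying the $C_0^1$-density of $\rkhs{\phi}$ in step (b). Classical results on translation-invariant kernels with full-support spectral measure, of Sriperumbudur--Fukumizu--Lanckriet type, give $C_0$-universality, but upgrading to the $C_0^1$ topology calls for an additional argument that exploits the smoothness of $\phi$. My intended route is a mollification argument: any $f \in C_0^1$ is first approximated in $C_0^1$ by a compactly supported smooth function, which in turn is approximated in $C_0^1$ by an element of $\rkhs{\phi}$ constructed via convolution with a rescaled $\phi$, using the spectral density being bounded away from zero on compact sets to ensure such convolutions span the compactly supported smooth functions. Once this density is in hand, \Cref{thm:separation} yields $\KGD_K(Q) > 0$ for every non-stationary $Q$.
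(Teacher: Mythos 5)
Your treatment of the first claim is fine and matches the paper's route: the identity property follows from integration by parts (the paper's \Cref{lem:identity} via \Cref{cor: zeroKGD}), with integrability supplied by $\theta \in \cL^1(Q)$ together with the uniform bounds on $v$ and $\nabla v$ over $\rkhsball{K}$ coming from translation invariance and $\phi \in C^2$.

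The second claim is where the proposal breaks down. You set $L(x,y) = \theta(x)\theta(y)\phi(x-y)\idmat$ and need $\rkhs{L} \subset C_0^1(\R^d,\R^d)$, i.e.\ that $\theta g$ and $\nabla(\theta g)$ vanish at infinity for every component $g \in \rkhs{\phi}$. Your justification --- that the Fourier-side characterisation of $\rkhs{\phi}$ forces $g$ to decay fast enough to dominate the root-exponential growth of $\theta$ --- is false. The condition $\int \verts{\hat g}^2/\hat\phi \; \dd\omega < \infty$ constrains the decay of $\hat g$, hence the smoothness and boundedness of $g$, but imposes no quantitative decay rate on $g$ itself (Riemann--Lebesgue gives $g \in C_0$ with no rate). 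Concretely, for the Gaussian kernel the function $g = \sum_n n^{-2}\phi(\cdot - n e_1)$ lies in $\rkhs{\phi}$ yet $g(ne_1) \gtrsim n^{-2}$, so $\theta(ne_1)\,g(ne_1) \to \infty$ for $\theta(x) = \exp(\sqrt{\|x\|_1})$ (and indeed for any unbounded polynomial $\theta$). Hence $\rkhs{L} \not\subset C_0^1(\R^d,\R^d)$ and \Cref{thm:separation} cannot be invoked with this $L$. The same issue undermines your step (b): pointwise multiplication by an unbounded $\theta$ does not even map $C_0^1$ into $C_0^1$, so it cannot transfer density.

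The paper avoids this by not representing $K$ as a tilted universal kernel at all. It instead invokes Theorem 3.8 of \citet{barp2024targeted} to produce a strictly smaller \ac{rkhs} $\rkhs{k_f} \subset \rkhs{k}$, with $k_f(x,y) = f(x)k_s(x,y)f(y)$ for a positive definite $f$ decaying like $\exp(-t\sum_i \sqrt{\verts{x_i}})$ and a $\DL$-characteristic $k_s$; the factor $f$, with $t$ taken large, supplies exactly the decay that generic elements of $\rkhs{k}$ lack, giving $\rkhs{k_f \idmat} \subset C_{b,\theta}^1(\R^d,\R^d)$. The conclusion then follows from the more flexible \Cref{thm:general-separation}, which only requires a characteristic sub-\ac{rkhs} of $\rkhs{K}$, not that $K$ itself be of the form $L/\{\theta(x)\theta(y)\}$. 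To repair your argument you would need an analogue of that sub-\ac{rkhs} construction; your mollification idea for upgrading $C_0$- to $C_0^1$-universality addresses a real but secondary point and does not touch this obstruction.
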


\noindent The proof is contained in \Cref{appendix:separation}.
Most of the standard (scalar) kernels $k$, such as the Mat\'{e}rn kernels (of order $\nu \geq 3/2$), the inverse multi-quadric kernels, and the Gaussian kernels meet the conditions of \Cref{prop: sufficient KSD separating}.

\subsection{Continuity}\label{sec:continuity}

Characterising stationarity does not imply that `similar' distributions are associated with small values of \ac{kgd}.
This section therefore establishes conditions for continuity of \ac{kgd}. 
To do so, we need to clarify what `similar' means in this context.
For $\polyorder \in [0, \infty)$, a sequence $(Q_n)_{n \in \mathbb{N}} \subset \finitemomentspace{}$ is said to $\polyorder$-\emph{converge} to $Q \in  \finitemomentspace{\polyorder}$, denoted $Q_n \toL{\polyorder} Q$,  if $\int h \; \mathrm{d} Q_n \rightarrow \int h \; \mathrm{d}Q$ 
for every continuous function $h : \mathbb{R}^d \rightarrow [0,\infty)$ of \emph{$\polyorder$-growth}:  $h(x) \lesssim 1 + \|x\|^\polyorder$. 
(The standard notion of weak convergence is recovered when $\polyorder=0$.)

As with \ac{ksd}, \ac{kgd} can be written as %
a double integral against $Q$ of a kernel. 
In the case of \ac{ksd}, proving continuity with respect to $\polyorder$-convergence is relatively straightforward.
For \ac{kgd} the situation is more complex because the kernel %
that is integrated
in the discrepancy depends on the distribution $Q$, hence we cannot apply the same argument. 
A favourable situation is when $\cL(Q)$ is given in \emph{interaction energy form}\footnote{The assumption of interaction energy form can be slightly relaxed; we defer all discussion to \Cref{cor:continuity-composite} in \Cref{sec: interaction energy proof}.} as $\cL(Q) = \int w(x_1,\dots,x_r) \; \mathrm{d}Q^{\otimes r}(x_1,\dots,x_r)$, which is the case for all of the examples we discussed in \Cref{sec:intro,sec: applications}.
Indeed, after algebraic manipulation, the loss functions for both \Cref{ex: mfnn} and \Cref{ex: LV} can be expressed as $\mathcal{L}(Q) = \int u(x) \; \mathrm{d}Q(x) + \iint v(x,x') \; \mathrm{d}Q(x) \mathrm{d}Q(x')$ for certain $u(x)$ and $v(x,x')$, up to an additive constant.
Then we can take $w(x,x') = \frac{1}{2}(u(x) + u(x')) + v(x,x')$ to write $\cL(Q) = \int w(x,x') \; \mathrm{d}Q^{\otimes 2}(x,x')$.
In general, the following are assumed:

\begin{assumption}
\label{asm: for continuity} 
Assume the following hold for some $\alpha \in [0,\infty)$ and $\beta \in [0, \infty)$: 
\begin{enumerate}[label=(\roman*)]
    \item $\Verts{\nabla \log q_0(x)} \lesssim 1+\Verts{x}^\beta$.  \label{item:asm for continuity growth prior}
    \item The loss $\cL(Q)$ has a variational gradient 
    \begin{equation*}
        \vargrad  \cL(Q)(x) =
        \sum_{i=1}^r \int \nabla_i w(x_1, \dots, x_r)\; \dd Q_{x,i} (x_1,\dots, x_r),
    \end{equation*}
    where $w:\prod_{i=1}^r\mathbb{R}^d \to \mathbb{R}$ is continuously differentiable 
    and satisfies $\Verts{\nabla_i w(x_1, \dots, x_r)} \lesssim (1+\Verts{x_{i}}^\beta) \prod_{j\neq i} (1+\Verts{x_j}^{\polyorder})$ for each $i \in \{1,\dots, r\}$; and 
        \[Q_{x, i}\coloneqq  Q\otimes \dots \otimes Q \otimes \underbrace{\delta_{x}}_{i\mathrm{th\ component}} \otimes Q \otimes \dots \otimes Q.\]  \label{asm: kernel condition 4}
   \item $\sqrt{\Verts{K(x, x)}_{\mathrm{op}}} \lesssim (1+\Verts{x})^{\polyorder-\beta}.$ \label{asm: kernel condition 1} 
    \item The map $(x,y) \mapsto \partial_{1, i}\partial_{2, i}K(x, y)$ is continuous for each $i \in \{1,\dots,d\}$. \label{asm: kernel condition 2}
    \item $\max_{i\in\{1,\dots, d\}} \sqrt{\Verts{\partial_{1, i}\partial_{2, i}K(x, x)}_{\mathrm{op}}} \lesssim (1+\Verts{x})^{\polyorder-\beta}$. \label{asm: kernel condition 3}
\end{enumerate}
\end{assumption}

\noindent For example, if the \ac{mfnn} activation function $\Phi$ in \Cref{ex: mfnn} has both $\Phi$ and $\nabla_2 \Phi$ bounded (e.g. a sigmoid) then \Cref{asm: kernel condition 4} holds for all $\alpha,\beta \geq 0$, while if the activation function has linear growth and $\nabla_2 \Phi$ bounded (e.g. softplus) then \Cref{asm: kernel condition 4} holds for all $\alpha,\beta \geq 1$.
In general, loss functions of interaction energy form satisfy \Cref{asm: kernel condition 4} of \Cref{asm: for continuity} under appropriate regularity on $w$; see \Cref{cor:wgrad4intenergy} in \Cref{sec: interaction energy proof}.

\begin{theorem}[KGD is continuous] \label{prop:continuity}
   Let \Cref{asm: q0 support,asm: for continuity} hold.
    Then, we have $\KGD_K(Q_\seqidx) \to \KGD_K(Q)$ whenever $Q_\seqidx \toL{\polyorder} Q$. 
\end{theorem}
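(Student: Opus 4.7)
The plan is to expand $\KGD_K^2(Q)$ into a finite sum of integrals of explicit continuous, polynomially bounded functions against tensor powers of $Q$, and then to conclude by a tensor-product version of $\polyorder$-convergence applied termwise.

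I would begin by applying \Cref{lem: computable} to write $\KGD_K^2(Q) = \iint k_K^Q(x,x')\, \dd Q(x)\dd Q(x')$ and expanding the derivatives in \eqref{eq: PQ kernel} to obtain
\begin{align*}
    k_K^Q(x,x') = \sum_{i,j=1}^d \Bigl[ &\partial_{x_i}\partial_{x_j'} K_{i,j}(x,x') + s_{Q,i}(x)\, K_{i,j}(x,x')\, s_{Q,j}(x') \\
    &+ \partial_{x_i} K_{i,j}(x,x')\, s_{Q,j}(x') + s_{Q,i}(x)\, \partial_{x_j'} K_{i,j}(x,x') \Bigr],
\end{align*}
where $s_Q = \nabla \log q_0 - \vargrad \cL(Q)$. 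Substituting the interaction-energy expression $\vargrad \cL(Q)(x) = \sum_k \int \nabla_k w\, \dd Q_{x,k}$ from \Cref{asm: for continuity}\ref{asm: kernel condition 4} and distributing integrals, $\KGD_K^2(Q)$ decomposes into a finite sum of terms of the form
\[
    T(Q) = \int G(x_1,\dots,x_M)\, \dd Q^{\otimes M}(x_1,\dots,x_M),
\]
for fixed integers $M$ (at most $2r$) and fixed continuous functions $G$ built from $K$, its partial derivatives, $\nabla \log q_0$, and $\nabla_i w$. Continuity of $G$ relies on \Cref{asm: for continuity}\ref{asm: kernel condition 2} together with continuous differentiability of $w$.

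Second, I would verify the multiplicative $\polyorder$-growth $|G(x_1,\dots,x_M)| \lesssim \prod_{k=1}^M (1+\Verts{x_k})^{\polyorder}$. Positive semi-definiteness of $K$ yields $\Verts{K(x,y)}_{\mathrm{op}}^2 \leq \Verts{K(x,x)}_{\mathrm{op}}\, \Verts{K(y,y)}_{\mathrm{op}}$, and likewise for the relevant second derivatives, so by \Cref{asm: for continuity}\ref{asm: kernel condition 1} and \ref{asm: kernel condition 3} both the kernel and its partials are controlled by $(1+\Verts{x})^{\polyorder-\beta}(1+\Verts{y})^{\polyorder-\beta}$. The factors contributed by $\nabla \log q_0$ and by the $\nabla_i w$-pieces carry $(1+\Verts{\cdot})^\beta$-growth in the distinguished variable and $(1+\Verts{\cdot})^\polyorder$-growth in the remaining ones (\Cref{asm: for continuity}\ref{item:asm for continuity growth prior} and \ref{asm: kernel condition 4}). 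Distributing these bounds through the products and nested integrals, the $\beta$ and $\polyorder - \beta$ powers combine to exactly $(1+\Verts{x_k})^\polyorder$ per free variable, which is $Q^{\otimes M}$-integrable whenever $Q \in \finitemomentspace{\polyorder}$.

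Third, I would invoke---or establish by induction on $M$---the tensor-product continuity principle: if $Q_n \toL{\polyorder} Q$ and $G \in C((\R^d)^M)$ satisfies $|G(x_1,\dots,x_M)| \lesssim \prod_k (1+\Verts{x_k})^\polyorder$, then $\int G\, \dd Q_n^{\otimes M} \to \int G\, \dd Q^{\otimes M}$. The inductive step uses Fubini: the partial integral $g_n(x_1) := \int G(x_1,\dots,x_M)\, \dd Q_n^{\otimes(M-1)}(x_2,\dots,x_M)$ is continuous in $x_1$, converges pointwise to its $Q$-analogue $g(x_1)$ by the inductive hypothesis, and is uniformly bounded by $C(1+\Verts{x_1})^\polyorder$ since $\sup_n \int(1+\Verts{x})^\polyorder\, \dd Q_n < \infty$ under $\polyorder$-convergence. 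A uniform integrability upgrade of the Portmanteau theorem---or equivalently, Skorokhod representation combined with Vitali's convergence theorem---then delivers $\int g_n\, \dd Q_n \to \int g\, \dd Q$. Summing the finitely many terms $T(Q_n) \to T(Q)$ yields $\KGD_K^2(Q_n) \to \KGD_K^2(Q)$ and the claim follows by taking square roots. The main obstacle is the combinatorial growth bookkeeping in step two: one must verify that the $\beta$-growth of $\nabla \log q_0$ and of the distinguished argument of $\nabla_i w$ conspires with the complementary $(\polyorder - \beta)$-growth of the kernel derivatives to produce exactly $\polyorder$-growth in every free variable after all integrals are distributed. Once this accounting is in place, the termwise convergence is routine.
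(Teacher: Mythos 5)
Your proposal follows essentially the same route as the paper's proof: expand $\KGD_K^2(Q)$ via \Cref{lem: computable} and the interaction-energy form of $\vargrad\cL(Q)$ into integrals $\int G\,\dd Q^{\otimes M}$ of explicit continuous functions, bound $G$ by $\prod_k(1+\Verts{x_k})^{\polyorder}$ using Cauchy--Schwarz-type estimates from the reproducing property together with \Cref{asm: for continuity}, and then pass to the limit by reducing to weak convergence of the tilted finite measures $(1+\Verts{x}^\polyorder)\,\dd Q_n$. The paper closes the argument by normalising $G$ by $\prod_k(1+\Verts{x_k}^\polyorder)$ and invoking weak continuity of the product-measure operation for finite measures, whereas you propose an induction on $M$ via Fubini; the one soft spot in your version is the inductive step, where pointwise convergence of $g_n\to g$ plus continuity is not enough to conclude $\int g_n\,\dd Q_n\to\int g\,\dd Q$ against the varying measures $Q_n$ --- you need continuous (equivalently, locally uniform) convergence of $g_n$ on compacta, which does hold here but requires an extra equicontinuity argument that your sketch elides. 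This is a repairable technicality, and is avoided entirely by the paper's symmetric tilting device; otherwise the growth bookkeeping and decomposition match the paper's proof.
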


\noindent The proof can be found in \Cref{subsec:proof-continuity}.
In the case of a unique stationary point $\target$, \Cref{thm:separation,prop:continuity} imply that the \ac{kgd} `detects $\polyorder$-convergence', in the sense that $\KGD_K(Q_n) \rightarrow 0$ whenever $Q_n \toL{\polyorder} \target$.

Below, we propose a practical kernel that satisfies our assumptions:

\begin{definition}[Recommended kernel]
\label{def: recommended kernel}
    Fix $c > 0$, let $\weight_s(x) \coloneqq \bigl(c^{2}+\Verts x_{2}^{2}\bigr)^{s/2}$ for any $s \in \mathbb{R}$, and define the linear (scalar-valued) kernel $k_{\mathrm{lin}}(x,y)= c^2 +  x \cdot y$.
    Then, for $\polyorder$ and $\beta$ as in \Cref{asm: for continuity}, the recommended form for the (matrix-valued) kernel $K$ is 
    \begin{equation}
     K(x,y)=\weight_{\polyorder-\beta}(x)\left(L(x,y)+\bar{k}_{\mathrm{lin}}(x,y)\idmat\right)\weight_{\polyorder-\beta}(y)
    ,\label{eq:q-growth-approx-kernel}   
    \end{equation}
    where $L$ is a (matrix-valued) kernel
    and 
    \[
    \bar{k}_{\mathrm{lin}}(x,y)=\frac{k_{\mathrm{lin}}(x,y)}{\sqrt{k_{\mathrm{lin}}(x,x)} \sqrt{k_{\mathrm{lin}}(y,y)}}
    \]
    is a normalised version of the linear kernel.  
\end{definition}

\noindent The recommended kernel in \Cref{def: recommended kernel} satisfies \Cref{asm: kernel condition 1,asm: kernel condition 2,asm: kernel condition 3} of \Cref{asm: for continuity}, as well as the assumptions that we will need later in \Cref{sec:cv_control}.
The proof of the following result is contained in \Cref{app: sufficient kernel proof}: 

\begin{proposition}[Sufficient conditions for \Cref{asm: kernel condition 1,asm: kernel condition 2,asm: kernel condition 3} in \Cref{asm: for continuity}]
\label{prop:kernel-conds-for-continuity}
If one uses the kernel in \Cref{def: recommended kernel}, where $(x,y) \mapsto \partial_{1, i}\partial_{2, i} L(x, y)$ is continuous; 
and both $\sup_{x \in \R^d} \Verts{L(x, x)}_{\mathrm{op}} < \infty$ and  
$\sup_{x \in \mathbb{R}^d} \| \partial_{1,i} \partial_{2,i} L(x,x) \| < \infty$ for all $i \in \{1,\dots,d\}$, then \Cref{asm: kernel condition 1,asm: kernel condition 2,asm: kernel condition 3} in \Cref{asm: for continuity} are satisfied.
\end{proposition}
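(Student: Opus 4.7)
The plan is to verify each of (iii), (iv), (v) in turn by direct computation, exploiting the fact that all the building blocks of $K$ are smooth. Since $c>0$ the weight $\weight_s(x)=(c^2+\Verts{x}_2^2)^{s/2}$ is $C^\infty(\R^d)$ for every $s\in\R$, with elementary growth estimates $\weight_s(x)\lesssim (1+\Verts{x})^s$ and $\verts{\partial_i \weight_s(x)}=\verts{s x_i \weight_{s-2}(x)}\lesssim (1+\Verts{x})^{s-1}$. A useful preliminary simplification is
\[
\weight_s(x)\weight_s(y)\bar{k}_{\mathrm{lin}}(x,y) = \weight_{s-1}(x)\weight_{s-1}(y)(c^2+x\cdot y),
\]
obtained from $\weight_s=\weight_1\cdot\weight_{s-1}$ and the definition of $\bar{k}_{\mathrm{lin}}$. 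Writing $s=\polyorder-\beta$, this lets us decompose
\[
K(x,y) = \weight_{s}(x)\weight_{s}(y) L(x,y) \; + \; \weight_{s-1}(x)\weight_{s-1}(y)(c^2+x\cdot y)\,\idmat ,
\]
which I will manipulate summand by summand.

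For (iii), evaluating at $x=y$ yields $\bar{k}_{\mathrm{lin}}(x,x)=1$ and hence $K(x,x) = \weight_{s}(x)^2(L(x,x)+\idmat)$. Provided $\sup_x\Verts{L(x,x)}_{\mathrm{op}}$ is finite (which is part of the implicit setup of $L$, see the final paragraph), one obtains $\sqrt{\Verts{K(x,x)}_{\mathrm{op}}}\lesssim \weight_s(x)\lesssim (1+\Verts{x})^{\polyorder-\beta}$. For (iv), each of $\weight_s,\weight_{s-1},c^2+x\cdot y$ is smooth, and $(x,y)\mapsto \partial_{1,i}\partial_{2,i} L(x,y)$ is continuous by hypothesis; the Leibniz rule then expresses $\partial_{1,i}\partial_{2,i}K(x,y)$ as a finite sum of products of continuous functions, and continuity follows.

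For (v) I would apply the mixed-partial Leibniz rule
\[
\partial_{1,i}\partial_{2,i}(fg)= (\partial_{1,i}\partial_{2,i}f)g + (\partial_{1,i}f)(\partial_{2,i}g) + (\partial_{2,i}f)(\partial_{1,i}g) + f(\partial_{1,i}\partial_{2,i}g)
\]
to each summand, with $f$ the scalar weight $\weight_s(x)\weight_s(y)$ or $\weight_{s-1}(x)\weight_{s-1}(y)(c^2+x\cdot y)$ and $g$ the matrix $L(x,y)$ or $\idmat$. At $x=y$, every term factors as a scalar built from $\weight_s(x),\partial_i\weight_s(x),\weight_{s-1}(x),\partial_i\weight_{s-1}(x),x_i$ times a matrix in $\{L(x,x),\partial_{1,i}L(x,x),\partial_{2,i}L(x,x),\partial_{1,i}\partial_{2,i}L(x,x),\idmat\}$. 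A term-by-term inspection using the above growth estimates shows each contribution has operator-norm $\lesssim (1+\Verts{x})^{2(\polyorder-\beta)}$, giving (v) after square-rooting; the dominant terms are $\weight_s(x)^2\partial_{1,i}\partial_{2,i}L(x,x)$ for the first summand and $\weight_{s-1}(x)^2\idmat$ for the second.

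The main obstacle is the bookkeeping for (v): although each term is elementary, keeping track of how each differentiation trades one unit of exponent between the weight factor and the central factor, and verifying that the worst term still lands at $(1+\Verts{x})^{2(\polyorder-\beta)}$, is tedious. A secondary point worth flagging is that (iii) and several terms of (v) rely on boundedness of $L(x,x)$ and $\partial_{1,i}L(x,x)$ in addition to the stated hypothesis on $\partial_{1,i}\partial_{2,i}L(x,x)$; these are automatic for the $C_0^1$-universal kernels of practical interest (e.g.\ Gaussian, Mat\'ern of suitable order, inverse multiquadric) and should be read as part of the implicit regularity of $L$, after which the bounds close without further difficulty.
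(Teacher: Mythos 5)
Your argument is correct and follows essentially the same route as the paper: the same two-term decomposition of $K$, direct growth estimates on $\weight_{\polyorder-\beta}$ and its derivatives, and the Leibniz rule at $x=y$ for the mixed partials. The one substantive difference is in how the first-order cross terms are handled. You posit boundedness of $x \mapsto \partial_{1,i}L(x,x)$ as an additional piece of ``implicit regularity'' of $L$; the paper instead derives it from the stated hypotheses via the derivative reproducing property (Cauchy--Schwarz in $\rkhs{L}$), namely
\[
\Verts{\partial_{1,i}L(x,y)}_{\mathrm{op}} \;\leq\; \sqrt{\Verts{\partial_{1,i}\partial_{2,i}L(x,x)}_{\mathrm{op}}}\,\sqrt{\Verts{L(y,y)}_{\mathrm{op}}},
\]
so that the diagonal bounds on $L$ and on $\partial_{1,i}\partial_{2,i}L$ already control every term in the Leibniz expansion. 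This makes the extra assumption you flag redundant, and is worth internalising as a general device: for any kernel, first-order derivative quantities are automatically dominated by the geometric mean of the zeroth- and second-order diagonal quantities. Both your argument and the paper's do rely on $\sup_x \Verts{L(x,x)}_{\mathrm{op}} < \infty$, which is not literally in the statement of the proposition but is, as you note, satisfied by the intended (bounded, e.g.\ translation-invariant) choices of $L$; with that granted, your bookkeeping for (iii), (iv) and (v) closes correctly.
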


\noindent The results in \Cref{fig: different methods,fig: different methods 2} are reproduced using this recommended kernel in \Cref{fig: different methods recommended kernel,fig: different methods 2 recommended kernel} of \Cref{app: detail mfnn,app: detail pcuq}.

\subsection{Existence and Uniqueness of Stationary Points} \label{sec:minimizer}

Under mild regularity, the minimiser $\target$ of \eqref{eq: objective} is a stationary point of $\eobj$; see \Cref{app: defn first var}.
Since we are working with stationary points, we are interested in when the following assumption will hold:

\begin{assumption}
\label{asm:uniq}
The objective $\eobj$ in \eqref{eq: objective} has $\target$ as its unique stationary point.
\end{assumption}

\noindent For generalised Bayesian inference, \Cref{asm:uniq} is closely related to the variational formulation of Bayes theorem \citep[][Theorem 1]{knoblauch2022optimization}, which in effect ensures the generalised posterior is well-defined \citep[see e.g.][Appendix A]{wild2023rigorous}.
For more general loss functions $\mathcal{L}$ we have the following result:

\begin{proposition}[Existence and uniqueness of stationary points]
\label{prop: exist unique}
    Let \Cref{asm: diff loss,asm: q0 support} hold.
    Assume that
    \begin{enumerate}
        \item \emph{(convexity)} $\mathcal{L}$ is convex and lower-bounded; 
        \item \emph{(regularity)} $(Q,x) \mapsto \mathcal{L}'(Q)(x)$ is (weakly) continuous in $Q$ and bounded in $(Q,x)$.  \label{asm: bd L}
    \end{enumerate}
    Then $\eobj$ admits a unique minimiser $P$ and \Cref{asm:uniq} is satisfied.
\end{proposition}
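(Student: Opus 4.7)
\emph{Overview.} The plan is to establish existence of a minimiser by the direct method of calculus of variations, then uniqueness of the minimiser by strict convexity of $\eobj$, and finally upgrade this to uniqueness of stationary points by showing every stationary point is itself a global minimiser. For existence, $\eobj$ is lower-bounded since $\KLD(\cdot\|Q_0)\ge 0$ and $\cL$ is lower-bounded by assumption. The relative entropy $Q\mapsto\KLD(Q\|Q_0)$ is weakly lower semicontinuous and has weakly relatively compact sublevel sets, these being classical properties of KLD against a Borel reference measure. The boundedness of $(Q,x)\mapsto\cL'(Q)(x)$ combined with the path integral identity $\cL(Q)=\cL(Q_0)+\int_0^1\int\cL'(Q_t)(x)\,\mathrm{d}(Q-Q_0)(x)\,\mathrm{d}t$ along $Q_t=(1-t)Q_0+tQ$ implies that $\cL$ is bounded, so sublevel sets of $\eobj$ coincide, modulo an additive constant, with sublevel sets of $\KLD(\cdot\|Q_0)$ and are therefore weakly compact. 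The same interpolation, together with the assumed weak continuity of $\cL'$, yields weak continuity of $\cL$ and hence weak lower semicontinuity of $\eobj$. Prokhorov's theorem applied to a minimising sequence then produces a minimiser.

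\emph{Uniqueness of the minimiser} follows from strict convexity of $\KLD(\cdot\|Q_0)$ on its proper domain combined with convexity of $\cL$. To show that every stationary point coincides with this minimiser, let $Q$ be any stationary point. By \eqref{eq: self-consistency}, $\mathrm{d}Q/\mathrm{d}Q_0=\exp(-\cL'(Q))/Z$ with $Z\in(0,\infty)$ (ensured by boundedness of $\cL'$ and by $q_0>0$), and moreover $\mathrm{d}Q/\mathrm{d}Q_0$ is bounded away from $0$ and $\infty$ so that $Q'\ll Q_0\iff Q'\ll Q$. Substituting $\log(\mathrm{d}Q/\mathrm{d}Q_0)=-\log Z-\cL'(Q)$ into the definition of relative entropy then yields the Pythagorean-type identity
$$\KLD(Q'\|Q_0)-\KLD(Q\|Q_0) = \KLD(Q'\|Q) - \int\cL'(Q)(x)\,\mathrm{d}(Q'-Q)(x)$$
for every $Q'$ with $\KLD(Q'\|Q_0)<\infty$. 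Combining this with the subgradient inequality $\cL(Q')\ge\cL(Q)+\int\cL'(Q)(x)\,\mathrm{d}(Q'-Q)(x)$ (from convexity of $\cL$) produces $\eobj(Q')\ge\eobj(Q)+\KLD(Q'\|Q)\ge\eobj(Q)$, with the trivial extension when $\KLD(Q'\|Q_0)=\infty$. Thus $Q$ is a global minimiser and must coincide with $\target$.

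\emph{Main obstacle.} The most delicate step is extracting weak continuity of $\cL$ from the assumed continuity and boundedness of its first variation. Passing to the limit in $\int_0^1\int\cL'(Q_t^n)(x)\,\mathrm{d}(Q_n-Q_0)(x)\,\mathrm{d}t$ as $Q_n\to Q$ weakly requires careful handling of the joint dependence of $\cL'(Q)(x)$ in both arguments, and the uniform bound on $\cL'$ is precisely what enables the dominated convergence argument needed to close the loop; a subtle point is that weak convergence of $Q_n$ transfers to uniform convergence of $\cL'(Q_t^n)$ only through the continuity in $x$ implicit in the joint continuity hypothesis.
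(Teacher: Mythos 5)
Your existence-and-uniqueness argument for the minimiser follows essentially the same route as the paper: weak compactness of the relative-entropy sublevel sets, weak lower semicontinuity of $\eobj$, and strict convexity. Your Pythagorean-identity argument that every stationary point is a global minimiser is a clean, self-contained version of what the paper delegates to \Cref{prop:stationarity-eobj-minimality} and \Cref{lem:stat-suff-min}, and it is correct: the decomposition $\KLD(Q'\|Q_0)-\KLD(Q\|Q_0) = \KLD(Q'\|Q) - \int\cL'(Q)\,\dd(Q'-Q)$ combined with the subgradient inequality for convex $\cL$ does the job.

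However, there is a genuine gap. To conclude that \Cref{asm:uniq} holds you must also show that the minimiser $P$ \emph{is} a stationary point, i.e.\ that it satisfies the self-consistency equation \eqref{eq: self-consistency}. Your proof only establishes the implication ``stationary $\Rightarrow$ minimiser''; if no stationary point existed your argument would say nothing, and \Cref{asm:uniq} (which asserts that $P$ is the unique stationary point, hence in particular that it is one) would not follow. This converse direction is not automatic. The first-order condition at the minimum yields $\log(\dd P/\dd Q_0) + \cL'(P) = \mathrm{const}$ only $P$-almost everywhere, and upgrading this to the $Q_0$-a.e.\ statement \eqref{eq: self-consistency} requires showing that $P$ and $Q_0$ are mutually absolutely continuous. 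The paper handles this in \Cref{cor:minimum-is-stationary-eobj} by a contradiction argument: if $Q_0(A)>0$ while $P(A)=0$, then perturbing $P$ towards the normalised restriction of $Q_0$ to $A$ makes the entropy difference quotient diverge to $-\infty$ (like $\log t$ as $t \to 0^+$), contradicting minimality; the bounded $\cL'$ contributes only a finite term. Nothing in your write-up addresses this step, so the claim that \Cref{asm:uniq} is satisfied is not yet proved.
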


\noindent The proof of \Cref{prop: exist unique} is contained in \Cref{app: exist unique stationary}.
\Cref{asm: bd L} of \Cref{prop: exist unique} appears to rule out reasonable loss functions, such as the Bayesian loss $\cL(Q) = \int u(x) \; \dd Q(x)$ whenever the negative log-likelihood $u(x)$ is unbounded.
Our results do in fact extend to loss functions defined only on a subset of $\finitemomentspace{}$, but the discussion is technical and deferred to \Cref{app: relax def gd}.
Here we note a simple workaround in the Bayesian case; let $\tilde{Q}_0$ be the distribution with $\dd \tilde{Q}_0 / \dd Q_0 \propto \exp(-u)$ and then consider instead $\tilde{\eobj}(\cdot) \coloneqq \KLD( \cdot ||\tilde{Q}_0)$, for which $\tilde{\eobj}$ has the same stationary points as $\eobj$ and \Cref{asm: bd L} is trivially satisfied.

\subsection{Convergence Control}\label{sec:cv_control}

The most important property from the perspective of developing computational methodology to approximate the minimiser $P$ of \eqref{eq: objective} is $\polyorder$-\emph{convergence control}, meaning that $\KGD_K(Q_n) \rightarrow 0$ implies $Q_n \toL{\polyorder} \target$.
Indeed, this property is essential to justify the development of algorithms that aim to minimise \ac{kgd}, as presented in \Cref{sec: extensible,sec: variational}.
Convergence control boils down to uniform integrability control, as a consequence of the continuity we already established in \Cref{sec:continuity}.
Proofs for this section build on the recent work of \citet{KanBarGreMac2025}.

\begin{definition}[Generalised dissipativity]
\label{def: gen diss}
    A vector field $v: \mathbb{R}^d \to \mathbb{R}^d$ is said to satisfy \emph{generalised dissipativity}, if there exist 
    some $r_1 >0$, $r_2 \geq 0$ and 
    $\diss > 1/2$ such that 
    \begin{align}
        -v(x) \cdot x \geq r_1 \|x\|^{2\diss} - r_2 \label{eq: dissipativity}
    \end{align}
    for all $x \in \R^d$. 
\end{definition}

\noindent Generalised dissipativity\footnote{Similar but non-identical definitions appear in \citet[][Assumption 1]{lytras2024tamed}, where the term \emph{weak dissipativity} was used, in \citet[][Theorem 11]{barp2024targeted} and in \citet[][Assumption 1]{KanBarGreMac2025}.} 
implies that $v$ points inwards outside some ball, and grows at least at the rate $\Verts{x}^{2\diss - 1}$.
For $v = \nabla \log q$, where $q$ is the density of a distribution $Q$, 
the case $\gamma = 1$ corresponds to $Q$ having a Gaussian tail, and the condition is satisfied of course by Gaussians but also by some Gaussian mixtures, e.g. with common isotropic variance \citep[Lemma B.1]{cordero2025non}. 
Generalised dissipativity with $\gamma < 1$ allows $Q$ to have tails 
arbitrarily close to exponential, 
while $\diss > 1$ would enforce a sub-Gaussian tail. 

\begin{assumption}\label{asm:enforce_tightness}
  \begin{enumerate}[label=(\roman*)]
      \item $\nabla \log q_0$ satisfies generalised dissipativity for some 
      $\diss > 1/2$, 
      and  
     $\beta = 2\diss-1$
      in \Cref{asm: for continuity}. \label{item:asm-concontrol-prior}
      \item The loss $\cL$ satisfies the following: $\sup_{Q \in \finitemomentspace{}, x\in\mathbb{R}^d} \| \vargrad  \cL(Q)(x) \| < \infty$
          and $\cL'(Q)(x)  = o(\Verts{x}^{2\diss})$ for each $Q \in \finitemomentspace{}$ absolutely continuous with respect to $Q_0$. 
      \label{item:asm-convcontrol-growth}
      \item The kernel $K$ is given in the form of \Cref{def: recommended kernel} with $L(x,y) = \ell(x, y)\idmat$, where $\ell(x,y) = \phi(x-y)$ for $\phi \in C^2(\mathbb{R}^d)$ with a spectral density bounded away from zero on compact sets.
  \end{enumerate}
\end{assumption}

\noindent 
A consequence of \Cref{item:asm-concontrol-prior,item:asm-convcontrol-growth} of \Cref{asm:enforce_tightness} is that $\target \in \finitemomentspace{\polyorder}$, as explained in \Cref{sec: diss implies moments}, allowing us to discuss $\polyorder$-convergence.

\begin{theorem}[\ac{kgd} controls $\polyorder$-convergence]
\label{thm: convergence control}
    Fix $\polyorder \in [0, \infty)$. 
    Under \Cref{asm: diff loss,asm: q0 support,asm:contdiff,asm:uniq,asm: for continuity,asm:enforce_tightness}, 
    $\KGD_K(Q_n) \rightarrow 0$ implies $Q_n \toL{\polyorder} \target \in \finitemomentspace{\polyorder}$
    for any sequence $(Q_n)_{n\in \mathbb{N}}\subset \finitemomentspace{}$. 
\end{theorem}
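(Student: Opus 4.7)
The plan proceeds in three stages: first, establish $\polyorder$-tightness of $(Q_\seqidx)$ using the Stein-type structure of $\KGD_K$ together with generalised dissipativity; second, use continuity of $\KGD_K$ under $\polyorder$-convergence (\Cref{prop:continuity}) and the characterisation of stationarity (\Cref{thm:separation} / \Cref{prop: sufficient KSD separating}, together with \Cref{asm:uniq}) to identify every $\polyorder$-convergent sub-limit as $\target$; and third, conclude $\polyorder$-convergence of the full sequence through a standard subsequence argument.

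The technical core is the first stage. Using the recommended kernel (\Cref{def: recommended kernel}) as prescribed in \Cref{asm:enforce_tightness}\emph{(iii)}, the normalised linear component $\bar{k}_{\mathrm{lin}}$ together with the polynomial weight $\weight_{\polyorder-\beta}$ supplies a vector field $v \in \rkhsball{K}$ that behaves for large $\Verts{x}$ like $v(x) \propto -\weight_{\polyorder-\beta}(x) \cdot x / \weight_1(x)$. Expanding $\Op{Q_\seqidx} v = v\cdot(\nabla\log q_0 - \vargrad\cL(Q_\seqidx)) + \nabla\cdot v$, the dissipativity in \Cref{asm:enforce_tightness}\emph{(i)} lower bounds the drift contribution $- v\cdot\nabla\log q_0$ by a quantity of order $\weight_{\polyorder-\beta}(x)\Verts{x}^{2\diss-1}$; the coupling $\beta\in[2\diss-1,\infty)$ guarantees this dominant term is of at least $\polyorder$-growth, while the divergence and the bounded loss-gradient contribution (\Cref{asm:enforce_tightness}\emph{(ii)}) are both subdominant by direct differentiation of $\weight_{\polyorder-\beta}$. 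Rearranging $| \int \Op{Q_\seqidx} v \, \dd Q_\seqidx | \leq \KGD_K(Q_\seqidx)$ then delivers
\begin{align*}
\int (1 + \Verts{x}^{\polyorder})\, \dd Q_\seqidx \;\lesssim\; 1 + \KGD_K(Q_\seqidx)
\end{align*}
uniformly in $\seqidx$. Comparing with the analogous identity against $\target$ (for which the same vector field satisfies $\int \Op{\target} v\,\dd\target = 0$ by stationarity) upgrades this to a matching upper bound, $\limsup_\seqidx \int(1+\Verts{x}^\polyorder)\,\dd Q_\seqidx \leq \int(1+\Verts{x}^\polyorder)\,\dd\target$, which provides the $\polyorder$-uniform integrability needed below.

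Given Stage 1, Markov's inequality yields weak tightness of $(Q_\seqidx)$, so by Prokhorov's theorem every subsequence admits a further weakly-convergent sub-subsequence; the $\polyorder$-uniform integrability then upgrades weak convergence to $\polyorder$-convergence along that sub-subsequence, with limit $Q^\star \in \finitemomentspace{\polyorder}$. \Cref{prop:continuity} gives $\KGD_K(Q^\star) = \lim \KGD_K(Q_{\seqidx_j}) = 0$, and the recommended kernel satisfies the hypotheses of \Cref{prop: sufficient KSD separating} by the spectral-density condition in \Cref{asm:enforce_tightness}\emph{(iii)}, so $Q^\star$ is a stationary point of $\eobj$; by \Cref{asm:uniq}, $Q^\star = \target$. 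Because every subsequence admits a further sub-subsequence with the same $\polyorder$-limit, the full sequence $\polyorder$-converges to $\target$.

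The principal obstacle is Stage 1: producing a vector field in $\rkhsball{K}$ that simultaneously witnesses the correct $\polyorder$-growth lower bound for the dissipative drift, keeps the divergence and $\vargrad\cL$-terms subdominant \emph{uniformly in} $Q$, and — in the refined version — yields matching upper and lower moment estimates against $\target$. The combination of a radially-growing weight $\weight_{\polyorder-\beta}$ and the normalised linear kernel $\bar{k}_{\mathrm{lin}}$ in the recommended kernel is tailored precisely for this construction, and the argument follows the blueprint of \citet{KanBarGreMac2025}, adapted from their fixed Stein operator to the $Q$-dependent operator $\Op{Q}$ arising from the entropy-regularised objective~\eqref{eq: objective}.
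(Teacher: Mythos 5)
Your overall architecture matches the paper's: establish a uniform-integrability property from the dissipativity and the Stein-type structure, then extract $\polyorder$-convergent sub-subsequences via Prokhorov, identify every sub-limit as $\target$ using continuity (\Cref{prop:continuity}), stationarity characterisation and \Cref{asm:uniq}, and conclude by the standard subsequence principle (this is exactly \Cref{thm:kgd-convergence-UIseq} combined with \Cref{lem:kgd-enforces-UI} in the paper). However, there is a genuine gap in your Stage~1. A single fixed vector field $v\in\rkhsball{K}$ with $\Op{Q_\seqidx}v(x)\geq c\Verts{x}^{\polyorder}-C$ only yields the uniform moment bound $\sup_\seqidx\int\Verts{x}^{\polyorder}\,\dd Q_\seqidx<\infty$, and bounded $\polyorder$th moments do \emph{not} imply uniformly integrable $\polyorder$th moments in the sense of \Cref{def:UI} (consider $Q_\seqidx=(1-\seqidx^{-1})\delta_0+\seqidx^{-1}\delta_{\seqidx^{1/\polyorder}e_1}$, which has unit $\polyorder$th moment for every $\seqidx$ yet converges weakly to $\delta_0$ without $\polyorder$-convergence). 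Your attempted repair --- ``comparing with the analogous identity against $\target$'' to get $\limsup_\seqidx\int(1+\Verts{x}^{\polyorder})\,\dd Q_\seqidx\leq\int(1+\Verts{x}^{\polyorder})\,\dd\target$ --- is not justified: the additive constant in the lower bound for $\Op{Q_\seqidx}v$ comes from the dissipativity constants $r_1,r_2$ and the behaviour of $v$ near the origin, not from the moments of $\target$, and making the comparison rigorous would require convergence of integrals of lower-order terms against $Q_\seqidx$, which is part of what is being proved. The paper closes this gap differently: via \Cref{def: uni-q-growth} and \Cref{lem: growth-approx-general} (invoking Lemma~3.2 of \citealt{KanBarGreMac2025}) it constructs, for \emph{each} $\varepsilon>0$, a vector field $v_\varepsilon\in\rkhs{K}$ and a radius $r_\varepsilon$ with the tail-localised bound $\Op{Q}v_\varepsilon(x)\geq\Verts{x}^{\polyorder}1\{\Verts{x}>r_\varepsilon\}-\varepsilon$ uniformly over $Q$, whence
\begin{equation*}
\int_{\Verts{x}>r_\varepsilon}\Verts{x}^{\polyorder}\,\dd Q_\seqidx\leq\Verts{v_\varepsilon}_{\rkhs{K}}\,\KGD_K(Q_\seqidx)+\varepsilon ,
\end{equation*}
and letting $\seqidx\to\infty$ then $\varepsilon\to0$ gives exactly the uniform integrability of \Cref{def:UI}. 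This family of tail-localised test functions, rather than one globally coercive one, is the missing idea.

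Two secondary points. First, you cite \Cref{prop: sufficient KSD separating} for the separation step, but the recommended kernel of \Cref{def: recommended kernel} is not of the translation-invariant form $k\idmat$ assumed there; the paper instead argues via \Cref{thm:general-separation}, noting that $\rkhs{K}$ contains the RKHS of a $C_{b,\theta}^1(\R^d,\R^d)_\beta^*$-characteristic kernel built from the weighted $L$-component. Second, the statement asserts $\target\in\finitemomentspace{\polyorder}$, which needs the separate argument of \Cref{prop: moments exist} (dissipativity implies finite moments of the stationary point); your proposal does not address this.
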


\noindent The proof is contained in \Cref{app: convergence control}.
Since \Cref{prop:continuity} and \Cref{thm: convergence control} have common assumptions, we have shown that \ac{kgd} exactly characterises $\polyorder$-convergence:

\begin{coroll}[\ac{kgd} characterises $\polyorder$-convergence]
\label{cor: metrise}
    Under \Cref{asm: diff loss,asm: q0 support,asm:contdiff,asm:uniq,asm: for continuity,asm:enforce_tightness}, we have $Q_\seqidx \toL{\polyorder} \target$ if and only if $\KGD_K(Q_\seqidx) \to 0$. 
\end{coroll}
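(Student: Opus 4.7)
The plan is to synthesise the two main results of this section -- continuity (\Cref{prop:continuity}) and $\polyorder$-convergence control (\Cref{thm: convergence control}) -- which is viable precisely because the hypotheses of the corollary coincide with the union of the hypotheses of those two theorems. No genuinely new machinery is required, but one small identity-type claim must be verified in order to close one of the two implications.

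For the direction $\KGD_K(Q_\seqidx) \to 0 \;\Rightarrow\; Q_\seqidx \toL{\polyorder} \target$, there is nothing further to do: this is exactly the conclusion of \Cref{thm: convergence control}, whose hypotheses appear verbatim in the statement of the corollary.

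For the converse direction $Q_\seqidx \toL{\polyorder} \target \;\Rightarrow\; \KGD_K(Q_\seqidx) \to 0$, I would apply \Cref{prop:continuity} to obtain $\KGD_K(Q_\seqidx) \to \KGD_K(\target)$, reducing the problem to showing $\KGD_K(\target) = 0$. Since $\target$ is a stationary point (\Cref{asm:uniq}), the self-consistency equation \eqref{eq: self-consistency} holds, so the same integration-by-parts computation used to derive \eqref{eq: gen st op} gives $\int \Op{\target} v \; \dd \target = 0$ for every $v \in \rkhs{K}$ whose associated boundary terms vanish. The task is therefore to verify that, under the recommended kernel form (\Cref{def: recommended kernel}), the growth conditions of \Cref{asm: for continuity}, and the fact that $\target \in \finitemomentspace{\polyorder}$ (established via the discussion in \Cref{sec: diss implies moments}), these boundary contributions vanish uniformly over the unit ball $\mathcal{B}_K$. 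Once this is in place, every integrand in the supremum defining $\KGD_K(\target)$ is identically zero, so $\KGD_K(\target) = 0$.

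The main obstacle is precisely this identity-type step, which the paper itself defers to a dedicated appendix (as acknowledged in \Cref{subsec: separation}). Although it is formally a standard Stein-type argument, its verification requires that $\genpdf{\target}(x) v(x)$ decays sufficiently rapidly at infinity for every $v \in \mathcal{B}_K$, an integrability requirement that is governed by the interplay between the polynomial growth rate of $K$ (through the exponent $\polyorder - \beta$ in \Cref{asm: kernel condition 1}) and the tails of $\target$ (controlled by the generalised dissipativity in \Cref{asm:enforce_tightness}). Once this integrability check is in place, the corollary follows immediately by chaining \Cref{prop:continuity} with \Cref{thm: convergence control}.
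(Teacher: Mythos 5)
Your proposal is correct and follows essentially the same route as the paper: the backward implication is exactly \Cref{thm: convergence control}, and the forward implication chains the continuity result (\Cref{prop:continuity}) with the identity property $\KGD_K(\target)=0$, which the paper establishes in \Cref{app: identity} (\Cref{lem:identity} and \Cref{cor:zero-KGD}) via precisely the divergence-theorem/integrability check you describe, with the growth of $K$ at rate $(1+\Verts{x})^{\polyorder-\beta}$ balanced against the tail decay of $\target$ guaranteed by dissipativity. Your identification of the $\KGD_K(\target)=0$ step as the only nontrivial remaining verification matches the paper's own treatment.
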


\noindent That is, not only does minimisation of \ac{kgd} provide a valid way to approximate the target $P$, but \emph{any} consistent algorithm for approximating $\target$ must necessarily be performing (asymptotic) minimisation of the \ac{kgd}, due to \Cref{cor: metrise}.
This perhaps surprising result 
provides strong justification for using \ac{kgd} as a measure of suboptimality for entropy-regularised variational objectives.

\section{Discussion}
\label{sec: discuss}

This paper introduced the first computable measure of sub-optimality for entropy-regularised variational objectives, illustrating some of the novel functionalities and opportunities for methodological development that are now unlocked, and presenting detailed theoretical support.

\paragraph*{Related Work}

Numerical methods for approximating the variational target $P$ are mainly based on \ac{mfld} \citep[e.g.][and references therein]{del2013mean,suzuki2023convergence}. 
Alternative discretisations of the Wasserstein gradient flow of the objective could be considered, e.g. proximal gradient schemes~\citep{lascu2024linear} based on implicit time discretisations steps, or explicit discretisations \citep[e.g.][]{kook2024sampling}. 
A common theme is that these methods all incur a bias, due to (time and space) discretisation or parametric approximation, which is not easily quantified.
\ac{kgd} therefore provides a practical tool through which such methods can be compared and tuned.

Gradients of the \ac{kld} have been considered from several different perspectives:
A connection between score-matching \citep{hyvarinen2005estimation} and the time derivative of the \ac{kld} between Brownian motions was described in \citet{lyu2009interpretation}.
In a similar spirit, time derivative of the \ac{kld} along a Wasserstein gradient flow was proposed as a training objective in \citet{wang2020wasserstein} where it was called \emph{Wasserstein minimum velocity learning}.
The idea of measuring the Euclidean gradient of \ac{kld} with respect to parameters appearing in one of the arguments has been explored in machine learning, and was termed \emph{GradNorm} in \citet{huang2021importance}.
However, these papers do not consider general entropy-regularised objective functions as in \eqref{eq: objective}.

\paragraph*{Opportunities and Limitations}

Though there are limitations to \ac{gd} and \ac{kgd}, these mostly present as concrete opportunities for subsequent work:

From a theoretical perspective, our results were asymptotic and explicit bounds (revealing the dimension dependence) remain to be investigated.
Further, the finite sample properties of \ac{kgd} inherit the well-known limitations of \ac{ksd} (which is recovered in the case of linear $\cL$), such as insensitivity to differences between high-dimensional distributions and to differences in the weight afforded to distant mixture components \citep{wenliang2020blindness}.
The first of these limitations could be addressed through the development of a sliced version of \ac{kgd}, mirroring sliced \ac{ksd} \citep{gong2021sliced}, while the second limitation appears more fundamental.
An extension to distributions defined on spaces other than $\mathbb{R}^d$ should also be possible, extending discrete \citep{yang2018goodness,hodgkinson2020reproducing,shi2022gradient,matsubara2024generalized} and manifold \citep{liu2018riemannian,xu2021interpretable,barp2022riemann,qu2024kernel} versions of \ac{ksd}. 

\Cref{sec: new algs} illustrated the potential of \ac{kgd} for methodological development.
Building on this, one could develop the supporting theory for extensible sampling (\Cref{sec: extensible}), and parametric variational approximation (\Cref{sec: variational}), generalising existing results for \ac{ksd} \citep{chen2018stein,chen2019stein,fisher2021measure,chopin2021fast,benard2023kernel,koppel2024online,kirk2025low}. 
Though we presented convergence theory for \ac{vgd}, one could go further and attempt to mirror recent developments for \ac{svgd}, such as implementing regularised and higher-order numerical schemes \citep{detommaso2018stein,he2024regularized,stein2025towards}, extension to graphical models \citep{zhuo2018message}, low-rank approximation \citep{liu2022grassmann}, and a gradient-free implementation \citep{han2018stein}.
Going further, one can conceive of further classes of algorithm based on \ac{kgd}, for example based on re-weighting samples obtained using \ac{mfld} to eliminate the inherent bias of \ac{mfld}, which would represent a generalisation of the Stein importance sampling method \citep{hodgkinson2020reproducing,riabiz2022optimal,wang2023stein,anastasiou2023stein}.

To mitigate the $\Omega(d n^2)$ computational cost of evaluating \ac{kgd}, it would be interesting to extend some of the recent computational developments for \ac{ksd}, such as random feature approximations of the kernel \citep{huggins2018random}, stochastic approximation of the loss function \citep{gorham2020stochastic}, Nystr\"{o}m approximation of kernel matrices \citep{kalinke2025nystrom}, thinning methods \citep{dwivedi2024kernel}, and gradient-free implementations \citep{fisher2023gradient}.

Our contribution generalises elements of Stein's method beyond their traditional setting, in the sense that our analogue of the Stein operator, $\mathcal{T}_Q$, is now $Q$-dependent.
Undertaking a systematic generalisation of the powerful machinery of Stein’s method to nonlinear (mean-field / McKean–Vlasov) Markov processes would be an interesting direction for future work. 
In the opposite direction, recent advances in Stein discrepancies could potentially be carried over to the nonlinear setting; for example, the recent work of \citet{wynne2025fourier} extends Stein discrepancies to targets defined on separable Hilbert spaces; developing this generalisation in the nonlinear context would be of interest.

Finally, while our focus was on computation for post-Bayesian statistical methods, entropic regularisation is also widely used in areas such as optimal transport \citep{cuturi2013sinkhorn}, reinforcement learning \citep{neu2017unified}, and physical simulation \citep{shen2023entropy,carrillo2024relative}; the potential applications of \ac{kgd} to these areas remain to be explored.

\paragraph*{Acknowledgments}
HK, ZS and CJO were supported by EPSRC (EP/W019590/1). 
CJO was supported by The Alan Turing Institute and a Philip Leverhulme Prize (PLP-2023-004). 
CC and AK acknowledge the support of the Agence Nationale de la Recherche, through the PEPR PDE-AI project (ANR-23-PEIA-0004). 
The authors wish to thank the Reviewers for their constructive feedback on the manuscript.
The authors also wish to thank Sayan Banerjee, Alessandro Barp, Fran\c{c}ois-Xavier Briol, Zonghao Chen, Jeremias Knoblauch, Lester Mackey, Nikolas N\"usken, Qiang Liu, Marina Riabiz, Antonin Schrab and Łukasz Szpruch for discussion during the preparation of this manuscript.

\bibliographystyle{abbrvnat}
\bibliography{bibliography}

\newpage
\appendix

\part{} 
\parttoc 

\section{Proofs of Theoretical Results}
\label{app: proofs}

\subsection{Additional Notation}\label{sec:add_notation}

Here we introduce additional notation that will be used in the sequel.

\paragraph*{Operations on $\mathbb{R}^d$}

For a matrix $M \in \mathbb{R}^{d \times d}$, we continue to let $\|M\|$ denote any valid matrix norm, but we distinguish the Frobenius norm as $\Verts{M}_{\mathrm{F}} \coloneqq \sum_{ij}M_{ij}^2$.
For a differentiable function $f : \mathbb{R}^d \rightarrow \mathbb{R}^d$, let $[\nabla f]_{i,j} \coloneqq \partial_j f_i$.
For a differentiable function $f : \mathbb{R}^d \rightarrow \mathbb{R}^{d \times d}$, let $\nabla \cdot f$ denote the column-wise divergence, i.e. $[\nabla \cdot f]_j \coloneqq \sum_{i=1}^d\partial_{i} f_{ij}$.

\paragraph*{Operations on $\mathcal{P}(\mathbb{R}^d)$}

A vector-valued function is said to be an element of $\cL^1(\mathbb{R}^d)$ if each coordinate function is an element of $\cL^1(\mathbb{R}^d)$.
The convention $\int f \dd Q^{\otimes 0} = f$ will be used.
Let $\mathcal{L}^1(\R^m, Q) := \{f : \mathbb{R}^d \rightarrow \mathbb{R}^m : \int \Verts{f} \; \dd Q(x) < \infty  \}$. 
We also use the functional notation $Q(f) \coloneqq \int f\; \dd Q$. 
For $\theta: \R^d \to [1, \infty)$, we denote the set of measures that finitely integrate $\theta$ by $\finitemomentspace{\theta} \coloneqq \{Q \in \finitemomentspace{}: \theta \in \cL^1(Q)\}$. 

\paragraph*{Generalised Score}
It will be convenient to introduce the shorthand $\genscore{P}{Q}(x) \coloneqq \nabla \log \genpdf{Q}(x) = \nabla \log q_0(x) - \vargrad  \mathcal{L}(Q)(x)$.
One can interpret $\genscore{P}{Q}(x)$ as a generalisation of the Stein score of $\target$, i.e. $\nabla \log p(x)$ where $p$ is a density for $\target$, which is recovered in the special case of a linear loss function $\cL$.

\subsection{Relaxing the First Variation Requirement}
\label{app: defn first var}

This appendix is dedicated to a technical discussion of the case where a function $\cF$ is not defined everywhere on $\finitemomentspace{}$, so that in particular the first variation $\vargrad \cF$ (in the sense used in \Cref{subsec: notation main}) is not well-defined. 
This provides justification for the informal derivations in \Cref{subsec: gd}, where $\mathcal{F}(Q) = \KLD(Q || Q_0)$, and indicates how \Cref{asm: diff loss}, where $\cF(Q) = \cL(Q)$, can be relaxed.

In this section we adopt the convention that a function $\cF: \cS \subset \finitemomentspace{} \to \R$ can be unambiguously extended to a function $\cF : \finitemomentspace{} \to (-\infty,\infty]$ by defining $\cF(Q) = \infty$ for $Q \notin \cS$,
and accordingly we let $\dom(\cF) \coloneqq \{Q \in \finitemomentspace{}: \cF(Q) < \infty \}$.

\subsubsection{Minimality and Stationarity}
\label{app: min and stat}

The aim of this section is to rigorously relate the minima of $\cF$ to the stationary points of $\cF$.
For this purpose, it suffices to define the first variation as follows:

\begin{definition}[Subderivative]
\label{def:subd-fv}
Let $\cF: \finitemomentspace{} \to (-\infty, \infty]$ be a function. 
A \emph{subderivative} of $\cF$ at $Q \in \dom(\cF) \subseteq \finitemomentspace{}$, is defined as an element $g \in\cL^1(Q)$ such that
\begin{equation}
    \liminf_{t \to 0^+}\frac{ \cF\bigl(Q + t (R - Q)\bigr)-\cF(Q)}{t} \geq \int g \; \dd (R - Q)  \label{eq:first-variation}
\end{equation}
in the extended real $[-\infty, \infty]$, for at least one element $R \in \finitemomentspace{}\setminus\{Q\}$.
Let $\cA_{\cF, \geq}(Q; g)$ denote the set of \emph{admissible directions} $R$ for which \eqref{eq:first-variation} is satisfied, and let $\cA_{\cF}(Q;g ) \subseteq \cA_{\cF, \geq}(Q; g)$ denote the set of $R$ for which the equality 
\begin{equation*}
    \lim_{t \to 0^+}\frac{ \cF\bigl(Q + t (R - Q)\bigr)-\cF(Q)}{t} = \int g \; \dd (R - Q)
\end{equation*}
holds in $\R$. 
We say the subderivative $g$ is \emph{fully admissible} at $Q$ if $\cA_{\cF, \geq}(Q;g ) \supseteq \dom(\cF) \setminus \{Q\}$. 
\end{definition}

\noindent If a first variation $\cF'(Q)$ exists, then it is also a subderivative of $\cF$ at $Q$.
However, the subderivative need not be unique due to the restriction of the admissible directions $R$. 
Important functions which do not have a first variation, such as the \ac{kld}, do have a subderivative (see \Cref{lem:KL-fv}). 
Concrete examples are contained in \Cref{app: examples}.

For $h : \R^d \rightarrow \R$ and $Q \in \finitemomentspace{}$, we use the shorthand $h Q$ for the measure whose density is $h$ with respect to $Q$.

\begin{proposition}[A minimum has a constant subderivative] 
\label{prop:min-is-stat}
Let $\cF: \finitemomentspace{} \to (-\infty, \infty]$. 
Assume $\cF$ has a global minimum $Q_*$. 
Suppose that there exists $g:\R^d \to \R$ such that $\cA_{\cF}(Q_*; g)$ contains reweighted versions of $Q_*$, 
i.e., $hQ_* \in \cA_\cF(Q_*; g)$ for any bounded measurable $h \geq 0$ with $\int h\; \dd Q_* =1$. 
Then, $Q_*$-a.e., 
we have $g = \int g \; \dd Q_*$. 
\end{proposition}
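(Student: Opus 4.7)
The plan is to exploit the admissibility of reweighted measures to reduce the statement to a clean scalar inequality, from which the conclusion follows by a standard $L^1$ concentration argument.

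First I would fix $g$ as in the hypothesis and any bounded measurable $h \geq 0$ with $\int h \; \dd Q_* = 1$, and set $R = hQ_*$ so that $R \in \cA_\cF(Q_*; g)$. Since $Q_*$ is a global minimum and $Q_* + t(R - Q_*) = (1-t)Q_* + t R \in \finitemomentspace{}$ for every $t \in [0,1]$, we have $\cF(Q_* + t(R - Q_*)) - \cF(Q_*) \geq 0$. Dividing by $t > 0$ and taking the admissible limit in the sense of \Cref{def:subd-fv} yields
\[
    0 \leq \lim_{t \to 0^+} \frac{\cF(Q_* + t(R - Q_*)) - \cF(Q_*)}{t} = \int g \; \dd(hQ_* - Q_*) = \int g(h-1) \; \dd Q_*.
\]
Writing $c \coloneqq \int g \; \dd Q_*$ (which is finite because $g \in \cL^1(Q_*)$), this becomes $\int g h \; \dd Q_* \geq c$ for every bounded measurable $h \geq 0$ with $\int h \; \dd Q_* = 1$.

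Next I would upgrade this one-sided inequality to the desired $Q_*$-a.e.\ equality by a test-function argument. Suppose, for contradiction, that $Q_*(\{g < c\}) > 0$, and define $h \coloneqq \mathbf{1}_{\{g < c\}}/Q_*(\{g < c\})$, which is bounded, nonnegative, and has $Q_*$-integral equal to $1$. Then
\[
    \int g h \; \dd Q_* = \frac{1}{Q_*(\{g < c\})} \int_{\{g < c\}} g \; \dd Q_* < c,
\]
contradicting the inequality just established. Hence $g \geq c$ holds $Q_*$-a.e. Combined with $\int g \; \dd Q_* = c$, this forces $g = c$ $Q_*$-a.e., which is the claim.

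There is no real obstacle here; the only points that require care are (i) checking that $(1-t)Q_* + tR$ genuinely lies in $\finitemomentspace{}$ for $t \in [0,1]$, so that the global-minimum property can be invoked, and (ii) ensuring the chosen test density $h$ is bounded, which is why the normalisation uses $Q_*(\{g < c\}) > 0$. Both are immediate under the stated hypotheses, and the argument does not use any topological or convexity structure of $\cF$ beyond what is already encoded in the admissibility assumption on $g$.
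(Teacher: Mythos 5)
Your proof is correct and follows essentially the same route as the paper's: both test the first-order optimality inequality $\int g\,\dd(hQ_*-Q_*)\geq 0$ against reweighted measures $hQ_*$ and then concentrate $g$ to its mean. The only (cosmetic) difference is in the final step — the paper perturbs symmetrically with $h = 1\pm\eta(h_0-Q_*(h_0))$ to get the exact identity $Q_*(gh_0)=Q_*(g)Q_*(h_0)$ for arbitrary bounded $h_0$, whereas you use the normalised indicator of $\{g<c\}$ to get $g\geq c$ a.e.\ and then combine with $\int g\,\dd Q_*=c$; both are valid.
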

\begin{proof}
Since $Q_*$ is a minimum, we have $Q_* \in \dom(\cF)$, and for any $Q \in \cA_{\cF}(Q_*; g)$, 
\begin{equation}
\lim_{t \to 0^+} \frac{\cF \bigl(Q_* + t(Q-Q_*)\bigr)-\cF(Q_*)}{t} = \int g\; \dd (Q - Q_*)\geq 0. \label{eq:optimum-fv}
\end{equation}
We now specify $Q$ in the form $hQ_*$. 
Specifically, for a bounded measurable function $h_0$, we can take $\eta > 0$ sufficiently small such that 
\begin{align*}
    h_1 & := 1+\eta(h_0 - Q_*(h_0)) \geq 0 , \\
    h_2 & := 1-\eta(h_0- Q_*(h_0)) \geq 0 .
\end{align*}
Substituting $Q = h_i Q_*$ into the right hand side of \eqref{eq:optimum-fv} for $i \in \{1,2\}$ yields the inequalities 
\begin{align*}
     \eta \left\{ Q_*(gh_0) - Q_*(g)Q_*(h_0)\right\} & \geq 0 , \\
     - \eta \left\{ Q_*(gh_0) - Q_*(g)Q_*(h_0)\right\} & \geq 0 
\end{align*}
and thus $Q_*\bigl( \{g-Q_*(g)\} h_0 \bigr) = 0$.
Let $\tilde{g} = g - Q_*(g)$. 
Since the above holds for any bounded measurable $h_0$, 
taking $h_0 = 1\{\tilde{g} \geq 0\} - 1\{\tilde{g} < 0\}$, we have $\int \verts{\tilde{g}} \; \dd Q_* = 0$ 
and thus, $Q_*$-a.e., 
$g=  \int g \; \dd Q_*$. 
\end{proof}

\begin{remark}[Local minimality]
The above proof also works if $Q_*$ is instead a minimum in the set $\{thQ_* + (1-t)Q_*: hQ_* \in \dom(\cF),\ \text{bounded measurable }h\geq0,\ t\in [0,1] \}$. 
\end{remark}

Our definition of stationarity (\Cref{def:stationary_point}) in the main text is based on the following corollary of \Cref{prop:min-is-stat}. 
\begin{coroll}[Minimum of $\eobj$ satisfies the self-consistency equation]
\label{cor:minimum-is-stationary-eobj}
Assume the entropy-regularised objective $\eobj$ in \eqref{eq: objective} has a global minimum $P$. 
Suppose there exists a subderivative $g_\cL$ of $\cL$ at $P$ such that $\cA_{\cL}(P; g_\cL)$ contains reweighted versions of $P$ (cf. \Cref{prop:min-is-stat}). 
Then, $P$ satisfies $\dd P / \dd Q_0 \propto \exp(-g_\cL)$, $Q_0$-a.e.
Moreover, this expression does not depend on a choice of $g_\cL$. 
\end{coroll}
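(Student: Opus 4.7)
The plan is to deduce \Cref{cor:minimum-is-stationary-eobj} from \Cref{prop:min-is-stat} applied to $\eobj = \cL + \KLD(\cdot\|Q_0)$. To do so, I will construct a subderivative $g$ of $\eobj$ at $P$ for which reweighted versions of $P$ are admissible directions, by adding to the given $g_\cL$ a suitable subderivative of the entropy term.

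The key computation is to show that $g_{\KLD}(x) := \log(\dd P/\dd Q_0)(x)$ is a subderivative of $\KLD(\cdot\|Q_0)$ at $P$ admitting reweightings. Note first that $P \in \dom(\eobj)$ forces $P\ll Q_0$ with $\KLD(P\|Q_0) < \infty$, and a standard argument (using $-u\log u \leq 1/e$ on $(0,1)$) gives $\log(\dd P/\dd Q_0) \in \cL^1(P)$. For bounded measurable $h \geq 0$ with $\int h\,\dd P = 1$, set $Q := hP$, so that $P_t := (1-t)P + tQ$ has density $((1-t)+th)(\dd P/\dd Q_0)$ with respect to $Q_0$. Expanding
\begin{equation*}
\KLD(P_t\|Q_0) = \int ((1-t)+th)\bigl[\log((1-t)+th) + \log(\dd P/\dd Q_0)\bigr]\, \dd P
\end{equation*}
and differentiating at $t = 0^+$, the first summand contributes zero since $\int(h-1)\,\dd P = 0$, and the second contributes $\int \log(\dd P/\dd Q_0)\, \dd(Q-P)$ with equality in the limit. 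Hence every reweighting $hP$ lies in $\cA_{\KLD(\cdot\|Q_0)}(P; g_{\KLD})$. By additivity of one-sided derivatives at points where both limits are finite, $g := g_\cL + g_{\KLD}$ is then a subderivative of $\eobj$ at $P$, and $\cA_{\eobj}(P; g)$ contains all reweightings of $P$ by combining the hypothesis on $\cA_\cL(P; g_\cL)$ with the calculation above.

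Applying \Cref{prop:min-is-stat} with $Q_* = P$ yields $g_\cL + \log(\dd P/\dd Q_0) = c$ $P$-a.e., for the constant $c := \int g\,\dd P$; exponentiating gives $(\dd P/\dd Q_0)(x) = e^{-c}\exp(-g_\cL(x))$ on the $P$-full set $\{\dd P/\dd Q_0 > 0\}$. The main obstacle is to upgrade this $P$-a.e.\ statement to the asserted $Q_0$-a.e.\ statement, i.e.\ to rule out $Q_0(A) > 0$ for the $P$-null set $A := \{\dd P/\dd Q_0 = 0\}$. My plan is a perturbation argument: for $\epsilon \in (0,1)$ consider $P_\epsilon := (1-\epsilon) P + \epsilon \, Q_0(\,\cdot \cap A)/Q_0(A)$; direct evaluation of $\phi(u) = u\log u$ on the two pieces of the support yields
\begin{equation*}
\KLD(P_\epsilon \| Q_0) - \KLD(P\|Q_0) = \epsilon \log \epsilon + O(\epsilon),
\end{equation*}
whose $\epsilon \log \epsilon \to 0^-$ term dominates any $O(\epsilon)$ perturbation in $\cL$ obtainable from mild regularity, forcing $\eobj(P_\epsilon) < \eobj(P)$ for small $\epsilon$ and contradicting minimality unless $Q_0(A) = 0$. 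With $Q_0(A) = 0$ established, the identity $\dd P/\dd Q_0 = c' \exp(-g_\cL)$ holds $Q_0$-a.e.\ with $c' = 1/\int \exp(-g_\cL)\,\dd Q_0$, and the independence of $g_\cL$ is automatic: the right-hand side is a version of the ($Q_0$-a.e.\ unique) Radon--Nikodym derivative, so any two admissible $g_\cL$'s produce the same normalised density $\exp(-g_\cL)/\int\exp(-g_\cL)\,\dd Q_0$ modulo $Q_0$-null sets.
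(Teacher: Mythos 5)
Your proof is correct and takes essentially the same route as the paper's: both combine \Cref{prop:min-is-stat} applied to the combined subderivative $g_\cL + \log(\dd P/\dd Q_0)$ (your verification that $\log(\dd P/\dd Q_0)$ is a subderivative of the KL term admitting reweightings reproduces \Cref{lem:KL-fv}) with an $\epsilon\log\epsilon$ singular mixture perturbation supported on the $P$-null set $A$ to rule out $Q_0(A)>0$ — the paper merely performs the two steps in the opposite order, and your Radon--Nikodym argument for the independence of $g_\cL$ is a mild streamlining of its final step. The one hand-wave, namely that $\cL(P_\epsilon)-\cL(P)=O(\epsilon)$ along a direction singular with respect to $P$ (which the stated hypothesis on $\cA_{\cL}(P;g_\cL)$, covering only reweightings of $P$, does not literally supply), appears in identical form in the paper's own proof, so it is not a gap you introduced.
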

\begin{proof}
We first show that $P$ is equivalent to $Q_0$. 
Define $\phi: [0,\infty) \to \R$ by $\phi(z) = z\log z$ for $z > 0$ and $\phi(z)=0$ if $z = 0$. 
Take some $R \ll Q_0$ (specified below) and $M_t = P + t(R-P)$ with density 
$m_t \coloneqq \dd M_t / \dd Q_0 = (1-t)p + tr$, 
where $p = \dd P / \dd Q_0$ and $r = \dd R / \dd Q_0$.
By the convexity of $\phi$, for any $t \in (0, 1)$, 
\begin{equation}
   \frac{\phi(m_t(x)) - \phi(p(x))}{t} 
  \leq \phi(r(x)) - \phi(p(x))
  \leq \phi(r(x)) + \frac{1}{e}, \label{eq:entropy-upperbound}
\end{equation}
where we have used $-\phi(z) \leq 1/e$ for $z \geq 0$. 

Now assume there exists a measurable set $A$ such that $P(A)=0$ and $\alpha \coloneqq Q_0(A) > 0$. 
On this set, we have 
\begin{align*}
    \frac{\phi(m_t(x)) - \phi(p(x))}{t}
    =  r(x) \bigl( \log r(x) + \log t\bigr). 
\end{align*}
With $R = r Q_0$ with $r = \alpha^{-1}1_A$, using the first variation expression for $\cL$ and the minimality of $P$, 
\begin{align*}
    0 
    &\leq \lim_{t\to 0^+}\frac{\eobj(M_t) - \eobj(P)}{t} \\
    &= \int g_\cL \; \dd (R-P) 
    + \lim_{t\to0^+}  \int \frac{\phi(m_t(x)) - \phi(p(x))}{t} \; \dd Q_0(x). 
\end{align*}
However, the limit in the second term can be evaluated as 
\begin{align*}
    & \lim_{t \to 0^+} \int \frac{\phi(m_t(x)) - \phi(p(x))}{t} \; \dd Q_0(x)\\
    &=\lim_{t \to 0^+}  \int_A \frac{\phi(m_t(x)) - \phi(p(x))}{t} \; \dd Q_0(x)
    + \lim_{t \to 0^+}\int_{A^c} \frac{\phi(m_t(x)) - \phi(p(x))}{t} \; \dd Q_0(x)\\
    &\leq  \lim_{t \to 0^+}  \log t  + \int_A \phi(r(x))\; \dd Q_0(x)
    + \int_{A^c} \{\phi(r(x)) + 1/e\} \; \dd Q_0(x)
    = -\infty. 
\end{align*}
Thus, we have arrived at a contradiction, and we must have $Q_0 \ll P$, and $P$ is therefore equivalent to $Q_0$. 

We next prove the self-consistency property. 
By the definition of $g_\cL$ and \Cref{lem:KL-fv}, we may take a subderivative $g = g_{\cL} + \log \dd P/ \dd Q_0$ of $\eobj$ and $\cA_{\eobj}(P; g)$ contains reweighted versions of $P$. 
Thus, we may apply \Cref{prop:min-is-stat} to $\eobj$ and $g$, 
which leads to $g = \text{const}$, $P$-a.e. 
By the established equivalence between $P$ and $Q_0$, 
we have $g =\text{const}$, $Q_0$-a.e.
and thus $(\dd P / \dd Q_0)(x) \propto \exp(-g_{\cL}(x))$ at $Q_0$-almost every $x \in \R^d$. 

For the last claim, 
take two such subderivatives $g_{\cL, 1}, g_{\cL, 2}$ of $\cL$. 
By \Cref{prop:min-is-stat},  
\begin{align*}
    &g_{\cL, 1} + \log \dv[]{P}{Q_0}  = \int g_{\cL, 1}\;  \dd P + \KLD(P||Q_0), \qquad \text{$P$-a.e.} \\
    &g_{\cL, 2} + \log \dv[]{P}{Q_0}  = \int g_{\cL, 2}\;  \dd P + \KLD(P||Q_0), \qquad \text{$P$-a.e.}
\end{align*}
and thus 
\begin{align*}
    g_{\cL, 1} - g_{\cL, 2} = \int (g_{\cL, 1} - g_{\cL, 2})\; \dd P \qquad \text{$P$-a.e.}
\end{align*}
Therefore, $P$-a.e., the difference $g_1-g_2$ is given by a constant, resulting in the equality 
\begin{equation*}
    \dv[]{P}{Q_0}
    = \frac{\exp(-g_{\cL, 1})}{\int \exp(-g_{\cL, 1}) \dd Q_0}
    = \frac{\exp(-g_{\cL, 2})}{\int \exp(-g_{\cL, 2}) \dd Q_0} ,
\end{equation*}
completing the argument.
\end{proof}

The above result motivates us to consider a refined notion of stationarity compared to \Cref{def:stationary_point} in the main text:

\begin{definition}[$g$-stationarity]
\label{def:alt-stationarity}
Let $\cF: \finitemomentspace{} \to (-\infty, \infty]$. 
A point $Q \in \finitemomentspace{}$ is called $g$-\emph{stationary} with respect to $\cF$
if there exists a subderivative $g:\R^d \to \R$ such that $g=c$ for some $c \in \R$, $Q$-a.e., and $\cA_{\cF, \geq}(Q; g)$ is not empty. 
\end{definition}

This alternative definition of stationarity can also be sufficient for minimality for convex functions: 

\begin{lemma}[$g$-stationary implies minimum for a convex function] \label{lem:stat-suff-min}
Let $\cF: \finitemomentspace{} \to (-\infty, \infty]$ be a convex function.
Suppose there exists $g$-stationary $Q \in \dom(\cF)$ such that 
$\cA_{\cF, \geq}(Q, g) \cap \dom(\cF) = \{R \in \dom(\cF) \setminus\{Q\}: R \ll Q\}$. 
Then, $Q$ is a minimiser of $\cF$ in $\cA_{\cF, \geq}(Q; g)$. 
Consequently, if any $R \in \dom(\cF)$ satisfies $R \ll Q$ (i.e. $g$ is fully admissible at $Q$), then $Q$ is a global minimum of $\cF$. 
\end{lemma}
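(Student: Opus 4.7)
The plan is to combine three ingredients: the $g$-stationarity hypothesis (which gives $g = c$ constant $Q$-a.e.), the structural hypothesis on $\cA_{\cF, \geq}(Q; g) \cap \dom(\cF)$ (which restricts attention to $R \ll Q$), and convexity of $\cF$ (to upgrade a one-sided directional liminf into a global inequality along the chord from $Q$ to $R$).

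First I would fix any $R \in \cA_{\cF, \geq}(Q; g) \cap \dom(\cF)$. By hypothesis $R \ll Q$, and $g = c$ holds $Q$-a.e., hence also $R$-a.e.\ by absolute continuity. Consequently both $\int g\; \dd R$ and $\int g\; \dd Q$ equal $c$, so
\begin{equation*}
\int g\; \dd (R - Q) = 0.
\end{equation*}
The subderivative inequality in \Cref{def:subd-fv} therefore reduces to
\begin{equation*}
\liminf_{t \to 0^+}\frac{\cF\bigl(Q + t(R-Q)\bigr) - \cF(Q)}{t} \geq 0.
\end{equation*}

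Next I would invoke convexity of $\cF$. Since $Q, R \in \dom(\cF)$, convexity gives $\cF\bigl(Q + t(R-Q)\bigr) \leq (1-t)\cF(Q) + t\cF(R)$ for every $t \in [0,1]$, which rearranges to the uniform bound
\begin{equation*}
\frac{\cF\bigl(Q + t(R-Q)\bigr) - \cF(Q)}{t} \leq \cF(R) - \cF(Q), \qquad t \in (0,1].
\end{equation*}
Passing to $\liminf$ as $t \to 0^+$ on the left and combining with the previous display yields $\cF(R) \geq \cF(Q)$. Any $R \in \cA_{\cF, \geq}(Q; g) \setminus \dom(\cF)$ has $\cF(R) = \infty \geq \cF(Q)$ trivially, so $Q$ minimises $\cF$ over $\cA_{\cF, \geq}(Q; g)$, which is the first claim.

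For the second claim, full admissibility of $g$ at $Q$ means $\cA_{\cF, \geq}(Q; g) \supseteq \dom(\cF) \setminus \{Q\}$. Intersecting with $\dom(\cF)$ and using the hypothesis $\cA_{\cF, \geq}(Q; g) \cap \dom(\cF) = \{R \in \dom(\cF) \setminus \{Q\}: R \ll Q\}$ forces every element of $\dom(\cF) \setminus \{Q\}$ to satisfy $R \ll Q$. The argument above then applies to every $R \in \dom(\cF)$, and trivially to $R \notin \dom(\cF)$, establishing global minimality of $Q$.

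The only delicate step is the transfer $\int g\; \dd R = c$: one must check that the $Q$-a.e.\ constancy of $g$ passes to $R$-a.e.\ constancy via $R \ll Q$, and that $g \in \cL^1(R)$, which is immediate from $|g| = |c|$ $R$-a.e. Everything else is bookkeeping around the definitions.
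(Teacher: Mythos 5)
Your proof is correct and follows essentially the same route as the paper: both arguments reduce $\int g\,\dd(R-Q)$ to $0$ using $R \ll Q$ together with the $Q$-a.e.\ constancy of $g$, and both use convexity of $t \mapsto \cF\bigl(Q + t(R-Q)\bigr)$ to bound $\cF(R)-\cF(Q)$ from below by the one-sided directional derivative. Your explicit handling of $R \notin \dom(\cF)$ and the transfer of a.e.\ constancy from $Q$ to $R$ are minor bookkeeping refinements of the same argument.
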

\begin{proof}
By assumption, for any $R \in \cA_{\cF, \geq}(Q; g) \cap \dom(\cF)$, there exists $f \geq 0$ such that $R = fQ$; 
we also have $g = c$, $Q$-a.e., for some $c \in \R$. 
Now, by the convexity of $\cF$, the function $F: t\in[0,1] \mapsto \cF\bigl( Q + t(R-Q)\bigr)$ is also convex, 
which implies 
\begin{equation*}
    F(1) - F(0) \geq \lim_{t \to 0^+} \frac{F(t) - F(0)}{t}.
\end{equation*}
In terms of $\cF$, this inequality is expressed as 
\begin{align*}
   \cF(R) - \cF(Q) 
   &\geq \int g \dd (R-Q) = \int g(f-1) \dd Q\\
   &= c\int (f-1)\; \dd Q = 0.
\end{align*}
Thus, $Q$ is a minimum in $\cA_{\cF, \geq}(Q; g)$. 
\end{proof}

\begin{proposition}[Sufficient condition for minimality for entropy-regularised objective]
\label{prop:stationarity-eobj-minimality}
Let $\mathcal{K}(\cdot) \coloneq \KLD(\cdot || Q_0)$ in shorthand, so that the entropy-regularised objective $\eobj$ in \eqref{eq: objective} can be written as $\eobj = \cL + \mathcal{K}$. 
Suppose  $\dom(\eobj) = \dom(\cL) \cap \dom(\cK) \not= \emptyset$, 
and that there exists $Q \in \dom(\eobj)$ satisfying $\dd Q / \dd Q_0 \propto \exp(-g_\cL)$, $Q_0$-a.e. for some subderivative $g_\cL$ of $\cL$.   
Take $g = g_\cL + g_\cK$, where $g_\cK \coloneqq \log \dd Q/ \dd Q_0$. 
Then, $Q$ is $g$-stationary with respect to $\eobj$. 
Further, assume the following: 
(a) $\cL$ is convex, and 
(b) $\{R \in \dom(\eobj)\setminus\{Q\}: R\ll Q\} \subset \cA_{\cL, \geq}(Q; g_\cL)$. 
Then $Q$ is a minimiser of $\eobj$. 
\end{proposition}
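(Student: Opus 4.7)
The proposition makes two claims which I would prove separately. First, I would establish $g$-stationarity by combining the self-consistency equation (which forces $g$ to be constant $Q$-almost everywhere) with an identification of a subderivative of $\cK := \KLD(\cdot \Vert Q_0)$ at $Q$. Second, I would use the convexity of $\eobj$ together with the subderivative inequality to deduce minimality, following the template of \Cref{lem:stat-suff-min}.

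\textbf{Step 1: $g$-stationarity.} Let $p := \dd Q/\dd Q_0$. The hypothesis $p \propto \exp(-g_\cL)$, $Q_0$-a.e.\ gives $\log p = -g_\cL - \log Z$ with $Z := \int \exp(-g_\cL) \; \dd Q_0$, and hence
\[
g \;=\; g_\cL + \log p \;=\; -\log Z
\]
is constant $Q_0$-a.e., and therefore also $Q$-a.e.\ (since $Q \ll Q_0$); in particular $g \in \cL^1(Q)$. To verify that $g$ is a subderivative of $\eobj$ at $Q$, I would invoke \Cref{lem:KL-fv} from the appendix, which identifies $\log p$ as a subderivative of $\cK$ at $Q$ with a natural admissibility set (roughly, those $R \in \dom(\cK)$ with $R \ll Q$). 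Adding this to the hypothesised subderivative $g_\cL$ of $\cL$ at $Q$, the two liminf inequalities sum to give the subderivative condition for $\eobj$ on the intersection of the two admissibility sets. Non-emptiness of $\cA_{\eobj, \geq}(Q; g)$ then follows by exhibiting a common admissible direction; for instance, a bounded reweighting $hQ$ of $Q$ naturally sits in the admissibility set of $\cK$ produced by \Cref{lem:KL-fv}.

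\textbf{Step 2: Minimality under (a) and (b).} Convexity of $\cL$ (assumption (a)) and convexity of $\cK$ make $\eobj$ convex. For any $R \in \dom(\eobj)\setminus\{Q\}$, finiteness of $\cK(R)$ forces $R \ll Q_0$; since $p > 0$, $Q_0$-a.e., also $R \ll Q$. By (b), $R \in \cA_{\cL, \geq}(Q; g_\cL)$, and by \Cref{lem:KL-fv}, $R \in \cA_{\cK, \geq}(Q; \log p)$, whence $R \in \cA_{\eobj, \geq}(Q; g)$. Convexity of $t \mapsto \eobj\bigl(Q + t(R - Q)\bigr)$ on $[0,1]$, as in the proof of \Cref{lem:stat-suff-min}, then yields
\[
\eobj(R) - \eobj(Q) \;\geq\; \int g \; \dd(R-Q) \;=\; (-\log Z) \int \dd(R-Q) \;=\; 0,
\]
using $g = -\log Z$, $Q$-a.e., together with $R \ll Q$ and the fact that $R, Q$ are both probability measures. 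Hence $Q$ attains the infimum of $\eobj$ over $\dom(\eobj)$, which is the global infimum.

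\textbf{Main obstacle.} The most delicate point is establishing $\cA_{\eobj, \geq}(Q; g) \neq \emptyset$ in Step 1 without recourse to (a) or (b): the sum of two subderivatives is not automatically a subderivative of the sum, because the respective admissibility sets need not overlap. I would resolve this by exploiting the breadth of the admissibility set provided by \Cref{lem:KL-fv} (which typically accommodates all absolutely continuous perturbations) combined with the hypothesis that $g_\cL$ is a subderivative of $\cL$, which supplies at least one direction compatible with both functionals. A minor book-keeping point is checking measurability and positivity of $\exp(-g_\cL)$ to legitimise $Q \sim Q_0$, which underlies both the reduction $R \ll Q \iff R \ll Q_0$ in Step 2 and the identification of $g$ as constant $Q$-a.e.\ in Step 1.
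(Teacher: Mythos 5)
Your proposal is correct and follows essentially the same route as the paper's proof: both establish $g$-stationarity via the constancy of $g = g_\cL + \log \dd Q/\dd Q_0$ forced by the self-consistency relation, use \Cref{lem:KL-fv} for the admissibility of the KL term together with condition (b) for the loss term, reduce $R \in \dom(\eobj)$ to $R \ll Q$ via the equivalence $Q \sim Q_0$, and conclude minimality by the convexity argument of \Cref{lem:stat-suff-min}. The obstacle you flag — that admissibility sets of the two subderivatives must be shown to overlap before summing — is precisely the point the paper's proof handles the same way.
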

\begin{proof}
First, the $Q_0$-a.e. relation $\dd Q / \dd Q_0 \propto \exp(-g_\cL)$ shows both the equivalence between $Q$ and $Q_0$ and therefore the $g$-stationarity. The rest of the proof is devoted to the second claim, for which \Cref{lem:stat-suff-min} will be used.

First, for $R \in \dom(\eobj) = \dom(\cK) \cap \dom(\cL)$,
the equivalence between $Q$ and $Q_0$ implies that  $R \ll Q$, since $R\in \dom(\cK)$ satisfies $R \ll Q_0$; and \Cref{lem:KL-fv} leads to $R \in \cA_{\cK, \geq}(Q; g_\cK)$. 
From this observation and the assumed condition (b), we have $R \in \cA_{\cL, \geq}(Q; g_\cL)$, 
and consequently $R \in \cA_{\cL, \geq}(Q; g_\cL) \cap \cA_{\cK, \geq}(Q; g_\cK)$. 
Note that for $R \in \dom(\cL)$, we have $\int g_\cL\; \dd (R-Q) < \infty$, as otherwise $\cL(R) = \infty$ by the same convexity argument as in the proof of \Cref{lem:stat-suff-min} (the same goes for $g_\cK$ and $\cK$). 
Now, 
for $R \in \cA_{\cL, \geq}(Q; g_\cL) \cap \cA_{\cK, \geq}(Q; g_\cK)$, 
\begin{align*}
    \liminf_{t \to 0^+}\frac{ \eobj\bigl(Q + t (R - Q)\bigr)-\eobj(Q)}{t}
    &\geq \int g_\cL \dd(R-Q) + \int g_\cK \dd(R-Q)\\
    &= \int g\; \dd (R-Q).
\end{align*}
Thus, we have $R \in \cA_{\eobj, \geq}(Q; g)$ whenever $R \in \dom(\eobj) \subset \cA_{\cL, \geq}(Q; g_\cL) \cap \cA_{\cK, \geq}(Q; g_\cK)$. 
Applying \Cref{lem:stat-suff-min} (with convexity of $\eobj)$ concludes the proof. 
\end{proof}

\begin{remark}[On the definition of stationarity]
\Cref{cor:minimum-is-stationary-eobj} shows that $\dd Q/ \dd Q_0 \propto \exp(-g_\cL)$ is a necessary condition for minimality with respect to an entropy-regularised objective (even when $\cL$ is not convex), 
whereas \Cref{prop:stationarity-eobj-minimality} shows that $\dd Q/ \dd Q_0 \propto \exp(-g_\cL)$ is also sufficient when a convex loss is used. 
\end{remark}

\subsubsection{Concrete Examples} \label{app: examples}
This appendix provides concrete examples of subderivatives and their admissible directions, illustrating the concepts introduced in \Cref{app: min and stat}. 

\begin{lemma}[Linear loss]
Let $\mathcal{U}(Q) = \int u\; \dd Q$, where $u: \R^d \to \R$. 
For $Q \in \dom(\mathcal{U}) = \{Q\in \finitemomentspace{}: u \in \cL^1(Q) \}$, 
$u$ is a subderivative with $\cA_{\mathcal{U}}(Q; g) = \dom(\mathcal{U})$
and $\cA_{\mathcal{U}, \geq}(Q; g) = \{Q\ \in \finitemomentspace{}: u_{-} \in \cL^1(Q)\}$. 
If $u$ is lower-bounded, then 
$\cA_{\mathcal{U}, \geq}(Q; g) = \finitemomentspace{}$. 
\end{lemma}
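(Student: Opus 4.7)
The plan is to directly exploit the linearity of $\mathcal{U}$. For any $Q \in \dom(\mathcal{U})$ and any $R \in \finitemomentspace{}$, linearity gives $\mathcal{U}(Q + t(R-Q)) = (1-t)\mathcal{U}(Q) + t\mathcal{U}(R)$ for $t \in [0,1]$; since $\mathcal{U}(Q) \in \R$, the difference quotient $[\mathcal{U}(Q + t(R-Q)) - \mathcal{U}(Q)]/t$ is identically equal to $\mathcal{U}(R) - \mathcal{U}(Q) = \int u \; \dd(R-Q)$ on $(0,1)$, whenever the right-hand side is well-defined in the extended reals. The assumption $Q \in \dom(\mathcal{U})$ gives $u \in \cL^1(Q)$, which is needed for $u$ to be a candidate subderivative in the sense of \Cref{def:subd-fv}.

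To characterise $\cA_{\mathcal{U}, \geq}(Q; u)$, I would observe that $\int u\; \dd R$ is well-defined in $(-\infty,\infty]$ precisely when $u_{-} \in \cL^1(R)$. Under that condition, the constant-in-$t$ difference quotient has $\liminf$ as $t \to 0^+$ equal to $\int u\; \dd(R-Q)$, so the defining inequality \eqref{eq:first-variation} holds (with equality) in the extended reals, and $R \in \cA_{\mathcal{U}, \geq}(Q; u)$. Conversely, if $u_{-} \notin \cL^1(R)$, the expression $\mathcal{U}(Q + t(R-Q))$ is ill-defined for $t \in (0,1)$, excluding $R$ from the admissible set. This yields $\cA_{\mathcal{U}, \geq}(Q; u) = \{R \in \finitemomentspace{}: u_{-} \in \cL^1(R)\}$. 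For the stricter set $\cA_{\mathcal{U}}(Q; u)$, the limit must moreover take a finite value in $\R$; since the quotient is identically $\int u\; \dd(R-Q)$, this is equivalent to $\int |u|\; \dd R < \infty$, i.e., $R \in \dom(\mathcal{U})$, giving $\cA_{\mathcal{U}}(Q; u) = \dom(\mathcal{U})$.

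The last claim is immediate: if $u$ is lower-bounded, then $u_{-}$ is a bounded measurable function, so $u_{-} \in \cL^1(R)$ for every $R \in \finitemomentspace{}$, and the previous characterisation reduces to $\cA_{\mathcal{U}, \geq}(Q; u) = \finitemomentspace{}$. No significant obstacles are anticipated; the proof is essentially a bookkeeping exercise in extended-real arithmetic, and the key conceptual step is simply recognising that linearity collapses the difference quotient to a constant in $t$.
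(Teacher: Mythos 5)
Your argument is correct and is precisely the intended one: the paper states this lemma without proof (it is the degenerate $\ell=1$ case of the multilinearity argument given for \Cref{lem:fv-interact}), and your observation that linearity makes the difference quotient constant in $t$, reducing everything to checking when $\int u\,\dd R$ is well-defined in $(-\infty,\infty]$, is exactly the required bookkeeping. The only point worth recording is that excluding $R$ with $u_-\notin\cL^1(R)$ rests on reading $\int u\,\dd(R-Q)$ as ill-defined there (rather than invoking the convention $\mathcal{U}=+\infty$ off $\dom(\mathcal{U})$, under which the quotient would be identically $+\infty$ and the defining inequality would hold vacuously whenever $u_+\in\cL^1(R)$), which is clearly the reading the statement intends.
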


\begin{lemma}[Interaction energy] 
\label{lem:fv-interact}
    For $r \geq 2$, let $w: \prod_{i=1}^r \R^d \to \R$ and 
    \[
    \mathcal{W}(Q)  
    = \int w(x_1,\dots, x_r)\; \dd Q^{\otimes r}(x_1, \dots, x_\ell)
    \]
    with $\dom (\mathcal{W}) = \{ Q \in \finitemomentspace{}: w \in \cL^1(Q^{\otimes r})\}$. 
    For $\theta: \R^d \to [1,\infty)$, take 
    $\finitemomentspace{\theta} \coloneqq \{Q \in \finitemomentspace{}: \theta \in \cL^1(Q)\}$. 
    Suppose $w_{-}(x_1, \dots, x_r) \lesssim \prod_{i=1}^r\theta(x_i)$. 
    Then, the function
    \[
        g(x_1, \dots, x_r) = \sum_{i=1}^{r} \int w(y_1, \dots, y_r)\; \dd Q_{x, i}(y_1, \dots, y_r), 
    \]
    where 
    \[
        Q_{x, i} \coloneqq Q\otimes \dots \otimes Q \otimes \underbrace{\delta_{x}}_{i\mathrm{th\ component}} \otimes Q \otimes \dots \otimes Q, 
    \]
    is a subderivative at $Q \in \finitemomentspace{\theta } \cap \dom(\mathcal{W})$ with $\cA_{\mathcal{W}, \geq}(Q; g) = \finitemomentspace{\theta}$; 
    in particular, if $w$ is lower-bounded, then $\cA_{\mathcal{W}, \geq}(Q; g)= \finitemomentspace{}$. 
    If additionally $w(x_1, \dots, x_r) \lesssim \prod_{i=1}^r\theta(x_i)$, then $\cA_{\mathcal{W}}(Q; g) = \finitemomentspace{\theta}$. 
\end{lemma}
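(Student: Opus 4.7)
The idea is to compute the one-sided derivative of the polynomial map $t \mapsto \mathcal{W}(M_t)$ directly, where $M_t \coloneqq Q + t(R-Q) = (1-t)Q + tR$, and to recognise the result as $\int g \;\dd (R-Q)$. By multilinearity of tensor products,
\[
M_t^{\otimes \ell} \;=\; \sum_{S \subseteq \{1,\ldots,\ell\}} (1-t)^{\ell - |S|}\, t^{|S|}\, \mu_S,
\]
where $\mu_S$ denotes the product measure placing $R$ at the coordinates indexed by $S$ and $Q$ at the remaining ones. Integrating $w$ against both sides yields $\mathcal{W}(M_t) = \sum_S (1-t)^{\ell-|S|} t^{|S|} \int w \;\dd \mu_S$, a finite polynomial in $t$. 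The growth bound $|w(x_1,\ldots,x_\ell)| \lesssim \prod_i \theta(x_i)$ controls each term via $|\int w \;\dd\mu_S| \lesssim R(\theta)^{|S|} Q(\theta)^{\ell-|S|}$, so for $R \in \finitemomentspace{\theta}$ every summand is finite.

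Next, I subtract $\mathcal{W}(Q)$ (the $S=\emptyset$ term evaluated at $t=0$), divide by $t$, and pass to the limit as $t \to 0^+$. Only $|S|=0$ and $|S|=1$ contribute: the $S=\emptyset$ prefactor derivative $\tfrac{(1-t)^\ell - 1}{t}$ tends to $-\ell$, each singleton $|S|=1$ prefactor tends to $1$, and $|S|\geq 2$ terms carry an extra power of $t$ and vanish. Hence the derivative equals $-\ell\,\mathcal{W}(Q) + \sum_{i=1}^\ell \int w \;\dd \mu_{\{i\}}$. Fubini then identifies $\int w \;\dd\mu_{\{i\}} = \int (\int w \;\dd Q_{x,i})\dd R(x)$ and $\int w \;\dd Q^{\otimes\ell} = \int (\int w \;\dd Q_{x,i})\dd Q(x)$, so the derivative rewrites as $\int g \;\dd R - \int g \;\dd Q = \int g \;\dd(R-Q)$. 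Bounding $|g(x)| \lesssim \ell\, \theta(x)\, Q(\theta)^{\ell-1}$ via the growth assumption shows $g \in \cL^1(Q)$, and the equality case $\cA_{\mathcal{W}}(Q; g) = \finitemomentspace{\theta}$ is established.

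For the lower-bounded claim, note that replacing $w$ by $w+c$ shifts both $\mathcal{W}$ by $c$ and $g$ by $\ell c$, leaving $\mathcal{W}(M_t) - \mathcal{W}(Q)$ and $\int g \;\dd(R-Q)$ invariant; hence I may assume $w \geq 0$. Then every $\int w \;\dd\mu_S \in [0,\infty]$ is well-defined for arbitrary $R \in \finitemomentspace{}$, and the polynomial expansion of $\mathcal{W}(M_t)$ still holds in $[0,\infty]$. If any $\int w \;\dd\mu_{\{i\}} = \infty$ or some $R$-moment diverges, both the $\liminf$ of the difference quotient and $\int g \;\dd(R-Q)$ are $+\infty$, so the inequality is trivial; otherwise the finite computation from the previous paragraph applies. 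To pass from the pointwise limit of prefactors to the $\liminf$ inequality when intermediate $|S|\geq 2$ terms are infinite, I apply Fatou's lemma to the nonnegative integrand of the difference quotient.

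The main obstacle is the last step: justifying the $\liminf$ inequality for $R \notin \finitemomentspace{\theta}$ without control on mixed moments. The nonnegativity afforded by shifting $w$ is essential here, since it allows Tonelli/Fatou arguments to replace the dominated-convergence reasoning used in the integrable case and to handle the possibility that both sides diverge.
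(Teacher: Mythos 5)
Your proof is correct and follows essentially the same route as the paper's: expand $(Q+t(R-Q))^{\otimes\ell}$ by multilinearity, observe that only the terms linear in $t$ survive the limit, and identify the resulting expression with $\int g \,\dd(R-Q)$ via Fubini--Tonelli. Your handling of the lower-bounded case (shifting to $w\geq 0$ and discarding the nonnegative higher-order terms) is in fact more explicit than the paper's one-line assertion, but it is the same underlying argument.
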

\begin{proof}
    Formally, with $M_t = (1-t)Q + tR$, we have 
    \begin{align*}
        \mathcal{W}(M_t) = \sum_{j=0}^r (1-t)^{r-j} t^j \sum_{S \subseteq \{1, \dots, r\}, \verts{S}=j} \int w\; \dd \left( R^S \otimes Q^{S^c}\right),
    \end{align*}
    where $R^S \otimes \dd Q^{S^c}$ denotes the product measure whose $i$th component is $R$ is $i \in S$ and $Q$ otherwise for $i \in \{1, \dots, r\}$ (under the convention $S = \emptyset$ if $\verts{S} = 0$). 
    This expansion is valid if $w_{-}$ is finitely integrable with respect to $R^S \otimes \dd Q^{S^c}$ (and hence the RHS is well-defined), which holds if $Q, R \in \finitemomentspace{\theta}$. 
    Then, 
     \begin{align*}
        \frac{ \mathcal{W}(M_t) - \mathcal{W}(Q)} {t}
        &= (1-t)^{r-1} \left(\int g\; \dd (R-Q) + r \mathcal{W}(Q)\right)+  \frac{(1-t)^{r} - 1}{t} \cdot \mathcal{W}(Q)  \\
        & \quad+
        \sum_{j=2}^r (1-t)^{r-j} t^{j-1} \sum_{S \subseteq \{1, \dots, r\}, \verts{S}=j} \int w\; \dd \left( R^S \otimes Q^{S^c}\right).
    \end{align*}
    Although the third term can be $\infty$, it becomes finite if $w(x_1, \dots, x_r) \lesssim \prod_{i=1}^r \theta(x_i)$, in which case vanishes as $t \to 0^+$. 
    In consequence, we have arrived at the evaluation
    \begin{align*}
        \liminf_{t\to 0^+} \frac{ \mathcal{W}(M_t) - \mathcal{W}(Q)} {t}
        &\geq \int g\; \dd (R-Q),
    \end{align*}
    where the equality holds (and the limit exists) if $R \in \finitemomentspace{\theta}$ and $w(x_1, \dots, x_r) \lesssim \prod_{i=1}^r \theta(x_i)$. 
\end{proof}

\begin{lemma}[KL-divergence]
\label{lem:KL-fv}
Let $\cK(\cdot) = \KLD(\cdot || Q_0)$
and $Q \in \dom(\cK)$. 
Then at $Q$, 
we may take a subderivative 
\begin{equation*}
    g = \log \dv[]{Q}{Q_0},
\end{equation*}
where $\dd Q / \dd Q_0$ is a version of the Radon-Nikodym derivative, 
and we have $\{R \in \finitemomentspace{}: R \ll Q\} \subset \cA_{\cK, \geq}(Q; g)$ 
and $\{R \in \dom(\cK): \KLD(R||Q)<\infty\} \subset \cA_{\cK}(Q; g)$. 
\end{lemma}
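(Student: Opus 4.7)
My plan is to exploit convexity of $\phi(z) \coloneqq z\log z$ (with the convention $0\log 0 = 0$) in two complementary ways: a tangent-line inequality at $p \coloneqq \dd Q / \dd Q_0$ yields the subderivative inequality on $\{R \ll Q\}$, while the monotonicity of difference quotients (again a consequence of convexity) combined with dominated convergence gives the equality claim on $\{R \in \dom(\cK) : \KLD(R\|Q) < \infty\}$. As a preliminary I first verify $g \coloneqq \log p \in \cL^1(Q)$: using $-z\log z \leq 1/e$ for $z \in [0, 1]$ bounds the negative part, $\int (\log p)_{-} \; \dd Q = \int (-\log p) p \mathbf{1}\{p\leq 1\} \; \dd Q_0 \leq 1/e$, while the positive part is controlled via $\int \log p \; \dd Q = \KLD(Q \| Q_0) < \infty$.

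For the inclusion $\{R \ll Q\} \subset \cA_{\cK, \geq}(Q; g)$, write $r \coloneqq \dd R / \dd Q_0$ and $m_t \coloneqq (1-t)p + tr$. The tangent-line inequality $\phi(z) \geq \phi(p) + (1+\log p)(z-p)$ evaluated at $z = m_t$ gives, pointwise on $\{p > 0\}$,
\[
\phi(m_t) - \phi(p) \;\geq\; (1 + \log p)\cdot t(r-p),
\]
while on $\{p = 0\}$ the assumption $R \ll Q$ forces $r = 0$ $Q_0$-a.e., so both sides vanish. Integrating against $Q_0$ and using $\int (r-p)\; \dd Q_0 = 0$ yields
\[
\cK(M_t) - \cK(Q) \;\geq\; t \int g \; \dd (R-Q)
\]
for every $t \in (0,1]$. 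Dividing by $t$ and taking $\liminf$ produces the required inequality in $[-\infty, \infty]$.

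For the sharper inclusion $\{R \in \dom(\cK) : \KLD(R \| Q) <\infty\} \subset \cA_{\cK}(Q; g)$, I would first establish $g \in \cL^1(R)$ via the chain-rule identity $r\log p = r \log r - r \log f$ on $\{p > 0\}$ (with $f \coloneqq \dd R / \dd Q$, so $r = fp$), which upon integration yields $\int g\; \dd R = \KLD(R \| Q_0) - \KLD(R \| Q) \in \R$; the uniform lower bound $\phi \geq -1/e$ controls the negative parts of both $r \log r$ and $r \log f$, so each is $Q_0$-integrable, and hence $|\log p|\, r \in \cL^1(Q_0)$ too. Next, convexity implies the difference quotient $t \mapsto (\phi(m_t) - \phi(p))/t$ is non-decreasing in $t$, so it is non-increasing as $t \downarrow 0$, with pointwise limit $(1+\log p)(r-p)$ on $\{p > 0\}$ and $Q_0$-integrable upper bound $\phi(r) - \phi(p)$ at $t = 1$ (integrable since $R \in \dom(\cK)$). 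Dominated convergence then yields $(\cK(M_t) - \cK(Q))/t \to \int g\; \dd (R-Q)$, giving $R \in \cA_{\cK}(Q; g)$.

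The main obstacle is the integrability $g \in \cL^1(R)$ in the equality case: membership $R \in \dom(\cK)$ alone does not guarantee this, and the chain-rule decomposition combined with the $\KLD(R\|Q) < \infty$ hypothesis is essential to conclude that both $r \log r$ and $r \log f$ are $Q_0$-integrable and hence that their difference $r \log p$ is too. A minor technical subtlety in the subderivative case is that $\int g\; \dd (R-Q)$ could equal $-\infty$ for certain $R \ll Q$, but this only makes the established inequality automatic when interpreted in $[-\infty, \infty]$.
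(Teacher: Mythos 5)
Your proof is correct. The first inclusion ($R \ll Q$ gives the subderivative inequality via the tangent-line bound $\phi(m_t)-\phi(p) \geq \phi'(p)\, t(r-p)$ for $\phi(z)=z\log z$) is essentially identical to the paper's argument, and your preliminary check that $g=\log\dd Q/\dd Q_0 \in \cL^1(Q)$ is a worthwhile addition the paper leaves implicit. For the equality inclusion, however, you take a genuinely different route. The paper decomposes the difference quotient as $\int g\,\dd(R-Q)$ plus a Bregman remainder, rewrites the remainder as $\tfrac{1}{t}\int\phi(m_t/q)\,\dd Q$, and shows it vanishes by splitting into the regions $\{\dd R/\dd Q\leq 1\}$ and $\{\dd R/\dd Q>1\}$ with two separate dominated-convergence estimates (the latter using $\log(1+x)\leq x$ together with $\KLD(R\|Q)<\infty$). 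You instead invoke the monotonicity of secant slopes of the convex map $t\mapsto\phi(p+t(r-p))$, which sandwiches the quotient between its pointwise limit $(1+\log p)(r-p)$ and its value $\phi(r)-\phi(p)$ at $t=1$, and then apply dominated convergence once; the hypothesis $\KLD(R\|Q)<\infty$ enters through the chain-rule identity $\int g\,\dd R=\KLD(R\|Q_0)-\KLD(R\|Q)$, which (with the uniform bound $\phi\geq -1/e$ controlling negative parts) yields $g\in\cL^1(R)$ and hence an integrable lower envelope. Your argument is arguably cleaner, replacing two ad hoc DCT estimates with one structural convexity fact; the paper's computation has the minor advantage of exhibiting the remainder explicitly as a scaled $\phi$-divergence. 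The only shared soft spot, which you correctly flag and which the paper handles no more carefully, is that for general $R\ll Q$ the quantity $\int g\,\dd(R-Q)$ must be read as a well-defined extended real for the inequality case to be meaningful.
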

\begin{proof}
For $R \ll Q$, 
consider the mixture $M_t = Q + t(R-Q)$, having density 
$\dd M_t / \dd Q_0 = m_t=  (1-t) q + t r$, where $r= \dd R/ \dd Q_0$ and $q = \dd Q/\dd Q_0$. 
Let $\phi(z) = z \log z$ for $z > 0$ and $\phi(0) = 0$. 
Then, for $t > 0$, 
\begin{equation}
    \frac{\KLD(\mu_t || Q_0) - \KLD(Q || Q_0)}{t}
    = \int \frac{\phi\bigl(m_t(x)\bigr) - \phi\bigl(q(x)\bigr)}{t}\; \dd Q_0(x). 
    \label{eq:KL-quotient}
\end{equation}    
Note that due to $R \ll Q$, we have $r=0$ wherever $q = 0$, and thus the quotient in \eqref{eq:KL-quotient} vanishes in $\{q=0\}$. 
Since $\phi$ is convex, we have $\phi(z) - \phi(z') \geq \phi'(z)(z - z')$ for $z, z'\ \in (0, \infty)$ 
and thus 
\begin{align*}
    \eqref{eq:KL-quotient} 
    \geq \int \left(\log q(x) + 1 \right)(r(x) - q(x))\; \dd Q_0(x)
    = \int \log q\; \dd (R-Q). 
\end{align*}

We next show that the equality holds under the additional assumption $\KLD(R||Q) < \infty$. 
Define the Bregman divergence 
\[
    d_\phi(z, w) \coloneqq \phi(z) - \phi(w) - \phi'(w)(z - w) \geq 0
\]
for $z, w \in (0,\infty)$. 
Note that if $r(x) > 0$ and $q(x) > 0$, 
\begin{align*}
    d_\phi(m_t(x), q(x)) %
    &= m_t(x) \log \frac{m_t(x)}{q(x)} - m_t(x) + q(x). 
\end{align*}
We thus have 
\begin{align*}
     \eqref{eq:KL-quotient}
     &= \int \phi'(q(x)) \dd (R-Q) + \int d_\phi(m_t(x), q(x))\; \dd Q_0 (x)\\
     &= \int \log \dv[]{Q}{Q_0}\dd (R-Q) + \frac{1}{t}\int \phi\left( \frac{m_t}{q} \right)\; \dd Q.
\end{align*}
We show that the second term vanishes as $t \to 0^+$. 

Let $\tilde{r} = r(x)/q(x)$. 
Then, 
\begin{align*}
    &\frac{1}{t}\int \phi\left( \frac{m_t}{q} \right)\; \dd Q
    = t^{-1}\int \bigl( \bigl(1 + t(\tilde{r}-1)\bigr) \log  \bigl(1 + t(\tilde{r}-1)\bigr) -\underbrace{t(\tilde{r}-1)}_{\text{zero integral}} \bigr) \; \dd Q.
\end{align*}
Since 
\begin{align*}
    \frac{  \verts{\bigl(1 + t(\tilde{r}-1)\bigr) \log  \bigl(1 + t(\tilde{r}-1)\bigr) -t(\tilde{r}-1) }}{t(1-\tilde{r})} \leq (\log(4) - 1)
\end{align*}
for any $\tilde{r} \leq 1$. 
By the dominated convergence theorem, we have 
\begin{align*}
    &\lim_{t \to 0^+} \frac{1}{t}\int_{\tilde{r} \leq 1} \bigl( \bigl(1 + t(\tilde{r}-1)\bigr) \log  \bigl(1 + t(\tilde{r}-1)\bigr) -t(\tilde{r}-1) \bigr)  \; \dd Q\\
    &=\int_{\tilde{r} \leq 1} \Bigl( \lim_{t \to 0^+} \frac{1}{t} \bigl(1 + t(\tilde{r}-1)\bigr) \log  \bigl(1 + t(\tilde{r}-1)\bigr) -(\tilde{r}-1) \Bigr)  \; \dd Q\\
    &=\int_{\tilde{r} \leq 1} \bigl( (\tilde{r}-1) - (\tilde{r}-1) \bigr)  \; \dd Q = 0.
\end{align*}
We also have 
\begin{align*}
    0&\leq \lim_{t\to 0^+}\frac{1}{t}\int_{\tilde{r} > 1} \bigl(  \bigl(1 + t(\tilde{r}-1)\bigr) \log  \bigl(1 + t(\tilde{r}-1)\bigr) -t(\tilde{r}-1)  \bigr)  \; \dd Q\\
    &\leq \int_{\tilde{r} > 1}  \lim_{t \to 0^+} (\tilde{r}-1) \log  \bigl(1 + t(\tilde{r}-1)\bigr) \; \dd Q = 0,
\end{align*}
where we have used the dominated convergence theorem along with the pair of inequalities $\log(1+t\tilde{r}-1) \leq \log (1+ \tilde{r}-1)$ and $\log(1+x) \leq x$ for $x\geq 0$, 
and the assumption $\int \tilde{r}\log{\tilde{r}}\dd Q = \KLD(R || Q) < \infty$. 
\end{proof}

\subsection{Variational Gradient Descent}
\label{app: svgd}

This appendix contains the proof of \Cref{prop: svgd converge}, reported in \Cref{sec: svgd}.
The strategy is based on Theorem 1 of \citet{banerjee2025improved}:

\begin{proof}[Proof of \Cref{prop: svgd converge}]
Introduce the shorthand $\mathbf{x} \coloneqq (x_1 , \dots , x_n) \in \mathbb{R}^{d \times n}$ and 
\begin{align*}
\Phi_{\mathbf{x}}(x_i,x_j) \coloneqq k(x_i , x_j) \underbrace{ (\nabla \log q_0 - \vargrad  \cL(Q_n))(x_j) }_{ =: \genscore{P}{Q_n}(x_j) } + \nabla_1 k(x_j , x_i) , \qquad Q_n \coloneqq \frac{1}{n} \sum_{j=1}^n \delta_{x_j}  ,
\end{align*}
where for convenience we have suppressed the $t$-dependence, i.e. $x_i \equiv x_i(t)$ and $Q_n \equiv Q_n(\mathbf{x})$.

\smallskip
\noindent \textit{Liouville equation:}
Under our assumptions, $\mathbf{x} \mapsto \Phi_{\mathbf{x}}(x_i,x_j)$ is $C^2(\mathbb{R}^{d \times n})$.
It then follows, from \citet[][Chapter 5, Cor. 4.1]{hartman2002ordinary}, there exists a joint density $p_n(t,\cdot)$ for $\mathbf{x}(t)$ for all $t \in [0,\infty)$ and, following an analogous argument to to Lemma 1 in \citet{banerjee2025improved}, $(t,\mathbf{x}) \mapsto p_n(t,\mathbf{x})$ is $C^2([0,\infty) \times \mathbb{R}^{d \times n})$.
This mapping $p_n(t,\cdot)$ is a solution of the $n$-body Liouville equation 
\begin{align}
    \partial_t p_n(t,\mathbf{x}) + \frac{1}{n} \sum_{i=1}^n \sum_{j=1}^n \nabla_{x_i} \cdot (p_n(t,\mathbf{x}) \Phi_{\mathbf{x}}(x_i,x_j)) = 0 , \label{eq: Liouville}
\end{align}
see \citet[][Chapter 8]{ambrosio2008gradient}.

\smallskip
\noindent \textit{Bounded support:}
Since the derivatives of $\nabla \log q_0$ and $\vargrad  \cL(Q_n)$ are bounded, we have 
\begin{align*}
\|(\nabla \log q_0 - \vargrad  \cL(Q_n))(x_j)\| \lesssim 1 + \|\mathbf{x}\|
\end{align*}
for each $j \in \{1,\dots,n\}$.
Since $k$ and $\nabla_1 k$ are also bounded, this implies that $\|\Phi_{\mathbf{x}}(x_i,x_j)\| \lesssim 1 + \|\mathbf{x}\|$ exhibits at most linear growth for each $i,j \in \{1 , \dots , n\}$.
Since in addition $p_n(0,\cdot) = \mu_0^{\otimes n}(\cdot)$ has bounded support, each $p_n(t,\cdot)$ also has bounded support.
This property will be used to justify applications of the dominated convergence theorem and integration by parts in the sequel.

\smallskip
\noindent \textit{Main part of the proof:}
Let 
\begin{align}
    H(t) & \coloneqq \int \log \left( \frac{p_n(t,\mathbf{x})}{ \genpdf{Q_n}(x_1) \cdots \genpdf{Q_n}(x_n) } \right) p_n(t,\mathbf{x}) \; \mathrm{d}\mathbf{x} \label{eq: dummy x}
\end{align}
and note that $\mathbf{x}$ is a dummy variable in \eqref{eq: dummy x}; the $t$ dependence of $H(t)$ enters only through the term $p_n(t,\mathbf{x})$ in the integrand.
Thus
\begin{align*}
    H'(t) & =  \partial_t \int \log \left( \frac{p_n(t,\mathbf{x})}{ \genpdf{Q_n}(x_1) \cdots \genpdf{Q_n}(x_n) } \right) p_n(t,\mathbf{x}) \; \mathrm{d}\mathbf{x}  \\
    & = \underbrace{ \int \partial_t p_n(t,\mathbf{x}) \; \mathrm{d}\mathbf{x} }_{=0} +  \int \log \left( \frac{p_n(t,\mathbf{x})}{ \genpdf{Q_n}(x_1) \cdots \genpdf{Q_n}(x_n) } \right) \partial_t p_n(t,\mathbf{x}) \; \mathrm{d}\mathbf{x}  ,
\end{align*}
where the interchanges of $\partial_t$ and integrals are justified by the dominated convergence theorem and noting that the integrands are $C^2([0,\infty) \times \mathbb{R}^{d \times n})$ and vanish when $\mathbf{x}$ lies outside of a bounded subset of $\mathbb{R}^{d \times n}$ (i.e. uniformly over $t \in [0,T]$).

Next, using \eqref{eq: Liouville}, 
\begin{align*}
    H'(t) & = - \int \frac{1}{n} \sum_{i=1}^n \sum_{j=1}^n \log \left( \frac{p_n(t,\mathbf{x})}{ \genpdf{Q_n}(x_1) \cdots \genpdf{Q_n}(x_n) } \right) \nabla_{x_i} \cdot (p_n(t,\mathbf{x}) \Phi_{\mathbf{x}}(x_i,x_j)) \; \mathrm{d}\mathbf{x} .
\end{align*}
Then, noting that $v : \mathbb{R}^{d \times n} \rightarrow \mathbb{R}^{d \times n}$ with $v = (v_1, \dots , v_n)$ and
\begin{align*}
    v_i(\mathbf{x}) \coloneqq \log \left( \frac{p_n(t,\mathbf{x})}{ \genpdf{Q_n}(x_1) \cdots \genpdf{Q_n}(x_n) } \right) p_n(t,\mathbf{x}) \Phi_{\mathbf{x}}(x_i,x_j) ,
\end{align*}
is $C^1(\mathbb{R}^{d \times n})$ and vanishes outside of a bounded set, and is therefore $\cL^1(\mathbb{R}^{d \times n})$, we may use \Cref{lem: divergence} to perform integration by parts:
\begin{align*}
    H'(t)  & = \frac{1}{n} \sum_{i=1}^n \sum_{j=1}^n \int \nabla_{x_i} \log \left( \frac{p_n(t,\mathbf{x})}{ \genpdf{Q_n}(x_1) \cdots \genpdf{Q_n}(x_n) } \right)  \cdot (p_n(t,\mathbf{x}) \Phi_{\mathbf{x}}(x_i,x_j)) \; \mathrm{d}\mathbf{x} \\
    & = \frac{1}{n} \sum_{i=1}^n \sum_{j=1}^n \int \left( \nabla_{x_i} p_n(t,\mathbf{x}) \cdot \Phi_{\mathbf{x}}(x_i,x_j)  -  \genscore{P}{Q_n}(x_i) \cdot \Phi_{\mathbf{x}}(x_i,x_j) \right) p_n(t,\mathbf{x}) \; \mathrm{d}\mathbf{x} 
\end{align*}
Similarly noting that $\mathbf{x} \mapsto p_n(t,\mathbf{x}) \Phi_{\mathbf{x}}(x_i,x_j)$ is $\cL^1(\mathbb{R}^{d \times n})$, another application of \Cref{lem: divergence} yields
\begin{align*}
    H'(t) & = - \frac{1}{n} \sum_{i=1}^n \sum_{j=1}^n \int \left(  \nabla_{x_i} \cdot \Phi_{\mathbf{x}}(x_i,x_j) + \genscore{P}{Q_n}(x_i) \cdot \Phi_{\mathbf{x}}(x_i,x_j) \right) p_n(t,\mathbf{x}) \; \mathrm{d}\mathbf{x} .
\end{align*}

\smallskip
\noindent \textit{Algebraic simplification:}
Now, since $k$ is translation-invariant we have $\nabla_1 k(x,x) = 0$ for all $x \in \mathbb{R}^d$, and
\begin{align*}
    \nabla_{x_i} \cdot \Phi_{\mathbf{x}}(x_i,x_j) & = \nabla_{x_i} \cdot [ k(x_i , x_j) \genscore{P}{Q_n}(x_j)] + \nabla_{x_i} \cdot [ \nabla_1 k(x_j , x_i) ] \\
    & = \nabla_1 k(x_i,x_j) \cdot \genscore{P}{Q_n}(x_j) + k(x_i,x_j) \nabla_{x_i} \cdot \genscore{P}{Q_n}(x_j) + \nabla_1 \cdot \nabla_2 k(x_i,x_j) \\
    & \qquad +  \{ \underbrace{ \nabla_1 k(x_i,x_i) }_{=0} \cdot \genscore{P}{Q_n}(x_i) + \Delta_1 k(x_i,x_i)  \} \mathbbm{1}_{i=j}  \\
    \genscore{P}{Q_n}(x_i) \cdot \Phi_{\mathbf{x}}(x_i,x_j) & = k(x_i,x_j) \genscore{P}{Q_n}(x_i) \cdot \genscore{P}{Q_n}(x_j) + \genscore{P}{Q_n}(x_i) \cdot \nabla_1 k(x_j,x_i) .
\end{align*}
In addition, letting $\nabla_{x_i} \cdot b_{Q_n}(x_j)$ be a shorthand for $\nabla_i \cdot f(x_1,\dots,x_n)$ where $f(x_1,\dots,x_n) \coloneq b_{Q_n}(x_j)$ and, letting $\vargrad \cdot \vargrad \cL(Q_n)(x_j)(x_i)$ be a shorthand for $\sum_m \partial_m \mathcal{F}_{x_j,m}'(Q_n)(x_i)$ where $\mathcal{F}_{x_j}(Q_n) \coloneq \vargrad \cL(Q_n)(x_j)$,
\begin{align}
    \nabla_{x_i} \cdot \genscore{P}{Q_n}(x_j) & =  \{ \nabla \cdot ( \nabla \log q_0 - \vargrad  \mathcal{L}(Q_n) )(x_i)  \} \mathbbm{1}_{i=j}  -  \frac{1}{n} \vargrad  \cdot \vargrad  \mathcal{L}(Q_n)(x_j)(x_i)  \label{eq: div score}
\end{align}
where we note in passing that the final term in \eqref{eq: div score} vanishes for linear $\cL$ and was therefore not included in the analysis of \citet{banerjee2025improved} for \ac{svgd}.

\smallskip
\noindent \textit{Overall bound:}
The above calculations have led us to the overall expression
\begin{align*}
    H'(t) & = - n \mathbb{E} [ \KGD_K^2(Q_n) ]  \\
    & \qquad + \mathbb{E} \Bigg[ \frac{1}{n^2}  \sum_{i=1}^n \sum_{j=1}^n  k(x_i,x_j) \vargrad  \cdot \vargrad  \mathcal{L}(Q_n)(x_j)(x_i)   \\
    & \qquad \qquad \qquad  + \frac{1}{n}  \sum_{i=1}^n   k(x_i,x_i) [ \Delta \log q_0(x_i) - \nabla \cdot \vargrad  \mathcal{L}(Q_n)(x_i) ] 
    + \Delta_1 k(x_i,x_i)   \Bigg] 
\end{align*}
where the expectation is with respect to the (random) initialisation of the particle set.
Thus, under our assumptions,
\begin{align*}
    H'(t) & \leq - n \mathbb{E}[ \KGD_K^2(Q_n)  ]  + C_K   
\end{align*}
for some finite, $K$-dependent constant $C_K$.
Integrating both sides from $0$ to $T$ and rearranging yields
\begin{align*}
    \frac{1}{T} \int_0^T \mathbb{E}[ \KGD_K^2(Q_n) ] \; \mathrm{d}t \leq \frac{H(0) - H(t)}{n T} + \frac{C_K}{n} 
    \leq \frac{H(0)}{nT} + \frac{C_K}{n} .
\end{align*}
The result follows since 
\begin{align}
H(0) & = \int \log \left( \frac{ \mu_0(x_1) \cdots \mu_0(x_n) }{ \genpdf{Q_n}(x_1) \cdots \genpdf{Q_n}(x_n) } \right) \mu_0(x_1) \cdots \mu_0(x_n) \; \mathrm{d}\mathbf{x} \nonumber \\
& = n \, \mathbb{E}_{x_1,\dots,x_n \sim \mu_0} \left[\log \frac{\mu_0(x_1)}{\genpdf{Q_n}(x_1)} \right] \label{eq: H0 expe}
\end{align}
and the expectation in \eqref{eq: H0 expe} is bounded in $n$ from \Cref{lem:initial-entropy}.
\end{proof}

\begin{lemma}[Integration by parts]\label{lem: divergence}
\label{thm: divergence}
Let $v : \mathbb{R}^{d \times n} \rightarrow \mathbb{R}^{d \times n}$ with $v = (v_1, \dots , v_n)$ where $v_i(\mathbf{x}) = a(\mathbf{x}) b_i(\mathbf{x})$ for differentiable $a : \mathbb{R}^{d \times n} \rightarrow \mathbb{R}$ and $b_i : \mathbb{R}^{d \times n} \rightarrow \mathbb{R}^d$.
Suppose that $v \in \cL^1(\mathbb{R}^{d \times n})$.
Then 
$$
\frac{1}{n} \sum_{i=1}^n \int a(\mathbf{x}) (\nabla_{x_i} \cdot b_i)(\mathbf{x}) \; \mathrm{d}\mathbf{x} = - \frac{1}{n} \sum_{i=1}^n \int (\nabla_{x_i} a)(\mathbf{x}) \cdot b_i(\mathbf{x}) \; \mathrm{d}\mathbf{x}
$$
whenever the integrals are well-defined.
\end{lemma}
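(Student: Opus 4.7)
The plan is to recognise that, after multiplying both sides by $n$ and invoking the product rule $\nabla_{x_i} \cdot (a b_i) = (\nabla_{x_i} a) \cdot b_i + a (\nabla_{x_i} \cdot b_i)$, which holds pointwise by differentiability of $a$ and $b_i$, the claim is equivalent to
\begin{equation*}
    \int_{\mathbb{R}^{dn}} \nabla \cdot v(\mathbf{x}) \; \mathrm{d}\mathbf{x} = 0,
\end{equation*}
where $\nabla \cdot v \coloneqq \sum_{i=1}^n \nabla_{x_i} \cdot v_i$ denotes the total divergence on $\mathbb{R}^{dn}$. Setting $m = dn$ and enumerating the coordinates of $\mathbf{x}$ and the components of $v$ as $(x^{(k)})_{k=1}^m$ and $(v^{(k)})_{k=1}^m$ respectively, by linearity this reduces to showing $\int_{\mathbb{R}^m} \partial_k v^{(k)}\; \mathrm{d}\mathbf{x} = 0$ for each $k \in \{1,\dots,m\}$. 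Interpreting the ``whenever the integrals are well-defined'' clause as supplying $\partial_k v^{(k)} \in \cL^1(\mathbb{R}^m)$, I would invoke Fubini's theorem to rewrite this as
\begin{equation*}
    \int_{\mathbb{R}^{m-1}} \left( \int_{\mathbb{R}} \partial_k v^{(k)}(x^{(k)}, \mathbf{x}_{-k}) \; \mathrm{d}x^{(k)} \right) \mathrm{d}\mathbf{x}_{-k}.
\end{equation*}

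The main step will then be the one-dimensional lemma: if $f \in C^1(\mathbb{R})$ satisfies $f, f' \in \cL^1(\mathbb{R})$, then $\int_{\mathbb{R}} f'(x)\; \mathrm{d}x = 0$. To establish this I would use the fundamental theorem of calculus, writing $f(x) = f(0) + \int_0^x f'(s)\; \mathrm{d}s$, from which both one-sided limits $L^\pm \coloneqq \lim_{x \to \pm \infty} f(x)$ exist in $\mathbb{R}$. Each must vanish, since otherwise $|f|$ would be bounded below by a positive constant outside some finite interval, contradicting $f \in \cL^1(\mathbb{R})$; thus $\int_{\mathbb{R}} f' = L^+ - L^- = 0$. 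I would then apply this to $f(x) \coloneqq v^{(k)}(x, \mathbf{x}_{-k})$ for almost every $\mathbf{x}_{-k}$---Fubini ensures this slice is integrable with integrable derivative for such $\mathbf{x}_{-k}$, and it is $C^1$ by differentiability of $a$ and $b_i$---making the inner integral vanish and hence the full integral zero.

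I anticipate no serious obstacle: the argument is essentially the standard ``a $\cL^1$ vector field on $\mathbb{R}^m$ whose divergence is also $\cL^1$ has zero divergence integral''. The only mildly subtle point is that mere $\cL^1$ membership of $v$ does not by itself give decay at infinity, so the integrability of $\partial_k v^{(k)}$ that is implicit in the ``integrals are well-defined'' clause is what effectively plays the role of the boundary-decay hypothesis one would otherwise need to invoke the divergence theorem.
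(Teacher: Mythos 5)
Your proposal is correct and begins with the same reduction as the paper: apply the product rule $\nabla_{x_i}\cdot(ab_i) = (\nabla_{x_i}a)\cdot b_i + a(\nabla_{x_i}\cdot b_i)$ and reduce the claim to $\int_{\mathbb{R}^{dn}} (\nabla\cdot v)(\mathbf{x})\,\mathrm{d}\mathbf{x} = 0$ for $v \in \mathcal{L}^1(\mathbb{R}^{d\times n})$. Where the paper simply cites an external divergence theorem (Theorem 3.2 of Liu et al.) for this last step, you prove it from scratch via Fubini and a one-dimensional fundamental-theorem-of-calculus argument; this is a genuine gain in self-containedness. Two caveats. First, your reading of the ``whenever the integrals are well-defined'' clause as supplying $\partial_k v^{(k)} \in \mathcal{L}^1(\mathbb{R}^m)$ coordinate-by-coordinate is mildly stronger than what the hypothesis literally provides (it concerns the integrals of the two product-rule terms summed over each block $i$, not each scalar partial separately); the theorem the paper cites does not need this per-coordinate integrability, so your version is slightly less general, though entirely adequate for the application in the proof of \Cref{prop: svgd converge}, where all integrands are $C^1$ and compactly supported. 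Second, the lemma only assumes $a$ and $b_i$ differentiable, so the slice $f(x) = v^{(k)}(x,\mathbf{x}_{-k})$ need not be $C^1$; your one-dimensional lemma should be stated for $f$ everywhere differentiable with $f, f' \in \mathcal{L}^1(\mathbb{R})$, which still yields $\int_{\mathbb{R}} f' = f(+\infty)-f(-\infty) = 0$ by the Lebesgue form of the fundamental theorem of calculus (an everywhere-differentiable function with Lebesgue-integrable derivative is absolutely continuous), so the argument survives with this adjustment.
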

\begin{proof}
    This result follows from the divergence theorem on $\mathbb{R}^{d \times n}$, which states that (assuming the following integral is well-defined) $\int (\nabla \cdot v)(\mathbf{x}) \; \mathrm{d}\mathbf{x} = 0$ whenever $v \in \cL^1(\mathbb{R}^{d \times n})$ \citep[see Theorem 3.2 in][which does not require continuous differentiability of the vector field]{liu2025learning}.
\end{proof}

\begin{lemma}[Control of the initial term $H(0)$]
\label{lem:initial-entropy}
In the setting of \Cref{prop: svgd converge}, 
\[
C_0 = \sup_{n \in \mathbb{N}} \quad \mathbb{E}_{x_1, \dots, x_n \sim \mu_0} \!\left[ \log \frac{\mu_0(x_1)}{\rho_{Q_n}(x_1)} \right] < \infty .
\]
\end{lemma}
\begin{proof}
Since $\mathcal{L}'$ is defined only up to an additive constant, for this proof we adopt the normalisation $\mathcal{L}'(Q)(0) = 0$ for all $Q \in \mathcal{P}(\mathbb{R}^d)$.
Let $Z(Q) := \int q_0(x) \exp(-\mathcal{L}'(Q)(x)) \, \mathrm{d}x$ so that
$$
\rho_Q(x) = \frac{ q_0(x) \exp(-\mathcal{L}'(Q)(x)) }{ Z(Q) }.
$$
Substituting this expression,
\[
\log \frac{\mu_0(x_1)}{\rho_{Q_n}(x_1)}
= \underbrace{\log \frac{\mu_0(x_1)}{q_0(x_1)}}_{(\mathrm{A})}
+ \underbrace{\mathcal{L}'(Q_n)(x_1)}_{(\mathrm{B})}
+ \underbrace{\log Z(Q_n)}_{(\mathrm{C})} ;
\]
we will upper-bound $(\mathrm{A})$ in expectation and $(\mathrm{B})$ and $(\mathrm{C})$ uniformly over the particle configuration $(x_1, \dots, x_n)$ and $n \in \mathbb{N}$.

\smallskip
\noindent \textit{Term $(\mathrm{A})$:} 
Let $S := \operatorname{supp} \mu_0$, which is compact by assumption, and set
$R_S := \sup_{x \in S} \|x\| < \infty$. The density $\mu_0$ is $C^2(\mathbb{R}^d)$ with compact
support, hence bounded, so the differential entropy $-\int \mu_0 \log \mu_0$ is finite; and
$\log q_0$ is continuous, hence bounded on $S$. 
Therefore 
\begin{align*}
    \left| \mathbb{E}_{x_1 \sim \mu_0} \left[ \log \frac{\mu_0(x_1)}{q_0(x_1)} \right] \right| & = \left| \int \mu_0(x) \log \mu_0(x) \, \mathrm{d}x - \int \mu_0(x) \log q_0(x) \, \mathrm{d}x \right| \\
    & \leq \left| - \int \mu_0(x) \log \mu_0(x) \, \mathrm{d}x \right| + \left| \int \mu_0(x) \log q_0(x) \, \mathrm{d}x \right| < \infty .
\end{align*}

\smallskip
\noindent \textit{Term $(\mathrm{B})$:} 
For any $Q \in \mathcal{P}(\mathbb{R}^d)$ and any
$x \in \mathbb{R}^d$, from the fundamental theorem of calculus together with $\mathcal{L}'(Q)(0) = 0$,
\[
\mathcal{L}'(Q)(x) = \int_0^1 \nabla_{\mathrm{V}} \mathcal{L}(Q)(s x) \cdot x \, \mathrm{d}s.
\]
Since
$\|\nabla_{\mathrm{V}} \mathcal{L}(Q)(s x)\|
\le \|\nabla_{\mathrm{V}} \mathcal{L}(Q)(0)\| + \int_0^s \|\nabla \nabla_{\mathrm{V}} \mathcal{L}(Q)(u x)\| \, \|x\| \, \mathrm{d}u
\le \mathfrak{b}_0 + \mathfrak{M} s \|x\|$,
integration yields
\begin{equation}
\big| \mathcal{L}'(Q)(x) \big|
\le \int_0^1 \big( \mathfrak{b}_0 + \mathfrak{M} s \|x\| \big) \|x\| \, \mathrm{d}s
= \mathfrak{b}_0 \|x\| + \tfrac{\mathfrak{M}}{2} \|x\|^2,
\label{eq:Lprime-growth}
\end{equation}
for every $Q \in \mathcal{P}(\mathbb{R}^d)$ and $x \in \mathbb{R}^d$.
Since $x_1 \in S$, the bound \eqref{eq:Lprime-growth} gives
$|\mathcal{L}'(Q_n)(x_1)| \le \mathfrak{b}_0 R_S + \tfrac{\mathfrak{M}}{2} R_S^2$.

\smallskip
\noindent \textit{Term $(\mathrm{C})$:} By the lower bound implied by \eqref{eq:Lprime-growth}, namely
$-\mathcal{L}'(Q)(x) \le \mathfrak{b}_0 \|x\| + \tfrac{\mathfrak{M}}{2} \|x\|^2$ for all $x$, we have 
\[
Z(Q) \le \int q_0(x) \exp\!\left( \tfrac{\mathfrak{M}}{2}\|x\|^2 + \mathfrak{b}_0 \|x\| \right) \mathrm{d}x  < \infty
\]
for each $Q \in \mathcal{P}(\mathbb{R}^d)$.

\smallskip
\noindent Combining these three bounds completes the argument.
\end{proof}

\subsection{Relaxing the Definition of Gradient Discrepancies}
\label{app: relax def gd}

This appendix explains how the \ac{gd} in \Cref{def: grad disc} of the main text, which assumes the existence of the variational gradient $\vargrad \cL(Q)$ at all $Q \in \finitemomentspace{}$, can be relaxed to accommodate situations where $\vargrad \cL(Q)$ exists only in a restricted sense (cf. the examples in \Cref{app: examples}).

As in the main text, we can define a \ac{gd} that allows us to quantify the discrepancy between $Q$ and $\exp(-g_Q)Q_0$, where $g_Q: \R^d \to \R$ is a $Q$-dependent function. 
For the \ac{gd} to represent a measure of suboptimality for an entropy-regularised objective \eqref{eq: objective}, $g_Q$ must be chosen appropriately. 
\Cref{cor:minimum-is-stationary-eobj} states that the self-consistency condition \eqref{eq: self-consistency} is necessary for $Q$ to be a minimum provided $g_Q$ is a subderivative of $\eobj$ at $Q$. 
In light of this result, we require loss functions to be \emph{sufficiently regular} in the following sense:

\begin{definition}[Functional differentiablity]
\label{def: func-differentiablity}
Suppose $\cL: \finitemomentspace{} \to [-\infty, \infty)$ satisfies the following: 
    A subderivative $g_Q$ exists at any $Q \in \dom(\cL)$ (cf. \Cref{def:subd-fv}) and is differentiable; furthermore, the set $\cA_{\cL}(Q; g_Q)$ of admissible directions 
    contains all probability measures of the form $hQ$, where $h \geq 0$ is bounded measurable. 
Then we say that $\cL$ is \emph{functionally differentiable} and 
denote $\mathfrak{D}\cL(Q) \coloneqq g_Q$ for $Q \in \dom(\cL)$ and $\mathfrak{D}_{\mathrm{V}} \cL(Q) \coloneqq \nabla g_Q$. 
\end{definition}

\noindent Note that $\mathfrak{D} \cL$ need not be unique, but any choice of $\mathfrak{D} \cL$ provides a measure of suboptimality. 
If $\cL$ admits a first variation $\cL'$ as defined in \Cref{subsec: notation main}, then $\mathfrak{D}\cL$ here agrees with $\cL'$ and is uniquely defined. 
Since $\mathfrak{D} \cL(Q)$ is not defined outside the domain of $\cL$, we can refine the definition of the \ac{gd} in \Cref{def: grad disc} to take $\infty$ outside $\dom(\cL)$:

\begin{definitionprime}{def: grad disc}[Gradient Discrepancy for entropy-regularised objectives]
    Let $\Op{Q} v(x) = [(\nabla \log q_0)(x) - \mathfrak{D}_{\mathrm{V}} \cL(Q)(x)] \cdot v(x) + (\nabla \cdot v)(x)$.
    For a given set $\mathcal{V}$ of differentiable vector fields on $\mathbb{R}^d$, we define the \acl{gd} as
    \begin{align}
    	\GD(Q) \coloneqq 
            \sup_{\substack{v \in \mathcal{V} \;  \mathrm{s.t.} \\ (\Op{Q} v)_{-} \in \cL^1(Q)}} \left| \int \Op{Q} v(x) \; \dd Q(x) \right| 
    \end{align}
    for $Q \in \dom(\cL)$ and $\GD(Q) \coloneqq \infty$ if $Q \not\in \dom(\cL)$. 
\end{definitionprime}

\noindent The theoretical results that we state in \Cref{sec: theory} are applicable to this more general notion of \ac{gd}, but for clarity we do not emphasise this in their statement or proof; the reader interested in our theoretical arguments should simply replace all occurrences of $\vargrad \cL$ with $\mathfrak{D}_{\mathrm{V}} \cL$ and 
and $\cL'$ with $\mathfrak{D}\cL$. 
Note that for a \ac{gd} to characterise a minimum (not just the self-consistency condition), we need additional assumptions on the subderivative beyond those used in \Cref{def: func-differentiablity}, see \Cref{prop:stationarity-eobj-minimality}. 

\subsection{Preliminary Results on \ac{kgd}}
\label{app: prelim KGD}

This appendix contains the proof of \Cref{lem: computable} and related preliminary results concerning \ac{kgd} that will be required in the sequel.

\begin{lemma}[RKHS]
\label{lem: our rkhs}
    Let $Q_0$ have a density $q_0 > 0$ and let $K$ be a kernel for which the elements of $\genpdf{Q} \rkhs{K}=\{\genpdf{Q}v: v\in\rkhs{K}\}$ are partially differentiable functions.
    Then $\Op{Q} (\rkhs{K})$ is an \ac{rkhs} with kernel $k_K^Q$ in \eqref{eq: PQ kernel}.
\end{lemma}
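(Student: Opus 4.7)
The plan is to use the standard construction of a pushed-forward RKHS via a bounded linear operator. First, I would rewrite $\Op{Q}$ in the compact divergence form
\[
\Op{Q} v(x) \;=\; \frac{1}{\genpdf{Q}(x)}\,\nabla \cdot \bigl(\genpdf{Q} v\bigr)(x),
\]
which follows from $\nabla \log \genpdf{Q} = \nabla \log q_0 - \vargrad \cL(Q)$ and the product rule; this is why $q_0$ never appears explicitly in $k_K^Q$, and why the hypothesis that the elements of $\genpdf{Q}\rkhs{K}$ are partially differentiable is the natural regularity condition.

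Next I would show that for each $x \in \R^d$ the evaluation functional $L_x : v \mapsto \Op{Q} v(x)$ is a bounded linear map on $\rkhs{K}$. Writing
\[
L_x v \;=\; \sum_{i=1}^d \bigl[\partial_i \log \genpdf{Q}(x)\bigr]\,\langle v, K(\cdot, x) e_i \rangle_{\rkhs{K}} \;+\; \sum_{i=1}^d \partial_i v_i(x),
\]
the first sum is bounded by the reproducing property, and the second sum is bounded because the assumption that $\partial_{1,i}\partial_{2,i}K$ exists implies (by a standard RKHS differentiation fact, e.g.\ \citet{carmeli2010vector}) that $v \mapsto \partial_i v_i(x)$ is continuous on $\rkhs{K}$ with Riesz representer $\partial_{2,i}K(\cdot, x)e_i$. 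Hence the Riesz representer of $L_x$ is
\[
\xi_x^Q \;=\; \sum_{i=1}^d \bigl[\partial_i \log \genpdf{Q}(x)\bigr] K(\cdot,x) e_i \;+\; \sum_{i=1}^d \partial_{2,i} K(\cdot,x) e_i \;\in\; \rkhs{K}.
\]
By the general pushed-forward construction, the image $\Op{Q}(\rkhs{K})$, equipped with the quotient norm $\|f\|^2 = \inf\{\|v\|_{\rkhs{K}}^2 : \Op{Q} v = f\}$, is an RKHS with reproducing kernel $\tilde k_K^Q(x,x') = \langle \xi_{x'}^Q, \xi_x^Q\rangle_{\rkhs{K}}$.

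The remaining step, which I expect to be bookkeeping rather than a genuine obstacle, is to verify $\tilde k_K^Q = k_K^Q$. Expanding the inner product via the identities $\langle K(\cdot,x')e_j, K(\cdot,x)e_i\rangle_{\rkhs{K}} = K_{i,j}(x,x')$, $\langle \partial_{2,j}K(\cdot,x')e_j, K(\cdot,x)e_i\rangle_{\rkhs{K}} = \partial_{x'_j} K_{i,j}(x,x')$, the transpose identity, and $\langle \partial_{2,j}K(\cdot,x')e_j, \partial_{2,i}K(\cdot,x)e_i\rangle_{\rkhs{K}} = \partial_{x_i}\partial_{x'_j} K_{i,j}(x,x')$, one obtains four terms that exactly match the four terms produced by applying the product rule to $\partial_{x_i}\partial_{x'_j}[\genpdf{Q}(x)K_{i,j}(x,x')\genpdf{Q}(x')]$ and dividing by $\genpdf{Q}(x)\genpdf{Q}(x')$. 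The main subtlety to watch is that the derivative Riesz-representer statement does require only the assumed existence of $\partial_{1,i}\partial_{2,i}K$ (and not joint continuity of all mixed partials), which can be checked directly from the dominated-convergence argument underlying that standard result.
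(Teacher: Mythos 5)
Your overall architecture is the same as the paper's: exhibit, for each $x$, a representer $\xi_x^Q \in \rkhs{K}$ with $\Op{Q}v(x) = \langle \xi_x^Q, v\rangle_{\rkhs{K}}$, invoke the standard feature-operator/pushforward result (with the quotient norm) to conclude that the image $\Op{Q}(\rkhs{K})$ is an RKHS whose kernel is the Gram function of the feature map, and then identify that Gram function with \eqref{eq: PQ kernel}. Where you genuinely differ is in how the representer is obtained. The paper works in the tilted RKHS $\genpdf{Q}\rkhs{K}$, whose kernel is $K_\rho(x,y)=\genpdf{Q}(x)K(x,y)\genpdf{Q}(y)$: the lemma's hypothesis that the elements of $\genpdf{Q}\rkhs{K}$ are partially differentiable is exactly what is needed to apply the differential reproducing property \emph{in that space}, and transferring back through the multiplication isometry gives $\xi^Q(x)=\genpdf{Q}(x)^{-1}\sum_i \partial_{x_i}\bigl(K(\cdot,x)\genpdf{Q}(x)e_i\bigr)$, whose Gram function is \eqref{eq: PQ kernel} verbatim --- no product rule, and no separate differentiability of $\genpdf{Q}$, of $K$, or of individual $v \in \rkhs{K}$ is ever invoked. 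Your route instead splits $\Op{Q}v(x)$ into a zeroth-order part weighted by $\partial_i \log \genpdf{Q}(x)$ and a derivative part represented by $\partial_{2,i}K(\cdot,x)e_i$, and then reconciles the four-term Gram expansion with \eqref{eq: PQ kernel} by expanding $\partial_{x_j'}\partial_{x_i}\bigl[\genpdf{Q}(x)K_{i,j}(x,x')\genpdf{Q}(x')\bigr]$ via the product rule. This requires $\genpdf{Q}$ and the kernel sections of $K$ to be separately differentiable, which is precisely the regularity the lemma's hypothesis is phrased to avoid; in the ambient set-up (\Cref{asm: diff loss}, \Cref{def:kgd}, and the pointwise definition of $\Op{Q}$) most of it is implicitly available, so your proof establishes the conclusion under somewhat stronger assumptions than stated rather than the lemma as written.

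The one step I would flag as a genuine soft spot is your justification of the derivative representer in $\rkhs{K}$: you claim that mere existence of $\partial_{1,i}\partial_{2,i}K$ makes $v \mapsto \partial_i v_i(x)$ bounded with representer $\partial_{2,i}K(\cdot,x)e_i$, ``by the dominated-convergence argument underlying the standard result''. The standard proofs either assume continuity (or local boundedness near the diagonal) of the mixed partials, so that the difference quotients $h^{-1}\bigl(K(\cdot,x+he_i)-K(\cdot,x)\bigr)e_i$ form a norm-bounded family in $\rkhs{K}$, or deduce boundedness from the hypothesis that \emph{every} element of the RKHS is partially differentiable via a closed-graph/uniform-boundedness argument --- the latter being exactly the content of the differential reproducing property of \citet{barp2024targeted} that the paper applies to $\genpdf{Q}\rkhs{K}$. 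Pointwise existence of $\partial_{1,i}\partial_{2,i}K$ alone does not obviously control that norm-bounded family, and dominated convergence is not the relevant device. Two clean repairs: (i) follow the paper and apply the differential reproducing property in the tilted space, where the lemma's hypothesis supplies precisely the needed differentiability; or (ii) observe that whenever $\Op{Q}$ is defined on all of $\rkhs{K}$, every $v \in \rkhs{K}$ already admits the diagonal partials $\partial_i v_i$, so the closed-graph argument applies directly in $\rkhs{K}$. With either repair, your Gram-matching computation goes through and matches \eqref{eq: PQ kernel}.
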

\begin{proof}
    The following proof is analogous to Appendix C.5 of \citet{barp2024targeted}.
    First note that $\genpdf{Q} \mathcal{H}_K$ is an \ac{rkhs} whose kernel is $K_\rho(x,y) \coloneqq \genpdf{Q}(x) K(x,y) \genpdf{Q}(y)$ \citep[see Appendix B of][]{barp2024targeted}.
    Since $q_0 > 0$ we have $\genpdf{Q} > 0$ as well.
    From the differential reproducing property \citep[][Lemma 4]{barp2024targeted},
    \begin{align*}
        \Op{Q} v(x) = \frac{1}{\genpdf{Q}(x)} \nabla_x \cdot (\genpdf{Q} v)
        & = \frac{1}{\genpdf{Q}(x)} \sum_{i=1}^d \partial_{x_i} (\genpdf{Q} v_i) \\
        & = \frac{1}{\genpdf{Q}(x)} \sum_{i=1}^d \left\langle \partial_{x_i} (\genpdf{Q}(\cdot) K(\cdot,x) \genpdf{Q}(x) e_i ) , \genpdf{Q} v \right\rangle_{K_\rho}
    \end{align*}
    and from \citet[][Proposition 1]{carmeli2010vector} this is
    \begin{align*}
        = \frac{1}{\genpdf{Q}(x)} \sum_{i=1}^d \left\langle \partial_{x_i} ( K(\cdot,x) \genpdf{Q}(x) e_i ) , v \right\rangle_K
        = \left\langle \frac{1}{\genpdf{Q}(x)} \sum_{i=1}^d \partial_{x_i} ( K(\cdot,x) \genpdf{Q}(x) e_i )  , v \right\rangle_K .
    \end{align*}
    This shows that $\Op{Q}$ is the feature operator associated to 
    \begin{align*}
        \xi_P^Q(x) \coloneqq \frac{1}{\genpdf{Q}(x)} \sum_{i=1}^d \partial_{x_i} ( K(\cdot,x) \genpdf{Q}(x) e_i ) 
    \end{align*}
    and thus by \citet[][Proposition 1]{carmeli2010vector}, $\Op{Q} \mathcal{H}_K$ is an \ac{rkhs} with kernel 
    \begin{align*}
        \langle \xi_P^Q(x) , \xi_P^Q(x') \rangle_K
        = \sum_{i=1}^d \sum_{j=1}^d \frac{1}{\genpdf{Q}( x) \genpdf{Q}(x')} \partial_{x_j'} \partial_{x_i} \left( \genpdf{Q}( x)  K_{i,j}( x, x') \genpdf{Q}( x') \right)
        = k_K^Q(x,x') ,
    \end{align*}
    as claimed.
\end{proof}

Now we can prove \Cref{lem: computable}:

\begin{proof}[Proof of \Cref{lem: computable}]
    From \citet[][Lemma 5]{barp2024targeted} we have that $\Op{Q} \mathcal{B}_K$ is equal to the unit ball in $\Op{Q} \mathcal{H}_K$, and from \Cref{lem: our rkhs} this is $\mathcal{B}_{k_K^Q}$.
    Thus, from a standard argument \citep[see e.g.][Lemma 2]{barp2024targeted},
    \begin{align*}
	\KGD_K(Q) = \sup_{h \in \mathcal{B}_{k_K^Q}} \left| \int h(x) \; \dd Q(x) \right|
    = \left( \iint k_K^Q(x,x') \; \dd Q(x) \dd Q(x') \right)^{1/2},
    \end{align*}
    as required.
\end{proof}

\begin{remark}[A simple instance of $k_K^Q$]
\label{rem: explicit formula kgd}
    Taking $K(x,x') = k(x,x') I_{d \times d}$ for $k$ a scalar-valued kernel, one has the simple formula
    \begin{align*}
        k_K^Q(x,x') & = \nabla_1 \cdot \nabla_2 k(x,x') + \nabla_1 k(x,x') \cdot \genscore{P}{Q}(x') \\
        & \qquad \qquad + \nabla_2 k(x,x') \cdot \genscore{P}{Q}(x) + k(x,x') \genscore{P}{Q}(x) \cdot \genscore{P}{Q}(x') ,
    \end{align*}
    which is reminiscent of the Stein kernel \citep{oates2017control} with the generalised score function $\genscore{P}{Q}$ now appearing in place of the Stein score $\nabla \log p$; the two score functions coincide in the special case of linear $\cL$.
\end{remark}

\subsection{Stationarity Characterisation}\label{appendix:separation}
This appendix is dedicated to establishing conditions for the \ac{kgd} to be separate non-stationary points, meaning that $\KGD_K(Q) = 0$ iff $Q$ is stationary, for all $Q \in \mathcal{P}(\mathbb{R}^d)$.
In \Cref{app: identity}, we establish that the \ac{kgd} provides a necessary condition for a distribution to be a stationary point of the objective $\eobj$, which we call the \emph{identity property}: $\KGD_K(Q) = 0$ if $Q$ is a stationary point. 
We then prove a general result (\Cref{thm:general-separation}) for the other direction: $\KGD_K(Q) = 0$ only if $Q$ is a stationary point in \Cref{app:gen-separation}. 
This result yields \Cref{thm:separation} and \Cref{prop: sufficient KSD separating} 
in the main text as corollaries, which are proved in \Cref{app:proof-of-separation-maintext} and \Cref{app:proof-of-separation-maintext-TI}. 

\subsubsection{Background Material and Definitions} 

This appendix is dedicated to introducing the concepts required for stating the results in \Cref{appendix:separation}. 

Define 
(i) $C_c(\mathbb{R}^d)$ 
(ii) $C_0(\mathbb{R}^d)$ 
and 
(iii) $C_b(\mathbb{R}^d)$ 
as the sets of continuous functions $f$ on $\mathbb{R}^d$ respectively satisfying the following: 
(i) $f$ has compact support, 
(ii) $f$ vanishes at infinity, and 
(iii) $f$ is bounded. 
For $F \in \{C_c(\mathbb{R}^d), C_0(\mathbb{R}^d), C_b(\mathbb{R}^d)\}$ and $m \geq 0$, 
$F^m$ denotes the set of $m$ times continuously differentiable functions where all $m$th partial derivatives belong to $F$. 
Note that these function spaces form vector spaces, and become \acp{tvs} with appropriate topological structures. 

Consider the case $m=1$; we will make our choices of \ac{tvs} structures explicit. 
Recall from \Cref{def: universal kernel} that we can equip $C_0^1(\mathbb{R}^d)$ with the norm 
    \begin{align*}
        \|f\|_{C_0^1} \coloneqq \sup_{x \in \mathbb{R}^d} \|f(x)\| + \|\nabla f(x)\|.
    \end{align*}
For $C_b^1(\mathbb{R}^d)$, we will make use of the strict topology $\beta$~\citep{Buck_1958} and denote the corresponding \ac{tvs} by $C_b^1(\mathbb{R}^d)_\beta$ to emphasise the \ac{tvs} structure. 
For our purposes, it is convenient to first introduce the following generalisation of $C_b^1(\mathbb{R}^d)_\beta$. 
For continuous $\theta: \mathbb{R}^d \to [c, \infty)$ with $c > 0$, define a family 
\begin{equation*}
    \left\{
    f\mapsto \sup_{x\in \mathbb{R}^d} \verts{\gamma(x) \theta(x)f(x)},\ 
    f\mapsto \sup_{x\in \mathbb{R}^d} \verts{\gamma(x) \partial_i f(x)}: 
    \gamma \in C_0(\mathbb{R}^d),\ 1 \leq i \leq d
    \right\}
\end{equation*}
of seminorms. 
We define $C_{b,\theta}^1(\mathbb{R}^d)_\beta$ as the \ac{tvs} 
of functions $f:\mathbb{R}^d \to \mathbb{R}$ such that $\theta f \in C_b(\mathbb{R}^d)$ and $\partial_i f \in C_b(\mathbb{R}^d)$ for each $i \in \{1,\dots, d\}$, equipped with the topology generated by the above family of seminorms. 
Note that choosing a constant function $\theta$ recovers $C_b^1(\mathbb{R}^d)_\beta$. 
Similarly, we define 
$C_{0}^1(\mathbb{R}^d, \mathbb{R}^r)$ and 
$C_{b,\theta}^1(\mathbb{R}^d, \mathbb{R}^r)_\beta$, 
the space of $\mathbb{R}^r$-valued functions, by the respective $r$-fold product space 
of $C_{0}^1(\mathbb{R}^d)$ and $C_{b,\theta}^1(\mathbb{R}^d)_\beta$. 

In our proof, we work with the topological dual space of the above function spaces. 
For a given \ac{tvs} $\mathcal{V}$, we let $\mathcal{V}^*$ denote the topological dual space (consisting of continuous linear functionals on $\mathcal{V}$). 
A Radon measure $\mu$ is a continuous linear functional on $C_c(\mathbb{R}^d)$ (with respect to the inductive limit topology),\footnote{On a locally compact Polish space, all finite Borel measures are Radon (in a measure theoretic sense), and the linear functional definition here coincides and includes our problem set $\finitemomentspace{}$~\citep[see, e.g.][Chapter IV]{Bauer2001}.} 
and viewing it as a Schwartz distribution~\citep[see, e.g.][Part II]{Treves1967}, we denote its distributional derivative with respect to the $i$th coordinate by $\partial_i \mu: C_c^\infty(\mathbb{R}^d) \to \mathbb{R}$, 
which acts on a function $f \in C_c^{\infty}(\mathbb{R}^d)$ by $\partial_i \mu (f) = -\mu(\partial_i f)$.
For a continuous $g: \R^d \to \R^d$, we also define the left multiplication $g\mu \in C_c(\R^d)^*$ of a Radon measure $\mu$ with $g$ by 
$g \mu (f) = \mu(gf)$ for $f \in C_c(\R^d)$. 

Our particular interest is in the following space: 

\begin{definition}[Order $\leq1$ distributions given by finite Radon measures; {\citealt[Definition 8]{barp2024targeted}}] 
For $r \geq 1$, we let $\DL(\mathbb{R}^r)$ denote the vector space of continuous linear functionals on 
$C_{0}^1(\mathbb{R}^d, \mathbb{R}^r)$ or equivalently $C_{b}^1(\mathbb{R}^d, \mathbb{R}^r)_\beta$. 
We use the shorthand $\DL$ for $\DL(\mathbb{R})$. 
\end{definition}
\noindent
A functional $D$ belongs to $\DL$ iff 
it can be represented as $ \mu_0 + \sum_{j=1}^d \partial_i \mu_i$, where $\mu_0, \dots, \mu_d$ are finite Radon measures on $\mathbb{R}^d$. 
The motivation for considering $C_b^1(\mathbb{R}^d)_\beta$ is because of the dual characterisation given above (the dual of $C_b(\mathbb{R}^d)_\beta$ is given by the space of finite Radon measures, \citealt{Conway1965}), while allowing us to consider bounded functions; 
cf. \citep[][the proof of Theorem 24.4]{Treves1967}. 
Likewise, 
a functional $D$ belongs to $\DL(\mathbb{R}^r)$ iff it is given as $\sum_{j=1}^r D_j e^j$, 
where $D_j \in \mathcal{D}_{L^1}^1$, and
$(e^1, \dots, e^r)$ is the dual basis of the standard basis $(e_1, \dots, e_r)$ of $\mathbb{R}^r$. 

Our result requires characterising functionals using an \ac{rkhs}. 
The following definition formalises this idea: 
\begin{definition}[Embedding into \ac{rkhs} and characteristic kernels;  cf. {\citealt[Appendix C]{barp2024targeted}}]
Let $\rkhs{K}$ be the \ac{rkhs} associated to a kernel $K: \mathbb{R}^d \times \mathbb{R}^d \to \mathbb{R}^{d\times d}$
and let $\mathcal{V}$ be a \ac{tvs} of functions on $\mathbb{R}^d$ containing $\rkhs{K}$. 
We say that a linear functional $D$ on $\mathcal{V}$ \emph{embeds} into $\rkhs{K}$ if $D\rvert_{\rkhs{K}}$ is continuous on $\rkhs{K}$, 
i.e., if there is a function $\Phi_K(D) \in \rkhs{K}$ such that $D(v) = \la \Phi_K(D), v\ra_{\rkhs{K}}$. 
For a set $\mathcal{D}$ of embeddable linear functionals on $\mathcal{V}$, 
we say $K$ is \emph{$\mathcal{D}$-characteristic} or (\emph{characteristic to }$\cal{D}$) if 
the embedding map $\Phi_K: \mathcal{D} \to \rkhs{K}$ is injective, that is, for $D_1, D_2 \in \mathcal{D}$, we have $\Phi_K(D_1) \neq \Phi_K(D_2)$ whenever $D_1 \neq D_2$ (as functionals on $\mathcal{V}$). 
\end{definition}
\noindent A related notion is universality: 
\begin{definition}[Univesality]
Let $\mathcal{V}$ be a \ac{tvs} of functions on $\mathbb{R}^d$. 
An \ac{rkhs} $\rkhs{K}$ (or kernel $K)$ is universal to $\mathcal{V}$ if $\rkhs{K} \cap \mathcal{V}$ is dense in $\mathcal{V}$. 
\end{definition}

\subsubsection{Identity Property}
\label{app: identity}
This appendix is dedicated to establishing the identity property: $\KGD_K(Q) = 0$ whenever $Q$ is a stationary point. 

The following lemma provides sufficient conditions for the identity property:

\begin{lemma}[Identity property]\label{lem:identity}
    Let $Q$ be a stationary point of $\eobj$ with density $q$, 
    and $v: \R^d \to \R^d$ be differentiable. 
    Suppose $v \in \cL^1(\R^d, Q)$  and $\Op{Q}v \in \cL^1(Q)$. 
    Then, 
    \[
        \int \Op{Q}v(x)\; \dd Q(x) = 0. 
    \]
\end{lemma}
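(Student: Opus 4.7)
The plan is to unfold the definition of $\Op{Q}$ and use self-consistency to rewrite $\Op{Q}v$ as a plain divergence $(\nabla\cdot(qv))/q$, after which the claim reduces to a divergence theorem statement on $\R^d$.

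First, I would note that because $Q$ is stationary, the self-consistency equation~\eqref{eq: self-consistency} gives $q(x) = Z^{-1} q_0(x)\exp(-\cL'(Q)(x))$ for some normalising constant $Z > 0$. Since $\nabla \log q_0$ and $\vargrad\cL(Q) = \nabla \cL'(Q)$ are both well-defined (implicit in the definition of $\Op{Q}$ acting on $v$), $q$ is differentiable with
\begin{equation*}
    \nabla \log q(x) = \nabla \log q_0(x) - \vargrad\cL(Q)(x).
\end{equation*}
Substituting this into the definition of $\Op{Q}$,
\begin{equation*}
    \Op{Q}v(x) = \nabla \log q(x) \cdot v(x) + (\nabla \cdot v)(x) = \frac{1}{q(x)}\,\nabla\cdot\bigl(q(x) v(x)\bigr)
\end{equation*}
at every $x$ where $q(x) > 0$ (and the set $\{q = 0\}$ is $Q$-null, so harmless for the integration).

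Next, multiplying by $q$ and integrating against Lebesgue measure,
\begin{equation*}
    \int \Op{Q}v(x)\;\dd Q(x) = \int \nabla\cdot\bigl(q(x) v(x)\bigr)\;\dd x.
\end{equation*}
The hypothesis $v \in \cL^1(\R^d, Q)$ translates to $qv \in \cL^1(\R^d, \mathrm{Leb})$, and $\Op{Q}v \in \cL^1(Q)$ translates to $\nabla\cdot(qv) \in \cL^1(\R^d,\mathrm{Leb})$. These are exactly the integrability conditions required to apply the divergence theorem on $\R^d$ (the $n=1$ case of \Cref{lem: divergence}, which in turn invokes Theorem~3.2 of Liu~(2025) and does not require continuous differentiability), yielding $\int \nabla\cdot(qv)\,\dd x = 0$.

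The only step that requires any care is the last one: verifying that $qv$ is exactly the kind of $\cL^1$ vector field to which the divergence theorem on $\R^d$ applies. All the integrability we need is handed to us by the hypotheses $v \in \cL^1(\R^d, Q)$ and $\Op{Q}v \in \cL^1(Q)$, so this is a clean appeal to \Cref{lem: divergence} rather than a genuine obstacle. No compactness or smoothness beyond what the assumptions already give is needed.
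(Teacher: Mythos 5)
Your proposal is correct and follows essentially the same route as the paper: use self-consistency to identify $\nabla\log q = \nabla\log q_0 - \vargrad\cL(Q)$, rewrite $\Op{Q}v = \tfrac{1}{q}\nabla\cdot(qv)$, cancel the density under the integral, and conclude via the divergence theorem for $\cL^1$ vector fields (\Cref{lem: divergence}, i.e.\ Theorem~3.2 of \citet{liu2025learning}). Your explicit translation of the hypotheses $v \in \cL^1(\R^d,Q)$ and $\Op{Q}v \in \cL^1(Q)$ into Lebesgue integrability of $qv$ and $\nabla\cdot(qv)$ is exactly the justification the paper leaves implicit.
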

\begin{proof}
    By the self-consistency of $Q$, 
    we have $\nabla \log q(x) = \nabla\log q_0(x) - \vargrad \cL(Q)(x)$. 
    Thus, 
    \begin{align*}
        \int \Op{Q}v(x)\; \dd Q(x)
        &= \int \{\nabla \log q(x) \cdot v(x) + \nabla \cdot v(x)\}\; \dd Q(x)\\
        &= \int \frac{\nabla \cdot \bigl(q(x)v(x)\bigr)}{q(x)}\; \dd Q(x)\\
        &= \int \nabla \cdot \bigl(q(x)v(x)\bigr)\; \dd x.
    \end{align*}
    The conclusion follows from the divergence theorem \citep[][Theorem 3.2]{liu2025learning}. 
\end{proof}

\begin{coroll} [Sufficient condition]
\label{cor:suffcondsforzeromean}
Let $\theta: \R^d \to [c,\infty)$ for some $c > 0$, and $v: \R^d \to \R^d$ be differentiable. 
Assume $\nabla \Verts{\log q_0(x)} \lesssim \theta(x)$ and $\Verts{\vargrad \cL(Q)(x)} \lesssim \theta(x)$ for each $Q \in \dom(\cL)$. 
Then, if $\sup_{x\in \R^d} \Verts{\theta v} < \infty$, $\sup_{x\in \R^d} \Verts{\nabla v} < \infty$, 
then the integrability assumptions in \Cref{lem:identity} are satisfied for any stationary point $Q$ of $\eobj$. 
\end{coroll}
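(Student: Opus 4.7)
The plan is a direct verification that the hypotheses of \Cref{cor:suffcondsforzeromean} imply the two integrability conditions demanded by \Cref{lem:identity}, namely $v \in \cL^1(\R^d, Q)$ and $\Op{Q} v \in \cL^1(Q)$. Since $Q$ is a probability measure, for both conditions it suffices to exhibit a uniform (in $x$) bound on the relevant integrand; there is no delicate tail analysis to perform here because the assumptions already bundle all growth into the weight $\theta$.

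First I would handle $v$ itself. Using the lower bound $\theta \geq c > 0$ from the hypothesis, one gets
\[
\Verts{v(x)} \leq \frac{1}{c}\Verts{\theta(x) v(x)} \leq \frac{1}{c}\sup_{y \in \R^d}\Verts{\theta(y) v(y)} < \infty,
\]
so $v$ is bounded on $\R^d$ and therefore $v \in \cL^1(\R^d, Q)$ for any $Q \in \finitemomentspace{}$.

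Next I would bound $\Op{Q} v$ in sup-norm. Splitting
\[
\Op{Q} v(x) = \bigl[(\nabla \log q_0)(x) - \vargrad \cL(Q)(x)\bigr] \cdot v(x) + (\nabla \cdot v)(x),
\]
the first summand is dominated, using the growth assumption $\Verts{\nabla \log q_0(x)}, \Verts{\vargrad \cL(Q)(x)} \lesssim \theta(x)$ and the Cauchy--Schwarz inequality, by a constant multiple of $\theta(x)\Verts{v(x)} = \Verts{\theta(x) v(x)}$, which is uniformly bounded by hypothesis. The second summand satisfies $\verts{(\nabla \cdot v)(x)} \leq \sqrt{d}\,\Verts{\nabla v(x)}$ (say, in Frobenius norm, using equivalence of norms on $\R^{d\times d}$), and this too is bounded uniformly by hypothesis. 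Hence $\Op{Q} v$ is bounded and, again because $Q$ is a probability measure, $\Op{Q} v \in \cL^1(Q)$. Applying \Cref{lem:identity} concludes the argument.

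There is no genuine obstacle here; the only point requiring a sliver of care is that the growth bounds on $\nabla \log q_0$ and $\vargrad \cL(Q)$ are stated to hold for every $Q \in \dom(\cL)$, so the constants implicit in the $\lesssim$ notation for the $\Op{Q} v$ bound do not depend on $Q$, and the conclusion is therefore valid uniformly in the stationary point under consideration.
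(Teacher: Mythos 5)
Your verification is correct: boundedness of $v$ follows from $\theta \geq c > 0$ together with $\sup_x\Verts{\theta v}<\infty$, and boundedness of $\Op{Q}v$ follows from the growth bounds on $\nabla\log q_0$ and $\vargrad\cL(Q)$ plus $\sup_x\Verts{\nabla v}<\infty$, so both $\cL^1$ conditions of \Cref{lem:identity} hold since $Q$ is a probability measure. The paper states this corollary without proof precisely because this immediate argument is the intended one, so your proposal matches the paper's (implicit) reasoning.
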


\begin{coroll}[Identity property for KGD]\label{cor:zero-KGD}
\label{cor: zeroKGD}
Let \Cref{asm:uniq,asm: diff loss,asm: q0 support} hold.
Let $K: \R^d \times \R^d \to \R^{d \times d}$ such that $\partial_{1,i}\partial_{2,i}K$ exists for each $i \in \{1,\dots, d\}$. 
Let $Q$ be a stationary point of $\eobj$. 
If $\rkhs{K} \subset \cL^1(\R^d, Q)$ and $\Op{Q}(\rkhs{K})\subset \cL^1(Q)$, 
then $\KGD_K(Q) = 0$. 
\end{coroll}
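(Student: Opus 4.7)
The plan is to reduce the claim to \Cref{lem:identity} applied to every admissible test vector field in the \ac{rkhs} unit ball $\rkhsball{K}$. Fix any such $v$; I need to check the three hypotheses of \Cref{lem:identity}, namely (a) that $Q$ has a density, (b) that $v$ is differentiable, and (c) the integrability conditions $v \in \cL^1(\R^d, Q)$ and $\Op{Q} v \in \cL^1(Q)$.

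For (a), since $Q$ is a stationary point, the self-consistency equation \eqref{eq: self-consistency} gives $\dd Q / \dd Q_0 \propto \exp(-\cL'(Q))$, which together with the positive density $q_0$ supplied by \Cref{asm: q0 support} yields a density $q \propto q_0 \exp(-\cL'(Q))$ for $Q$. For (b), the hypothesis that $\partial_{1,i}\partial_{2,i}K$ exists for each $i$ is exactly the standard prerequisite, via the differential reproducing property \citep[e.g.][Lemma 4]{barp2024targeted}, for every element of $\rkhs{K}$ to be partially differentiable in each coordinate; in particular $\nabla \cdot v$ and hence $\Op{Q} v$ are well-defined pointwise. Condition (c) is simply a restatement of the two standing hypotheses of the corollary.

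With these in hand, \Cref{lem:identity} yields $\int \Op{Q} v(x)\; \dd Q(x) = 0$ for every $v \in \rkhsball{K}$. The side condition $(\Op{Q} v)_{-} \in \cL^1(Q)$ appearing in \Cref{def:kgd} is automatic under the assumption $\Op{Q}(\rkhs{K}) \subset \cL^1(Q)$, so the supremum defining $\KGD_K(Q)$ is taken over all of $\rkhsball{K}$ and equals zero. I do not anticipate any genuine obstacle here: the corollary is essentially a specialisation of \Cref{lem:identity} to the test class $\mathcal{V} = \rkhsball{K}$ used in \Cref{def: grad disc}, with the differentiability condition on $K$ ensuring that every candidate $v$ in this class is admissible in the sense required by the lemma.
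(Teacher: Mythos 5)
Your proposal is correct and follows exactly the route the paper intends: \Cref{cor: zeroKGD} is a direct specialisation of \Cref{lem:identity} to each $v$ in $\rkhsball{K}$, with the density of $Q$ coming from self-consistency and \Cref{asm: q0 support}, differentiability of RKHS elements from the existence of $\partial_{1,i}\partial_{2,i}K$ via the differential reproducing property, and the integrability hypotheses of the lemma (and the side condition $(\Op{Q}v)_{-}\in\cL^1(Q)$ in \Cref{def:kgd}) supplied verbatim by the corollary's assumptions. No gaps.
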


\subsubsection{General Sufficient Conditions for Separating Stationary Points}
\label{app:gen-separation}
In this section we aim to establish the following result: 

\begin{theorem}\label{thm:general-separation}
Let $K$ be a kernel as in \Cref{def:kgd}. 
Suppose the following conditions: 
(i) $\nabla \log q_0$ is continuous, and $\vargrad \cL(Q)$ continuous for each $Q \in \dom(\cL)$; 
(ii) For continuous $\theta: \mathbb{R}^d \to [c, \infty)$ with some constant $c > 0$, 
we have $\Verts{\nabla \log q_0(x)} \lesssim \theta(x) $ 
and $\Verts{\vargrad  \cL(Q)(x)}\lesssim \theta(x)$ for each $Q \in \dom(\cL)$; 
(iii) There exists an \ac{rkhs} $\rkhs{L}\subset \rkhs{K}$ with $L$ being $C^1_{b, \theta}(\mathbb{R}^d, \mathbb{R}^d)_\beta^*$-characteristic and $\rkhs{L}\subset {C^1_{b, \theta}(\R^d, \R^d)}$.
Then $\KGD_K(Q) = 0$ implies $Q$ is a stationary point. 
\end{theorem}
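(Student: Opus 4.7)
The plan is to promote the vanishing of $\KGD_K(Q)$ to the vanishing of a single continuous linear functional on $C^1_{b,\theta}(\R^d, \R^d)_\beta$ via the characteristic property of $L$, and then to read off self-consistency as a distributional identity.

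First, I would introduce the candidate dual element
\[
D_Q(v) \coloneqq \int \{ b_Q(x) \cdot v(x) + \nabla \cdot v(x) \}\; \dd Q(x), \qquad b_Q \coloneqq \nabla \log q_0 - \vargrad \cL(Q),
\]
initially on smooth compactly supported vector fields, and check that $D_Q$ extends to a continuous linear functional on $C^1_{b,\theta}(\R^d, \R^d)_\beta$. Under hypothesis (ii), $\Verts{b_Q(x) \cdot v(x)} \lesssim \theta(x)\Verts{v(x)}$ so $b_Q \cdot v$ is bounded when $\theta v$ is, and $\nabla \cdot v$ is bounded by assumption; combining these gives the seminorm estimate needed to place $D_Q$ in $C^1_{b,\theta}(\R^d, \R^d)_\beta^*$.

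Next, I would exploit the inclusion $\rkhs{L} \subset \rkhs{K} \subset C^1_{b,\theta}(\R^d, \R^d)$. For any $v \in \rkhs{L}$, the estimate above shows $\Op{Q} v$ is bounded, so $(\Op{Q} v)_{-} \in \cL^1(Q)$ and $v$ is admissible in the supremum defining $\KGD_K$. The hypothesis $\KGD_K(Q) = 0$, together with linearity, then forces $D_Q(v) = 0$ for every $v \in \rkhs{L}$, i.e.\ the embedding $\Phi_L(D_Q) \in \rkhs{L}$ is zero. Since $L$ is $C^1_{b,\theta}(\R^d, \R^d)_\beta^*$-characteristic, $\Phi_L$ is injective, so $D_Q \equiv 0$ as a functional on all of $C^1_{b,\theta}(\R^d, \R^d)_\beta$.

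Finally, specialising to test vector fields $v \in C^1_c(\R^d, \R^d) \subset C^1_{b,\theta}(\R^d, \R^d)$, the identity $D_Q = 0$ becomes the first-order distributional equation $\nabla Q = b_Q Q$. Since $b_Q = \nabla \log \genpdf{Q}$ for $\genpdf{Q}(x) = q_0(x) \exp(-\cL'(Q)(x))$, which is positive and continuously differentiable by hypothesis (i) and \Cref{asm: q0 support}, this rearranges distributionally to $\nabla (Q / \genpdf{Q}) = 0$, from which $Q$ must be a constant multiple of $\genpdf{Q}\; \dd x$. Normalising produces $\dd Q / \dd Q_0 \propto \exp(-\cL'(Q))$, which is precisely the self-consistency equation \eqref{eq: self-consistency}, so $Q$ is a stationary point of $\eobj$.

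The step I expect to be most delicate is the passage from the distributional identity $\nabla Q = b_Q Q$ to the pointwise conclusion $Q \propto \genpdf{Q}\; \dd x$, because a priori $Q$ need not admit a density. A clean route is to mollify $Q$ to $Q_\epsilon$ with smooth density $q_\epsilon$, use continuity of $b_Q$ to pass the identity to $\nabla q_\epsilon = b_Q q_\epsilon + o(1)$, integrate the resulting pointwise ODE to identify $q_\epsilon$ with a constant multiple of $\genpdf{Q}$ up to vanishing error, and then pass to the limit $\epsilon \to 0$. Everything else in the argument --- continuity of $D_Q$, the embedding $\Phi_L(D_Q) = 0$, and injectivity --- is bookkeeping once the characteristic kernel machinery of \citet{barp2024targeted} is transposed to the $C^1_{b,\theta}$-setting introduced in hypotheses (ii)--(iii).
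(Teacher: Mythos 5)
Your proposal is correct and follows essentially the same route as the paper: identify $\KGD_K(Q)$ with the dual norm of the functional $D_Q$, show via the growth conditions that $D_Q \in C^1_{b,\theta}(\R^d,\R^d)_\beta^*$, use the characteristicness of $L$ together with $\rkhs{L}\subset\rkhs{K}$ to force $D_Q\equiv 0$, and then test against compactly supported smooth fields to obtain the distributional equation whose only solutions satisfy $Q\propto \genpdf{Q}\,\dd x$. The only cosmetic difference is at the final step, where the paper avoids your mollification detour by dividing by the positive $C^1$ function $\genpdf{Q}$ and directly invoking the classical fact that a distribution with vanishing gradient (equivalently, a translation-invariant distribution) is a constant multiple of Lebesgue measure.
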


\paragraph{Proof Outline}
Define a linear functional  
\begin{equation}
    D_Q \coloneqq Q \circ \Op{Q}: v \in (\mathbb{R}^d \to \mathbb{R}^d) \mapsto \int \Op{Q} (v) \dd Q \in \mathbb{R}, \label{eq:functional}
\end{equation}
defined on an appropriate set on differentiable vector fields $v$. 
The $\KGD$ can be seen as the dual norm 
of the functional $D_Q$ with respect to the \ac{rkhs} $\rkhs{K}$ (up to the integrability condition): 
\begin{align*}
    \KGD_P(Q) = \sup_{\|v\|_{\mathcal{H}_K} \leq 1}  | D_Q (v)|
\end{align*}
Thus, the separation of stationary points via \ac{kgd} can be cast as the separation of $D_Q$ from zero. 
In the following, we therefore consider the following question: `Does $D_Q = 0$ on the \ac{rkhs} $\rkhs{K}$ imply that $Q$ is a stationary point?'
We can address this problem by first identifying the class of functionals that contains $D_Q$, 
and $D_Q$ being a zero functional implies the stationarity of $Q$ (\Cref{lem:Dq-properties}). 
We then show that an appropriate \ac{rkhs} can uniquely embed the functionals in that class, and the \ac{kgd} therefore returns zero only when $Q$ is a stationary point (\Cref{lem:separation-equiv}). 
Our proof essentially follows the separation result for the \ac{ksd} by \citet{barp2024targeted}. 

\paragraph{Preliminary Results}

We begin with a lemma characterising the functional $D_Q$:

\begin{lemma}[Properties of $D_Q$]
\label{lem:Dq-properties}
Let $\theta: \mathbb{R}^d \to [c, \infty)$ be a continuous function where $c > 0$ 
and take $Q \in \dom(\cL)$. 
Suppose the following: 
(i) $\nabla \log q_0$ is continuous and $\Verts{\nabla \log q_0(x)} \lesssim \theta(x)$; 
(ii) $\vargrad \cL(Q)$ is continuous 
and $\Verts{\vargrad  \cL(Q)(x)}\lesssim \theta(x)$. %
Then, $D_Q$ in \eqref{eq:functional} is an element of $C_{b, \theta}^1(\mathbb{R}^d, \mathbb{R}^d)_\beta^*$. 
Moreover, $D_Q\rvert_{C_{b, \theta}^1(\mathbb{R}^d, \mathbb{R}^d)} \equiv 0$ implies $Q$ is a stationary point of $\eobj$. 
\end{lemma}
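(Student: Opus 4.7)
My plan is to verify the two parts of the lemma separately: first, that the functional $D_Q$ is continuous on the weighted strict-topology space $C_{b,\theta}^1(\R^d, \R^d)_\beta$, and second, that vanishing of $D_Q$ on the vector space $C_{b,\theta}^1(\R^d, \R^d)$ forces $Q$ to satisfy the self-consistency equation \eqref{eq: self-consistency}, which is our definition of stationarity.

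For the continuity claim, I would decompose
\[
D_Q(v) = \sum_{i=1}^d \int g_{Q,i}(x) v_i(x) \, \dd Q(x) + \sum_{i=1}^d \int \partial_i v_i(x) \, \dd Q(x),
\]
where $g_Q \coloneqq \nabla \log q_0 - \vargrad \cL(Q)$, and argue continuity of each summand. The second sum factors as $v \mapsto \partial_i v_i \mapsto \int \partial_i v_i \, \dd Q$: the first map is continuous from $C_{b,\theta}^1(\R^d, \R^d)_\beta$ into $C_b(\R^d)_\beta$ because the generating seminorms of the latter pull back to a subfamily of the former's seminorms, and the second map is continuous because $Q$ is a finite Radon measure, hence a continuous functional on $C_b(\R^d)_\beta$ by the duality result of Buck. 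For the first sum, I factor $g_{Q,i} v_i = (g_{Q,i}/\theta)(\theta v_i)$; the growth bounds in hypotheses (i)--(ii) give $\Verts{g_{Q,i}/\theta}_\infty < \infty$, so $(g_{Q,i}/\theta) Q$ is a finite Radon measure that pairs continuously with the image of $v \mapsto \theta v_i$, and this latter map is continuous from $C_{b,\theta}^1(\R^d, \R^d)_\beta$ into $C_b(\R^d)_\beta$ by the very definition of the weighted seminorms. Composing these estimates yields $D_Q \in C_{b,\theta}^1(\R^d, \R^d)_\beta^*$.

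For the implication of stationarity, I use $C_c^\infty(\R^d, \R^d) \subset C_{b,\theta}^1(\R^d, \R^d)$ so that the vanishing hypothesis yields the distributional identity $\nabla Q = g_Q Q$ by reading off each coordinate. The crucial observation is that $g_Q = \nabla h$ with $h \coloneqq \log q_0 - \cL'(Q)$; this $h$ is $C^1$ on $\R^d$ because its gradient $g_Q$ is continuous by hypotheses (i) and (ii). For each $\phi \in C_c^\infty(\R^d)$ and basis direction $e_i$, the vector field $v \coloneqq \phi e^{-h} e_i$ has compact support and $C^1$-regularity, and therefore belongs to $C_{b,\theta}^1(\R^d, \R^d)$. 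Expanding $D_Q(v) = 0$ via the Leibniz rule and cancelling the term containing $\partial_i h - g_{Q,i} = 0$ leaves
\[
\int \partial_i \phi(x) \, e^{-h(x)} \, \dd Q(x) = 0 \quad \text{for every } \phi \in C_c^\infty(\R^d),
\]
so the positive Radon measure $e^{-h} Q$ has vanishing distributional gradient, hence is proportional to Lebesgue measure. Rearranging, $Q$ admits a positive density proportional to $q_0 \exp(-\cL'(Q))$, which is \eqref{eq: self-consistency}.

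The main technical obstacle is the bookkeeping around the strict topology $\beta$: one must verify carefully that each factorised building block produces a seminorm estimate of the form required for continuity, and in particular that the unbounded weight $\theta$ appearing in the bound on $g_Q$ is absorbed by the $\theta$ in the seminorms of $C_{b,\theta}^1(\R^d, \R^d)_\beta$. A secondary concern is admissibility of the test field $v = \phi e^{-h} e_i$, where compactness of $\operatorname{supp} \phi$ is essential to ensure that $\theta v_i$ and each $\partial_j v_i$ remain bounded and continuous despite $e^{-h}$ potentially blowing up at infinity.
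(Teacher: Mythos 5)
Your proposal is correct and follows essentially the same route as the paper: the paper establishes $D_Q \in C_{b,\theta}^1(\mathbb{R}^d,\mathbb{R}^d)_\beta^*$ by citing the corresponding Langevin--Stein result (Proposition 12 of Barp et al.), whose proof is exactly your seminorm/factorisation argument, and for stationarity it likewise tests against $\varphi e_i$ with $\varphi \in C_c^\infty$ to obtain $(\genscore{P}{Q})_i Q - \partial_i Q = 0$ and then shows $\partial_i(\genpdf{Q}^{-1}Q)=0$, concluding via Schwartz's theorem that a distribution with vanishing gradient is proportional to Lebesgue measure. Your explicit test field $v=\phi e^{-h}e_i$ is just the distributional product rule for $\genpdf{Q}^{-1}Q$ unwound by hand, so the two arguments coincide in substance.
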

\begin{proof}
First note that the functional $D_Q$ is expressed as 
\begin{equation*}
    D_Q = \sum_{j=1}^d \{ (\genscore{P}{Q})_j Q - \partial_j Q \} e^j, 
\end{equation*}
where $(\genscore{P}{Q})_j$ denotes the $j$th component of $\genscore{P}{Q} = \nabla\log \genpdf{Q}$. 
For fixed $Q$, this expression is the same as the Langevin Stein operator, where $\genscore{P}{Q}$ is given by $\nabla \log p$ of some density function $p$. 
The first claim follows from the corresponding result for the Langevin Stein operator; see \citet[proof of Proposition 12]{barp2024targeted}. 

We now address the second claim. 
Suppose $D_Q = 0$  on $C_{b,\theta}^1(\mathbb{R}^d, \mathbb{R}^d)$. 
Since $C_c^{\infty}(\mathbb{R}^d)$ can be continuously embedded into $C_{b, \theta}^1(\mathbb{R}^d)_\beta$, 
by evaluating $D_Q$ at $v_{\varphi, i} = \varphi e_i$ for each $i \in \{1, \dots, d\}$ and arbitrary $\varphi \in C_c^\infty(\mathbb{R}^d)$, 
we obtain the distributional differential equation 
\begin{equation*}
    (\genscore{P}{Q})_i Q - \partial_i Q = 0,\ 1 \leq i \leq d. 
\end{equation*}
This implies 
\begin{align*}
    \partial_i \left( \genpdf{Q}^{-1}Q \right) = -\genpdf{Q}^{-1} \bigl( (\genscore{P}{Q})_i Q - \partial_i Q \bigr) = 0
\end{align*}
for each $i \in \{1, \dots, d\}$, viewed as a Schwartz distribution. 
Thus, the distribution $\genpdf{Q}^{-1}Q$ is translation invariant, and hence proportional to the Lebesgue measure $L$~\cite[Theorem VI of Chapter II]{Schwartz1978}, that is $Q \propto \genpdf{Q} L$. 
\end{proof}

\begin{lemma}[Zero $D_Q$ characterisation with \ac{rkhs}]
\label{lem:separation-equiv}
In the setting of \Cref{lem:Dq-properties}, the following statements are equivalent: 
(i) $\KGD_K(Q)=0$ implies that $Q$ is a stationary point, and 
(ii) $K$ separates $0$ from non-zero functionals in $\{D_Q: Q \in \dom(\cL) \} \subset C_{b,\theta}^1(\mathbb{R}^d)_\beta^*$. 
\end{lemma}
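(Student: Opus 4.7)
The plan is to view the KGD as a dual norm and show that (i) and (ii) correspond to the two possible failures of that dual norm to detect $D_Q$. First I would note that the standing hypotheses (i)--(iii) ensure $\rkhs{K}\subset C_{b,\theta}^1(\R^d,\R^d)$ and $\Op{Q}(\rkhs{K})\subset \cL^1(Q)$, so the integrability constraint in the definition of $\KGD_K(Q)$ is vacuous on $\mathcal{B}_K$ and $D_Q|_{\rkhs{K}}$ is a bona fide continuous linear functional on $\rkhs{K}$. Riesz representation in $\rkhs{K}$ then yields the identity
\begin{equation*}
    \KGD_K(Q) = \sup_{\|v\|_{\rkhs{K}}\leq 1} |D_Q(v)| = \|\Phi_K(D_Q)\|_{\rkhs{K}},
\end{equation*}
so $\KGD_K(Q) = 0$ is equivalent to $\Phi_K(D_Q) = 0$.

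For $(\mathrm{ii})\Rightarrow(\mathrm{i})$, I would suppose $\KGD_K(Q) = 0$. The above identity gives $\Phi_K(D_Q) = 0 = \Phi_K(0)$, and the separating property of $K$ forces $D_Q = 0$ as an element of $C_{b,\theta}^1(\R^d,\R^d)_\beta^*$. The second assertion of \Cref{lem:Dq-properties} then delivers stationarity of $Q$, since $C_c^\infty(\R^d)$ is contained in $C_{b,\theta}^1(\R^d,\R^d)$ and is enough to run the distributional argument already carried out there.

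For $(\mathrm{i})\Rightarrow(\mathrm{ii})$, I would argue by contrapositive. Assume $\Phi_K(D_Q) = 0$; equivalently, $\KGD_K(Q) = 0$, and therefore $Q$ is a stationary point by (i). To conclude that $D_Q$ vanishes as a functional on $C_{b,\theta}^1(\R^d,\R^d)_\beta$, I would apply the identity property \Cref{cor:suffcondsforzeromean} to each $v \in C_{b,\theta}^1(\R^d,\R^d)$: its defining seminorms are exactly $\sup_x \|\theta(x) v(x)\| < \infty$ and $\sup_x \|\nabla v(x)\| < \infty$, which are precisely the hypotheses of that corollary. Hence $D_Q(v) = \int \Op{Q} v \, \mathrm{d}Q = 0$ for every such $v$, so $D_Q = 0$ and separation holds.

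The hardest part of the argument is really the bookkeeping in the opening paragraph, i.e.\ verifying that the integrability constraint $(\Op{Q}v)_{-}\in \cL^1(Q)$ in the definition of $\KGD_K$ can be dropped uniformly over $\mathcal{B}_K$ so that the Riesz identification $\KGD_K(Q) = \|\Phi_K(D_Q)\|_{\rkhs{K}}$ is genuinely valid. Once this is established, both implications reduce to short invocations of results already proved in this appendix (\Cref{lem:Dq-properties} for one direction, \Cref{cor:suffcondsforzeromean} for the other), and no further analytic work is needed.
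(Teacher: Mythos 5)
Your proposal is correct and follows essentially the same route as the paper: identify $\KGD_K(Q)$ with $\Verts{\Phi_K(D_Q)}_{\rkhs{K}}$, then get (ii)$\Rightarrow$(i) from the separating property plus \Cref{lem:Dq-properties}, and (i)$\Rightarrow$(ii) from the identity property for stationary points on all of $C_{b,\theta}^1(\R^d,\R^d)$. The only cosmetic difference is that you justify the embedding by declaring $D_Q|_{\rkhs{K}}$ continuous and invoking Riesz directly, whereas the paper routes this through $\rkhs{k_K^Q}\subset C_b(\R^d)$ and the embeddability of probability measures into bounded-kernel RKHSs; the content is the same.
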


\begin{proof}
Note that by \Cref{lem:identity}, $D_Q  = 0 $ on $C_{b,\theta}^1(\mathbb{R}^d)$ if $Q$ is a stationary point. 
As a result, $\{D_Q: Q \in \dom(\cL) \}$ contains the zero functional due to the assumed existence of a stationary point. 
Since $\rkhs{K} \subset C_{b,\theta}^1(\mathbb{R}^d, \mathbb{R}^d)$ implies $\rkhs{k_K^Q} \subset C_b({\mathbb{R}^d})$, 
by \citet[Proposition 6]{barp2024targeted}, any $Q \in \finitemomentspace{}$ embeds into $\rkhs{k_K^Q}$, and thus 
$D_Q$ embeds into $\rkhs{K}$, which yields $\KGD_K(Q) = \Verts{\Phi_K(D_Q)}_{\rkhs{K}}$.

Now we examine the equivalence between the two statements. 
Suppose $D_Q$ is mapped to $\Phi_K(D_Q) = 0$. 
From the above discussion, this implies $\KGD_K(Q)=0$, and $Q$ is a stationary point under the hypothesis~(i), hence necessarily $D_Q = 0$. 
Suppose the statement~(ii) is true. Then, $\KGD_K(Q) = 0$ implies $D_Q=0$ since under the separation assumption, only the zero functional is mapped to the zero in the \ac{rkhs} via the embedding $\Phi_K$. 
By \Cref{lem:Dq-properties}, it holds that $Q$ is a stationary point. 
\end{proof}

\begin{coroll}[Stationarity characterisation with $C_{b,\theta}^1$-kernels]\label{cor:KGD-stationarity}
In the setting of \Cref{lem:separation-equiv}, if $K$ is $C_{b,\theta}^1(\mathbb{R}^d)_\beta^*$-characteristic, then
$\KGD_K(Q)=0$ iff $Q$ is a stationary point of $\eobj$ in \eqref{eq: objective}. 
\end{coroll}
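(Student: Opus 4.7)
The plan is to deduce the corollary essentially as a one-line application of \Cref{lem:separation-equiv}, together with the identity property (\Cref{cor:zero-KGD}) for the converse direction.

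First, I would verify that the setting of \Cref{lem:separation-equiv} (hence also of \Cref{lem:Dq-properties}) is inherited, so that for every $Q \in \finitemomentspace{}$ the linear functional $D_Q$ defined in \eqref{eq:functional} lies in $C_{b,\theta}^1(\R^d, \R^d)_\beta^*$, and moreover, as established in the proof of \Cref{lem:separation-equiv}, one has the identification $\KGD_K(Q) = \Verts{\Phi_K(D_Q)}_{\rkhs{K}}$. Thus the claim $\KGD_K(Q) = 0$ is equivalent to $\Phi_K(D_Q) = 0$ in $\rkhs{K}$.

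For the ``only if'' direction, the assumption that $K$ is $C_{b,\theta}^1(\R^d,\R^d)_\beta^*$-characteristic means exactly that the embedding map $\Phi_K$ is injective on all of $C_{b,\theta}^1(\R^d,\R^d)_\beta^*$, and in particular on the subset $\{D_Q : Q \in \finitemomentspace{}\}$. Since any stationary point $Q_*$ satisfies $D_{Q_*} = 0$ (by the identity property, cf. \Cref{lem:identity}), the zero functional belongs to this subset, and injectivity gives that $\Phi_K(D_Q) = 0$ forces $D_Q = 0$. This is exactly condition (ii) of \Cref{lem:separation-equiv}, from which condition (i) follows: $\KGD_K(Q) = 0 \Rightarrow D_Q = 0 \Rightarrow Q$ is a stationary point of $\eobj$, the last implication being the content of the second part of \Cref{lem:Dq-properties}.

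For the ``if'' direction, I would invoke the identity property directly. Suppose $Q$ is a stationary point of $\eobj$. The hypothesis that $\Phi_K$ is well-defined on $C_{b,\theta}^1(\R^d,\R^d)_\beta^*$ forces $\rkhs{K} \subset C_{b,\theta}^1(\R^d,\R^d)$; hence each $v \in \rkhs{K}$ satisfies $\sup_x \Verts{\theta(x) v(x)} < \infty$ and $\sup_x \Verts{\nabla v(x)} < \infty$. Combined with the growth bounds $\Verts{\nabla \log q_0} \lesssim \theta$ and $\Verts{\vargrad \cL(Q)} \lesssim \theta$ assumed in \Cref{lem:Dq-properties}, \Cref{cor:suffcondsforzeromean} gives the integrability hypotheses of \Cref{lem:identity}, yielding $D_Q(v) = \int \Op{Q} v \; \dd Q = 0$ for every $v \in \rkhs{K}$. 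Hence $\Phi_K(D_Q) = 0$ and $\KGD_K(Q) = 0$. There is no real obstacle here: the work has all been done in \Cref{lem:Dq-properties,lem:identity,lem:separation-equiv}, and this corollary merely repackages the characteristic property as an iff statement.
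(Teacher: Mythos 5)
Your proof is correct and follows essentially the same route as the paper: the `if' direction via the identity property (\Cref{cor:suffcondsforzeromean}/\Cref{lem:identity} together with $\rkhs{K} \subset C_{b,\theta}^1(\R^d,\R^d)$), and the `only if' direction by noting that characteristicity gives the separation of the zero functional required in \Cref{lem:separation-equiv}, whence $D_Q = 0$ and stationarity follows from \Cref{lem:Dq-properties}. The only item the paper adds that you omit is the trivial remark that $\KGD_K(Q)=\infty$ outside the domain of $\cL$ (relevant only under the relaxed definition of the discrepancy), which does not affect the argument.
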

\begin{proof}
    The `if' direction can be shown by combining \Cref{cor:suffcondsforzeromean} with $\rkhs{K} \subset C^1_{b,\theta}(\R^d, \R^d)$. 
    For the `only if' direction, since by definition $\KGD_K(Q) = \infty$ for $Q \notin \dom(\cL)$, we consider distributions in $\dom(\cL)$. 
    The conclusion follows from \Cref{lem:separation-equiv}  and $C_{b,\theta}^1(\mathbb{R}^d)_\beta^*$-characteristicity. 
\end{proof}

\paragraph{Proof of \Cref{thm:general-separation}}
We are ready to prove \Cref{thm:general-separation}:

\begin{proof}[Proof of \Cref{thm:general-separation}]
Suppose $Q$ is not a stationary point. 
Since $L$ is assumed to be $C_{b,\theta}^1(\mathbb{R}^d)_\beta^*$-characteristic, 
from \Cref{lem:separation-equiv}, we have $\KGD_L(Q) > 0$. 
This implies there exists $v \in \rkhs{L}$ such that $D_Q(v) \neq 0$. 
By the assumption $\rkhs{L} \subset \rkhs{K}$, we have $v \in \rkhs{K}$, and therefore $\KGD_K(Q) > 0$. 
\end{proof}

\subsubsection{Proof of \Cref{thm:separation}}
\label{app:proof-of-separation-maintext}
\begin{proof}[Proof of \Cref{thm:separation}]
\Cref{thm:separation} is an instantiation of \Cref{cor:KGD-stationarity} (of \Cref{lem:separation-equiv}) above, which requires a $C_{b,\theta}^1(\mathbb{R}^d)_\beta^*$-characteristic kernel. 
The result follows from \Cref{lem:tilted-kernel-universality} and 
the equivalence between $\DL(\R^d)$-characteriscity and $C_0^1(\R^d, \R^d)$-universality in \Cref{lem:DL1-characteristic-equiv}. 
\end{proof}

\begin{lemma}[{Lemma 11 of \citealt{barp2024targeted}}]\label{lem:tilted-kernel-universality}
Let $\theta \in C^1(\mathbb{R}^d): \mathbb{R}^d \to [c, \infty)$ for $c > 0$ such that 
such that $1/\theta$, $\nabla (1/\theta)$ are bounded. 
If $K$ is universal to $C_b^1(\mathbb{R}^d, \mathbb{R}^d)_\beta$, 
then the tilted kernel $x,y \mapsto K(x, y)/\{ \theta(x) \theta(y)\}$ is universal to $C_{b,\theta}^1(\mathbb{R}^d, \mathbb{R}^d)_\beta$. 
\end{lemma}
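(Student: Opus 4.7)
The plan is to prove density of $\mathcal{H}_{\tilde{K}}$, the \ac{rkhs} of the tilted kernel $\tilde{K}(x,y) \coloneqq K(x,y)/\{\theta(x)\theta(y)\}$, in $C_{b,\theta}^1(\mathbb{R}^d, \mathbb{R}^d)_\beta$ by factoring through a continuous multiplication operator with dense range. First, standard kernel algebra identifies $\mathcal{H}_{\tilde{K}}$ with $\{v/\theta : v \in \mathcal{H}_K\}$ under the isometry $v \mapsto v/\theta$ (with $\|v/\theta\|_{\tilde{K}} = \|v\|_K$), so that the pointwise multiplication operator $M_{1/\theta} : g \mapsto g/\theta$ sends $\mathcal{H}_K$ into $\mathcal{H}_{\tilde{K}}$. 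It therefore suffices to exhibit $M_{1/\theta}$ as a continuous linear map $C_b^1(\mathbb{R}^d,\mathbb{R}^d)_\beta \to C_{b,\theta}^1(\mathbb{R}^d,\mathbb{R}^d)_\beta$ with dense range; combined with the assumed $C_b^1$-universality of $K$, this yields density of $M_{1/\theta}(\mathcal{H}_K) \subseteq \mathcal{H}_{\tilde{K}}$ in the target space.

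For continuity I would check each defining seminorm of $M_{1/\theta}(g)$ directly. The $\theta$-weighted seminorm collapses since $\sup_x |\gamma(x)\theta(x) \cdot g(x)/\theta(x)| = \sup_x |\gamma(x)g(x)|$ is a $C_b$-seminorm of $g$. For the derivative seminorms the quotient rule gives
\[
   \partial_i(g/\theta) \;=\; \tfrac{1}{\theta}\,\partial_i g \;+\; g\,\partial_i(1/\theta),
\]
so that $\sup_x|\gamma(x)\partial_i(g/\theta)(x)|$ is bounded by $\sup_x|(\gamma/\theta)(x)\partial_i g(x)| + \sup_x|\gamma(x)\partial_i(1/\theta)(x)\,g(x)|$. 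Since $\gamma \in C_0(\mathbb{R}^d)$ and both $1/\theta$ and $\partial_i(1/\theta)$ are bounded and continuous by hypothesis, $\gamma/\theta$ and $\gamma\,\partial_i(1/\theta)$ both lie in $C_0(\mathbb{R}^d)$, so each term is controlled by a defining seminorm of $g$ in $C_b^1(\mathbb{R}^d,\mathbb{R}^d)_\beta$ and $M_{1/\theta}$ is continuous.

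For density of the range I would use a compactly-supported cutoff. Fix $\chi \in C_c^1(\mathbb{R}^d)$ with $\chi \equiv 1$ on the unit ball and set $\chi_k(x) \coloneqq \chi(x/k)$, so that $\chi_k \to 1$ pointwise and $\|\nabla\chi_k\|_\infty \to 0$. Given $f \in C_{b,\theta}^1(\mathbb{R}^d,\mathbb{R}^d)$, set $g_k \coloneqq \chi_k\,\theta f$; because $\chi_k$ has compact support on which $\theta, \nabla\theta, f, \nabla f$ are all bounded, the product rule shows $g_k \in C_c^1 \subset C_b^1$, with $M_{1/\theta}(g_k) = \chi_k f$. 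I would then verify $\chi_k f \to f$ in $C_{b,\theta}^1$ by controlling, for each $\gamma \in C_0$, the three tail terms $\sup_x|\gamma\theta(1-\chi_k)f|$, $\sup_x|\gamma(1-\chi_k)\partial_i f|$ and $\sup_x|\gamma(\partial_i\chi_k)f|$. The first two vanish because $\gamma\theta f$ and $\gamma\partial_i f$ are continuous and bounded (hence vanish at infinity against the $C_0$-factor $\gamma$), and are zero on the ball of radius $k$; the third vanishes because $f$ itself is bounded (combining boundedness of $1/\theta$ with $\theta f \in C_b$) and $\|\nabla\chi_k\|_\infty \to 0$.

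The main obstacle will be the derivative bookkeeping in the two seminorm calculations above: each invokes both structural hypotheses on $\theta$ in a slightly different way, and a careless algebraic step can produce an unbounded tail factor of $\theta$ (for instance by computing $\partial_i(\theta f)$ in the wrong direction instead of $\partial_i(g/\theta)$). Once the two seminorm estimates are laid out cleanly the conclusion follows by composing the two reductions: $M_{1/\theta}(\mathcal{H}_K)$ is dense in $M_{1/\theta}(C_b^1(\mathbb{R}^d,\mathbb{R}^d))$ by continuity of $M_{1/\theta}$ and density of $\mathcal{H}_K$, and the latter is dense in $C_{b,\theta}^1(\mathbb{R}^d,\mathbb{R}^d)_\beta$ by the truncation argument, so $\mathcal{H}_{\tilde{K}}$ is dense in $C_{b,\theta}^1(\mathbb{R}^d,\mathbb{R}^d)_\beta$ as required.
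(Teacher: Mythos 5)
The paper does not prove this lemma at all: it is imported verbatim as Lemma~11 of \citet{barp2024targeted}, so there is no in-paper argument to compare yours against. Your self-contained proof is correct, and it follows the standard route underlying the cited result: identify $\rkhs{\tilde K}$ with $(1/\theta)\rkhs{K}$ isometrically, then show the multiplication operator $M_{1/\theta}$ is continuous from $C_b^1(\R^d,\R^d)_\beta$ to $C_{b,\theta}^1(\R^d,\R^d)_\beta$ with dense range. The two seminorm estimates are sound (the weighted sup-norm cancels exactly, and the quotient rule places both derivative contributions against $C_0$ multipliers $\gamma/\theta$ and $\gamma\,\partial_i(1/\theta)$, which is precisely where the hypotheses that $1/\theta$ and $\nabla(1/\theta)$ are bounded enter), and the cutoff argument for density of the range correctly uses that $f=(1/\theta)(\theta f)$ is bounded together with $\|\nabla \chi_k\|_\infty = O(1/k)$. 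One cosmetic point: since the paper defines universality of $K$ to $\mathcal{V}$ as density of $\rkhs{K}\cap\mathcal{V}$ in $\mathcal{V}$, you should carry the intersection through, i.e. apply $M_{1/\theta}$ to the dense set $\rkhs{K}\cap C_b^1(\R^d,\R^d)$ and observe its image lies in $\rkhs{\tilde K}\cap C_{b,\theta}^1(\R^d,\R^d)$; your argument delivers this without modification.
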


\begin{lemma}[$\DL(\R^d)$-characteristic kernels]
\label{lem:DL1-characteristic-equiv}
Let $K$ be a kernel with $\rkhs{K} \subset C_{b}^1(\mathbb{R}^d, \mathbb{R}^d)$. 
Then, $K$ is universal to 
$C_b^1(\mathbb{R}^d, \mathbb{R}^d)_\beta$ iff K is $\DL(\mathbb{R}^d)$-characteristic. 
Moreover, for a kernel $K$ with $\rkhs{K} \subset C_0^1(\mathbb{R}^d, \mathbb{R}^d)$, 
$C_0^1(\mathbb{R}^d, \mathbb{R}^d)$-universality is equivalent to $\DL(\mathbb{R}^d)$-characteristicity. 
\end{lemma}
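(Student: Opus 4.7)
The plan is to derive both equivalences from a standard duality/Hahn--Banach argument, applied separately to the two ambient topological vector spaces $\mathcal{V} = C_b^1(\mathbb{R}^d,\mathbb{R}^d)_\beta$ and $\mathcal{V} = C_0^1(\mathbb{R}^d,\mathbb{R}^d)$. In each case the key observation is that $\mathcal{V}^* = \DL(\mathbb{R}^d)$ by the very definition given earlier in the paper, so the question reduces to a purely functional-analytic equivalence between density of $\mathcal{H}_K \subset \mathcal{V}$ and injectivity of a dual-to-RKHS embedding map.

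First I would note that the assumption $\mathcal{H}_K \subset \mathcal{V}$ in each case is in fact a continuous inclusion: for $C_0^1$ this is immediate from the closed graph theorem applied between Banach spaces (using continuity of pointwise evaluation on an RKHS), while for $C_{b}^1(\mathbb{R}^d,\mathbb{R}^d)_\beta$ it follows because the defining seminorms of the strict topology are dominated by the sup-norms, on which reproducing-kernel evaluations are continuous. Consequently any $D \in \mathcal{V}^* = \DL(\mathbb{R}^d)$ restricts to a continuous linear functional on $\mathcal{H}_K$ and so, by Riesz representation, there exists a unique $\Phi_K(D) \in \mathcal{H}_K$ with $D(v) = \la \Phi_K(D), v\ra_{\mathcal{H}_K}$ for every $v \in \mathcal{H}_K$. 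In particular $D|_{\mathcal{H}_K} \equiv 0$ iff $\Phi_K(D) = 0$ (choose $v = \Phi_K(D)$).

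Next, I would invoke the Hahn--Banach theorem in the locally convex setting: for a linear subspace $\mathcal{H}$ of a locally convex Hausdorff TVS $\mathcal{V}$, $\mathcal{H}$ is dense in $\mathcal{V}$ iff the only $D \in \mathcal{V}^*$ with $D|_{\mathcal{H}} \equiv 0$ is $D = 0$. Applied to $\mathcal{H} = \mathcal{H}_K$, this says: $\mathcal{H}_K$ is dense in $\mathcal{V}$ iff every $D \in \DL(\mathbb{R}^d)$ satisfying $D|_{\mathcal{H}_K} \equiv 0$ is zero. Combining this with the previous step, density is equivalent to $\Phi_K(D) = 0 \Rightarrow D = 0$; since $\Phi_K$ is linear, this is in turn equivalent to $\Phi_K$ being injective on $\DL(\mathbb{R}^d)$, which is exactly $\DL(\mathbb{R}^d)$-characteristicity of $K$.

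Running this argument first with $\mathcal{V} = C_b^1(\mathbb{R}^d,\mathbb{R}^d)_\beta$ and the hypothesis $\mathcal{H}_K \subset C_b^1(\mathbb{R}^d,\mathbb{R}^d)$ gives the first equivalence, and repeating it verbatim with $\mathcal{V} = C_0^1(\mathbb{R}^d,\mathbb{R}^d)$ and the hypothesis $\mathcal{H}_K \subset C_0^1(\mathbb{R}^d,\mathbb{R}^d)$ gives the second. The one subtle point I expect to have to verify carefully is that Hahn--Banach is applicable to $C_b^1(\mathbb{R}^d,\mathbb{R}^d)_\beta$ (a non-normable locally convex space) and that the strict topology is Hausdorff so that density can be separated by continuous linear functionals; this is where the strict topology $\beta$ of \citet{Buck_1958} is essential, since the ordinary $\|\cdot\|_{C_b^1}$-topology would have the `wrong' dual (not $\DL$). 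Everything else amounts to routine bookkeeping of definitions.
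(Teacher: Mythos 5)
Your argument is correct, and it reaches the statement by the same duality principle as the paper, but executed self-containedly rather than by citation. The paper's proof has two ingredients: (i) continuity of the inclusion $\rkhs{K} \hookrightarrow C_b^1(\R^d,\R^d)_\beta$, verified explicitly through the kernel-diagonal bounds $\sup_x \Verts{K(x,x)} < \infty$ and $\sup_x \Verts{\partial_{1,i}\partial_{2,i}K(x,x)} < \infty$ (which follow from $\rkhs{K} \subset C_b^1(\R^d,\R^d)$ via \citet[Lemma 3]{barp2024targeted}); and (ii) an appeal to \citet[Theorem 6]{SimonGabriel2020}, which is exactly the abstract statement that density of $\rkhs{K}$ in a locally convex space $\mathcal{V}$ is equivalent to injectivity of the embedding of $\mathcal{V}^*$ into $\rkhs{K}$, together with \citet[Lemma 8]{barp2024targeted} to pass between $C_0^1(\R^d,\R^d)$ and $C_b^1(\R^d,\R^d)_\beta$. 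You obtain (i) by closed graph plus domination of the strict-topology seminorms by the $C_b^1$ norm --- equivalent in substance, since boundedness of the inclusion into the norm topology is precisely what the diagonal bounds express --- and you reprove (ii) directly via Riesz representation on $\rkhs{K}$ and Hahn--Banach separation in a Hausdorff locally convex space, which is the mechanism underlying the cited theorem; your route buys self-containedness and makes explicit where local convexity and Hausdorffness of $\beta$ enter, while the paper's buys brevity. The one point you gloss is in the second equivalence: running Hahn--Banach with $\mathcal{V} = C_0^1(\R^d,\R^d)$ yields ``every $D \in (C_0^1)^*$ vanishing on $\rkhs{K}$ is zero \emph{as a functional on} $C_0^1$'', whereas $\DL(\R^d)$-characteristicity requires $D = 0$ on the larger space $C_b^1(\R^d,\R^d)_\beta$; to identify the two you need that an element of $\DL(\R^d)$ vanishing on $C_0^1(\R^d,\R^d)$ vanishes on all of $C_b^1(\R^d,\R^d)$, which holds by $\beta$-density of $C_0^1$ in $C_b^1$ and is exactly what the ``or equivalently'' in the paper's definition of $\DL$ and its use of \citet[Lemma 8]{barp2024targeted} encapsulate. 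Adding that one line would make your argument complete.
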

\begin{proof}
Note that $\rkhs{K}$ is continuously embedded into $C_{b}^1(\mathbb{R}^d, \mathbb{R}^d)_\beta$, 
since for any $\gamma \in C_0(\mathbb{R}^d)$ and $v \in \rkhs{K}$, we have 
\begin{align*}
    \sup_{x\in \mathbb{R}^d} \Verts{ \gamma(x) v(x)} \leq \left(\sup_{x\in\mathbb{R}} \sqrt{\Verts{K(x, x)}}\verts{\gamma(x)}\right)\Verts{v}_{\rkhs{K}}
\end{align*}
and 
\begin{align*}
    \sup_{x\in \mathbb{R}^d} 
    \Verts{ \gamma(x) \partial_i v(x)} \leq 
    \left(\sup_{x\in\mathbb{R}} \sqrt{\Verts{\partial_{1, i}\partial_{2, i}K(x, x)}}\verts{\gamma(x)}\right)\Verts{v}_{\rkhs{K}}; 
\end{align*}
and $\sup_{x\in\mathbb{R}^d}\Verts{K(x, x)} < \infty$ and $\sup_{x\in \mathbb{R}^d}\Verts{\partial_{1, i}\partial_{2, i}K(x, x)} < \infty$ for each $i \in \{1,\dots,d\}$ from $\rkhs{K} \subset C_{b}^1(\mathbb{R}^d, \mathbb{R}^d)$~\citep[Lemma 3]{barp2024targeted}. 
By \citet[Theorem 6]{SimonGabriel2020}, 
the $C_b^1(\mathbb{R}^d, \mathbb{R}^d)_\beta$-universality of $\rkhs{K}$ is then equivalent to $\DL(\mathbb{R}^d)$-characteristicity,
and $\DL(\mathbb{R}^d)$-characteristicity is equivalent to $K$ being universal to $C_0^1(\mathbb{R}^d, \mathbb{R}^d)$ for $\rkhs{K} \subset C_0^1(\mathbb{R}^d, \mathbb{R}^d)$ by the same theorem and \citet[Lemma 8]{barp2024targeted}. 
\end{proof}

\subsubsection{Proof of \Cref{prop: sufficient KSD separating}}
\label{app:proof-of-separation-maintext-TI}
\begin{proof}[Proof of \Cref{prop: sufficient KSD separating}]
    The first claim is an application of \Cref{cor: zeroKGD}. 
    We address the second claim. 
    \citet[Theorem 3.8]{barp2024targeted} states that for our choice of kernel $k$, 
    there exist a translation invariant $\DL$-characteristic kernel $k_s$ with $\rkhs{k_s} \subset C_b^1(\R^d)$ and for each $t > 0$, 
    a positive definite function $f$ with $1/f \in C^1(\R^d)$, $\max(\verts{f(x)}, \Verts{\nabla f(x)}) \lesssim \exp(-t \sum_{i=1}^d \sqrt{\verts{x_i}})$ 
    such that the kernel $k_f(x, y) = f(x) k_s(x, y) f(y)$ satisfies $\rkhs{k_f} \subset \rkhs{k}$. 
    Note that, since $\rkhs{k_f} \subset C_0^1(\R^d)$, $k_f$ is $C_0^1(\R^d)$-universal for any such $f$ by \Cref{lem:tilted-kernel-universality}, \citet[Lemma 8]{barp2024targeted}, and \citet[Theorem 6]{SimonGabriel2020}. 
    \citet[Proposition 11]{barp2024targeted} in turn shows that $k_f \idmat$ is $C_0^1(\R^d, \R^d)$-universal. 
    The claim follows from \Cref{thm:general-separation} with $L = k_f \idmat$, along with \Cref{lem:tilted-kernel-universality} and \Cref{lem:DL1-characteristic-equiv}, by taking appropriately large $t$ to dominate the growth of $\theta$ with $f$. 
\end{proof}

\subsection{Continuity}
\label{app: continuity proofs}

This appendix is dedicated to establishing continuity of \ac{kgd}, proving the proofs for theoretical results appearing in \Cref{sec:continuity}, as well as the sample complexity result stated in \Cref{prop: sample complexity}.

\subsubsection{Preliminary Results}
\label{sec: interaction energy proof}

First we present preliminary results concerning the integral form of the loss function $\cL$ and continuity of the \ac{kgd}:

\begin{coroll}[Variational gradient of interaction functional]\label{cor:wgrad4intenergy}
Let $\polyorder\in [0,\infty)$. 
In the setting of \Cref{lem:fv-interact} with $r \geq 2$, 
suppose $w \in C^1(\prod_{i=1}^r \R^d)$, 
$\verts{w(x_1 \dots, x_r)} \lesssim \prod_{i=1}^r (1 + \Verts{x_i}^\polyorder)$, 
and $\Verts{\nabla w(x_1 \dots, x_r)} \lesssim \prod_{i=1}^r (1 + \Verts{x_i}^\polyorder)$. 
Assume $Q \in \finitemomentspace{\polyorder}$. 
Then 
     \[
        \mathfrak{D}_{\mathrm{V}} \mathcal{W}(Q)(x) =  \sum_{i=1}^{r} \int \nabla_i w(y_1, \dots, y_r) \; \dd Q_{x, i}(y_1, \dots, y_r). 
    \]
for any $Q \in \finitemomentspace{\polyorder}$. 
If $\polyorder= 0$, that is $w \in C_b^1(\prod_{i=1}^r \R^d)$, then $\vargrad \mathcal{W}(Q)$ is given as above for any $Q \in \finitemomentspace{}$. 
\end{coroll}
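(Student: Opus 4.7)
The plan is to obtain the formula as the gradient of the subderivative already provided by Lemma \ref{lem:fv-interact}, and then to verify the admissibility and regularity conditions required by Definition \ref{def: func-differentiablity} of functional differentiability.

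First, I would set $\theta(x) \coloneqq 1 + \Verts{x}^{\polyorder}$, so that the growth hypothesis $\verts{w(x_1,\dots,x_\ell)} \lesssim \prod_i(1+\Verts{x_i}^{\polyorder})$ becomes exactly the bound $\verts{w} \lesssim \prod_i \theta(x_i)$ required by Lemma \ref{lem:fv-interact}. Invoking that lemma, the function
\[
    g(x) = \sum_{i=1}^\ell \int w(y_1,\dots,y_\ell) \; \dd Q_{x,i}(y_1,\dots,y_\ell)
\]
is a subderivative of $\cW$ at $Q$, with admissible set $\cA_{\cW}(Q;g) = \finitemomentspace{\theta}$. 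Since $Q \in \finitemomentspace{\polyorder}$, we have $\finitemomentspace{\theta} \supseteq \{hQ : h \geq 0 \text{ bounded measurable}, \int h \dd Q = 1\}$, because reweighting by a bounded $h$ preserves finiteness of $\int \theta \dd Q$. This secures the admissibility condition in Definition \ref{def: func-differentiablity}.

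Next I would show that $g$ is continuously differentiable with the claimed gradient. For each $i$, the integrand $y \mapsto w(y_1,\dots,y_{i-1},x,y_{i+1},\dots,y_\ell)$ is $C^1$ in $x$, and by the second growth hypothesis $\Verts{\nabla_i w(y_1,\dots,y_{i-1},x,y_{i+1},\dots,y_\ell)} \lesssim (1+\Verts{x}^{\polyorder})\prod_{j \neq i}(1+\Verts{y_j}^{\polyorder})$. On any bounded neighbourhood of $x$ this is dominated by $C \prod_{j \neq i}(1+\Verts{y_j}^{\polyorder})$, which is integrable against $Q^{\otimes (\ell-1)}$ thanks to $Q \in \finitemomentspace{\polyorder}$. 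The dominated convergence theorem therefore justifies differentiation under the integral, yielding
\[
    \nabla g(x) = \sum_{i=1}^\ell \int \nabla_i w(y_1,\dots,y_\ell) \; \dd Q_{x,i}(y_1,\dots,y_\ell),
\]
which is exactly $\mathfrak{D}_{\mathrm{V}}\cW(Q)(x)$. Continuity of $\nabla g$ follows from another application of dominated convergence using the same majorant.

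For the bounded case, if $w \in C_b^1(\prod_i \R^d)$, both $w$ and $\nabla w$ are uniformly bounded, so Lemma \ref{lem:fv-interact} applies with constant $\theta$, $\cA_\cW(Q;g) = \finitemomentspace{}$, and no moment condition on $Q$ is needed. The differentiation-under-the-integral argument then goes through verbatim with a constant dominating function, giving the formula for $\vargrad \cW(Q)$ on all of $\finitemomentspace{}$. The only mild subtlety I anticipate is bookkeeping the exchange of the finite sum over $i$ with differentiation and keeping track that in the definition of $Q_{x,i}$ the variable being differentiated sits in the $i$th slot, but this is routine once the notation is pinned down.
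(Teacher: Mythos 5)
Your proposal is correct and follows essentially the same route as the paper: take the subderivative supplied by \Cref{lem:fv-interact} and justify the gradient--integral interchange, which the paper does by citing a standard differentiation-under-the-integral result (Corollary A.4 of Dudley) while you carry it out directly via dominated convergence with a local majorant. Your additional checks of the admissibility condition in \Cref{def: func-differentiablity} and of the bounded case are sound and merely make explicit what the paper leaves implicit.
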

\begin{proof}
    Apply \citet[Corollary A.4]{Dudley_1999} so that the integral-derivative interchange 
    \[
        \nabla \int w(y_1, \dots, y_r) \; \dd Q_{x, i}(y_1, \dots, y_r)
        = \int \nabla_i w(y_1, \dots, y_{r}) \; \dd Q_{x, i}(y_1, \dots, y_{r}),
    \]
    is justified. 
\end{proof}

For presentational purposes, it is convenient to first state a more general assumption and then to show (in \Cref{subsec:proof-continuity}) that this assumption is implied by \Cref{asm: q0 support,asm: for continuity}.
Let 
\begin{align*}
\pi_i: (x_1, \dots, x_r) \mapsto (x_1, \dots, \underbrace{x_r}_{i \text{th component}}, x_{i}, \dots, x_{r-1})
\end{align*}
for $r \geq 2$ and $1 \leq i \leq r$, and $\pi_i: x\mapsto x$ if $r= 1$. 

\begin{assumption}[General assumptions for continuity of \ac{kgd}]\label{asm:continuity}
The following hold:
\begin{enumerate}[label=(\roman*)]
    \item \emph{(Loss gradient)} The loss $\cL(Q)$ has a variational gradient of the form 
    \begin{equation*}
        \vargrad  \cL(Q)(x) =  \sum_{i=1}^{r} \int \nabla_i w(y_1, \dots, y_r) \; \dd Q_{x, i}(y_1, \dots, y_r)
    \end{equation*}
    for some continuously differentiable $w:\prod_{i=1}^r\mathbb{R}^{d} \to \mathbb{R}$, 
    where  
    \[ 
    Q_{x, i} \coloneqq Q\otimes \dots \otimes Q \otimes \underbrace{\delta_{x}}_{i\mathrm{th\ component}} \otimes Q \otimes \dots \otimes Q.
    \] \label{item:vargrad-form-cont}
    \item \emph{(Kernel)} $(x,y) \mapsto \partial_{1,i}\partial_{2, i}K(x,y)$ is continuous for each $i \in \{1,\dots,d\}$. 
    \end{enumerate}
    In addition, let
    \begin{align*}
        f_1(x_1,\dots, x_{r+1}) & \coloneqq (\nabla_i w) (\pi_i(x_1,...,x_{r})) \cdot \nabla_2\cdot K(x_r, x_{r+1}) \\
        & \qquad +  (\nabla_i w) (\pi_i(x_1,\dots, x_{r+1})) \cdot \nabla_1\cdot K(x_r,x_{r+1}),\\
        f_2^{(i,j)}(x_1, \dots, x_{r}, y_1, \dots, y_{r}) & \coloneqq  (\nabla_i w) (\pi_i(x_1,\dots, x_r)) \cdot K(x_r, y_r)(\nabla_j w)(\pi_j(y_1, \dots, y_r)),\\
        h_{q_0}(x, y) &\coloneqq \nabla \log q_0(x) \cdot K(x, y) \nabla \log q_0(y) + \nabla \log q_0(x) \cdot \nabla _2 \cdot K(x, y)\\
        &\hphantom{h_{q_0}(x, y) \coloneqq} + \nabla \log q_0(y) \cdot \nabla _1 \cdot K(x, y) + \nabla_1\cdot \nabla_2 \cdot  K(x, y).
    \end{align*}
    and let the following growth conditions hold: 
    \begin{enumerate}
    \setcounter{enumi}{2}
        \item[(iii)] $f_1^{(i)}(x_1, \dots, x_{r+1})\lesssim  \prod_{i=1}^{r+1} (1+\Verts{x_i}^\polyorder)$
        \item[(iv)] 
        $f_2^{(i,j)}(x_1,\dots, x_{2r}) \lesssim  \prod_{i=1}^{2r} (1+\Verts{x_i}^\polyorder)$
        \item[(v)] 
        $\sqrt{h_{q_0}(x, x)} \lesssim 1 + \Verts{x}^\polyorder$
    \end{enumerate}
\end{assumption}

\begin{lemma}[KGD continuity]\label{prop:kgd-cont-abst}
    Let \Cref{asm: q0 support,asm:continuity} hold. 
    Then $\KGD_K(Q_\seqidx) \to \KGD_K(Q)$ whenever $Q_\seqidx \toL{\polyorder}Q$. 
\end{lemma}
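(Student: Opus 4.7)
The plan is to display $\KGD_K^2(Q)$ as a finite sum of integrals of continuous, $\polyorder$-polynomially growing integrands against tensor-product measures $Q^{\otimes m}$, and then invoke the (standard) fact that $\polyorder$-convergence of marginals transfers to such tensor-product integrals. Continuity of $\KGD_K$ then follows by continuity of the square root.

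For the expansion, start from \Cref{lem: computable} and apply the product rule to \eqref{eq: PQ kernel} to obtain the identity
\begin{equation*}
k_K^Q(x,y) = s_Q(x)\cdot K(x,y) s_Q(y) + s_Q(x)\cdot \nabla_2\cdot K(x,y) + s_Q(y)\cdot \nabla_1\cdot K(x,y) + \nabla_1\cdot \nabla_2\cdot K(x,y),
\end{equation*}
where $s_Q \coloneqq \nabla\log q_0 - \vargrad\cL(Q)$ is the generalised score. Substituting the integral representation of $\vargrad\cL(Q)$ from \Cref{asm:continuity}(i) for each occurrence of $\vargrad\cL(Q)$ and applying Fubini, $\KGD_K^2(Q)$ can be written as a finite linear combination of integrals $\int G_m \,\dd Q^{\otimes m}$ with $m \in \{2,\, r+1,\, 2r\}$; up to signs, the integrands $G_m$ are either the pure reference-score piece $h_{q_0}$, the linear-in-$\vargrad\cL(Q)$ pieces that realise the $f_1^{(i)}$-family, the quadratic-in-$\vargrad\cL(Q)$ pieces that realise the $f_2^{(i,j)}$-family, or the cross pieces of the form $\nabla\log q_0(x)\cdot K(x,y)\,\nabla_i w(\pi_i(\cdot,y))$ that arise from expanding $s_Q(x)\cdot K(x,y)s_Q(y)$. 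Each $G_m$ is continuous by \Cref{asm:continuity}(ii) and continuity of $\nabla_i w$ and $\nabla\log q_0$. The three named families satisfy the $\polyorder$-growth envelope $|G_m(x_1,\dots,x_m)|\lesssim \prod_{k=1}^m(1+\Verts{x_k}^\polyorder)$ by \Cref{asm:continuity}(iii)--(v); the unnamed cross pieces are controlled via the Cauchy--Schwarz bound $|u\cdot K(x,y)v|\leq \sqrt{u\cdot K(x,x)u}\,\sqrt{v\cdot K(y,y)v}$ for the PSD kernel $K$, combined with (v) applied to $u=\nabla\log q_0$ and the $\polyorder$-growth of $\nabla_i w$, which yields the same $\polyorder$-envelope.

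The last step invokes the transfer of $\polyorder$-convergence to tensor products: if $Q_\seqidx \toL{\polyorder} Q$, then for every continuous $h:(\R^d)^m \to \R$ with $|h(x_1,\dots,x_m)|\lesssim \prod_k(1+\Verts{x_k}^\polyorder)$, one has $\int h\,\dd Q_\seqidx^{\otimes m}\to \int h\,\dd Q^{\otimes m}$. This is a standard consequence of weak convergence of $Q_\seqidx^{\otimes m}$ to $Q^{\otimes m}$ together with the uniform moment bound $\sup_\seqidx \int (1+\Verts{x}^\polyorder)\,\dd Q_\seqidx <\infty$ that $\polyorder$-convergence provides: truncate $h$ by multiplying with a continuous cutoff supported in a ball of radius $R$, handle the bulk using weak convergence of product measures, and control the tail uniformly in $\seqidx$ via the moment bound. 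Applying this to each of the finitely many integrals in the expansion yields $\KGD_K^2(Q_\seqidx)\to \KGD_K^2(Q)$, and continuity of $\sqrt{\cdot}$ on $[0,\infty)$ gives $\KGD_K(Q_\seqidx)\to \KGD_K(Q)$.

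The main technical obstacle is the bookkeeping in the expansion: one must account for every term produced by expanding $s_Q(x)\cdot K(x,y)s_Q(y)$ and the divergence cross-terms once the $r$-fold integral form of $\vargrad\cL(Q)$ has been substituted, showing that each falls into the envelope $\prod_k(1+\Verts{x_k}^\polyorder)$. The cross pieces mixing $\nabla\log q_0$ and $\vargrad\cL(Q)$ are not named explicitly in \Cref{asm:continuity}, so the Cauchy--Schwarz reduction to $h_{q_0}$ (and to the growth assumed on $\nabla_i w$) must be carried out with care; the tensor-product transfer of $\polyorder$-convergence is standard but also requires invoking the $\polyorder$-moment consequence of $\polyorder$-convergence so as to justify interchanging the limit with integrals of dimension up to $2r$.
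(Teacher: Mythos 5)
Your proof is correct and follows essentially the same route as the paper: expand $k_K^Q$ via the generalised score and the $r$-fold integral form of $\vargrad\cL(Q)$ into the $h_{q_0}$, $f_1^{(i)}$ and $f_2^{(i,j)}$ families so that $\KGD_K^2(Q)=\int f\,\dd Q^{\otimes(2r)}$ for a continuous $f$ with the product $\polyorder$-growth envelope, then pass to the limit using weak convergence of tensor products together with the uniform integrability of $\polyorder$th moments supplied by $\polyorder$-convergence (the paper implements this with tilted measures and weak continuity of products, you with a continuous cutoff and a tail bound --- the two are interchangeable). One remark: you explicitly flag the cross terms $\nabla\log q_0(x)\cdot K(x,y)\,\nabla_j w(\cdots)$ arising from expanding $s_Q(x)\cdot K(x,y)s_Q(y)$, which the paper's displayed expansion silently drops, but your proposed control of them via Cauchy--Schwarz and condition (v) alone is not airtight, since (v) bounds the aggregate $h_{q_0}(x,x)$ rather than $\nabla\log q_0(x)\cdot K(x,x)\nabla\log q_0(x)$ individually; under the concrete \Cref{asm: for continuity}, the separate growth bounds on $\Verts{\nabla\log q_0}$ and $\Verts{K(x,x)}_{\mathrm{op}}$ do close this gap.
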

\begin{proof}[Proof of \Cref{prop:kgd-cont-abst}]
First note that $Q_\seqidx \toL{\polyorder} Q$ implies $\polyorder$th moment uniform integrability. 
Uniform integrability allows us to assume $Q_\seqidx \in \finitemomentspace{\polyorder}$ without loss of generality for any $\seqidx \geq 1$, since $Q_\seqidx \in \finitemomentspace{\polyorder}$ eventually for sufficiently large $n$. 
Let us denote $\tilde{\mu}(\dd x) = (1+\Verts{x}^\polyorder) \mu (\dd x)$ for $\mu \in \finitemomentspace{\polyorder}$. 
In the proof below, we make use of weak convergence of finite measures $\tilde{Q}_\seqidx$ to $\tilde{Q}$, which is equivalent to $Q_\seqidx \toL{\polyorder} Q$. 

Suppose for the moment that $k_K^Q$ is given by an integral 
\begin{equation}
k_K^Q(x, y) = \int f(x_1,...,x_{\ell+2}) \ \dd (Q^{\otimes \ell}\otimes \delta_x\otimes \delta_y)(x_1,...,x_{\ell+2})  \label{eq: integral form of kPQ}
\end{equation}
of some continuous function  $f:\prod_{i=1}^{\ell+2}\mathbb{R}^{d} \to \mathbb{R}$ and some $\ell \in \mathbb{N}$ such that 
\begin{equation}
    f(x_1, \dots, x_{\ell+2}) \lesssim \prod_{i=1}^{\ell+2} \bigl(1+\Verts{x_i}^\polyorder\bigr),
    \label{eq:required-growth-inproof}
\end{equation}
which in particular implies $\Op{Q}(\mathcal{H}_K) \subset\cL^1(Q)$ for any $Q \in \finitemomentspace{\polyorder}$. 
From \Cref{lem: computable}, we could then write \ac{kgd} as  
\begin{equation}
\KGD_K(Q)^2 = \iint k_K^Q(x,x') \ \dd Q(x) \dd Q(x') 
=  \int f\ \dd Q^{\otimes (\ell+2)}, \label{eq:kgd-asint-intermediate}
\end{equation}
from which the continuity of \ac{kgd} would follow since, in the limit $\seqidx \to \infty$, 
\begin{align*}
\KGD_K(Q_\seqidx)^2 & = \int f\ \dd Q_\seqidx^{\otimes (\ell+2)}\\
&= \int \prod_{i=1}^{\ell+2} \bigl(1+\Verts{x_i}^\polyorder\bigr)^{-1} f(x_1, \dots, x_{\ell+2})\; \dd \tilde{Q}_\seqidx^{\otimes (\ell+2)}(x_1,\dots,x_{\ell+2})\\
& \hspace{-10pt} \to \int \prod_{i=1}^{\ell+2} \bigl(1+\Verts{x_i}^\polyorder\bigr)^{-1} f(x_1,\dots, x_{\ell+2})\; \dd \tilde{Q}^{\otimes (\ell+2)} (x_1,\dots, x_{\ell+2}) = \KGD_K(Q)^2. 
\end{align*}
The above convergence is due to the boundedness and the continuity of the integrand and because the finite measure $\tilde{Q}_\seqidx$ converges weakly to $\tilde{Q}$, 
along with the fact that the operation ($Q\mapsto Q^{\otimes (\ell+2)}$) of taking the product measure of (finitely many) finite measures is weakly continuous~\citep[Theorem 3.3, p.47]{Berg_1984}. 
Therefore, the rest of the proof is devoted to obtaining the expression \eqref{eq: integral form of kPQ}. 

According to \eqref{eq: PQ kernel}, 
\begin{equation}
    \begin{aligned}
    k_K^Q(x, y) &= \genscore{P}{Q}(x)\cdot K(x,y)\genscore{P}{Q}(y) +  \genscore{P}{Q}(y)\cdot \bigl( \nabla_{1} \cdot K(x,y)\bigr)\\
        & \qquad +\genscore{P}{Q}(x) \cdot \bigl(\nabla_{2} \cdot K(x,y)\bigr) + \nabla_{2}\cdot \nabla_{1} \cdot K(x,y).
    \end{aligned}\label{eq:gen-Stein-kernel-alt}
\end{equation}
Substituting $\genscore{P}{Q} =\nabla \log q_0 - \vargrad  \mathcal{L}(Q)$ into $k_K^Q$ gives 
\begin{align*} 
k_K^Q(x,y) 
&= \vargrad \cL(Q)(x)\cdot K(x,y)\vargrad \cL(Q)(y) - \vargrad \cL(Q)(y)\cdot \bigl( \nabla_{1} \cdot K(x,y)\bigr)\\
& \qquad -\vargrad \cL(Q)(x) \cdot \bigl(\nabla_{2} \cdot K(x,y)\bigr) + h_{q_0}(x, y). 
\end{align*}
where $h_{q_0}$ is the Langevin Stein kernel of $Q_0$~\citep[Eq. 5]{barp2024targeted}. 
Using the assumed form of $\vargrad  \cL(Q)$, 
we obtain
\begin{align*}
   k_K^Q(x,y) 
   &= h_{q_0}(x,y) +  \sum_{i=1}^r \int f_1^{(i)} \; \dd (Q^{\otimes (r-1)}\otimes\delta_x\otimes \delta_y )\\
   &\qquad + \sum_{i,j=1}^r \int f_2^{(i,j)}\; \dd (Q^{\otimes (r-1)}\otimes\delta_x\otimes Q^{\otimes (r-1)} \otimes \delta_y) .
\end{align*}
By \Cref{asm:continuity}, the functions $h_{q_0}$, $f^{(i)}_1$ and  $f^{(i,j)}_2$ are continuous; 
moreover, with $h_{q_0}(x,y) \leq \sqrt{h_{q_0}(x,x)}\sqrt{h_{q_0}(y, y)}$, 
the function 
\begin{equation}
\begin{aligned}
& \hspace{-40pt} f(x_1,\dots, x_r, y_1, \dots, y_r) \\
&= h_{q_0}(x_r, y_r) + \sum_{i=1}^rf_1^{(i)}(x_1, \dots, x_{r+1}) + \sum_{i,j=1}^r f_2^{(i,j)}(x_1,\dots, x_r, y_1, \dots, y_r)
\end{aligned} \label{eq:continuity-f-form}
\end{equation}
meets the required growth condition \eqref{eq:required-growth-inproof} with $\ell = 2(r-1)$. 
Thus, we have shown that $k_K^Q$ can be expressed as in \eqref{eq: integral form of kPQ}, as desired. 
\end{proof}

\begin{coroll}[Continuity for composite loss]\label{cor:continuity-composite}
Fix $\polyorder\in [0,\infty)$. 
Suppose \Cref{asm:continuity} holds with \Cref{item:vargrad-form-cont} replaced by the following form of loss: 
\begin{equation}
    \cL(Q) = 
    \begin{cases}
        &\psi \left( \mathcal{W}(Q) \right) \text{ if } Q \in \finitemomentspace{\polyorder}\\
        & \infty \text{ otherwise }, 
    \end{cases}\label{eq:composite-loss}
\end{equation}
where $\psi \in C^1(U, \R)$ with open $U \subset \R$, 
and 
$\mathcal{W}(Q) = \int w\; \dd Q^{\otimes r}$ is defined as in \Cref{cor:wgrad4intenergy} (cf. \Cref{lem:fv-interact}), assumed to take values in $U$ on $\finitemomentspace{\polyorder}$. 
Then, together with \Cref{asm: q0 support}, 
we have 
$\KGD_K(Q_\seqidx) \to \KGD_K(Q)$ whenever $Q_\seqidx \toL{\polyorder}Q$. 
\end{coroll}
\begin{proof}[Proof of \Cref{cor:continuity-composite}]
Note that by \Cref{cor:wgrad4intenergy} and the chain rule, 
\begin{align*}
    \mathfrak{D}_{\mathrm{V}}(Q)(x) = \psi'\left(\mathcal{W}(Q)\right) \cdot  \sum_{i=1}^{r} \int \nabla_i w(y_1, \dots, y_r) \; \dd Q_{x, i}(y_1, \dots, y_r)
\end{align*}
for $Q \in \finitemomentspace{\polyorder}$. 
That is, the variational gradient is in the same form as \Cref{item:vargrad-form-cont} in \Cref{asm:continuity} up to the multiplicative factor 
$\psi'\left(\mathcal{W}(Q)\right)$. 
By the continuity of derivative $\psi'$ and the growth assumption on $w$, we have $\psi'(\mathcal{W}(Q_n)) \to \psi'(\mathcal{W}(Q))$ whenever $Q_n \toL{\polyorder} Q$. 
We can then proceed as in the proof of \Cref{prop:kgd-cont-abst}, 
since $k_K^Q$ has a similar form, with the only difference being multiplicative factors given by $Q \mapsto \psi'\left(\mathcal{W}(Q)\right)$. 
\end{proof}

\begin{remark}[Relation to \Cref{prop:continuity}]\label{rem:composite-loss}
The above corollary states that 
\Cref{prop:continuity} holds if the loss $\cL$ is given in the composite form \eqref{eq:composite-loss}, if the following additional assumptions hold: 
(a) $\beta \leq \polyorder$ (see \Cref{asm: for continuity}) and (b) $w(x_1, \dots, x_r) \lesssim \prod_{i=1}^r (1 + \Verts{x_i}^\polyorder)$. 
\end{remark}

\subsubsection{Proof of \Cref{prop:continuity}}\label{subsec:proof-continuity}
\begin{proof}[Proof of \Cref{prop:continuity}]
We make use of \Cref{prop:kgd-cont-abst}. 
Since \Cref{asm: q0 support} is supposed, we just need to check that \Cref{asm:continuity} is satisfied.
Of the items in \Cref{asm:continuity}, the first two are shared with \Cref{asm: for continuity},
and thus it remains to check the growth conditions in \Cref{asm:continuity}. 
In the following, we examine the growth of each term of $f$ in \eqref{eq:continuity-f-form} under \Cref{asm: for continuity}. 

First we address $f^{(i)}_1$. 
For this purpose, we define $K_1:\mathbb{R}^{d}\times \mathbb{R}^{d} \to \mathbb{R}^{d\times d }$ by 
\begin{align}
    [K_1(a, b)]_{i,j} \coloneqq \partial_{1,i}K_{ij}(a,b)
    \leq \sqrt{ e_i \cdot \partial_{i, 1}\partial_{i,2}K(a, a) e_i } \sqrt{ e_j \cdot K(b,b)e_j  }
    \label{eq:K1-upperbound}
\end{align}
where $e_i$ denotes the $i$th canonical basis vector in $\mathbb{R}^d$, 
and the inequality is due to the derivative reproducing property~\citep[Lemma 4]{barp2024targeted}. 
Now note that 
\begin{align*}
 & \verts{f_1^{(i)}(x_1,\dots, x_{r+1})} \\
 &\leq \Verts{(\nabla_i w)(\pi_i ((x_1,\dots, x_r))}_2 \Verts{K_1(x_{r+1}, x_r)}_{\mathrm{F}}
 + \Verts{(\nabla_i w) (\pi_i(x_1,\dots, x_{r+1}))}_2 \Verts{K_1(x_{r}, x_{r+1})}_{\mathrm{F}} \\
 &\leq d \Verts{(\nabla_i w)(\pi_i(x_1,\dots, x_r))}_2 \sqrt{\Verts{K(x_r, x_r)}_{\mathrm{op}, 2}} \cdot \sum_{i=1}^d \sqrt{\Verts{\partial_{1,i}\partial_{2,i}K(x_{r+1}, x_{r+1})}_{\mathrm{op}, 2}}\\ 
 &\hphantom{\leq} + d \Verts{(\nabla_i w)(\pi_i(x_1,\dots, x_{r+1}))}_2 \sqrt{\Verts{K(x_{r+1}, x_{r+1})}_{\mathrm{op}, 2}} \cdot \sum_{i=1}^d \sqrt{\Verts{\partial_{1,i}\partial_{2,i}K(x_r, x_r)}_{\mathrm{op}, 2}},
\end{align*}
where the operator norm $\Verts{\cdot}_{\mathrm{op}, 2}$ is taken with respect to the 2-norm. 
Since $\Verts{\cdot}_2 \lesssim \Verts{\cdot}$ and $\Verts{\cdot}_{\mathrm{op}, 2} \lesssim \Verts{\cdot}_{\mathrm{op}}$, 
the claim for $f_1^{(i)}$ then follows by substituting the growth conditions in \Cref{asm: for continuity} into this estimate. 
Similarly, by the reproducing property, the function $f^{(i,j)}_2$ is evaluated as 
\begin{align*}
    &\verts{f_2^{(i,j)}(x_1, \dots, x_r, y_1, \dots y_r)}\\
    &\leq \sqrt{\Verts{K(x_r, x_r)}_{\mathrm{op}, 2}} \Verts{(\nabla_i w)(\pi_i (x_1, \dots, x_r))}_2 
    \cdot \sqrt{\Verts{K(y_r, y_r)}_{\mathrm{op}, 2}}  \Verts{(\nabla_j w)(\pi_j (y_1, \dots, y_r))}_2
\end{align*}
and the Stein kernel $h_{q_0}$ can also be evaluated as 
\begin{align*}
&h_{q_0}(x,x)\\
&\leq \Verts{\nabla \log q_0(x)}^2_2 \Verts{K(x,x)}_{\mathrm{op}, 2} +  d^2 \sum_{i=1}^{d} \Verts{\partial_{1, i}\partial_{2, i}K(x, x)}_{\mathrm{op}, 2}\\
&\hphantom{\leq} \quad + 2d^2 \Verts{\nabla \log q_0(x)}_2
\sqrt{\Verts{K(x,x)}_{\mathrm{op}, 2}}
\sum_{i=1}^{d} \sqrt{\Verts{ \partial_{1,i}\partial_{2,i}K(x, x) }_{\mathrm{op}, 2}},
\end{align*}
so that the result follows by again substituting the growth conditions in \Cref{asm: for continuity} into these estimates. 
\end{proof}

\subsubsection{Proof of \Cref{prop:kernel-conds-for-continuity}}
\label{app: sufficient kernel proof}

\begin{proof}[Proof of \Cref{prop:kernel-conds-for-continuity}]
\Cref{asm: kernel condition 2} is obvious. 
Since 
    \begin{align*}
     K(x,y)=\weight_{\polyorder-\beta}(x)\weight_{\polyorder-\beta}(y)L(x,y)+\weight_{\polyorder-\beta}(x) \weight_{\polyorder-\beta}(y)\bar{k}_{\mathrm{lin}}(x,y)\idmat,
    \end{align*}
we just need to check each term in the sum satisfies \Cref{asm: kernel condition 1,asm: kernel condition 3}. 
\Cref{asm: kernel condition 1} follows from $(\polyorder-\beta)$-growth of $\weight_{\polyorder-\beta}$ and the boundedness of $x \mapsto \Verts{L(x,x)}_\mathrm{op}$ and $x \mapsto \bar{k}_{\mathrm{lin}}(x,x)$. 
For \Cref{asm: kernel condition 3}, note that we have 
\begin{align*}
\Verts{\partial_{i, 1} L(x, y)}_{\mathrm{op}}
&= \sup_{\Verts{u} = \Verts{v} = 1} \verts{u\cdot \partial_{1,i}L(x, y)v}\\ %
&\leq \sup_{\Verts{u} = \Verts{v} = 1} \sqrt{ u \cdot \partial_{1, i}\partial_{2, i}L(x, x) u} \sqrt{ v \cdot L(y, y) v  } \\
&= \sqrt{\Verts{\partial_{1, i}\partial_{2, i}L(x, y)}_{\mathrm{op}}} \sqrt{\Verts{L(y, y)}_{\mathrm{op}}}. 
\end{align*}
Thus, for any $z \in \mathbb{R}^d$, 
\begin{align*}
     &\left. \Verts{ \partial_{1, i} \partial_{2, i}\left\{\weight_{\polyorder-\beta}(x)\weight_{\polyorder-\beta}(y)L(x,y)\right\} }_{\mathrm{op}}\right\rvert_{x=z,\ y=z}\\ 
     &\leq \{\partial_{i} \weight_{\polyorder-\beta}(z)\}^2 \Verts{L(z, z)}_{\mathrm{op}}
     + 2\left(\partial_{i} \weight_{\polyorder-\beta}(z)\right)\weight_{\polyorder-\beta}(z)    
    \sqrt{\Verts{\partial_{1, i}\partial_{2, i}L(z, z)}_{\mathrm{op}}} \sqrt{\Verts{L(z, z)}_{\mathrm{op}}} \\ 
     &\hphantom{=}\ 
     + \weight_{\polyorder-\beta}(z)^2\Verts{ \partial_{1, i} \partial_{2, i}L(z,z)}_{\mathrm{op}}.
    \end{align*}
Since $\verts{\partial_i \weight_{\polyorder-\beta}(x)} \leq \verts{\polyorder-\beta} / (2c)^{-1} \cdot \weight_{\polyorder-\beta}(x)$, using the boundedness of 
$x \mapsto \Verts{L(x,x)}_\mathrm{op}$ and 
$x \mapsto \Verts{\partial_{1, i}\partial_{2, i}L(x,x)}_\mathrm{op}$, 
each term is shown to be of $2(\polyorder-\beta)$-growth, and hence \Cref{asm: kernel condition 3} holds. The derivation for the second term is identical with $L$ replaced with $\bar{k}_{\mathrm{lin}}\idmat$. 
\end{proof}

\subsubsection{Proof of \Cref{prop: sample complexity}}
\label{app: sample complexity}

\begin{proof}[Proof of \Cref{prop: sample complexity}]
    From \eqref{eq:kgd-asint-intermediate}, the \ac{kgd} writes as
    \begin{align*}
        \KGD_K^2(Q_n) = \int f \ \dd Q_n^{\otimes (2r)} = \frac{1}{n^{\ell+2}} \sum_{i_1 = 1}^n \cdots \sum_{i_{2r} = 1}^n f(x_{i_1} , \dots , x_{i_{\ell+2}}) ,
    \end{align*}
    where the form of $f$ is given in \eqref{eq:continuity-f-form}.
    This we recognise as a non-symmetric $V$-statistic; see \Cref{app: v-stats}.
    The second moment condition of \Cref{cor: nonsymmetric V} in \Cref{app: v-stats} holds since 
    $Q \in \finitemomentspace{2\polyorder}$ and, as explained in the proof of \Cref{prop:continuity}, $f$ grows like $\|x_i\|^q$ in each argument. 
    Thus asymptotic normality is established. 
\end{proof}

\subsubsection{Asymptotic Normality of Non-Symmetric $V$-Statistics}
\label{app: v-stats}

This appendix is dedicated to the asymptotic behaviour of non-symmetric $V$-statistics, establishing \Cref{cor: nonsymmetric V} which was used in the proof of \Cref{prop: sample complexity}.
Though we suspect these arguments are standard, we have not found a self-contained reference that can be easily cited, so for completeness both the statements and proofs will be presented.

First we recall the case of a symmetric $V$-statistic.
To state the following in a compact manner, let $x_{1:m} = (x_1,\dots,x_m)$ in shorthand.
Let $\Pi(n,p)$ denote the collection of all $p$-tuples $(i_1 , \dots , i_p) \in \{1,\dots,n\}^p$ with all components distinct.
Interpret $\mathcal{N}(0,0)$ as a point mass $\delta_0$ at $0$.
Let $x_{\pi} = (x_{\pi(1)}, \dots , x_{\pi(p)})$ in shorthand for $\pi \in \Pi(p,p)$.

\begin{proposition}[Asymptotic normality of $V$-statistics] \label{thm: pfister C7}
Let $Q$ be a distribution on a measurable space $\mathcal{X}$, and let $(x_n)_{n \in \mathbb{N}}$ be a sequence of independent draws from $Q$.
Let $f : \mathcal{X}^p \rightarrow \mathbb{R}$ be a symmetric function with $\int f(x_{1:p})^2 \; \mathrm{d}Q^{\otimes p}(x_{1:p}) < \infty$.
Then
$$
\sqrt{n}\left( \frac{1}{n^p} \sum_{i_1=1}^n \dots \sum_{i_p=1}^n f(x_{i_1}, \dots , x_{i_p}) - \int f(x_{1:p}) \; \mathrm{d}Q^{\otimes p}(x_{1:p}) \right) \stackrel{\mathrm{d}}{\rightarrow} \mathcal{N}(0, p^2 \sigma^2) 
$$
as $n \rightarrow \infty$, where $\sigma^2 \coloneqq \mathbb{V}_{X_1 \sim Q}[ \int f(X_1,x_{2:p}) \; \dd Q^{\otimes (p-1)}(x_{2:p})  ]$.
\end{proposition}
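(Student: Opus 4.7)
The plan is to reduce the asymptotic normality of the $V$-statistic to that of the corresponding $U$-statistic, and then invoke the classical Hoeffding central limit theorem for $U$-statistics. Let $\theta \coloneqq \int f \, \dd Q^{\otimes p}$ and split the indexing set according to whether the $p$-tuple has all distinct components:
\[
V_n \coloneqq \frac{1}{n^p} \sum_{i_1, \dots, i_p = 1}^n f(x_{i_1}, \dots, x_{i_p}) = c_n U_n + R_n,
\]
where $U_n \coloneqq \binom{n}{p}^{-1} \sum_{1 \leq i_1 < \cdots < i_p \leq n} f(x_{i_1}, \dots, x_{i_p})$ is the symmetric $U$-statistic, $c_n \coloneqq n!/\{(n-p)!\, n^p\} = 1 - O(1/n)$, and $R_n$ collects the contributions from tuples with at least one repeated index.

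The first subgoal is to show $\sqrt{n}(V_n - U_n) = o_p(1)$. Since $U_n \to \theta$ in $L^2$ by Hoeffding's theorem, $\sqrt{n}(c_n - 1) U_n = O_p(n^{-1/2})$; and since the number of repeated-index tuples is $O(n^{p-1})$, dividing by $n^p$ yields $R_n = O_p(1/n)$, provided one can control the expectations of $|f|$ along the diagonals. The second subgoal is to invoke Hoeffding's theorem: under $\int f^2 \, \dd Q^{\otimes p} < \infty$ and symmetry of $f$,
\[
\sqrt{n}(U_n - \theta) \stackrel{\mathrm{d}}{\to} \mathcal{N}(0, p^2 \sigma^2),
\]
with $\sigma^2$ as in the statement. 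The standard argument employs the H-decomposition, which projects $f - \theta$ onto functions of a single argument to produce the leading term $\hat U_n \coloneqq (p/n) \sum_{i=1}^n \{\bar f(x_i) - \theta\}$ with $\bar f(x) \coloneqq \int f(x, x_{2:p}) \, \dd Q^{\otimes(p-1)}(x_{2:p})$, and exploits orthogonality of the decomposition components to show that the higher-order projections contribute $O_p(1/n)$ to $\sqrt n (U_n - \theta)$; the usual univariate CLT applied to the i.i.d.\ sum $\hat U_n$ then yields the displayed limit. Slutsky's theorem combines the two subgoals to give the conclusion for $V_n$.

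The main obstacle is the remainder $R_n$, because the $L^2(Q^{\otimes p})$ hypothesis alone does not control quantities such as $\mathbb{E}\verts{f(X_1, X_1, X_3, \dots, X_p)}$ on which the expected size of $R_n$ depends; a rigorous proof therefore requires a mild side condition (e.g.\ continuity of $f$ combined with moment assumptions on $Q$, or a per-diagonal $L^1$ bound). In the intended use for \Cref{prop: sample complexity} this is automatic: as noted in the proof of \Cref{prop:continuity}, the integrand $f$ built from $\vargrad \mathcal{L}$, $\nabla \log q_0$ and $K$ has polynomial growth of order $\polyorder$ in each argument, and $Q \in \finitemomentspace{2\polyorder}$ then ensures both the $L^2$ condition and the required diagonal integrability simultaneously.
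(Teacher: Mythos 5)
Your proposal follows essentially the same route as the paper's proof: reduce the $V$-statistic to the symmetric $U$-statistic, invoke Hoeffding's central limit theorem under the $L^2(Q^{\otimes p})$ hypothesis to get $\sqrt{n}(U_n-\theta)\stackrel{\mathrm{d}}{\to}\mathcal{N}(0,p^2\sigma^2)$, and conclude by Slutsky. The only difference is in the intermediate step $\sqrt{n}(V_n-U_n)\stackrel{\mathrm{p}}{\to}0$: you argue it directly by counting repeated-index tuples and correctly flag that controlling diagonal terms such as $\mathbb{E}\verts{f(X_1,X_1,X_3,\dots,X_p)}$ needs integrability beyond the off-diagonal $L^2$ assumption (automatic in the intended application), whereas the paper delegates this step to Theorem 1 of Bonner (1977) (cf. Serfling, Section 5.7.3) under $\int \verts{f}\,\dd Q^{\otimes p}<\infty$.
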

\begin{proof}
Let
\begin{align*}
V_n \coloneqq \frac{1}{n^p} \sum_{i_1=1}^n \cdots \sum_{i_p=1}^n f(x_{i_1} , \dots ,x_{i_p}), \qquad U_n \coloneqq \frac{1}{\binom{n}{p}} \sum_{(i_1, \dots , i_p) \in \Pi(n,p)} f(x_{i_1}, \dots , x_{i_p}) .
\end{align*}
From Theorem 1 of \citet{bonner1977note}, since $\int |f(x_{1:p})| \; \mathrm{d}Q^{\otimes p}(x_{1:p}) < \infty$, we have that $\sqrt{n}(V_n - U_n) \stackrel{\mathrm{p}}{\rightarrow} 0$; see also Section 5.7.3 of \citet{serfling2009approximation}).
Thus from Slutsky's theorem it is sufficient to show asymptotic normality for $U_n$.
Recognising $U_n$ as a $U$-statistic, from \citet{hoeffding1992class}, since $\int f(x_{1:p})^2 \mathrm{d}Q^{\otimes p}(x_{1:p}) < \infty$, we have asymptotic normality
$$
\sqrt{n}\left( U_n - \int f(x_{1:p}) \; \mathrm{d}Q^{\otimes p}(x_{1:p}) \right) \stackrel{\mathrm{d}}{\rightarrow}  \mathcal{N}(0, p^2 \sigma^2) 
$$
(see also Section 5.5.1 of \citet{serfling2009approximation}).
This completes the argument.
\end{proof}

\begin{coroll}[Asymptotic normality of non-symmetric $V$-statistics]
\label{cor: nonsymmetric V}
    Let $Q$ be a distribution on a measurable space $\mathcal{X}$, and let $(x_n)_{n \in \mathbb{N}}$ be a sequence of independent draws from $Q$.
Let $f : \mathcal{X}^p \rightarrow \mathbb{R}$ satisfy  $\sigma_{\max}^2 \coloneqq \max_{\pi,\pi' \in \Pi(p,p)} \int | f(x_\pi) f(x_{\pi'}) | \; \mathrm{d}Q^{\otimes p}(x_{1:p}) < \infty$.
Then there exists $\sigma^2 \leq \sigma_{\max}^2$ such that
$$
\sqrt{n}\left( \frac{1}{n^p} \sum_{i_1=1}^n \dots \sum_{i_p=1}^n f(x_{i_1}, \dots , x_{i_p}) - \int f(x_{1:p}) \; \mathrm{d}Q^{\otimes p}(x_{1:p}) \right) \stackrel{\mathrm{d}}{\rightarrow} \mathcal{N}(0, p^2 \sigma^2) .
$$
\end{coroll}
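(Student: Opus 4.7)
The strategy is to reduce the non-symmetric case to the symmetric case already handled by the preceding proposition, via symmetrisation of the kernel. Define
\[
\tilde{f}(x_{1:p}) \coloneqq \frac{1}{p!} \sum_{\pi \in \Pi(p,p)} f(x_\pi).
\]
The key observation is that the (unrestricted) sum $\sum_{i_1=1}^n \cdots \sum_{i_p=1}^n$ is invariant under any permutation of the dummy indices, so relabelling gives
\[
\frac{1}{n^p} \sum_{i_1=1}^n \cdots \sum_{i_p=1}^n f(x_{i_1},\dots,x_{i_p})
= \frac{1}{n^p} \sum_{i_1=1}^n \cdots \sum_{i_p=1}^n \tilde{f}(x_{i_1},\dots,x_{i_p}).
\]
Similarly, since $Q^{\otimes p}$ is invariant under coordinate permutations, $\int \tilde{f} \, \mathrm{d}Q^{\otimes p} = \int f \, \mathrm{d}Q^{\otimes p}$. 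Thus the non-symmetric $V$-statistic and its mean are identically equal to the symmetric $V$-statistic in $\tilde{f}$ and its mean.

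The next step is to verify square-integrability of $\tilde{f}$ so that the symmetric Proposition applies. Expanding,
\[
\int \tilde{f}^2 \; \mathrm{d}Q^{\otimes p} = \frac{1}{(p!)^2} \sum_{\pi,\pi' \in \Pi(p,p)} \int f(x_\pi) f(x_{\pi'}) \; \mathrm{d}Q^{\otimes p}(x_{1:p}),
\]
and each of the $(p!)^2$ terms is bounded in absolute value by $\sigma_{\max}^2$ by hypothesis, so $\int \tilde{f}^2 \; \mathrm{d}Q^{\otimes p} \leq \sigma_{\max}^2 < \infty$. Applying the symmetric Proposition to $\tilde{f}$ yields
\[
\sqrt{n}\left( \frac{1}{n^p} \sum_{i_1=1}^n \cdots \sum_{i_p=1}^n f(x_{i_1},\dots,x_{i_p}) - \int f \; \mathrm{d}Q^{\otimes p} \right) \stackrel{\mathrm{d}}{\rightarrow} \mathcal{N}(0, p^2 \sigma^2),
\]
where $\sigma^2 = \mathbb{V}_{X_1 \sim Q}\bigl[\,h(X_1)\,\bigr]$ with $h(x_1) \coloneqq \int \tilde{f}(x_1, x_{2:p}) \; \mathrm{d}Q^{\otimes(p-1)}(x_{2:p})$.

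Finally, the bound $\sigma^2 \leq \sigma_{\max}^2$ follows from $\sigma^2 \leq \mathbb{E}[h(X_1)^2]$ together with Jensen's inequality applied to the inner conditional expectation, giving $\mathbb{E}[h(X_1)^2] \leq \int \tilde{f}^2 \, \mathrm{d}Q^{\otimes p} \leq \sigma_{\max}^2$. No step is a real obstacle: the only subtlety is recognising that the unrestricted $V$-statistic is invariant under symmetrisation of the kernel (whereas the restricted $U$-statistic version of the same identity would require slightly more care), which is exactly what lets us piggyback on the established symmetric result.
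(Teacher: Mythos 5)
Your proposal is correct and follows essentially the same route as the paper: symmetrise the kernel as $\tilde{f} = \frac{1}{|\Pi(p,p)|}\sum_{\pi} f(x_\pi)$, note the $V$-statistic and its mean are unchanged, apply the symmetric proposition to $\tilde{f}$, and bound the limiting variance via $\sigma^2 \leq \int \tilde{f}^2 \, \mathrm{d}Q^{\otimes p} \leq \sigma_{\max}^2$. Your write-up simply makes explicit the invariance and Jensen/averaging steps that the paper leaves implicit.
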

\begin{proof}
    Set
    \begin{align*}
        \tilde{f}(x_{1:p}) \coloneqq \frac{1}{|\Pi(p,p)|} \sum_{\pi \in \Pi(p,p)} f(x_\pi)
    \end{align*}
    so that $\tilde{f}$ is symmetric and satisfies the conditions of \Cref{thm: pfister C7}, along with the bound $\mathbb{V}_{X_1 \sim Q}[ \int \tilde{f}(X_1,x_{2:p}) \; \mathrm{d}Q^{\otimes (p-1)}(x_{2:p})  ] \leq \int \tilde{f}(x_{1:p})^2 \; \mathrm{d}Q^{\otimes p}(x_{1:p}) \leq \sigma_{\max}^2$.
    The result is then established.
\end{proof}

\subsection{Existence and Uniqueness of Stationary Points}
\label{app: exist unique stationary}

This appendix presents the short proof of \Cref{prop: exist unique}, which ensures the existence and uniqueness of a stationary point for $\eobj$ in \eqref{eq: objective}:

\begin{proof}[Proof of \Cref{prop: exist unique}]
    First we note that \Cref{cor:minimum-is-stationary-eobj} together with \Cref{prop:stationarity-eobj-minimality} imply that the minima of $\eobj$ can be identified with the stationary points of $\eobj$ in the sense of \Cref{def:stationary_point}.
    It therefore remains to show existence of a unique minimum of $\eobj$, and we achieve this by following the argument in \citet[][Proposition 2.5]{hu2021mean}.
    As a sublevel set of the relative entropy,
    $$
    S \coloneq \left\{ Q \in \finitemomentspace{} : \KLD(Q || Q_0) \leq \eobj(Q_0) - \inf_{Q' \in \finitemomentspace{}} \cL(Q') \right\}
    $$
    is non-empty and weakly compact \citep[][Lemma 1.4.3]{dupuis2011weak}.
    Since $\eobj$ is weakly semicontinuous, the minimum of $\eobj$ on $S$ is attained.
    In particular, the global minimum of $\eobj$ is contained in $S$.
    Further, since $\eobj$ is strictly convex, the minimum is unique, completing the argument.
\end{proof}

\subsection{Convergence Control}
\label{app: convergence control}

This appendix establishes the sufficient conditions for the \ac{kgd} to control $\polyorder$-convergence of probability measures, as stated in \Cref{thm: convergence control}. 
This is the most technical of the results that we present, and we therefore indicate the high-level structure of the argument in \Cref{app: sketch proof control}, with the finer details presented in the following subsections. 

\subsubsection{Proof of \Cref{thm: convergence control}}
\label{app: sketch proof control}

\begin{proof}
From \Cref{prop: moments exist} in \Cref{sec: diss implies moments} we have $\target \in \finitemomentspace{\polyorder}$. 

To prove the convergence control claim, we make use of two results: 
The first result (\Cref{thm:kgd-convergence-UIseq}, proved in \Cref{subsec:KGDcontrolsUIconv}) states when the \ac{kgd} is continuous and $\target$-separating, 
then vanishing \ac{kgd} convergence indeed implies $\polyorder$-convergence to $\target$ 
provided that the sequence has uniformly integrable $\polyorder$th moments, a condition formalised in \Cref{def:UI}. 
The second result (\Cref{lem:kgd-enforces-UI}, proved in \Cref{subsec:enforceUI}) provides sufficient conditions for a vanishing \ac{kgd} to enforce uniform integrability, which allows us to lift the uniform integrability assumption on the first result. 
Thus, combining these two results, vanishing \ac{kgd} implies $\polyorder$-convergence. 

We now examine how the assumptions and the kernel choice made in \Cref{thm: convergence control} guarantee the preconditions for these two results. 
First, to apply \Cref{thm:kgd-convergence-UIseq}, we need the \ac{kgd}'s separability and continuity. 
The separability of the \ac{kgd} is due to \Cref{thm:general-separation}. 
The recommended kernel choice is a sum of two kernels, where the \ac{rkhs} of $x, y \mapsto \weight_{\polyorder-\beta}(x)L(x,y)\weight_{\polyorder-\beta}(y)$ contains that of an $C_{b,\theta}^1(\R^d, \R^d)_\beta^*$-characteristic kernel with $\theta(x) = (1+ \Verts{x}^2_2)^{\beta/2}$, which can be shown as in the proof of \Cref{prop: sufficient KSD separating} in \Cref{app:proof-of-separation-maintext-TI}. 
The continuity of the \ac{kgd} holds due to \Cref{prop:continuity} and the choice of $\ell$, as translation-invariant kernel are bounded and have bounded diagonal derivatives.

Next, to apply \Cref{lem:kgd-enforces-UI}, we make use of \Cref{lem: growth-approx-general} with $\mathcal{Q} = \finitemomentspace{}$ and so check if its required conditions are satisfied. 
Applying \Cref{lem:sum-dissipativity-justgrowth} implies, due to the uniform bound on $\Verts{\vargrad \cL(Q)(x)}$, that $\genscore{P}{Q}(x)$ is $\finitemomentspace{}$-uniformly dissipative (\Cref{def:dist-dependent-vector-field-dissipativity}). 
Moreover, by \Cref{item:asm-convcontrol-growth} in \Cref{asm:enforce_tightness} and \Cref{item:asm for continuity growth prior} in \Cref{asm: for continuity}, we have  
\begin{align*}
    \Verts{\genscore{P}{Q}(x)} 
    &\leq \Verts{\nabla \log q_0} + \Verts{\vargrad \cL(Q)(x)} \\
    &\leq \left( C \lor  \sup_{Q \in \finitemomentspace{}, x \in \mathbb{R}^d}  \Verts{\vargrad \cL(Q)(x)}\right)  (1 + \Verts{x}^\beta).
\end{align*}
for all $x \in \mathbb{R}^d$ and $Q\in \finitemomentspace{}$ with some $Q$-independent constant $C>0$. 
The kernel form in \Cref{def: recommended kernel} is the same as that of \Cref{lem: growth-approx-general} and $L(x, y) = \ell(x,y) \idmat$ is $\mathcal{C}^1_0(\mathbb{R}^d, \mathbb{R}^d)$-universal by assumption. Thus, all the preconditions of \Cref{lem: growth-approx-general} are met, and so \Cref{lem:kgd-enforces-UI} applies. 

\end{proof}

\subsubsection{Dissipativity Implies Finite Moments}
\label{sec: diss implies moments}

The following proposition explains how the dissipativity assumption in \Cref{asm:enforce_tightness} ensures the existence of moments of $\target$:

\begin{proposition}  \label{prop: moments exist}
    Let \Cref{asm: q0 support} and \Cref{item:asm-concontrol-prior,{item:asm-convcontrol-growth}} in \Cref{asm:enforce_tightness} hold.
    Then a stationary $Q$ has a finite $\polyorder$th moment for any $\polyorder \in (0,\infty)$, that is, $Q \in \finitemomentspace{\polyorder}$. 
\end{proposition}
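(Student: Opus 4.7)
The plan is to show that the stationary density $q$ of $Q$ inherits super-polynomial tail decay from $q_0$, so that all polynomial moments are finite. By the self-consistency equation \eqref{eq: self-consistency} together with \Cref{asm: q0 support}, the stationary point $Q$ admits a density $q(x) = Z^{-1} q_0(x) \exp(-\cL'(Q)(x))$ with respect to Lebesgue measure, where $Z$ is the normalising constant. Taking logarithms,
\[
\log q(x) = \log q_0(x) - \cL'(Q)(x) - \log Z,
\]
so it suffices to produce an upper bound on $\log q_0$ that dominates the growth allowed for $\cL'(Q)$.

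First I would integrate the generalised dissipativity of $\nabla \log q_0$ (\Cref{item:asm-concontrol-prior} of \Cref{asm:enforce_tightness}) along radial lines. Specifically, for $x \neq 0$ fix $u = x/\Verts{x}$ and consider $\phi(t) = \log q_0(tu)$. The dissipativity condition $-\nabla \log q_0(y)\cdot y \geq r_1 \Verts{y}^{2\diss} - r_2$ applied at $y = tu$ yields $t\phi'(t) \leq -r_1 t^{2\diss} + r_2$, i.e. $\phi'(t) \leq -r_1 t^{2\diss-1} + r_2/t$ for $t > 0$. Integrating from $1$ to $\Verts{x}$, together with continuity of $\log q_0$ on the unit sphere, gives an estimate of the form
\[
\log q_0(x) \leq -\frac{r_1}{2\diss}\Verts{x}^{2\diss} + r_2 \log(1+\Verts{x}) + C
\]
valid uniformly in $x \in \R^d$, for some constant $C$.

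Next I would plug in \Cref{item:asm-convcontrol-growth} of \Cref{asm:enforce_tightness}. Since $Q \ll Q_0$ (as $Q$ is stationary), the hypothesis $\cL'(Q)(x) = o(\Verts{x}^{2\diss})$ applies. Combining with the bound from the previous paragraph, there exists $R > 0$ such that, for $\Verts{x} \geq R$,
\[
\log q(x) \leq -\frac{r_1}{4\diss}\Verts{x}^{2\diss} + C',
\]
for some constant $C'$. Since $\diss \geq 1/2$, this tail bound gives $\int \Verts{x}^\polyorder \, q(x) \dd x < \infty$ for every $\polyorder \in (0, \infty)$, because $\int_{\Verts{x} \geq R} \Verts{x}^\polyorder \exp(-c\Verts{x}^{2\diss}) \dd x < \infty$ for any $c > 0$, and the remaining integral over $\{\Verts{x} < R\}$ is finite by boundedness of the integrand. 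Thus $Q \in \finitemomentspace{\polyorder}$ for every $\polyorder \geq 0$.

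The main delicate step is the radial integration of the dissipativity condition: one has to be a little careful near the origin (where the $r_2/t$ term is integrable only after cancellation with the boundary value $\phi(1)$) and to maintain uniformity in the direction $u$, which follows from the continuity of $\log q_0$ on the compact unit sphere. Everything else is a routine chain of inequalities and the observation that a density with tail decay $\exp(-c\Verts{x}^{2\diss})$ has finite moments of every order.
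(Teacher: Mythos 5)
Your proof is correct and follows essentially the same route as the paper: integrate the dissipativity inequality for $\nabla \log q_0$ along rays to obtain a bound $\log q_0(x) \lesssim -c\Verts{x}^{2\diss}$ (up to lower-order terms), then combine the self-consistency relation with the assumption $\cL'(Q)(x) = o(\Verts{x}^{2\diss})$ to conclude that the stationary density decays like $\exp(-c\Verts{x}^{2\diss})$ and hence has all polynomial moments. Your integration from $t=1$ outward, with the resulting logarithmic correction, is simply a slightly more careful treatment of the behaviour near the origin than the paper's direct fundamental-theorem-of-calculus estimate; the substance of the argument is identical.
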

\begin{proof}
    As in \Cref{def: gen diss}, let $r_1 > 0$ and $r_2 \geq 0$ be the constants 
    that make the assumed dissipativity hold. 
    Now take $R_0 = 1\lor(2r_2/r_1)^{1/(2\diss)}$. Then for $\Verts{x} > R_0$, we have 
    \[
        - \nabla \log q_0(x) \cdot x > \frac{r_1}{2}\Verts{x}^{2\diss}.
    \]
    Now for such $x\in \R^d$, take $\lambda_0 = R_0 / \Verts{x} < 1$. 
    Since $\log q_0$ is $C^1(\mathbb{R}^d)$, from the fundamental theorem of calculus, 
    \begin{align*}
        \log q_0(x) 
        &= \log q_0(\lambda_0 x) +  (1-\lambda_0) \int_0^1 \nabla \log q_0\bigl( \{(1-\lambda_0)t + \lambda_0\} x\bigr) \cdot x \;\dd t \\
        &\leq \max_{\Verts{y}\leq R_0} \log q_0(y) - \frac{r_1}{2}\Verts{x}^{2\diss} (1-\lambda_0)\int_0^1 \bigl( (1-\lambda_0)t +\lambda_0)^{2\diss- 1}\;\dd t\\
        &= \max_{\Verts{y}\leq R_0} \log q_0(y) - \frac{r_1}{4\diss}  \bigl( \Verts{x}^{2\diss} - R_0^{2\diss}\bigr). 
    \end{align*}
    This evaluation implies that $q_0(x) \lesssim \exp\{-r_1\Verts{x}^{2\diss} / (4\diss)\}$. 
    By the self-consistency, 
    \begin{equation*}
    q(x) \propto \exp\{-\cL'(Q)(x)\}q_0(x) \lesssim \exp(-C\Verts{x}^{2\diss})
    \end{equation*}
    for sufficiently large $\Verts{x}$ and some constant $C>0$, where the inequality is due to the assumption $\cL'(Q)(x) = o(\Verts{x}^{2\diss})$ (\Cref{item:asm-convcontrol-growth} in \Cref{asm:enforce_tightness}),  combined with the established decay of $q_0(x)$.  
    The moment finiteness follows from this estimate. 
\end{proof}

\subsubsection{\ac{kgd} Controls Tight $\polyorder$-Convergence}\label{subsec:KGDcontrolsUIconv}

To prove a convergence control property, 
we first restrict ourselves to probability measures with well-controlled tails, formalised as follows: 
\begin{definition}[Uniform integrability]\label{def:UI}
    For $\polyorder\in [0, \infty)$, a sequence $(Q_\seqidx)_{\seqidx \in \mathbb{N}} \subset \mathcal{P}(\mathbb{R}^d)$ is said to have \emph{uniformly integrable $\polyorder$th moments} 
    if 
    $$
    \lim_{r\to \infty} \limsup_{\seqidx\to\infty} \int_{\Verts{x} > r}\Verts{x}^\polyorder \; \dd Q_\seqidx(x) = 0.
    $$
\end{definition}
\noindent 
Informally, uniform integrability requires the tail decay of the members of the sequence to remain at a polynomial rate such that the $\polyorder$th moment exists even in the limit. 
A related notion that will be used in our proof below is \emph{uniform tightness}: 
\begin{definition}[Tightness]\label{def:tightness}
    A set $\mathcal{Q}$ of nonnegative Borel measures is called \emph{uniformly tight} when for each $\varepsilon > 0$ there exists a compact set $S \subset \mathbb{R}^d$ such that $$Q(S^c) \leq \varepsilon$$ for all $Q \in \mathcal{Q}$. 
\end{definition}
\noindent Note that a sequence in $\finitemomentspace{}$ is tight if it satisfies \Cref{def:UI} with $\polyorder=0$, since a closed-ball is compact in $\mathbb{R}^d$. 
Thus, our definition of uniform integrability subsumes the case of tightness. 
In what follows, whenever we refer to the case $\polyorder=0$ in \Cref{def:UI}, this should be interpreted as the sequence being tight. 
Uniform integrability can be understood as a form of tightness, since the definition above states that we can take sufficiently large $N \in \mathbb{N}$ such that $\{ \Verts{x}^\polyorder Q_\seqidx(\dd x)\}_{n \geq N}$ is tight. 

The following theorem shows that \ac{kgd} controls $\polyorder$-convergence when the sequence has uniformly integrable $\polyorder$th moments:

\begin{theorem}[\ac{kgd} controls tight $\polyorder$-convergence]\label{thm:kgd-convergence-UIseq}
Fix $\polyorder \in [0, \infty)$ and let $P \in \finitemomentspace{\polyorder}$. 
Assume the following conditions on the \ac{kgd}:
\begin{itemize}
\item \emph{(Separability)} $\KGD_K(Q) = 0$ iff $Q = P$ for any $Q\in \finitemomentspace{\polyorder}$.
\item \emph{(Continuity)} $\KGD_K(Q_\seqidx) \to \KGD_K(Q) $ if $Q_\seqidx \toL{\polyorder} Q$. 
\end{itemize}
Now, for a sequence $(Q_\seqidx)_{\seqidx \in \mathbb{N}} \subset \finitemomentspace{}$, 
suppose that
\begin{itemize}
\item $(Q_\seqidx)_{\seqidx \in \mathbb{N}}$ have uniformly integrable $\polyorder$th moments. 
\item $\KGD_K(Q_\seqidx) \rightarrow 0$.
\end{itemize}
Then $Q_\seqidx \toL{\polyorder} P$.
\end{theorem}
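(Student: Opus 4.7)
The plan is a standard subsequencing argument leveraging Prokhorov compactness, uniform integrability, and the two hypothesised properties of $\KGD_K$. First, since $(Q_\seqidx)_{\seqidx \in \mathbb{N}}$ has uniformly integrable $\polyorder$th moments with $\polyorder \geq 0$, specialising the definition of uniform integrability to $\polyorder = 0$ gives tightness of $(Q_\seqidx)_{\seqidx \in \mathbb{N}}$, because closed balls in $\mathbb{R}^d$ are compact. By Prokhorov's theorem, every subsequence of $(Q_\seqidx)_{\seqidx \in \mathbb{N}}$ admits a further weakly convergent subsequence.

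Second, I would take any such weakly convergent subsequence $(Q_{\seqidx_k})_{k \in \mathbb{N}}$ with weak limit $Q^*$, and argue that in fact $Q_{\seqidx_k} \toL{\polyorder} Q^*$. Uniform integrability of $\polyorder$th moments is inherited by the subsequence, and combined with weak convergence this upgrades the mode of convergence: for a continuous nonnegative $h$ of $\polyorder$-growth, multiply $h$ by a smooth radial cutoff at radius $R$; weak convergence handles the bounded continuous piece, while the tail is controlled uniformly in $k$ by the $\polyorder$-growth of $h$ together with the uniform integrability hypothesis, with the bound vanishing as $R \to \infty$. Applying the same argument to the function $x \mapsto \|x\|^\polyorder$ (together with Fatou) also yields $Q^* \in \finitemomentspace{\polyorder}$, so the continuity hypothesis on $\KGD_K$ applies and gives $\KGD_K(Q^*) = \lim_{k} \KGD_K(Q_{\seqidx_k}) = 0$; the separability hypothesis then forces $Q^* = \target$.

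Third, since every subsequence of $(Q_\seqidx)_{\seqidx \in \mathbb{N}}$ admits a further subsequence $\polyorder$-converging to the common limit $\target$, the standard double-subsequence principle (valid because $\polyorder$-convergence on $\finitemomentspace{\polyorder}$ is metrisable, hence Hausdorff and sequential) forces the whole sequence $(Q_\seqidx)_{\seqidx \in \mathbb{N}}$ itself to $\polyorder$-converge to $\target$.

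The main obstacle is the upgrade from weak convergence to $\polyorder$-convergence under uniform integrability: the subtlety is that indicator truncations of test functions are only upper semicontinuous, so one must use a continuous cutoff, for example $\chi_R(x) = \max(0, \min(1, R+1-\|x\|))$, before invoking weak convergence on the inner piece $h \chi_R$ and then controlling the outer piece $h(1-\chi_R)$ via the $\polyorder$-growth of $h$ and the uniform integrability of $\polyorder$th moments. This is a textbook manipulation but worth handling carefully, as it is the only place where the polynomial structure of $\polyorder$-convergence is exploited.
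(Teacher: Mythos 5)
Your proposal is correct and follows essentially the same route as the paper's proof: extract a subsequence by compactness, use uniform integrability to upgrade to $\polyorder$-convergence, identify the limit as $\target$ via the continuity and separability hypotheses, and conclude for the whole sequence by the double-subsequence principle. The only (cosmetic) difference is that the paper applies a weak-compactness theorem directly to the tilted finite measures $(1+\Verts{x}^\polyorder)\,\dd Q_n$, whereas you apply Prokhorov to the $Q_n$ themselves and then perform the truncation-plus-uniform-integrability upgrade by hand; both are valid and the remaining steps coincide.
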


\begin{proof}
By taking sufficiently large $\seqidx \in \mathbb{N}$ if necessary, we may assume that $Q_\seqidx \in \finitemomentspace{\polyorder}$ for any $\seqidx \in \mathbb{N}$. 
For $\mu \in \finitemomentspace{\polyorder}$, let $\tilde{\mu}$ denote the tilted measure $\tilde{\mu}(\dd x) = (1+\Verts{x}^\polyorder) \mu (\dd x)$. 
Since $(Q_\seqidx)_{\seqidx \in \mathbb{N}}$ has uniformly integrable $\polyorder$th moments, 
so does any subsequence $(Q_{\seqidx_i})_{i \in \mathbb{N}}$. 
In particular, $(\tilde{Q}_{\seqidx_i})_{i \in \mathbb{N}}$ is uniformly tight and 
$(\tilde{Q}_{\seqidx_i}(\mathbb{R}^d))_{i \in \mathbb{N}}$ is uniformly bounded. 
By \citet[Theorem 8.6.2]{Bogachev_2007}, 
$(\tilde{Q}_{\seqidx_i})_{i \in \mathbb{N}}$ contains a subsequence $(\tilde{Q}_{\seqidx_{i_j}})_{j \in \mathbb{N}}$ that converges weakly to some nonnegative measure $Q'$. 
Thus, for any continuous function $f$ of $\polyorder$-growth, we have 
\[
\lim_{j\to\infty} \int f \; \dd Q_{\seqidx_{i_j}} = 
\lim_{j\to\infty} \int \frac{f(x)}{1+\Verts{x}^\polyorder} \; \dd \tilde{Q}_{\seqidx_{i_j}}(x) =
\int \frac{f(x)}{1+\Verts{x}^\polyorder} \; \dd Q'(x) = 
\int f \; \dd Q,
\]
where we have defined $Q(\dd x) \coloneqq (1+\Verts{x}^\polyorder)^{-1} Q'(\dd x)$. 
As a consequence, by letting $f\equiv 1$ above, we obtain $Q \in \finitemomentspace{\polyorder}$, 
and we have shown $Q_{\seqidx_{i_j}} \toL{\polyorder} Q$ as $j \to \infty$. 

Since the \ac{kgd} is continuous with respect to $\polyorder$-convergence, $\KGD_K(Q_{n_{i_j}}) \rightarrow \KGD_K(Q)$. 
This limit is $0$ by the convergence assumption, and we must have $Q=P$ by the separability of the \ac{kgd}. 
Thus, we have shown that, out of any subsequence of $(Q_\seqidx)_{\seqidx \in \mathbb{N}}$, we can extract a (sub)subsequence that $\polyorder$-converges to $P$. 
By a classical argument~\citep[see, e.g.,][Proposition 9.3.1]{Dudley2002}, the original sequence $(Q_\seqidx)_{\seqidx \in \mathbb{N}}$ must also $\polyorder$-converge to $P \in \finitemomentspace{\polyorder}$. 
\end{proof}
\begin{remark}
    The above proof works for any functional on $\finitemomentspace{}$ and is not limited to the \ac{kgd}. 
\end{remark}

\Cref{thm:kgd-convergence-UIseq} is not sufficient to deduce \Cref{thm: convergence control}, because it could occur that $\KGD_K(Q_n) \rightarrow 0$ and yet the uniform integrability of $(Q_n)_{n \in \mathbb{N}}$ is violated.
This problem is addressed in the next section, where we establish conditions under which convergence in \ac{kgd} enforces the uniform integrability requirement. 

\subsubsection{\ac{kgd} Enforces Uniform Integrability}\label{subsec:enforceUI}

This appendix establishes sufficient conditions under which convergence in \ac{kgd} implies uniformly integrability. 
To begin we present a general sufficient condition:

\begin{definition}[Uniform $\polyorder$-growth approximation]
\label{def: uni-q-growth}
Let $ \mathcal{Q} \subset \finitemomentspace{}$ and $\polyorder \in [0, \infty)$.
The family $\{\Op{Q}\rkhs{K}\}_{Q \in \mathcal{Q}}$ is said to \emph{approximate $\polyorder$-growth $\mathcal{Q}$-uniformly} if for any $\varepsilon > 0$, there exists $r_\varepsilon > 0$ and $v_\varepsilon \in \rkhs{K}$
such that
\begin{align}
\Op{Q} v_\varepsilon(x) \geq \Verts{x}^\polyorder 1\{\Verts{x}>r_\varepsilon\} - \varepsilon, \label{eq: dist-uni-growth-approx}
\end{align}
for all $Q \in \mathcal{Q}$. 
\end{definition}

\begin{lemma}[\ac{kgd} enforces uniform integrability]
\label{lem:kgd-enforces-UI}
Fix $\polyorder \in [0,\infty)$. 
Assume that the family $\{\Op{Q}\rkhs{K}\}_{Q \in \mathcal{P}(\mathbb{R}^d)}$ approximates $\polyorder$-growth $\finitemomentspace{}$-uniformly (cf. \Cref{def: uni-q-growth}). 
Then $\KGD_K(Q_\seqidx) \rightarrow 0$ implies that $(Q_n)_{\seqidx \geq 1} \subset \mathcal{P}(\mathbb{R}^d)$ has uniformly integrable $\polyorder$th moments (cf. \Cref{def:UI}). 
\end{lemma}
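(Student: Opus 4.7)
The plan is to exploit the uniform $\polyorder$-growth approximation property to convert convergence of $\KGD_K(Q_n)$ into a tail bound on $Q_n$. Fix $\varepsilon > 0$ and use \Cref{def: uni-q-growth} with $\mathcal{Q} = \finitemomentspace{}$ to obtain $r_\varepsilon > 0$ and $v_\varepsilon \in \rkhs{K}$ satisfying \eqref{eq: dist-uni-growth-approx} for every $Q \in \finitemomentspace{}$. Set $M_\varepsilon \coloneqq \Verts{v_\varepsilon}_{\rkhs{K}}$ so that $\tilde v_\varepsilon \coloneqq v_\varepsilon / M_\varepsilon$ lies in $\rkhsball{K}$.

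Next I would verify that $\tilde v_\varepsilon$ is admissible in the supremum defining $\KGD_K(Q_n)$, i.e.\ that $(\Op{Q_n}\tilde v_\varepsilon)_{-} \in \cL^1(Q_n)$. This is where the lower bound in \eqref{eq: dist-uni-growth-approx} does the work: it gives $\Op{Q_n} v_\varepsilon(x) \geq -\varepsilon$ for all $x \in \R^d$, hence $(\Op{Q_n}\tilde v_\varepsilon)_{-}$ is uniformly bounded by $\varepsilon / M_\varepsilon$ and thus trivially $Q_n$-integrable. Consequently
\begin{equation*}
\left| \int \Op{Q_n} v_\varepsilon \; \dd Q_n \right| \leq M_\varepsilon \, \KGD_K(Q_n).
\end{equation*}

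Integrating the uniform lower bound \eqref{eq: dist-uni-growth-approx} against $Q_n$ yields
\begin{equation*}
\int_{\Verts{x} > r_\varepsilon} \Verts{x}^\polyorder \; \dd Q_n(x) \leq \int \Op{Q_n} v_\varepsilon \; \dd Q_n + \varepsilon \leq M_\varepsilon \, \KGD_K(Q_n) + \varepsilon.
\end{equation*}
Passing to the $\limsup$ in $n$ and using the assumption $\KGD_K(Q_n) \to 0$ gives $\limsup_{n \to \infty} \int_{\Verts{x} > r_\varepsilon} \Verts{x}^\polyorder \; \dd Q_n \leq \varepsilon$.

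To close the argument, I would feed this into the definition of uniform integrability in \Cref{def:UI}. For any $r \geq r_\varepsilon$ the inner integral only decreases, so $\limsup_n \int_{\Verts{x} > r} \Verts{x}^\polyorder \; \dd Q_n \leq \varepsilon$ for all such $r$; letting $r \to \infty$ then $\varepsilon \to 0$ yields uniform integrability of the $\polyorder$th moments of $(Q_n)_{n \in \mathbb{N}}$. The main obstacle is really the bookkeeping around the absolute value in the KGD together with the sign/integrability of $\Op{Q_n} v_\varepsilon$; the uniform lower bound \eqref{eq: dist-uni-growth-approx} handles both issues simultaneously because it is valid for \emph{every} $Q \in \finitemomentspace{}$ with a single kernel function $v_\varepsilon$ independent of $Q$, allowing us to plug in $Q = Q_n$ even though $\Op{Q_n}$ itself depends on $n$.
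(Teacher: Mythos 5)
Your proposal is correct and follows essentially the same route as the paper's proof: invoke the uniform $\polyorder$-growth approximation to get a single $v_\varepsilon$, note that the lower bound makes $(\Op{Q_n}v_\varepsilon)_{-}$ bounded (hence admissible), bound $\int \Op{Q_n}v_\varepsilon \, \dd Q_n$ by $\Verts{v_\varepsilon}_{\rkhs{K}}\KGD_K(Q_n)$, and pass to the limit. The only difference is that you spell out the normalisation and the final $r \to \infty$, $\varepsilon \to 0$ bookkeeping, which the paper leaves implicit.
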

\begin{proof}
Fix $\varepsilon > 0$.
From the $\polyorder$-growth approximation property, 
there exists $r_\varepsilon > 0$ and $v_{ \varepsilon} \in \rkhs{K}$ such that
\begin{align*}
    \int_{\Verts{x} > r_\varepsilon} \Verts{x}^\polyorder\; \dd Q_\seqidx(x) 
    \leq \int \Op{Q_\seqidx} v_\varepsilon(x) \dd Q_\seqidx(x) + \varepsilon
    \leq \Verts{v_\varepsilon}_{\rkhs{K}} \cdot \KGD_K(Q_\seqidx) + \varepsilon
\end{align*}
where the integral on the RHS in the first inequality is well-defined due to the boundedness of $(\Op{Q}v_\varepsilon)_{-}$; 
the second line follows from the definition of the \ac{kgd} (and the linearity of $\Op{Q}$).  
Thus, letting $\seqidx \to \infty$ on both sides, $\polyorder$th moment uniform integrability is established. 
\end{proof}

It therefore suffices to identify conditions under which we can find $\polyorder$-growth approximation as in \Cref{def: uni-q-growth}. 
To this end, we will require the following generalisation of \Cref{def: gen diss}:

\begin{definition}[Dissipativity for distribution-dependent vector fields]
\label{def:dist-dependent-vector-field-dissipativity}
    Let $\mathcal{Q} \subset \mathcal{P}(\mathbb{R}^d)$ be a family of probability distributions. 
    Let $v:\mathcal{Q} \times \mathbb{R}^d \to \mathbb{R}^d$ be a distribution-dependent vector field. 
    The vector field $ v$ is said to satisfy \emph{$\mathcal{Q}$-uniform generalised dissipativity}, if 
    \begin{align*}
        -v(Q, x)\cdot x \geq r_1 \|x\|^{2\diss} - r_2 %
    \end{align*}
    holds for any $x \in \mathbb{R}^d$ and $Q \in \mathcal{Q}$, 
    where $r_1 >0$, $r_2 \geq 0$ and $\diss \geq 1/2$ are constants independent of $x$ and $Q$. 
\end{definition}

Our next result, which provides such a condition based on \citet[Lemma 3.2]{KanBarGreMac2025}. 

\begin{lemma}[Sufficient conditions for $\polyorder$-growth approximation]
\label{lem: growth-approx-general}
Fix $\polyorder \in [0, \infty)$. 
Let $\mathcal{Q} \subset \finitemomentspace{}$. 
Suppose $(Q,x) \mapsto \genscore{P}{Q}(x)$ satisfies $\mathcal{Q}$-uniform generalised dissipativity with $\diss \geq 1/2$ as in \Cref{def:dist-dependent-vector-field-dissipativity}. 
Assume there exists $C > 0$ such that 
$
    \Verts{\genscore{P}{Q}(x)} \leq C \left( 1 + \Verts{x}^\beta \right)
$
for some $\beta \in [2\diss -1, \infty)$ and all $x \in \mathbb{R}^d$ and $Q \in \mathcal{Q}$. 
Fix $c > 0$.
Define kernel $K$ by 
\begin{equation*}
 K(x,y) \coloneqq \weight_{\polyorder-\beta}(x)\left(L(x,y)+\bar{k}_{\mathrm{lin}}(x,y)\idmat\right)\weight_{\polyorder-\beta}(y)
,%
\end{equation*}
where we denote, for $s \in \mathbb{R}$, $\weight_s(x) \coloneqq \bigl(c^{2}+\Verts x_{2}^{2}\bigr)^{s/2}$, 
$L$ is a matrix-valued kernel universal to $C_0^1(\mathbb{R}^d,\mathbb{R}^{d})$ with $\rkhs{L} \subset \mathcal{C}^1_0(\mathbb{R}^d, \mathbb{R}^d)$, 
and 
\[
\bar{k}_{\mathrm{lin}}(x,y) \coloneqq \frac{k_{\mathrm{lin}}(x,y)}{\sqrt{k_{\mathrm{lin}}(x,x)} \sqrt{k_{\mathrm{lin}}(y,y)}}
\]
is the normalized version of a linear kernel $k_{\mathrm{lin}}(x,y)= c^2 + x \cdot y$.  
Then, for each $\varepsilon>0$, 
there exists $r_\varepsilon>0$ and $v_\varepsilon \in \rkhs{K}$ such that 
\[
    \Op{Q} v_{\varepsilon} (x) \geq \Verts{x}_2^\polyorder 1\{\Verts{x}_2 > r_\varepsilon\} - \varepsilon
\]
for all $Q \in \mathcal{Q}$. 
\begin{proof}
Recall that the Langevin Stein operator for a distribution $P$ with density $p$ is defined by 
\[
    \LStOp{P} v (x) \coloneqq \nabla \log p(x) \cdot v(x) + \nabla \cdot v(x).
\]
For each fixed $Q\in\mathcal{Q}$, the operator $\Op{Q}$ can be seen as the Langevin Stein operator 
$\LStOp{P}$ with $\nabla \log p$ replaced with $\genscore{P}{Q}$. 
We assume that $(Q, x)\mapsto \genscore{P}{Q}$ satisfies $\mathcal{Q}$-uniform dissipativity (\Cref{def:dist-dependent-vector-field-dissipativity}). 
This assumption allows us to choose $v_\varepsilon$ and $r_\varepsilon$ independent of $Q$ 
by simply following the construction techniques for the Langevin Stein operator developed by \cite{barp2024targeted} and \citet{KanBarGreMac2025}; 
this step can be performed by verifying that required assumptions hold independently of $Q$. 
To obtain the desired result, we use \citet[Lemma 3.2]{KanBarGreMac2025}: 
Assumptions 1 and 2 of \citet{KanBarGreMac2025} are verified in \Cref{lem: verify-dis-growth-assm}; 
and our kernel choice corresponds to their recommended kernel with diffusion matrix $m(x) = \weight_{1-\beta}(x)\idmat$. 
\end{proof}
\end{lemma}

\subsubsection{Supporting Lemmas}

This appendix provides the remaining details for convergence control, 
establishing \Cref{lem:sum-dissipativity-justgrowth}, 
which was used in the proof of \Cref{thm: convergence control}, and \Cref{lem: verify-dis-growth-assm}, which was used in the proof of \Cref{lem: growth-approx-general}.

\begin{lemma}[Dissipativity is unaffected by a vector field of slower growth]\label{lem:sum-dissipativity-justgrowth}
    Let $\mathcal{Q} \subset \finitemomentspace{}$. 
    Let $v_1,v_2 : \mathcal{Q} \times \mathbb{R}^d \rightarrow \mathbb{R}^d$. 
    Suppose that $v_1$ satisfies $\mathcal{Q}$-uniform dissipativity condition for some $\diss > 1/2$
    and $r_1 > 0$, $r_2 \geq 0$. 
    Suppose there exists $C \geq 0 $ and $0 < \varepsilon \leq 2\diss - 1$ such that 
    $\| v_2(Q,x) \| \leq C(1 + \|x\|^{2\diss -1-\varepsilon})$ 
    for all $Q \in \mathcal{Q}$ and $x \in \mathbb{R}^d$. 
    Then $v_1 + v_2$ satisfies $\mathcal{Q}$-uniform generalised dissipativity with the same $\diss \geq 1/2$. 
\end{lemma}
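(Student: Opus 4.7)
The proof is essentially bookkeeping with Cauchy--Schwarz and Young's inequality, so the plan is short. The idea is to add $-v_2(Q,x)\cdot x$ as a controlled perturbation of the dissipativity bound already enjoyed by $v_1$ and absorb it back into $\|x\|^{2\diss}$ outside a large ball.

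First I would write
\[
-(v_1+v_2)(Q,x)\cdot x \;=\; -v_1(Q,x)\cdot x \;-\; v_2(Q,x)\cdot x \;\geq\; r_1\|x\|^{2\diss}-r_2 - \|v_2(Q,x)\|\,\|x\|,
\]
using Cauchy--Schwarz in the last step. Substituting the hypothesised growth bound $\|v_2(Q,x)\| \leq C(1+\|x\|^{2\diss-1-\varepsilon})$ yields
\[
-(v_1+v_2)(Q,x)\cdot x \;\geq\; r_1\|x\|^{2\diss} - r_2 - C\|x\| - C\|x\|^{2\diss-\varepsilon}.
\]
Crucially, both correction terms on the right are of strictly smaller polynomial order than $\|x\|^{2\diss}$: the exponent $2\diss-\varepsilon$ is less than $2\diss$ since $\varepsilon>0$, and the exponent $1$ is at most $2\diss$ since $\diss\geq 1/2$ (with equality only in the degenerate case $\diss=1/2$, where the hypothesis $0<\varepsilon\leq 2\diss-1$ is already vacuous).

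Next I would apply Young's inequality in the form $C t^{s} \leq \delta t^{2\diss} + M_{\delta,s}$ valid for $t\geq 0$ and any exponent $s\in[0,2\diss)$, with $\delta=r_1/4$ applied separately to $s=2\diss-\varepsilon$ and (when $\diss>1/2$) to $s=1$. This produces a finite constant $M$ depending only on $r_1,C,\diss,\varepsilon$ such that
\[
C\|x\| + C\|x\|^{2\diss-\varepsilon} \;\leq\; \tfrac{r_1}{2}\|x\|^{2\diss} + M
\]
for all $x\in\mathbb{R}^d$. Substituting gives $-(v_1+v_2)(Q,x)\cdot x \geq \tfrac{r_1}{2}\|x\|^{2\diss} - (r_2+M)$, which is the desired generalised dissipativity with new constants $r_1'=r_1/2>0$ and $r_2'=r_2+M\geq 0$, and the same exponent $\diss$.

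The only thing to watch is that $r_1',r_2',M$ are all independent of $Q\in\mathcal{Q}$: this holds because the bounds we invoked (the $\mathcal{Q}$-uniform dissipativity of $v_1$ and the $\mathcal{Q}$-uniform polynomial bound on $v_2$) are themselves uniform in $Q$ by hypothesis. There is no genuine obstacle; the only mild subtlety is keeping track that $\diss\geq 1/2$ is used to ensure the linear perturbation $C\|x\|$ really is of sub-leading order, which is why the statement restricts $\varepsilon\leq 2\diss-1$.
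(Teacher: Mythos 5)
Your proof is correct and takes essentially the same route as the paper's: both decompose $-(v_1+v_2)(Q,x)\cdot x$, apply the $\mathcal{Q}$-uniform dissipativity of $v_1$ together with the uniform growth bound on $v_2$, and absorb the sub-leading terms $C\|x\| + C\|x\|^{2\diss-\varepsilon}$ --- you do this globally via Young's inequality, while the paper splits into $\|x\|>\tilde{R}$ and the complementary compact ball, which is equivalent bookkeeping (and your observation that the case $\diss=1/2$ is vacuous is the right way to handle the linear term). The only cosmetic point is that, since $\|\cdot\|$ is an arbitrary equivalent norm, bounding $|v_2(Q,x)\cdot x|$ needs the dual-norm/equivalence constant (the paper's $c$), which merely rescales $C$ uniformly in $Q$ and changes nothing.
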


\begin{proof}
Let $v \coloneqq v_1 + v_2$.
By assumption, for any $x \in \mathbb{R}^d$ and $Q \in \mathcal{Q}$, 
    \[ 
    - v_1(Q,x) \cdot x \geq r_1 \|x\|^{2\diss} - r_2.
    \]
Then, 
\begin{align*}
    - v(Q,x) \cdot x 
    &= - v_1(Q,x) \cdot x - v_2(Q,x) \cdot  x \\
    &\geq r_1 \|x\|^{2\diss} - r_2  - \langle v_2(Q,x),x \rangle  \\
    & \geq r_1 \|x\|^{2\diss} \left( 1 - \frac{r_2}{r_1 \|x\|^{2\diss}}  - c\frac{\Verts{v_2(Q,x)} }{ r_1 \|x\|^{2\diss-1}}\right), 
\end{align*}
where $c > 0$ is some constant such that $\Verts{x}_* \leq c \Verts{x}$ for the dual norm $\Verts{\cdot}_*$. 
Given the growth condition on $v_2$,  the term in parentheses converges to 1 uniformly over $Q$ as $\Verts{x}_2 \to \infty$. 
Thus, there exist $\tilde{R}>0$ and $0< r'_1 < r_1 $ such that for every $x$ with $\|x\| > \tilde{R}$,
\[
 - v(Q,x)\cdot x > r'_1 \|x\|^{2\diss}.
\]
For this $\tilde{R}$, in the case $\|x\| \leq \tilde{R}$ we have 
\begin{align*}
    - v(Q,x) \cdot x
    &= - v_1(Q,x) \cdot x -  v_2(Q,x) \cdot  x\\ 
    &\geq r_1 \Verts{x}^{2\diss} - r_2  - C(1+\Verts{x}_2)^{2\diss -1-\varepsilon} \Verts{x}  \eqqcolon \ell(x) \geq -r'_2, 
\end{align*}
where we have defined $r'_2 \coloneqq  \max_{\Verts{x}\leq \tilde{R}} \verts{\ell(x)}$ and have used both the dissipativity of $v_1$ and the growth condition of $v_2$ to obtain the inequality. 
These considerations yield
\[
 - v(Q,x)\cdot x \geq r'_1 \|x\|^{2\diss} - r_1'  R^{2\diss}
- r'_2
\]
for every $x \in \mathbb{R}^d$ and $Q \in \mathcal{P}(\mathbb{R}^d)$, as desired. 
\end{proof}

\begin{lemma}[Verifying the assumptions in \citealt{KanBarGreMac2025}]
\label{lem: verify-dis-growth-assm}
Let $\mathcal{Q} \subset \finitemomentspace{}$. 
Suppose  $\genscore{P}{Q}$ satisfies $\mathcal{Q}$-uniform generalised dissipativity with $\diss > 1/2$ as in \Cref{def:dist-dependent-vector-field-dissipativity}. 
Assume 
there exists $C>0$ such that 
     $\Verts{\genscore{P}{Q}(x)} \leq C (1 + \Verts{x}^{2\diss - 1})$
for all $x \in \mathbb{R}^d$ and $Q \in \mathcal{Q}$. 
For each $Q\in \mathcal{Q}$, define $\tau = 2\diss - 1$, 
$b(x) \coloneqq \{a_{1-\tau}(x) \genscore{P}{Q}(x) + \nabla \weight_{1-\tau}(x)\}/2$
and 
$m(x) \coloneqq a_{1-\tau}(x)\idmat$, where $\weight_{1-\tau}(x) \coloneqq (c^2 + \Verts{x}_2^2)^{(1-\tau)/2}$. 
Then, $b$ and $m$ satisfies the dissipativity condition (\citealt[Assumption 1]{KanBarGreMac2025}): 
\begin{align*}
    2 b(x) \cdot x+ \mathrm{tr}[m(x)] \leq -\alpha \Verts{x}_2^2 + \beta,
\end{align*}
where $\alpha>0$ and $\beta>0$ are constants independent of $Q$. 
Moreover, $b$ and $m$ satisfy the linear growth condition (\citealt[Assumption 2 with $q_m=0$]{KanBarGreMac2025}): 
\begin{align*}
\Verts{b(x)}  \leq \lambda_b (1+\Verts{x}_2)\ \text{and}\  \Verts{m(x)}_{\mathrm{op}} \leq \lambda_m (1 + \Verts{x}_2)
\end{align*}
where $\lambda_b > 0$ and $\lambda_m > 0$ are $Q$-independent. 
\end{lemma}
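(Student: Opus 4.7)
The plan is to verify both conclusions by direct computation with the weight $\weight_{s}(x) = (c^2 + \|x\|_2^2)^{s/2}$. The key preliminary identity is $\nabla \weight_s(x) = s\, \weight_{s-2}(x)\, x$, which follows from the chain rule. Specialising to $s = 1-\tau$ and substituting into the definition of $b$ would yield the expansion
\begin{equation*}
2b(x) \cdot x + \mathrm{tr}[m(x)] = \weight_{1-\tau}(x)\, \genscore{P}{Q}(x) \cdot x + (1-\tau)\, \weight_{-1-\tau}(x)\, \|x\|_2^2 + d\, \weight_{1-\tau}(x).
\end{equation*}

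For the dissipativity condition, I would apply the $\mathcal{Q}$-uniform generalised dissipativity of $\genscore{P}{Q}$ to the first term, obtaining the upper bound $-r_1\, \weight_{1-\tau}(x)\, \|x\|_2^{2\diss} + r_2\, \weight_{1-\tau}(x)$. The estimate $\weight_{1-\tau}(x)\, \|x\|_2^{2\diss} \gtrsim \|x\|_2^{1-\tau+2\diss}$ for $\|x\|_2 \geq c$, together with the hypothesis $\tau \geq 2\diss - 1$, produces a leading negative term of order $\|x\|_2^2$ in the tight edge case $\tau = 2\diss - 1$, while the residual terms scale as $\|x\|_2^{1-\tau}$ and are at most linear. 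I would then choose $\alpha > 0$ strictly smaller than the asymptotic coefficient of the leading term and absorb the resulting bounded residual (on a suitable ball, and from subleading positive contributions) into a $Q$-independent constant $\beta > 0$, delivering $2b(x) \cdot x + \mathrm{tr}[m(x)] \leq -\alpha \|x\|_2^2 + \beta$.

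For the linear growth, the triangle inequality gives
\begin{equation*}
2\|b(x)\| \leq \weight_{1-\tau}(x)\, \|\genscore{P}{Q}(x)\| + |1-\tau|\, \weight_{-1-\tau}(x)\, \|x\|_2.
\end{equation*}
Combining $\weight_{1-\tau}(x) \lesssim 1 + \|x\|_2^{1-\tau}$ with the assumed growth $\|\genscore{P}{Q}(x)\| \leq C(1 + \|x\|_2^\tau)$ bounds the first term by a multiple of $1 + \|x\|_2$, and the second term is bounded since $\weight_{-1-\tau}(x)\, \|x\|_2 \lesssim \|x\|_2^{-\tau}$ is finite for $\|x\|_2 \geq 1$ and continuous (hence bounded) for $\|x\|_2 \leq 1$. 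For the diffusion, $\|m(x)\|_{\mathrm{op}} = \weight_{1-\tau}(x)$, and the condition $\tau \geq 2\diss - 1 \geq 0$ forces $(1-\tau)/2 \leq 1/2$, giving $\|m(x)\|_{\mathrm{op}} \lesssim 1 + \|x\|_2$. Since every constant depends only on $r_1, r_2, C, \tau, c, d$, the bounds are $Q$-independent.

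\emph{Main obstacle.} The principal delicacy is the exponent bookkeeping in the dissipativity step: the leading dissipation $-\|x\|_2^{1-\tau+2\diss}$ matches the quadratic target $-\|x\|_2^2$ only at the endpoint $\tau = 2\diss - 1$, and for $\tau$ strictly larger the leading power is sub-quadratic. I will need either to restrict to the tight regime $\tau = 2\diss - 1$ or to interpret the \citet{KanBarGreMac2025} dissipativity condition with a general polynomial exponent matched to $1-\tau+2\diss$ rather than $2$. Once this matching is resolved, the growth estimates reduce to elementary manipulations of the weight function.
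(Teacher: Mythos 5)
Your calculation mirrors the paper's proof step for step: the paper writes $2b(x)\cdot x+\mathrm{tr}[m(x)]=\weight_{1-\tau}(x)\{\genscore{P}{Q}(x)\cdot x+\nabla\log\weight_{1-\tau}(x)\cdot x+d\}$, applies the $\mathcal{Q}$-uniform dissipativity to the first term and absorbs the remaining, uniformly bounded, terms into a constant $\tilde r_2$, and it obtains the linear growth of $b$ and $m$ by the same triangle-inequality estimates, splitting $\weight_{1-\tau}=\weight_{-\tau}\weight_{1}$ so that $\weight_{1-\tau}(x)\bigl(1+\Verts{x}_2^\tau\bigr)\lesssim\weight_1(x)$ (slightly cleaner than your bound $\weight_{1-\tau}(x)\lesssim 1+\Verts{x}_2^{1-\tau}$ when $\tau>1$, but the same content). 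So on the growth side you are not missing anything.

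The obstacle you flag in the dissipativity step is genuine, and the paper's own proof does not remove it. After applying dissipativity the paper has $\mathrm{UB}(x)=-\weight_{1-\tau}(x)\Verts{x}^{2\diss}\bigl(r_1-\tilde r_2\Verts{x}^{-2\diss}\bigr)$ and then asserts, for $\Verts{x}>r_0$, that $\mathrm{UB}(x)<-\mathrm{const}\cdot\Verts{x}^{2}$, ``using $\tau\geq2\diss-1$''. But $\weight_{1-\tau}(x)\Verts{x}^{2\diss}\asymp\Verts{x}^{2\diss+1-\tau}$, and $\tau\geq2\diss-1$ gives $2\diss+1-\tau\leq2$, i.e.\ an \emph{upper} bound on this quantity where a \emph{lower} bound is required; the quadratic estimate follows only at the endpoint $\tau=2\diss-1$, exactly as you diagnose. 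For $\tau>2\diss-1$ the weighted drift dissipates only at rate $\Verts{x}^{2\diss+1-\tau}$ and the displayed inequality can fail: for instance $\genscore{P}{Q}(x)=-x$ (so $\diss=1$), certified with the admissible growth exponent $\tau=2$, gives $2b(x)\cdot x+\mathrm{tr}[m(x)]\asymp-\Verts{x}_2$ for large $\Verts{x}_2$, which is not eventually below $-\alpha\Verts{x}_2^2+\beta$ for any $\alpha>0$. Hence your two proposed remedies---working in the matched regime $\tau=2\diss-1$, or weakening the cited dissipativity condition to the exponent $2\diss+1-\tau$---are precisely what is needed to close the argument; in the matched case your sketch completes with the same estimates as the paper, so the gap you identified lies in the statement/proof being compared against rather than in your approach.
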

\begin{proof}
To check the dissipativity, we use the following estimate
\begin{align*}
    2 b(x) \cdot x+ \mathrm{tr}[m(x)] 
    &= a_{1-\tau}(x) \left( \genscore{P}{Q}(x)\cdot x + \nabla \log \weight_{1-\tau}(x) \cdot x +  d\right)\\
    &\leq  a_{1-\tau}(x) \left( -r_1\Verts{x}^{2\diss}+ \underbrace{r_2 + \frac{\verts{1-\tau}}{2}+ d}_{\tilde{r}_2}\right) \eqqcolon \text{UB}(x),
\end{align*}
which follows from the $\mathcal{Q}$-uniform dissipativity of $\genscore{P}{Q}$ 
(note $r_1$, $r_2$ and $u$ are independent of $Q$). 
Let $r_0 = \{2\tilde{r}_2 / r_1\}^{1/(2\diss)} \lor 1$. 
If $\Verts{x} > r_0$, 
\begin{align*}
    \text{UB}(x) 
    &= -a_{1-\tau}(x) \Verts{x}^{2\diss} \left(r_1 - \tilde{r}_2 \Verts{x}^{-2\diss}\right) \\
    &= -\left(\frac{c^{2}+\Verts{x}^2}{\Verts{x}^2}\right)^{(1-\tau)/2}\Verts x^{2\diss +(1-\tau)} \left(r_1 - \tilde{r}_2 \Verts{x}^{-2\diss}\right) \\
    &< -(c^2+1)^{\verts{1-\tau}/2}r_1/2 \cdot \Verts{x}^{2}.
\end{align*}
where we have used 
$\tau = 2\diss -1$ 
to obtain the final estimate. 
Let $\beta_0 \coloneqq \max_{\Verts{x}\leq r_0} \tilde{r}_2 \weight_{1-\tau}(x) > 0$, which satisfies $\text{UB}(x) \leq \beta_0$ for $\Verts{x} \leq r_0$. 
Thus, we obtain 
\begin{align*}
    2 b(x) \cdot x + \mathrm{tr}[m(x)] \leq -\alpha \Verts{x}^2 + \beta
\end{align*}
with $\alpha = (c^2+1)^{\verts{1-\tau}/2} r_1/2$ and 
$\beta = \beta_0 + \alpha  r_0^2$. 
As a result, the desired claim holds with $\alpha$ being replaced with $a \alpha$, where $a > 0$ is a constant such that $a \Verts{\cdot}_2 \leq  \Verts{\cdot}$. 

The linear growth conditions can be verified as follows. 
Let $A > 0$ be a constant such that $\Verts{\cdot} \leq A \Verts{\cdot}_2$. 
By assumption, there exists $Q$-independent $C > 0$ such that $\Verts{\genscore{P}{Q}(x)} \leq C(1+\Verts{x}^\tau)$, and thus 
\begin{align*}
2\Verts{b(x)} 
&\leq \verts{a_{1-\tau}(x)}\left( \Verts{\genscore{P}{Q}(x)} + \Verts{\nabla \log \weight_{1-\tau}(x)}\right)\\
&\leq \left( C(1+A^\tau\Verts{x}_2^\tau) + \verts{1-\tau}\frac{\Verts{x}}{c^2 + \Verts{x}_2^2}\right) \bigl(c^2+\Verts{x}_2^2\bigr)^{-\tau/2}  \sqrt{c^2+\Verts{x}_2^2}  \\
&\leq \left\{C+ c^{-\tau}\left(CA^\tau + \frac{\verts{1-\tau}}{2c} \right)\right\}  \sqrt{c^2+\Verts{x}_2^2} 
\end{align*}
and 
\begin{equation*}
    \Verts{m(x)}_{\mathrm{op}} = \verts{a(x)} \leq (1 \lor (c^2+1)^{1-\tau}) \sqrt{c^2+\Verts{x}_2^2}. 
\end{equation*}
Thus the assumptions of \citet{KanBarGreMac2025} are satisfied.
\end{proof}

\section{Experimental Detail}
\label{app: experiments}

This appendix contains full experimental details for reproducing \Cref{ex: mfnn} (\Cref{app: detail mfnn}) and \Cref{ex: LV} (\Cref{app: detail pcuq}) in the main text.
A common theme for these experiments is that we avoid using the same kernel for both the numerical method (\ac{kgd} descent, extensible sampling, and \ac{vgd}) and performance assessment (\ac{kgd}), to avoid favouring those methods that directly optimise \ac{kgd}.
Below the specific kernels that we use for experiments are specified in full.

\subsection{Details for \Cref{ex: mfnn}}
\label{app: detail mfnn}

This appendix contains full details needed to reproduce the experiments concerning \Cref{ex: mfnn}.

\subsubsection{Experimental Protocol}

A univariate regression task was considered, where the dataset $\{(z_i,y_i)\}_{i=1}^N$ of size $N = 300$ was generated by first sampling each covariate $z_i \in \R$ from $\mathcal{U}(0,1)$ and then sampling each response $y_i \in \R$ from a normal distribution with mean $f(x_i) = 3\tanh (3 x_i + \frac{1}{2}) -3$ and standard deviation $\sigma = 0.1$.
The neural network $\Phi$ was taken to have the form $\Phi(z,x) = w_2 \cdot \tanh (w_1 \cdot z + b_1) + b_2$ where $x = (w_1,b_1,w_2,b_2)$, and the reference distribution $Q_0$ was $\mathcal{N}(0,0.5 I_d)$ with $d = 4$.
We chose  $\lambda = 300$ for $\cL$, and the test error was computed using an independent dataset of size $N=300$. 

\subsubsection{Details for \Cref{fig: tuning}}
\label{app: detail tuning}

The kernel density estimate $\hat{Q}_\epsilon$ used for the comparison in \Cref{fig: tuning} was constructed using the Gaussian kernel with bandwidth set equal to $1/\sqrt{n}$ with $n=100$. 
The gradient $\vargrad \cL (\hat{Q}_\epsilon)$ is given by a Gaussian convolution, and we used an estimate of this based on independent samples from $\hat{Q}_\epsilon$ to further compute an estimate of $\KGD(\hat{Q}_\epsilon)$.

\subsubsection{Mean Field Langevin Dynamics}

For the evaluation in \Cref{fig: different methods} we ran \ac{mfld} with step size $\epsilon = 10^{-5}$.
The $n$ particles were initialised as independent samples from $\mathcal{N}(0, 9\idmat)$. 

\subsubsection{Parametric Variational Inference}

The loss minimised in this method is a $U$-statistic of the squared \ac{kgd} based on a sample of size $100$ from the pushforward distribution $T^{\theta}_{\#}\mu_0$. 
For initialisation we pre-train by selecting $\theta$ to approximately minimise $\mathrm{MMD}^2\left(T^{\theta}_{\#}\mu_0, \mathcal{N}(0,9\idmat)\right)$.
The reference distribution was $\mu_0 =\mathcal{U}([-3,3]^{d_0})$, where $d_0 = 4$.
Training was performed using the Adam optimiser with learning rate $10^{-3}$.
The kernel $k$ was taken to be $k(x,x') = (1 + \|x-x'\|^2/0.3^2)^{-1}$.

\subsubsection{KGD Descent}

Numerical simulation of \ac{kgd} descent was performed using Adam optimiser with step size $\varepsilon = 10^{-2}$. 
The kernel was taken to be $K(x,x') = k(x,x') \idmat$ with an inverse multi-quadric kernel, $k(x,x') = (1 + \|x-x'\|^2/4^2)^{-1/2}$.
The $n$ particles were initialised as independent samples from $\mathcal{N}(0, 9\idmat)$.

\subsubsection{Assessment}

To generate the contours in \Cref{fig: different methods} we ran $t = 10^5$ steps of \ac{mfld} with $n = 300$ particles and step size $\epsilon = 10^{-6}$.

For all numerical methods evaluated in \Cref{fig: different methods} we aimed to generate $n = 100$ particles such that the associated empirical distribution is an accurate approximation of $P$.
\\

\Cref{fig: different methods kgd eval} and \Cref{fig: different methods recommended kernel} reproduce \Cref{fig: different methods}, but for performance assessment we use \ac{kgd} instead of the test generalisation error. \ac{kgd} evaluation on \Cref{fig: different methods kgd eval} was performed using the kernel $K(x,x') = k(x,x') I_d$ where $k(x,x') = (1 + \|x - x'\|^2/0.1^2)^{-1/2}$.

\Cref{fig: different methods recommended kernel} uses the recommended kernel of \Cref{def: recommended kernel} with $\polyorder = 2$, $\beta = 1$, $c = 1$ and $L(x,x') = l(x,x') I_d$ with $l(x,x') = (1 + \|x - x'\|^2/5^2)^{-1}$. This kernel controls first and second order moments, and is therefore a stricter criterion compared to the kernel used to produce \Cref{fig: different methods}.
The results in \Cref{fig: different methods recommended kernel} accordingly provide insight which \Cref{fig: different methods} does not; using this stronger criterion we can see that none of the three numerical methods appear to have converged with respect to capturing the tail behaviour of the target.

\begin{figure}[t]
\centering
    \includegraphics[width=\textwidth]{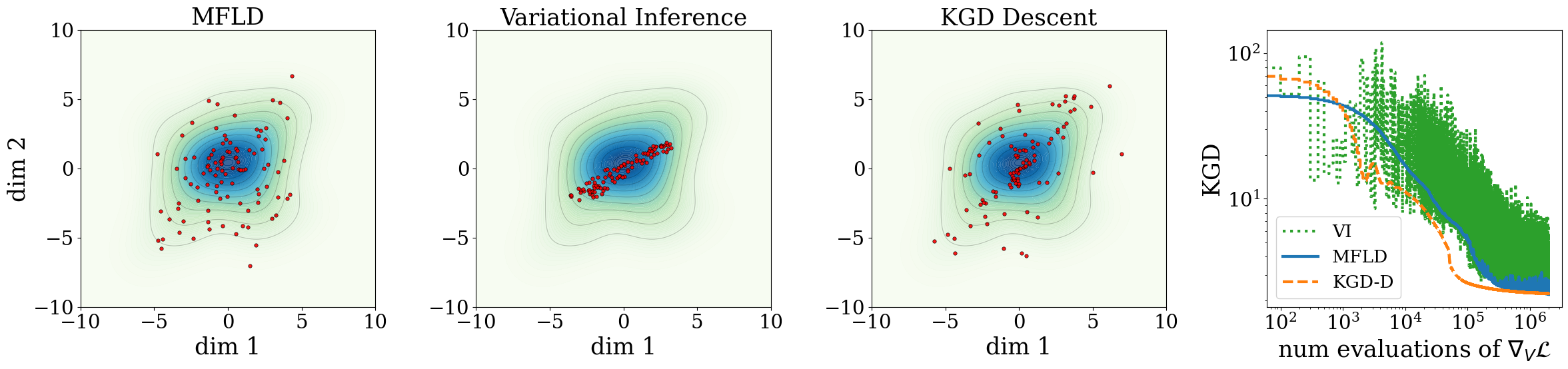}
    \caption{Reproduction of \Cref{fig: different methods}, but where the rightmost panel now displays \ac{kgd} evaluation as a function of the number of backpropagations (evaluation of $\nabla_V \cL$). 
    The kernel used for \ac{kgd} evaluation is an inverse multi-quadratic kernel. (The first three panels are unchanged from \Cref{fig: different methods}.)
    }
    \label{fig: different methods kgd eval}
\end{figure}

\begin{figure}[t!]
\includegraphics[width=\textwidth]{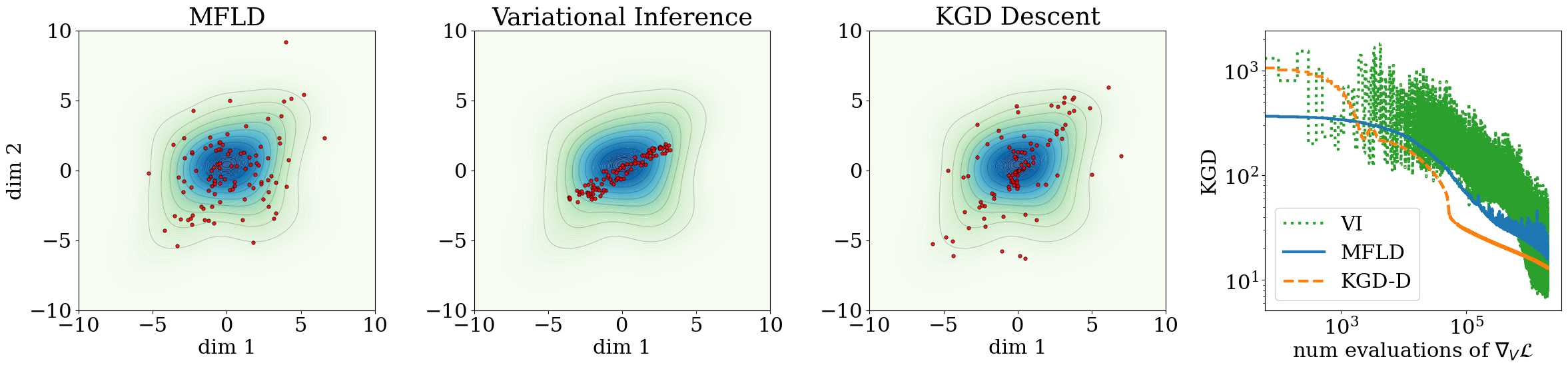}
    \caption{Reproduction of \Cref{fig: different methods kgd eval} where, compared to the figure in the main text, here the rightmost panel is instead based on the recommended kernel for controlling first and second order moments, cf. \Cref{def: recommended kernel}. (The first three panels are unchanged from \Cref{fig: different methods}.)
    }
    \label{fig: different methods recommended kernel}
\end{figure}

\subsection{Details for \Cref{ex: LV}}
\label{app: detail pcuq}

This appendix contains full details needed to reproduce the experiments concerning \Cref{ex: LV}.

\subsubsection{Lotka--Volterra Model}

Here we recall the Lotka--Volterra experiment presented in \citet{shen2024prediction}.
As a prototypical model from population ecology, the Lotka--Volterra model describes the interaction between a prey ($u_1$) and predator ($u_2$) as a pair of coupled \acp{ode}
\begin{align*}
    \frac{\dd u_1}{\dd t} & = \alpha u_1 - \beta u_1 u_2, \qquad u_1(0) = \xi_1 \\
    \frac{\dd u_2}{\dd t} & = \delta u_1 u_2 - \gamma u_2, \qquad u_2(0) = \xi_2
\end{align*}
for some $\alpha,\beta,\gamma,\delta,\xi_1,\xi_2 \geq 0$.
Data arise as noisy observations of one or more species at discrete times; \citet{shen2024prediction} supposed that $\tilde{y}_i$ are noisy measurements of the populations $u = (u_1,u_2)$ at times $t_i$. 
The noise was treated as independent Gaussian noise across time and across species, with standard deviation $\sigma \geq 0$.
In the experimental setting of \citet{shen2024prediction} all parameters are fixed and known except for $x_1 := \mathrm{logit}(\alpha)$ and $x_2 = \mathrm{logit}(\beta/\alpha)$, which are to be inferred (i.e. $d = 2$).
The data-generating parameters were $\alpha = \mathrm{logit}^{-1}(-1)$, $\beta = \mathrm{logit}^{-1}(-3)$, $\gamma = 0.4$, $\delta = 0.02$, $\xi_1 = 10$, $\xi_2 = 15$, $\sigma = 1$.
Data were simulated at times $t_i$ ranging over a uniform grid from $0$ to $n=60$ with increments of $1$.

\subsubsection{Predictively Oriented Posteriors}

For our experiments we adopt a similar set-up to \citet{shen2024prediction}.
The main idea \citep[following][]{cherief2020mmd} is to `lift' the regression data $\tilde{y}_i$ to the augmented data $y_i \coloneq (t_i,\tilde{y}_i) \in \R \times \R$ and compare the empirical distribution of these augmented data to that predicted by a mixture model.
We set the variational objective function as
\begin{align*}
    \mathcal{J}(Q) = \frac{1}{2\lambda_N} \mathrm{MMD}^2(P_Q, P_N) + \mathrm{KLD}(Q\Vert Q_0),
\end{align*}
where $P_Q$ denotes the predictive distribution $p_Q(t,\tilde{y}) \coloneq \frac{1}{N} \sum_{i=1}^N \delta_{t_i}(t) \cdot \int p(\tilde{y}_i | t_i, x) \; \mathrm{d}Q(x)$, and $P_N$ denotes the empirical measure $\frac{1}{N} \sum_{i=1}^N \delta_{t_i}(t)\delta_{y_i | t_i}$ on $\R \times \R$. 
Following \citet{cherief2020mmd,shen2024prediction}, we select a kernel for the \ac{mmd} that does not depend on the time component of $y$; denote this $\kappa : \R \times \R \rightarrow \R$.
Then the terms in \ac{mmd} take the form $\mathrm{MMD}^2(P_Q,P_N) = C+\iint \kappa_{P_N}(x, x') \; \mathrm{d}Q(x) \mathrm{d}Q(x')$, where
\begin{align}
    \kappa_{P_N}(x, x') & \coloneq \frac{1}{N^2} \sum_{i=1}^N \sum_{j=1}^N \iint \kappa(y, y') \; \mathrm{d}P(y|x, t_i)\mathrm{d}P(y'|x', t_j) \label{eq: kappaN} \\
    & \qquad - \frac{1}{N} \sum_{i=1}^N \int \kappa(y_i, y) \; \mathrm{d}P(y|x, t_i) - \frac{1}{N} \sum_{i=1}^N \int \kappa(y_i, y) \; \mathrm{d}P(y|x', t_i). \nonumber
\end{align}
Here we select $Q_0$ as the standard Gaussian distribution, $\kappa$ as a Gaussian kernel $\kappa(y, y')=\exp(- \lVert y-y' \rVert^2 / 2)$, and $\lambda_N = 0.1 / N$. 
As the measurement model $P(y\vert x, t_i)$ is Gaussian, the integrals appearing in \eqref{eq: kappaN} can be analytically evaluated.

To generate synthetic data, we corrupt the Lotka--Volterra model by simulating the system with the addition of an intrinsic noise:
\begin{align*}
    \mathrm{d}u_1 &= (\alpha u_1 - \beta u_1 u_2) \; \mathrm{d}x + \epsilon_1 \; \mathrm{d}W_1, & u_1(0) = \xi_1 , \\
    \mathrm{d}u_2 &= (\delta u_1 u_2 - \gamma u_2) \; \mathrm{d}x + \epsilon_2 \; \mathrm{d} W_2, & u_2(0) = \xi_2,
\end{align*}
and we choose $\epsilon_1 = 0.1$, $\epsilon_2 = 0.2$. 
This is to simulate the intended use case of prediction-centric methods; they aim for robust predictive performance in settings where the parametric statistical model is misspecified.

\begin{figure}[t!]
    \centering
    \includegraphics[width=0.7\textwidth]{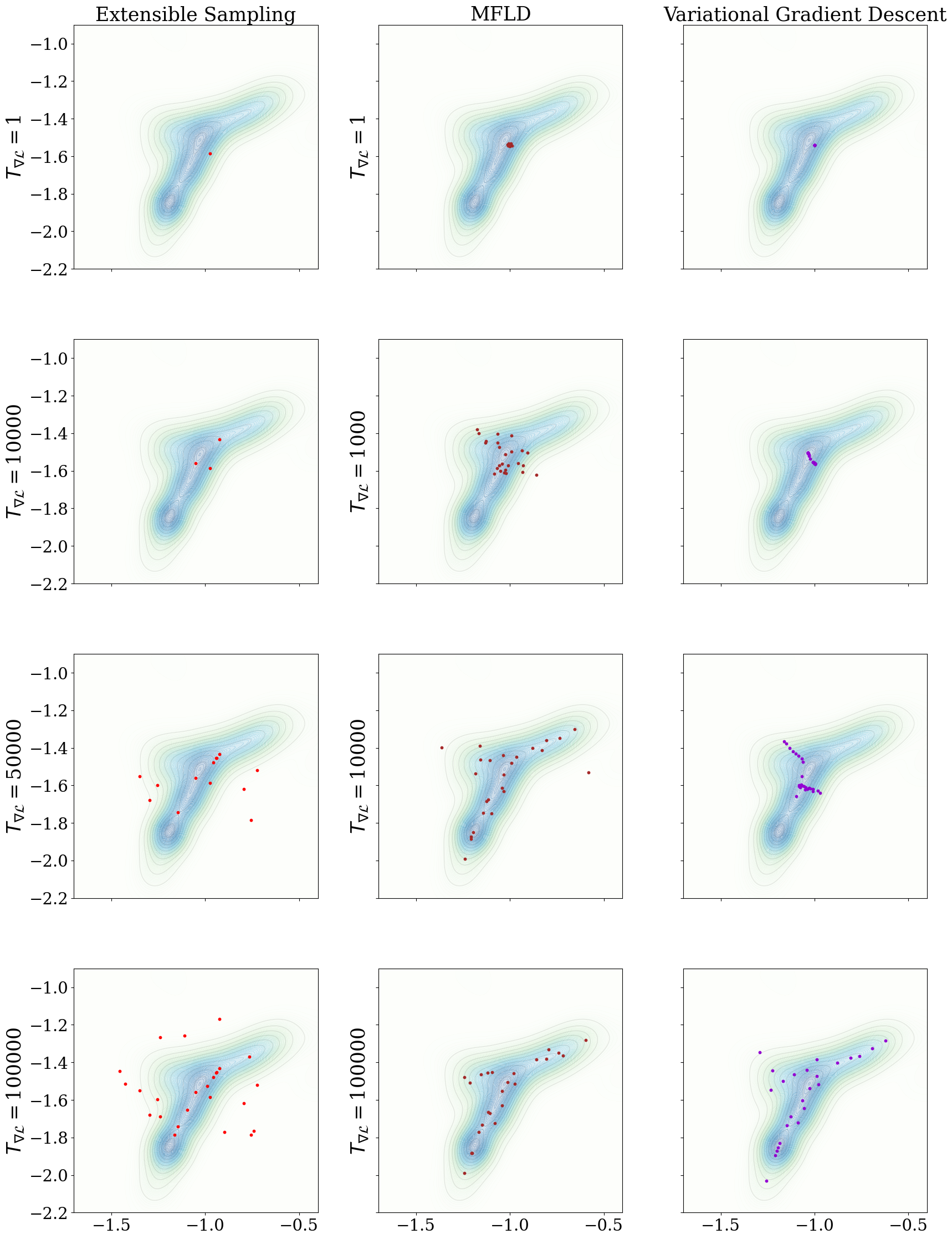}
    \caption{Examining the approximations produced by \ac{mfld}, extensible sampling, and \ac{vgd} in the context of \Cref{fig: different methods 2}.
    Here $T_{\vargrad \cL}$ denotes the total number of evaluations of $\vargrad \cL$ that were required.}
    \label{fig: different methods 2 evolution}
\end{figure}

\subsubsection{Mean Field Langevin Dynamics}

The explicit form of \ac{mfld} for \Cref{ex: LV} is obtained using the variational gradient
\begin{align*}
    \vargrad \cL(Q_n)(x) & = \frac{1}{n\lambda_N}\sum_{j=1}^n \nabla_1 \kappa_{P_N}(x, x_i) , \qquad Q_n = \frac{1}{n} \sum_{i=1}^n \delta_{x_i} .
\end{align*}
For the evaluation in \Cref{fig: different methods 2} we ran \ac{mfld} with step size $\epsilon = 10^{-5}$.
The $n$ particles were initialised as a set of perturbed samples near the (misspecified) true data parameter $(x_1, x_2)$ with a Gaussian variance $10^{-6}$.

\subsubsection{Extensible Sampling}

Numerical optimisation of \ac{kgd} was performed by stochastic search over $4050$ samples from $\mathcal{N}((-1,1.6),0.01 I_d)$.
The kernel $k$ that defines the \ac{kgd} objective was $k(x,x') = (1 + \|x-x'\|^2/0.03^2)^{-1/2} + (1 + \|x-x'\|^2/2^2)^{-1/2}$.

\subsubsection{Variational Gradient Descent}

The kernel $k$ was taken to be $k(x,x') = (1/3) \cdot \sum_{i=1}^3 (1 + \|x-x'\|^2/\ell^2_i)^{-1/2}$  with length scales $\ell_1 = 0.001$, $\ell_2 = 0.01$ and $\ell_3 = 0.1$.
The $n$ particles were initialised as independent samples from $Q_0$. 
The \ac{ode} was simulated using the Adam optimiser with step size $\varepsilon = 10^{-3}$.

\subsubsection{Assessment}

To generate the contours in \Cref{fig: different methods 2} we ran $t = 50,000$ steps of \ac{mfld} with $n = 2,000$ particles and step size $\epsilon = 10^{-6}$.

For all numerical methods evaluated in \Cref{fig: different methods 2} we aimed to generate $n = 25$ particles such that the associated empirical distribution is an accurate approximation of $P$.
The quantitative assessment in the right hand panel was performed based on \ac{kgd} using kernel $k(x,x') =  (1 + \|x-x'\|^2/0.1^2)^{-1/2}$.

\Cref{fig: different methods 2 evolution} displays the evolution of the approximations produced by \ac{mfld}, extensible sampling, and \ac{vgd} in the context of \Cref{fig: different methods 2}.
\Cref{fig: different methods 2 recommended kernel} reproduces \Cref{fig: different methods 2}, but where the kernel used for performance assessment is the recommended kernel with same parameters as for \Cref{fig: different methods recommended kernel}.
Here we observed similar conclusions from \Cref{fig: different methods 2,fig: different methods 2 recommended kernel}, which we attribute to the numerical methods performing equally well with respect to approximating the tails of the target.

\begin{figure}[t!]
\includegraphics[width=\textwidth]{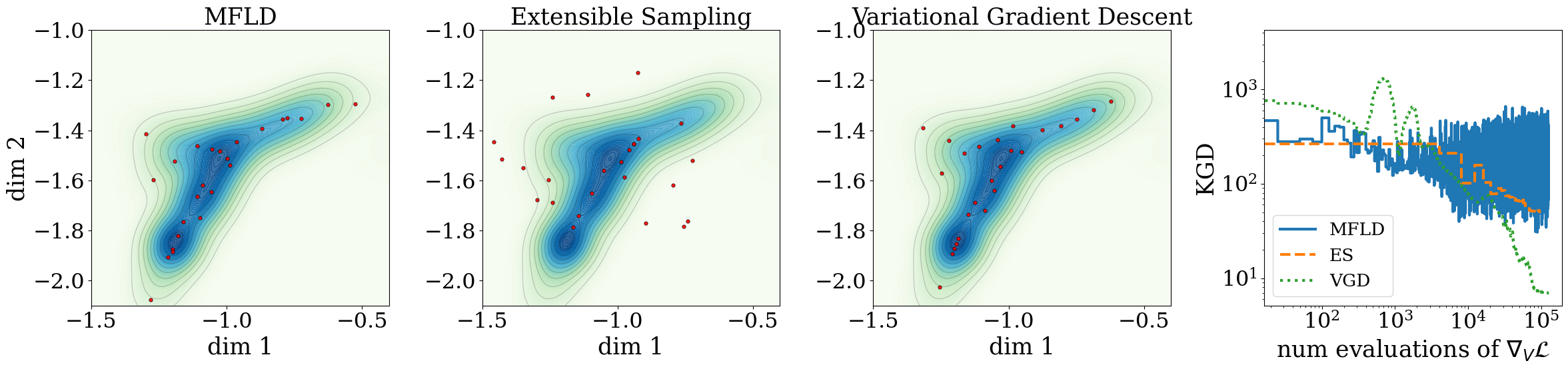}
    \caption{Reproduction of \Cref{fig: different methods 2} where, compared to the figure in the main text, here the rightmost panel is instead based on the recommended kernel for controlling first and second order moments, cf. \Cref{def: recommended kernel}. (The first three panels are unchanged from \Cref{fig: different methods 2}.)
    }
    \label{fig: different methods 2 recommended kernel}
\end{figure}

\subsection{Understanding the Performance of \ac{kgd}}
\label{app: marginals}

An empirical investigation into \ac{kgd}, to understand how closely it captures the quality of an approximation $Q$ to the solution $P$ of \eqref{eq: objective}, is difficult because $P$ cannot be accessed in general.
However, we can construct instances of \eqref{eq: objective} for which we know certain marginals of $P$ are identical by a symmetry argument.
Since we have explicit access to $Q$, it is then straightforward to check how similar the corresponding marginal distributions are for $Q$, and this provides a route to understanding of the empirical performance of \ac{kgd}.

To construct a variational objective \eqref{eq: objective} for which certain marginals of $P$ are identical, let $\bar{\mathcal{L}} : \mathcal{P}(\mathbb{R}^{2d}) \rightarrow \mathbb{R}$ where $\bar{\mathcal{L}}(\bar{Q}) = \mathcal{L}(Q_1) + \mathcal{L}(Q_2)$ where $Q_1$ and $Q_2$ are the marginals of $\bar{Q}$; i.e. $Q_1(S) = \bar{Q}(S \times \mathbb{R}^d)$ and $Q_2(S) = \bar{Q}(\mathbb{R}^d \times S)$.
Let $\bar{Q}_0 = Q_0 \otimes Q_0$ and let $\bar{P}$ minimise the variational objective $\bar{J}(\bar{Q}) = \bar{\mathcal{L}}(\bar{Q}) + \mathrm{KLD}(\bar{Q}||\bar{Q}_0)$.
Since $\bar{Q}$ needs to be absolutely continuous with respect to $\bar{Q}_0$ for the \ac{kld} term to exist, this means $\bar{Q}$ needs to have the form $Q_1 \otimes Q_2$ for some $Q_1, Q_2 \in \mathcal{P}(\mathbb{R}^d)$, and $\mathrm{KLD}(\bar{Q}||\bar{Q}_0) = \mathrm{KLD}(Q_1||Q_0) + \mathrm{KLD}(Q_2||Q_0)$.
Thus we deduce that $\bar{P} = P \otimes P$ where $P$ minimises $\mathcal{J}(Q) = \mathcal{L}(Q) + \mathrm{KLD}(Q||Q_0)$.  
In particular, the marginals $P_1$ and $P_2$ of $\bar{P}$, defined as $P_1(S) = \bar{P}(S \times \mathbb{R}^d)$ and $P_2(S) = \bar{P}(\mathbb{R}^d \times S)$, are equal.

For the purpose of our empirical investigation we took $d = 3$, $Q_0 = \mathcal{N}(0,I_d)$, and 
$$
\mathcal{L}(Q) = \int \sin(x_1) + \cos(x_2) + \sin(x_3^2) \; \mathrm{d}Q(x) + \iint \exp(-\|x-x'\|^2) \; \mathrm{d}Q(x) \mathrm{d}Q(x') ,
$$
and we measured the difference between the marginals $Q_1$ and $Q_2$ using \ac{mmd} based on the IMQ kernel $k(x,x') = (1 + \|x-x'\|^2)^{-1/2}$.
Results are contained in \Cref{fig: understanding}, where we plot typical marginal distributions produced by \ac{mfld}, extensible sampling (\Cref{sec: extensible}), and \ac{vgd} (\Cref{sec: svgd}), together with the \ac{kgd} and \ac{mmd} as a function of the number of evaluations of the variational gradient $\nabla_{\mathrm{V}} \bar{\mathcal{L}}$. 
It can be seen that \ac{mfld} exhibits the most disagreement between the corresponding marginals (esp. $x_2$ and $x_5$), and that \ac{mfld} also achieved the largest values of both \ac{kgd} and \ac{mmd}.
At the same time, the marginals produced from extensible sampling appear to be about as similar as the marginals produced from \ac{vgd}, and this is reflected in similar values for these methods in terms of \ac{kgd} and \ac{mmd}. 
These results are therefore consistent with \ac{kgd} being a meaningful measure of approximation quality, and serve to validate the theoretical results we present in \Cref{sec: theory}.

\begin{figure}[t!]
    \centering
    \includegraphics[width =\textwidth]{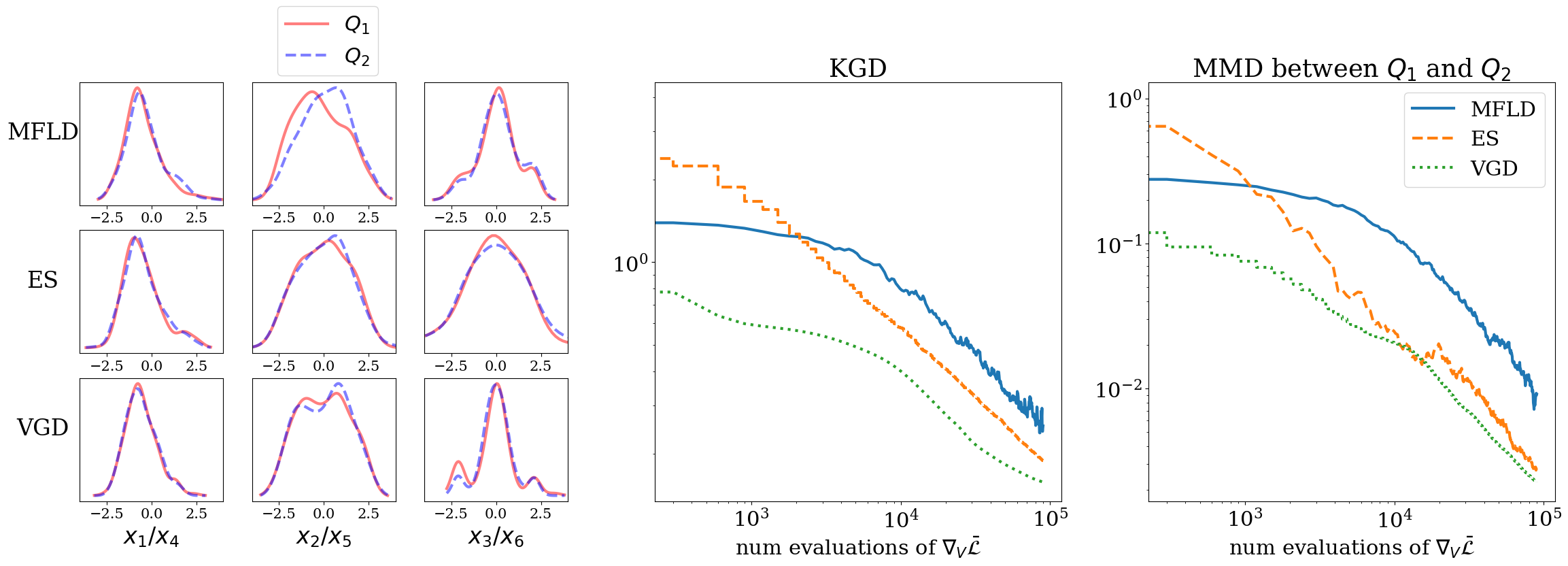}
    \caption{Understanding the Performance of \ac{kgd}.
    Left:  Overlaying pairs of marginal distributions obtained from numerical approximations to $P$, which should be identical if the numerical approximation is exact.
    Middle:  \ac{kgd} as a function of the number of evaluations of the variational gradient $\nabla_{\mathrm{V}} \bar{\mathcal{L}}$.
    Right:  \ac{mmd}, quantifying dissimilarity among the corresponding marginals, also as a function of the number of evaluations of the variational gradient $\nabla_{\mathrm{V}} \bar{\mathcal{L}}$.
    (The left panel was produced based on the final particle sets, corresponding to the maximum number of evaluations of the variational gradient.)
    }
    \label{fig: understanding}
\end{figure}

\subsubsection{Mean Field Langevin Dynamics}
For this experiment we initialised $n=300$ samples from $Q_1^{\text{init}} \times Q_2^{\text{init}}$ where $Q_1^{\text{init}} = \mathcal{N}(0,I_d)$ and $Q_2^{\text{init}} = \mathcal{N}(1,0.25 I_d)$. This choice is made in order to observe the convergence of $Q_1$ and $Q_2$ towards the same distribution. We run \ac{mfld} for $T=300$ iterations and step size $\varepsilon = 10^{-2}$.

\subsubsection{Variational Gradient Descent}
The kernel for \ac{vgd} was taken to be $k(x,x') =  (1 + \|x-x'\|^2)^{-1/2}$.
The initialisation, the number of particles and the number of iterations are the same as for \ac{mfld}. The step size is $\varepsilon = 1$.

\subsubsection{Extensible Sampling}

The initial particle $x_1$ was sampled from $Q_1^{\text{init}} \times Q_2^{\text{init}}$ and the subsequent particles were obtained using numerical optimisation of \ac{kgd} by stochastic search over $300$ samples from $\mathcal{N}(0,9 I_d)$. 
The kernel $k$ that defines the \ac{kgd} objective was $k(x,x') = (1 + \|x-x'\|^2)^{-1/2}$.


\end{document}